\providecommand{\algorithmname}{Algorithm}
 \theoremstyle{definition}
  \theoremstyle{plain}
  \newtheorem{lemma}{\protect\lemmaname}
\theoremstyle{plain}
\newtheorem{theorem}{\protect\theoremname}
\theoremstyle{plain}
\newtheorem{assumption}{\protect\assumptionname}
\theoremstyle{plain}
\theoremstyle{plain}
\theoremstyle{plain}
\theoremstyle{plain}
\newtheorem{corollary}{\protect\corollaryname}
\theoremstyle{definition}
\newtheorem{remark}{\protect\remarkname}
\providecommand{\defnname}{Definition}
\providecommand{\examplename}{Example}
\providecommand{\lemmaname}{Lemma}
\providecommand{\theoremname}{Theorem}
\providecommand{\assumptionname}{Assumption}
\providecommand{\propertyname}{Property}
\providecommand{\propositionname}{Proposition}
\providecommand{\corollaryname}{Corollary}
\providecommand{\remarkname}{Remark}
\begin{document}

\title{On Unbiased Score Estimation for Partially Observed Diffusions}

\author{Jeremy Heng\thanks{ESSEC Business School; heng@essec.edu},  
Jeremie Houssineau\thanks{University of Warwick; jeremie.houssineau@warwick.ac.uk}, 
and Ajay Jasra\thanks{Computer, Electrical and Mathematical Sciences and Engineering Division, King Abdullah University of Science and Technology; ajay.jasra@kaust.edu.sa}}
\date{}

\maketitle
\begin{abstract}
We consider the problem of statistical inference for a class of partially-observed diffusion processes, 
with discretely-observed data and finite-dimensional parameters.   
We construct unbiased estimators of the score function, i.e.\ the gradient of the log-likelihood function with respect to 
parameters, with no time-discretization bias. 
These estimators can be straightforwardly employed within stochastic gradient methods to perform maximum likelihood 
estimation or Bayesian inference. 
As our proposed methodology only requires access to a time-discretization scheme such as the Euler--Maruyama method, 
it is applicable to a wide class of diffusion processes and observation models.  
Our approach is based on a representation of the score as a smoothing expectation using Girsanov theorem, and
a novel adaptation of the randomization schemes developed in \citet{mcleish2011,rhee2015unbiased,jacob2020smoothing}.
This allows one to remove the time-discretization bias and burn-in bias when computing smoothing expectations using the conditional particle filter of \citet{andrieu2010particle}. 
Central to our approach is the development of new couplings of multiple conditional particle filters. 
We prove under assumptions that our estimators are unbiased and have finite variance. 
The methodology is illustrated on several challenging applications from population ecology and neuroscience.  
\end{abstract}
\textbf{\small{}Keywords}{\small{}: 
Partially-observed diffusions, unbiased estimation, particle filters, coupling, stochastic gradient methods.}{\small \par}

\section{Introduction\label{sec:Introduction}}
We consider a diffusion process $(X_t)_{t\geq0}$ in $\mathbb{R}^d$, defined as the solution of 
the stochastic differential equation (SDE) 
\begin{align}\label{eq:intro_diff}
	dX_t = a_{\theta}(X_t)dt + \sigma(X_t)dW_t,\quad X_0=x_{\star}\in\mathbb{R}^d,
\end{align}
where $(W_t)_{t\geq 0}$ is a standard Brownian motion in $\mathbb{R}^d$. 
We will assume that the drift function $a:\Theta\times\mathbb{R}^d \rightarrow\mathbb{R}^d$ 
depends on a vector of unknown parameters $\theta\in\Theta\subseteq\mathbb{R}^{d_{\theta}}$, 
but the symmetric diffusion coefficient $\sigma:\mathbb{R}^d \rightarrow\mathbb{R}^{d\times d}$ does not. 
The general case where the diffusion coefficient is parameter dependent requires a different 
treatment; we will discuss how to adapt our proposed methodology in Section~\ref{sec:parameter_diffusion}. 
The drift and diffusion coefficients are assumed to be regular enough for \eqref{eq:intro_diff} 
to admit a (weakly) unique solution for all $t>0$; more precise regularity conditions needed in our 
analysis will be stated. 
We model observations $Y_{t_1},\dots,Y_{t_P}\in\mathbb{R}^{d_y}$ at a collection of time instances 
$0 \leq t_1< \cdots < t_P = T$ as conditionally independent given the latent diffusion process 
$X=(X_t)_{0\leq t\leq T}$, with observation density $g:\Theta\times\mathbb{R}^d\times\mathbb{R}^{d_y}\rightarrow\mathbb{R}^+$. 
For notational ease, we assume that observations are given at unit times, i.e.\ $t_p = p$ for $p\in\{1,\ldots,P\}$ and $P=T$, 
which covers the case of regularly observed data by a time rescaling. 
Irregularly observed data can also be accommodated with minor modifications to our presentation 
and considered for an application in Section~\ref{sec:logistic_diffusion}. 
We note that this \emph{partially observed} setting is distinct from the \emph{discretely observed} case 
where the diffusion process is observed without error \citep{sorensen2004parametric}. 
Extension of our methodology to continuous-time observation models is straightforward and illustrated 
on an application in Section~\ref{sec:neural_network}.

Given a realization $y_{1:T}=(y_{t})_{t=1}^T$ of the observation process, the marginal likelihood of our state 
space model is
\begin{align}\label{eqn:marginal_likelihood}
	p_{\theta}(y_{1:T}) = \mathbb{E}_{\theta}\left[\prod_{t=1}^T g_{\theta}(y_{t}|X_{t})\right],
\end{align}
where $\mathbb{E}_{\theta}$ denotes expectation w.r.t.\ the probability measure $\mathbb{P}_{\theta}$, 
induced by the solution of \eqref{eq:intro_diff} on $\mathcal{C}_d([0,T])$, the space of continuous mappings from $[0,T]$ to $\mathbb{R}^d$. 
To perform parameter inference using gradient-based algorithms, we shall consider approximations of 
the score function $S(\theta)=\nabla_{\theta}\log p_{\theta}(y_{1:T})$, where $\nabla_{\theta}$ denotes the gradient 
w.r.t.\ the parameter $\theta$. 
If estimators $\widehat{S}(\theta)$ of $S(\theta)$ can be constructed, 
one can approximate the maximum likelihood estimator (MLE) $\hat{\theta}\in \arg\max_{\theta\in\Theta} p_{\theta}(y_{1:T})$ 
with stochastic gradient ascent (SGA) 
\begin{align}\label{eqn:SGA}
	\theta_m = \theta_{m-1} + \varepsilon_m\widehat{S}(\theta_{m-1}), \quad m=1,2,\ldots,
\end{align}
where $(\varepsilon_m)_{m=1}^{\infty}$ is a sequence of learning rates. Similarly, in the Bayesian framework, one can 
sample from the posterior distribution $p(d\theta|y_{1:T})\propto p(\theta)p_{\theta}(y_{1:T})d\theta$ using 
the stochastic gradient Langevin dynamics (SGLD) \citep{welling2011bayesian}
\begin{align}\label{eqn:SGLD}
	\theta_m = \theta_{m-1} + \frac{1}{2}\varepsilon_m\left(\nabla_{\theta}\log p(\theta_{m-1}) + 
	\widehat{S}(\theta_{m-1})\right) + \varepsilon_m^{1/2}\eta_m, \quad m=1,2,\ldots,
\end{align}
where $(\eta_m)_{m=1}^{\infty}$ is a sequence of independent standard Gaussian random vectors in $\mathbb{R}^{d_{\theta}}$. 
In both stochastic gradient methods, constraints on parameters can be dealt with using a suitable transformation. 
Under mild regularity conditions, we will see that Girsanov theorem can be applied to represent the score as a smoothing expectation 
\begin{align}\label{eqn:score_function}
	S(\theta) = 
	\mathbb{E}_{\theta}\left[ G_{\theta}(X) ~|~ y_{1:T} \right],
\end{align}
for some functional $G_{\theta}:\mathcal{C}_d([0,T])\rightarrow\mathbb{R}^{d_{\theta}}$. 
For most problems of practical interest, statistical inference is a challenging task due to the intractability 
of $p_{\theta}(y_{1:T})$ and $S(\theta)$. 
The latter is typically the case for two reasons. 
Firstly, as most diffusion processes do not have analytically tractable transition densities, 
one has to resort to time-discretization. 
Secondly, even if transition densities are available or a time-discretization scheme is employed, 
the expectation over the latent process is usually intractable, and one has to rely on Monte Carlo approximations. 
The following describes some existing approaches to these two issues, and how we will resolve them in this article. 

To perform inference and state estimation without any time-discretization bias, 
\citet{beskos2006exact,beskos2009monte,fearnhead2008particle} adopted 
the methodology developed in \citet{beskos2005exact,beskos2006retrospective} that allows exact 
simulation of diffusion sample paths; see also \citet{pollock2016exact,blanchet2017exact,fearnhead2017continious} 
for recent advances. Although these simulation techniques are elegant, they require a user to know 
various properties of the diffusion process, which may not always be the case in practice. 
An alternative approach to deal with the time-discretization bias, which we shall adopt, 
is to employ the debiasing schemes proposed in 
\citet{mcleish2011,rhee2015unbiased}. 
These methods allow one to unbiasedly estimate an expectation of a functional w.r.t.\ an infinite-dimensional law 
by randomizing over the level of the time-discretization in a type of multilevel Monte Carlo (MLMC) approach. 
The use of unbiased estimators of the score function $S(\theta)$ within stochastic gradient methods 
is particularly appealing. Under suitable assumptions and an appropriate choice of learning rates $(\varepsilon_m)_{m=1}^{\infty}$, 
unbiased scores ensures convergence of the SGA iterates \eqref{eqn:SGA} to a local maxima of the likelihood $p_{\theta}(y_{1:T})$ 
(see e.g.\ \citet{kushner2003stochastic}), and convergence of the SGLD and its ergodic averages 
to the posterior distribution and posterior expectations, respectively \citep{teh2016consistency}. 
In contrast, the use of biased gradient estimates 
leads to asymptotic biases that have to be studied \citep{tadic2017asymptotic}. 
Moreover, these unbiased estimators can also attain better convergence rates 
than Monte Carlo approaches based on the finest discretization level, 
which can be made the same as MLMC under simple modifications \citep{vihola2018unbiased}. 

When employing such debiasing methods, we shall assume access to unbiased estimates 
of differences of the score function associated to successive levels of discretization, i.e.\ $S_l(\theta) - S_{l-1}(\theta)$ 
where $S_l(\theta)$ denotes an approximation of $S(\theta)$ at discretization level $l\in\mathbb{N}$.
As $S_l(\theta)$ can be  written as an expectation w.r.t.\ a finite-dimensional smoothing distribution $\pi_{\theta}^l$,  
biased but consistent approximations can be obtained using various particle smoothing algorithms 
\citep{briers2010smoothing,fearnhead2010sequential,douc2011sequential} in the limit of the Monte Carlo sample size. 
We will focus on the conditional particle filter (CPF) of \citet{andrieu2010particle} which constructs 
a Markov chain Monte Carlo (MCMC) algorithm targeting $\pi_{\theta}^l$ and produces biased 
but consistent approximations of $S_l(\theta)$ in the limit of the number of MCMC iterations. 
By employing recent advances in \citet{jacob2020smoothing} 
that builds on earlier work by \citet{glynn2014exact}, one can remove the MCMC burn-in bias and 
estimate $S_l(\theta)$ unbiasedly by simulating a pair of CPF chains 
that are coupled in such a way that they meet exactly after some random number of iterations. 
For the debiasing schemes of \citet{mcleish2011,rhee2015unbiased} to return estimators of $S(\theta)$ 
with finite variance, the variance of increments $S_l(\theta) - S_{l-1}(\theta)$ has to decrease sufficiently fast to zero with 
the discretization level $l$. 
Therefore, the scores in each increment have to be estimated in a dependent manner, 
using appropriate couplings of pairs of CPF chains between successive discretization levels. 
The construction of such couplings is our main methodological contribution. 
This involves couplings of time-discretizations of the diffusion process 
and couplings of resampling steps that are necessary to prevent weight degeneracy within the CPFs. 
While the former can be achieved using common Brownian increments which is standard in MLMC 
\citep{giles2008multilevel}, the latter requires novel schemes to induce adequate dependencies 
between multiple CPF chains as simple strategies based on common random variables will be inadequate.
Our main theoretical result is to establish, under assumptions, that the proposed methodology provides 
unbiased estimators of the score function $S(\theta)$ with finite variance. 

This article is structured as follows. 
We introduce a representation of the score function as a smoothing expectation 
in Section \ref{sec:score_cts} and its discrete-time approximation in Section \ref{sec:discretized_score}. 
Our proposed methodology to unbiasedly estimate the score function is presented 
in Section \ref{sec:score_estimation}. We establish various properties of our score estimators 
under assumptions in Section \ref{sec:theory} and illustrate various aspects of our methodology 
on three applications in Section \ref{sec:examples}. 
The proof of our results are detailed in the appendix. 
An R package to reproduce all numerical results can be found at 
\url{https://github.com/jeremyhengjm/UnbiasedScore}.

\section{Score functions}

\subsection{Continuous-time representation}\label{sec:score_cts}
We begin by seeking a representation of the score function of the form in \eqref{eqn:score_function}. 
For notational simplicity, we define $\Sigma(x)=\sigma(x)\sigma(x)^*$ and 
$b_\theta(x) = \Sigma(x)^{-1}\sigma(x)^*a_{\theta}(x)$ for $x\in\mathbb{R}^d$, where $\sigma(x)^*$ 
denotes the transpose of $\sigma(x)$. 
As the expectation in \eqref{eqn:marginal_likelihood} is w.r.t.\ $\mathbb{P}_{\theta}$, which 
depends on the parameter $\theta$, we consider a change of measure to the law $\mathbb{Q}$ induced by 
$dX_t = \sigma(X_t)dW_t$ with $X_0=x_{\star}$ on the space $\mathcal{C}_d([0,T])$. 
Since $\mathbb{P}_{\theta}$ and $\mathbb{Q}$ are equivalent, by Girsanov theorem 
\begin{align}\label{eqn:likelihood_girsanov}
	p_{\theta}(y_{1:T}) = \mathbb{E}_{\mathbb{Q}}\left[\frac{d\mathbb{P}_{\theta}}{d\mathbb{Q}}(X)\prod_{t=1}^T g_{\theta}(y_{t}|X_{t})\right]
\end{align}
and the corresponding Radon--Nikodym derivative is 
\begin{align}
	\frac{d\mathbb{P}_{\theta}}{d\mathbb{Q}}(X) = 
	\exp\left\lbrace -\frac{1}{2}\int_{0}^T \|b_{\theta}(X_t)\|_2^2dt + 
	\int_{0}^{T}b_{\theta}(X_t)^*\Sigma(X_t)^{-1}\sigma(X_t)^* dX_t\right\rbrace,
\end{align}
where $\|x\|_p$ denotes the $\mathbb{L}_p$-norm of a vector $x\in\mathbb{R}^d$. 
We will assume throughout the article that $\theta\mapsto a_{\theta}(x)$ and $\theta\mapsto g_{\theta}(y|x)$ are 
differentiable w.r.t.\ $\theta$ for each $(x,y)\in\mathbb{R}^d\times\mathbb{R}^{d_y}$.
Under mild regularity conditions, we may differentiate \eqref{eqn:likelihood_girsanov} and represent  
the score function as a smoothing expectation \eqref{eqn:score_function} with the functional 
\begin{align}\label{eqn:smoothing_functional}
	G_{\theta}(X) = \nabla_{\theta} \log \frac{d\mathbb{P}_{\theta}}{d\mathbb{Q}}(X) + 
	\sum_{t=1}^T\nabla_{\theta} \log g_{\theta}(y_{t}|X_{t}). 
\end{align}
The first gradient term can be written as 
\begin{align}
	\nabla_{\theta} \log \frac{d\mathbb{P}_{\theta}}{d\mathbb{Q}}(X) 
	=-\int_{0}^{T}\left\{ \nabla_{\theta}a_{\theta}(X_{t})\right\} ^{*}\Sigma(X_{t})^{-1}a_{\theta}(X_{t})dt 
	+ \int_{0}^{T}\left\{ \nabla_{\theta}a_{\theta}(X_{t})\right\} ^{*}\Sigma(X_{t})^{-1}dX_{t},
\end{align}
where $\nabla_{\theta}a_{\theta}(x)\in\mathbb{R}^{d\times d_{\theta}}$ denotes the 
Jacobian matrix w.r.t.\ $\theta$.  

We can consider some generalization of the above representation. 
Firstly, although we have assumed a deterministic initial condition of $X_0=x_{\star}$ to simplify our presentation, 
random initializations can also be accommodated. 
Assuming that $X_0\sim\mu_{\theta}$ is initialized from a distribution $\mu_{\theta}$ 
on $\mathbb{R}^d$ that admits a differentiable density w.r.t.\ the $d$-dimensional Lebesgue measure, 
a term of $\nabla_{\theta}\log\mu_{\theta}(X_0)$ should be added to the expression in \eqref{eqn:smoothing_functional}. 
This follows by conditioning on the value of $X_0$ and applying the above arguments. 
Secondly, our methodology can also easily handle continuous-time observation models.  
In this case, the term $\sum_{t=1}^T\nabla_{\theta} \log g_{\theta}(y_{t}|X_{t})$ in \eqref{eqn:smoothing_functional} 
would be replaced by the gradient of the log-conditional likelihood $\nabla_{\theta}\log p_{\theta}(y_{1:T}|X)$; see Section~\ref{sec:neural_network} for an application where the observational model is given by  an inhomogenous Poisson point process. 

When the diffusion coefficient also depends on unknown parameters, it may be possible to find an  
invertible transformation $\Psi:\mathbb{R}^d\rightarrow\mathbb{R}^d$, such that the transformed process 
$X_t = \Psi(Z_t)$ satisfies an SDE with a diffusion coefficient that is not parameter dependent. 
We will illustrate two examples of this principle in Section~\ref{sec:examples} using the Lamperti transformation 
and a simple rescaling of each component of the diffusion process. 
In more general cases where such transformations are not available, one has to seek a different representation 
of the score function as the above change of measure is no longer applicable. 
In Section~\ref{sec:parameter_diffusion}, we consider such a representation and discuss how our proposed methodology 
can be adapted. 

\subsection{Discrete-time approximation}\label{sec:discretized_score}
As alluded to in the introduction, we will rely on time-discretizations of the diffusion process \eqref{eq:intro_diff}. 
We will employ a hierarchy of discretizations of the time interval $[0,T]$, indexed by a level parameter $l\in\mathbb{N}_0$ that determines the temporal resolution. Higher levels with finer time resolutions will require increased algorithmic cost. 
For each level $l$, let $0= s_{0} < \cdots < s_{K_{l}} = T$ denote a dyadic uniform discretization of $[0,T]$, 
defined as $s_k=k \Delta_l$ for $k\in\{0,1,\ldots,K_l\}$, where $\Delta_l=2^{-l}$ is the step-size and $K_l=2^l T$ is the 
number of time steps.
Note that, by construction, $(s_k)_{k=0}^{K_l}$ contains the unit observation times $(t_p)_{p=1}^P$. 
We consider the Euler--Maruyama scheme which defines a time-discretized process $X_{0:T}=(X_{s_k})_{k=0}^{K_l}$ according 
to the following recursion for $k\in\{1,\ldots,K_{l}\}$:
\begin{align}\label{eq:disc_state}
	X_{s_k} = X_{s_{k-1}} + a_{\theta}(X_{s_{k-1}}) \Delta_{l} + \sigma(X_{s_{k-1}})(W_{s_k} - W_{s_{k-1}}), 
	\quad X_0 = x_{\star}.
\end{align}
To simplify our notation, we omit the dependence of the time grid and 
time-discretized process on the level parameter (until this distinction is necessary in Section~\ref{sec:unbiased_discretized_increment}), 
and write the transition in \eqref{eq:disc_state} as $X_{s_k}=F_{\theta}^l(X_{s_{k-1}},V_{s_k})$, where 
$V_{s_k}=W_{s_k} - W_{s_{k-1}}$ denotes the Brownian increment. 
Higher-order schemes such as the Milstein method could be employed but would be 
difficult to implement for problems with dimension $d\geq 2$. Future work could 
consider the antithetic truncated Milstein method of \cite{giles2014antithetic} for such settings. 

In the following, we write $\mathcal{N}_d(\mu,\Sigma)$ to denote a $d$-dimensional 
Gaussian distribution with mean $\mu\in\mathbb{R}^d$ and covariance $\Sigma\in\mathbb{R}^{d\times d}$, and 
its density as $x\mapsto\mathcal{N}_d(x;\mu,\Sigma)$. 
The notation $\mathcal{N}_d(0_d,I_d)$ refers to the standard Gaussian distribution, i.e.\ 
with zero mean vector $0_d\in\mathbb{R}^d$ and identity covariance matrix $I_d\in\mathbb{R}^{d\times d}$. 
Under time-discretization, the marginal likelihood of the resulting state space model is
\begin{align}
	p_{\theta}^l(y_{1:T}) = \int_{\mathsf{X}^l} 
	\prod_{t=1}^T g_{\theta}(y_{t}|x_{t}) \delta_{x_{\star}}(dx_0)\prod_{k=1}^{K_l}f_{\theta}^l(dx_{s_k}|x_{s_{k-1}}),
\end{align}
where $\mathsf{X}^l=(\mathbb{R}^d)^{K_l+1}$ is the path space, 
$\delta_{x_{\star}}(dx_0)$ refers to the Dirac measure at the deterministic initial condition $x_{\star}$, 
and $f_{\theta}^l(dx_{s_k}|x_{s_{k-1}})=\mathcal{N}_d(x_{s_k}; x_{s_{k-1}} + a_{\theta}(x_{s_{k-1}}) \Delta_{l},\Delta_{l}\Sigma(x_{s_{k-1}}))dx_{s_k}$ denotes the Gaussian transition kernel corresponding to \eqref{eq:disc_state}. 
Using Fisher's identity \citep[p. 353]{cappe2006inference}, the score function 
$S_l(\theta) = \nabla_{\theta}\log p_{\theta}^l(y_{1:T})$ at discretization level $l$ 
can be written as 
\begin{align}\label{eqn:discretized_score}
	 S_l(\theta) = \int_{\mathsf{X}^l} G_{\theta}^l(x_{0:T}) \, \pi_{\theta}^l(dx_{0:T}),
\end{align}
where $G_{\theta}^l : \mathsf{X}^l \rightarrow \mathbb{R}^{d_{\theta}}$, defined as 
\begin{align}\label{eqn:discrete_test_function}
	G_{\theta}^l(X_{0:T}) 
	= &-\sum_{k=1}^{K_l} \left\{ \nabla_{\theta}a_{\theta}(X_{s_{k-1}})\right\}^{*}\Sigma(X_{s_{k-1}})^{-1}
	a_{\theta}(X_{s_{k-1}})\Delta_{l} \notag\\
	& + \sum_{k=1}^{K_l} \left\{ \nabla_{\theta}a_{\theta}(X_{s_{k-1}})\right\}^{*}\Sigma(X_{s_{k-1}})^{-1}
	(X_{s_{k}} - X_{s_{k-1}}) + 
	\sum_{t=1}^T\nabla_{\theta}\log g_{\theta}(y_{t}|X_{t})
\end{align}
can be seen as an approximation of $G_{\theta}(X)$ in \eqref{eqn:smoothing_functional}, and 
\begin{align}\label{eqn:smoothing_distribution}
	\pi_{\theta}^l(dx_{0:T}) = p_{\theta}^l(y_{1:T})^{-1} 
	\prod_{t=1}^T g_{\theta}(y_{t}|x_{t}) \delta_{x_{\star}}(dx_0)\prod_{k=1}^{K_l}f_{\theta}^l(dx_{s_k}|x_{s_{k-1}})
\end{align}
is the smoothing distribution on $\mathsf{X}^l$. 
In Section~\ref{sec:theory}, under appropriate regularity conditions, we will study the rate at which the discrete-time 
approximation $S_l(\theta)$ converges to the desired score $S(\theta)$ as the level $l$ tends to infinity. 
The following section concerns numerical approximations of these score functions.  

\section{Score estimation}\label{sec:score_estimation}
\subsection{Unbiased estimation strategy}
The goal of this section is to construct unbiased estimators of the score function $S(\theta)$ that admits the representation 
in \eqref{eqn:score_function} and \eqref{eqn:smoothing_functional}. For each $\theta\in\Theta$, convergence of the 
time-discretized score function \eqref{eqn:discretized_score} as $l\rightarrow\infty$ allows us to write 
\begin{align}\label{eqn:score_as_limit}
	S(\theta) = \lim_{L\rightarrow\infty}\sum_{l=0}^L I_l(\theta),
\end{align}
where $I_l(\theta) = S_l(\theta) - S_{l-1}(\theta)$ denotes the score increment at level $l\in\mathbb{N}_0$ (with $S_{-1}(\theta)=0$). 
We will design algorithms that allow us to construct unbiased estimators 
$\widehat{I}_l(\theta)$ of $I_l(\theta)$ independently for any $l\in\mathbb{N}_0$, i.e.\ 
\begin{align}\label{eq:ub1}
	\mathbb{E}\left[\widehat{I}_l(\theta)\right] = I_l(\theta),\quad l\in\mathbb{N}_0,
\end{align}
where $\mathbb{E}$ denotes expectation w.r.t.\ the law of our algorithms.  
The key insight of the debiasing schemes proposed by \citet{mcleish2011,rhee2015unbiased} is to 
perform a random truncation of the highest level $L$ in \eqref{eqn:score_as_limit}, so that unbiasedness 
is retained when unbiased estimators of the increments are employed. 

Let $(P_l)_{l=0}^{\infty}$ be a given probability mass function (PMF) with support on $\mathbb{N}_0$ and 
define its cumulative tail probabilities as $\mathcal{P}_l = \sum_{k=l}^{\infty} P_k$ for $l\in\mathbb{N}_0$. 
Suppose that 
\begin{align}\label{eq:ub2}
	\sum_{l=0}^\infty \mathcal{P}_l^{-1}\left\lbrace \mathrm{Var}\left[\widehat{I}_l(\theta)^j\right] + 
	\left(S_{l}(\theta)^j-S(\theta)^j\right)^2\right\rbrace < \infty 
\end{align}
for all $j\in\{1,\ldots,d_{\theta}\}$, where $\mathrm{Var}$ denotes variance under $\mathbb{E}$ and 
$x^j$ refers to the $j^{th}$-component of the vector $x\in\mathbb{R}^{d_{\theta}}$. 
If we sample $L$ from $(P_l)_{l=0}^{\infty}$ independently of $(\widehat{I}_l(\theta))_{l=0}^{\infty}$, 
it follows from \citet{rhee2015unbiased} that the \emph{independent sum} estimator
\begin{align}\label{eq:ub3}
	\widehat{S}(\theta) = \sum_{l=0}^L \frac{\widehat{I}_l(\theta)}{\mathcal{P}_l}
\end{align} 
is an unbiased estimator of $S(\theta)$ with finite variance, i.e.\ $\mathbb{E}[\widehat{S}(\theta)]=S(\theta)$ 
and the entries of the covariance matrix $\mathrm{Var}[\widehat{S}(\theta)]$ are finite. 
Inspection of \eqref{eq:ub2} reveals that we have to understand how fast $S_l(\theta)$ converges to $S(\theta)$ 
as $l\rightarrow\infty$, which is established in Theorem \ref{prop:conv_grad_log_like}. 
Moreover, for \eqref{eq:ub1} and \eqref{eq:ub2} to hold, we have to 
compute an unbiased estimator of the score $S_0(\theta)$ at level $l=0$ with finite variance, and 
construct unbiased estimators of the score increments $I_l(\theta)$, whose variance vanishes 
sufficiently fast as $l\rightarrow\infty$ relative to the tails of $(P_l)_{l=0}^{\infty}$.
Developing a methodology that meets these two requirements will be the focus of 
Sections~\ref{sec:unbiased_discretized_score} and \ref{sec:unbiased_discretized_increment} respectively. 
Verifying that our proposed methodology satisfies these requirements constitutes our main theoretical result in Theorem \ref{theo:ub}. 
Alternatives to the estimator in \eqref{eq:ub3} such as the \emph{single term} estimator 
of \citet{rhee2015unbiased} could also be considered here.

\subsection{Unbiased estimation of time-discretized scores}\label{sec:unbiased_discretized_score}
This section considers unbiased estimation of the score $S_l(\theta)$ at discretization level $l\in\mathbb{N}_0$; 
the case $l=0$ will be employed to construct the first summand $\widehat{I}_0(\theta)$ in \eqref{eq:ub3}. 
Our basic algorithmic building block is the CPF of \citet{andrieu2010particle}, which defines a Markov kernel 
on the space of trajectories $X_{0:T}\in\mathsf{X}^l$ that admits 
the smoothing distribution $\pi_{\theta}^l$ as its invariant distribution. 
A detailed description for our application is given in Algorithm \ref{alg:CPF}, 
which has a complexity of $\mathcal{O}(NK_l)$. 
The CPF involves simulating $N\geq 2$ trajectories under the time-discretized model dynamics (Steps 2a \& 2b), 
weighting samples according to the observation density (Step 2c), and resampling from the weighted particle approximation (Step 2d). 
We will consider multinomial resampling, in which case, $\mathcal{R}(w_t^{1:N})$ refers to the categorical distribution 
on $\{1,\ldots,N\}$ with probabilities $w_t^{1:N}=(w_t^n)_{n=1}^N$. 
The main difference to a standard bootstrap particle filter (BPF) \citep{gordon1993novel} is that the input trajectory 
$X_{0:T}^{\star}$ is conditioned to survive all resampling steps. 
The algorithm outputs a trajectory $X_{0:T}^{\circ}$ by sampling the particle indexes 
$(B_{s_k})_{k=0}^{K_l}$ after the terminal step (Steps 3a \& 3b). 
We will write $X_{0:T}^{\circ}\sim M_{\theta}^l(\cdot |X_{0:T}^{\star})$ 
to denote a trajectory generated by the CPF kernel at parameter $\theta\in\Theta$ and 
discretization level $l\in\mathbb{N}_0$. 

We will initialize the CPF Markov chain using the law of the time-discretized dynamics 
\begin{align}\label{eqn:law_discrete_dynamics}
	\nu_{\theta}^l(dx_{0:T})=\delta_{x_{\star}}(dx_0)\prod_{k=1}^{K_l}f_{\theta}^l(dx_{s_k}|x_{s_{k-1}}). 
\end{align}
One could also consider the law of a trajectory sampled from a BPF, 
which provides a good approximation of $\pi_{\theta}^l$ for sufficiently large $N$ \citep[Theorem 1]{andrieu2010particle}. 
Under mild assumptions, the Markov chain $(X_{0:T}(i))_{i=0}^{\infty}$ generated by 
\begin{align}\label{eqn:cpf_chain}
	X_{0:T}(0)\sim\nu_{\theta}^l,\quad X_{0:T}(i)\sim M_{\theta}^l(\cdot |X_{0:T}(i-1)),
\end{align}
for $i\geq1$, is uniformly ergodic \citep{chopin2015particle, lindsten2015uniform,andrieu2018uniform}. 
Hence one can adopt a standard MCMC approach to approximate the score $S_l(\theta)$ by the average 
\begin{align}
	 A_{l}^{b:I}(\theta) = \frac{1}{I-b+1}\sum_{i=b}^IG_{\theta}^l(X_{0:T}(i)),
\end{align}
for some fixed burn-in $0\leq b\leq I$, which is consistent as the number of iterations $I\rightarrow\infty$. 
However, as the Markov chain is not started at stationarity, the MCMC average $A_{l}^{b:I}(\theta)$ is a biased estimator 
for any finite $I\in\mathbb{N}$. Although this burn-in bias can be reduced by increasing $b$, 
tuning it to control the bias is a difficult task in practice. 

By building on the work of \citet{glynn2014exact}, \citet{jacob2020smoothing} showed how to obtain 
unbiased estimators of smoothing expectations by simulating a pair of coupled CPF chains 
$(X_{0:T}(i),\bar{X}_{0:T}(i))_{i=0}^{\infty}$ on the product space $\mathsf{Z}^l=\mathsf{X}^l\times\mathsf{X}^l$ 
with the same marginal law. 
This is achieved using a coupling of two CPFs as described in Algorithm \ref{alg:2-CCPF}, which we will 
refer to as the 2-CCPF. 
Writing $(X_{0:T}^{\circ},\bar{X}_{0:T}^{\circ})\sim \bar{M}_{\theta}^l(\cdot|X_{0:T}^{\star},\bar{X}_{0:T}^{\star})$ 
to denote a pair of trajectories generated by the 2-CCPF kernel given $(X_{0:T}^{\star},\bar{X}_{0:T}^{\star})\in\mathsf{Z}^l$ as input, 
marginally we have $X_{0:T}^{\circ}\sim M_{\theta}^l(\cdot |X_{0:T}^{\star})$ and 
$\bar{X}_{0:T}^{\circ}\sim M_{\theta}^l(\cdot |\bar{X}_{0:T}^{\star})$. 
The two main ingredients of this coupling are the use of common Brownian increments 
(Steps 2a \& 2b), and a coupling of the resampling 
distributions $\mathcal{R}(w_{t}^{1:N})$ and $\mathcal{R}(\bar{w}_{t}^{1:N})$ 
denoted as $\bar{\mathcal{R}}(w_{t}^{1:N},\bar{w}_{t}^{1:N})$ (Steps 2e \& 3a). 
We refer readers to the references in \citet{jacob2020smoothing} for various coupled resampling schemes. 
Our focus is the maximal coupling \citep{chopin2015particle,jasra2017multilevel} that maximizes the 
probability of having identical ancestors at each step of the CPFs. 
This is detailed in Algorithm \ref{alg:2maximal}, where we have suppressed the time dependence for notational simplicity. 
We could also consider an improvement of Step 5 
that samples from the residual distributions with common uniform random variables. 
As the cost of implementing Algorithm \ref{alg:2maximal} is $\mathcal{O}(N)$, the overall cost of Algorithm 
\ref{alg:2-CCPF} is still $\mathcal{O}(NK_l)$. 

We initialize by sampling $(X_{0:T}(0),\bar{X}_{0:T}(0))$ from a coupling $\bar{\nu}_{\theta}^l$ with 
$\nu_{\theta}^l$ as its marginals. 
The choice of $\bar{\nu}_{\theta}^l$ could be explored but we will consider the independent coupling for simplicity. 
The pair of CPF chains is then generated as 
\begin{align}\label{eqn:coupled2_chains}
	X_{0:T}(1)\sim M_{\theta}^l(\cdot |X_{0:T}(0)),\quad 
	(X_{0:T}(i+1),\bar{X}_{0:T}(i))\sim  \bar{M}_{\theta}^l(\cdot | X_{0:T}(i),\bar{X}_{0:T}(i-1)), 
\end{align}
for $i\geq 1$. Marginally, $(X_{0:T}(i))_{i=0}^{\infty}$ and $(\bar{X}_{0:T}(i))_{i=0}^{\infty}$ have the same 
law as the Markov chain generated by \eqref{eqn:cpf_chain}. 
It can be shown that each application of 2-CCPF allows the chains to meet with some positive probability 
\citep[Theorem 3.1]{jacob2020smoothing}, that depends on the number of trajectories $N$ and observations $T$ 
\citep[Theorem 8 \& 9]{lee2020coupled}. Moreover, the construction of 2-CCPF ensures that the chains are faithful, 
i.e.\ $X_{0:T}(i)=\bar{X}_{0:T}(i-1)$ for all $i\geq \tau_{\theta}^l$, where $\tau_{\theta}^l=\inf\{i\geq 1: X_{0:T}(i)=\bar{X}_{0:T}(i-1)\}$ 
denotes the meeting time. Using the time-averaged estimator of \citet{jacob2020smoothing}, 
an unbiased estimator of the score $S_l(\theta)$ is given by
\begin{align}\label{eqn:unbiased_discretized_score}
	 \widehat{S}_l(\theta) = A_{l}^{b:I}(\theta) + \sum_{i=b+1}^{\tau_{\theta}^l-1}\min\left(1,\frac{i-b}{I-b+1}\right)\left(G_{\theta}^l(X_{0:T}(i)) 
	 - G_{\theta}^l(\bar{X}_{0:T}(i-1))\right).
\end{align}
The second term corrects for the bias of the MCMC average $A_{l}^{b:I}(\theta)$ and is equal to zero if $b+1 > \tau_{\theta}^l-1$. 
Under assumptions that will be stated in Section~\ref{sec:theory}, $\widehat{S}_l(\theta)$ 
has finite variance and finite expected cost, for any choice of $N\geq2$ and $0\leq b\leq I$. 
Assuming that 2-CCPF costs twice as much as CPF, the cost of computing $\widehat{S}_l(\theta)$ 
is $\max\{2\tau_{\theta}^l-1,I+\tau_{\theta}^l-1\}$ applications of the CPF kernel $M_{\theta}^l$. 

\begin{algorithm}
\protect\caption{Conditional particle filter (CPF) at parameter $\theta\in\Theta$ and discretization level $l\in\mathbb{N}_0$  \label{alg:CPF}}

\textbf{Input}: a trajectory $X_{0:T}^{\star}=(X_{s_k}^{\star})_{k=0}^{K_l}\in\mathsf{X}^{l}$. 

For time step $k=0$
\begin{description}[itemsep=0pt,parsep=0pt,topsep=0pt,labelindent=0.5cm]
	\item (1a) Set $X_{0}^{n}=x_{\star}$ for $n\in\{1,\ldots,N\}$.
	
	\item (1b) Set $w_0^{n}=N^{-1}$ and $A_{0}^{n}=n$ for $n\in\{1,\ldots,N\}$.  
\end{description}

\textcompwordmark{}

For time step $k\in\{1,\ldots,K_l\}$ 
\begin{description}[itemsep=0pt,parsep=0pt,topsep=0pt,labelindent=0.5cm]
	\item (2a) Sample Brownian increment $V_{s_k}^n \sim \mathcal{N}_d(0_d,\Delta_{l}I_d)$ independently for $n\in\{1,\ldots,N-1\}$.

	\item (2b) Set $X_{s_k}^n =F_{\theta}^l(X_{s_{k-1}}^{A_{s_{k-1}}^{n}}, V_{s_k}^n)$ 
		for $n\in\{1,\ldots,N-1\}$, and $X_{s_k}^N = X_{s_k}^{\star}$.

	\item If there is an observation at time $t=s_k\in\{1,\ldots,T\}$ 
	\begin{description}[itemsep=0pt,parsep=0pt,topsep=0pt,labelindent=0.5cm]
            \item (2c) Compute normalized weight $w_t^n \propto g_{\theta}(y_{t}|X_{t}^n)$ for $n\in\{1,\ldots,N\}$.
            
	   \item (2d) If $t < T$, sample ancestor $A_{t}^{n}\sim \mathcal{R}(w_{t}^{1:N})$ independently 
            for $n\in\{1,\ldots,N-1\}$ and set $A_{t}^N=N$.
	\end{description}

	\item Else 
	\begin{description}[itemsep=0pt,parsep=0pt,topsep=0pt,labelindent=0.5cm]
 		\item (2e) Set $A_{s_k}^{n}=n$ for $n\in\{1,\ldots,N\}$.
	\end{description}
\end{description}

\textcompwordmark{}

After the terminal step
\begin{description}[itemsep=0pt,parsep=0pt,topsep=0pt,labelindent=0.5cm]
	\item (3a) Sample particle index $B_{T}\sim \mathcal{R}(w_{T}^{1:N})$. 

	\item (3b) Set particle index $B_{s_k} = A_{s_k}^{B_{s_{k+1}}}$ for $k\in\{0,1,\ldots,K_l-1\}$. 
\end{description}

\textbf{Output}: a trajectory $X_{0:T}^{\circ}=(X_{s_k}^{B_{s_k}})_{k=0}^{K_l}\in\mathsf{X}^{l}$.
\end{algorithm}

\begin{algorithm}[H]
\protect\caption{Two coupled CPF (2-CCPF) at parameter $\theta\in\Theta$ and discretization level $l\in\mathbb{N}_0$\label{alg:2-CCPF}}

\textbf{Input}: a pair of trajectories $(X_{0:T}^{\star},\bar{X}_{0:T}^{\star})=(X_{s_k}^{\star},\bar{X}_{s_k}^{\star})_{k=0}^{K_l}\in\mathsf{Z}^{l}$. 

For time step $k=0$
\begin{description}[itemsep=0pt,parsep=0pt,topsep=0pt,labelindent=0.5cm]
	\item (1a) Set $X_{0}^{n}=x_{\star}$ and $\bar{X}_{0}^{n}=x_{\star}$ for $n\in\{1,\ldots,N\}$.

	\item (1b) Set $w_0^{n}=N^{-1}, \bar{w}_0^{n}=N^{-1}$ and $A_{0}^{n}=n, \bar{A}_{0}^{n}=n$ for $n\in\{1,\ldots,N\}$.
\end{description}
\textcompwordmark{}

For time step $k\in\{1,\ldots,K_l\}$ 
\begin{description}[itemsep=0pt,parsep=0pt,topsep=0pt,labelindent=0.5cm]
	\item (2a) Sample Brownian increment $V_{s_k}^n \sim \mathcal{N}_d(0_d,\Delta_{l}I_d)$ independently for $n\in\{1,\ldots,N-1\}$.

	\item (2b) Set $X_{s_k}^n = F_{\theta}^l(X_{s_{k-1}}^{A_{s_{k-1}}^{n}},V_{s_k}^n)$ and 
	$\bar{X}_{s_k}^n = F_{\theta}^l(\bar{X}_{s_{k-1}}^{\bar{A}_{s_{k-1}}^{n}},V_{s_k}^n)$ 
	for $n\in\{1,\ldots,N-1\}$. 
	
	\item (2c) Set $X_{s_k}^N = X_{s_k}^{\star}$ and $\bar{X}_{s_k}^N = \bar{X}_{s_k}^{\star}$.

	\item If there is an observation at time $t=s_k\in\{1,\ldots,T\}$ 
	\begin{description}[itemsep=0pt,parsep=0pt,topsep=0pt,labelindent=0.5cm]
		\item (2d) Compute normalized weights
		$w_t^n \propto g_{\theta}(y_{t}|X_{t}^n)$ and 
		$\bar{w}_t^n \propto g_{\theta}(y_{t}|\bar{X}_{t}^n)$ for $n\in\{1,\ldots,N\}$.

		\item (2e) If $t < T$, sample ancestors $(A_{t}^{n},\bar{A}_{t}^{n})\sim \bar{\mathcal{R}}(w_{t}^{1:N},\bar{w}_{t}^{1:N})$ independently 
		for $n\in\{1,\ldots,N-1\}$ and set $A_{t}^N=N,\bar{A}_{t}^N=N$.
	\end{description}

	\item Else 
	\begin{description}[itemsep=0pt,parsep=0pt,topsep=0pt,labelindent=0.5cm]
		\item (2f) Set $A_{s_k}^{n}=n$ and $\bar{A}_{s_k}^{n}=n$ for $n\in\{1,\ldots,N\}$.
	\end{description}
\end{description}

\textcompwordmark{}

After the terminal step
\begin{description}[itemsep=0pt,parsep=0pt,topsep=0pt,labelindent=0.5cm]
	\item (3a) Sample particle indexes $(B_{T},\bar{B}_{T})\sim \bar{\mathcal{R}}(w_{T}^{1:N},\bar{w}_{T}^{1:N})$. 

	\item (3b) Set particle indexes $B_{s_k} = A_{s_k}^{B_{s_{k+1}}}$ and $\bar{B}_{s_k} = \bar{A}_{s_k}^{\bar{B}_{s_{k+1}}}$ for $k\in\{0,1,\ldots,K_l-1\}$. 
\end{description}

\textbf{Output}: a pair of trajectories $(X_{0:T}^{\circ},\bar{X}_{0:T}^{\circ})=(X_{s_k}^{B_{s_k}},\bar{X}_{s_k}^{\bar{B}_{s_k}})_{k=0}^{K_l}\in\mathsf{Z}^l$. 
\end{algorithm}

\begin{algorithm}
\protect\caption{Maximal coupling of two resampling distributions $\mathcal{R}(w^{1:N})$ and $\mathcal{R}(\bar{w}^{1:N})$ \label{alg:2maximal}}

\textbf{Input}: normalized weights $w^{1:N}=(w^n)_{n=1}^N$ and $\bar{w}^{1:N}=(\bar{w}^n)_{n=1}^N$.

(1) Compute the overlap $o^n=\min\{w^n,\bar{w}^n\}$ for $n\in\{1,\ldots,N$\}. 

(2) Compute the mass of the overlap $\mu=\sum_{n=1}^No^n$ and normalize $O^n=o^n/\mu$ for $n\in\{1,\ldots,N\}$. 

(3) Compute the residuals $r^n=(w^n-o^n)/(1-\mu)$ and $\bar{r}^n=(\bar{w}^n-o^n)/(1-\mu)$ for $n\in\{1,\ldots,N\}$. 

With probability $\mu$ 

\begin{description}[itemsep=0pt,parsep=0pt,topsep=0pt,labelindent=0.5cm]
	\item (4) Sample $A\sim\mathcal{R}(O^{1:N})$ and set $\bar{A}=A$.
\end{description}

Otherwise
\begin{description}[itemsep=0pt,parsep=0pt,topsep=0pt,labelindent=0.5cm]
	\item (5) Sample $A\sim\mathcal{R}(r^{1:N})$ and $\bar{A}\sim\mathcal{R}(\bar{r}^{1:N})$ independently. 
\end{description}

\textbf{Output}: indexes $(A,\bar{A})$.
\end{algorithm}

\subsection{Unbiased estimation of score increments}\label{sec:unbiased_discretized_increment}
We now consider unbiased estimation of the score increment $I_l(\theta) = S_l(\theta) - S_{l-1}(\theta)$ 
at level $l\in\mathbb{N}$ to construct the term $\widehat{I}_l(\theta)$ 
in the independent sum estimator $\widehat{S}(\theta)$ in \eqref{eq:ub3}. 
A naive approach that employs the unbiased estimation framework in Section~\ref{sec:unbiased_discretized_score} 
to estimate the scores $S_{l-1}(\theta)$ and $S_l(\theta)$ independently, and take the difference to estimate 
the increment $I_l(\theta)$, will satisfy the unbiasedness requirement in \eqref{eq:ub1}. 
However, as the variance of the increment will not decrease with the discretization level $l$ under independent estimation 
of the successive scores, one cannot choose a PMF $(P_l)_{l=0}^{\infty}$ such that 
the condition in \eqref{eq:ub2} holds. 
This prompts an extension of the preceding framework that allows the scores in each increment to be estimated 
in a dependent manner. 


Our proposed methodology involves simulating two pairs of coupled CPF chains, 
$(X_{0:T}^{l-1}(i),\bar{X}_{0:T}^{l-1}(i))_{i=0}^{\infty}$ on $\mathsf{Z}^{l-1}$ 
for discretization level $l-1$, 
and $(X_{0:T}^{l}(i),\bar{X}_{0:T}^{l}(i))_{i=0}^{\infty}$ on $\mathsf{Z}^{l}$ 
for discretization level $l$. 
This relies on the coupling of four CPFs detailed in Algorithm \ref{alg:4-CCPF}, 
which will be referred to as the 4-CCPF. 
In the algorithmic description, $(s_k)_{k=0}^{K_l}$ denotes a time discretization of $[0,T]$ at level $l$, 
and a trajectory at the coarser level $l-1$ is written as $X_{0:T}^{l-1}=(X_{s_{2k}}^{l-1})_{k=0}^{K_{l-1}}$.  
Given trajectories $(X_{0:T}^{l-1,\star},\bar{X}_{0:T}^{l-1,\star},X_{0:T}^{l,\star},\bar{X}_{0:T}^{l,\star})\in 
\mathsf{Z}^{l-1}\times\mathsf{Z}^{l}$ as input, 
we will write 
\begin{align}\label{eqn:4CCPF}
	(X_{0:T}^{l-1,\circ},\bar{X}_{0:T}^{l-1,\circ},X_{0:T}^{l,\circ},\bar{X}_{0:T}^{l,\circ})\sim 
	\bar{M}_{\theta}^{l-1,l}(\cdot|X_{0:T}^{l-1,\star},\bar{X}_{0:T}^{l-1,\star},X_{0:T}^{l,\star},\bar{X}_{0:T}^{l,\star})
\end{align}
to denote the trajectories generated by 4-CCPF. 
The 4-CCPF kernel $\bar{M}_{\theta}^{l-1,l}$ is a four-marginal coupling of the CPF kernels $M_{\theta}^{l-1}$ 
and $M_{\theta}^{l}$, in the sense that marginally, we have $X_{0:T}^{l-1,\circ}\sim M_{\theta}^{l-1}(\cdot |X_{0:T}^{l-1,\star})$ and 
$\bar{X}_{0:T}^{l-1,\circ}\sim M_{\theta}^{l-1}(\cdot |\bar{X}_{0:T}^{l-1,\star})$ at level $l-1$, and 
$X_{0:T}^{l,\circ}\sim M_{\theta}^l(\cdot |X_{0:T}^{l,\star})$ and 
$\bar{X}_{0:T}^{l,\circ}\sim M_{\theta}^l(\cdot |\bar{X}_{0:T}^{l,\star})$ at level $l$. 
The two main ingredients of 4-CCPF are the use of common Brownian increments within each level 
(Steps 2b \& 2e) and across levels (Steps 2a \& 2d), 
and an appropriate four-marginal coupling of the resampling 
distributions 
$\mathcal{R}(w_{t}^{l-1,1:N})$, $\mathcal{R}(\bar{w}_{t}^{l-1,1:N})$, 
$\mathcal{R}(w_{t}^{l,1:N})$, $\mathcal{R}(\bar{w}_{t}^{l,1:N})$  
denoted by 
$\bar{\mathcal{R}}(w_{t}^{l-1,1:N},\bar{w}_{t}^{l-1,1:N},w_{t}^{l,1:N},\bar{w}_{t}^{l,1:N})$ (Steps 2h \& 3a).
While the use of common Brownian increments is a standard choice in MLMC 
\citep{giles2008multilevel}, constructing coupled resampling schemes that induce sufficient dependencies 
between the four CPF chains, for the variance of the estimated increment to decrease with the discretization level, 
requires new algorithmic design. 

Algorithm \ref{alg:maximal-maximal} presents a coupled resampling scheme that will be the focus of our analysis in Section~\ref{sec:theory}. 
Here we suppress the time dependence for notational simplicity. 
Given normalized weights $w^{l-1,1:N}=(w^{l-1,n})_{n=1}^N$, $\bar{w}^{l-1,1:N}=(\bar{w}^{l-1,n})_{n=1}^N$ 
at discretization level $l-1$ and $w^{l,1:N}=(w^{l,n})_{n=1}^N$, $\bar{w}^{l,1:N}=(\bar{w}^{l,n})_{n=1}^N$ at discretization level $l$, 
the algorithm samples ancestor indexes $(A^{l-1},\bar{A}^{l-1},A^{l},\bar{A}^{l})$ from the maximal coupling of the maximal couplings 
$\bar{\mathcal{R}}(w^{l-1,1:N},w^{l,1:N})$ and $\bar{\mathcal{R}}(\bar{w}^{l-1,1:N},\bar{w}^{l,1:N})$. 
That is, amongst all possible four-marginal couplings, this scheme maximizes the probabilities of having 
identical ancestors across levels, i.e.\ $A^{l-1}=A^l$ and $\bar{A}^{l-1}=\bar{A}^l$, 
and identical pair of ancestors within the levels, i.e.\ $(A^{l-1},A^l)=(\bar{A}^{l-1},\bar{A}^l)$. 
The cost of Algorithm \ref{alg:maximal-maximal} is random as it employs rejection samplers \citep{thorisson2000} 
in Steps 2b, 3b and 4, wherein $\mathcal{U}_{[0,1]}$ denotes the uniform distribution on $[0,1]$. 
As the expected cost is $\mathcal{O}(N)$, the overall cost of Algorithm~\ref{alg:4-CCPF} is $\mathcal{O}(NK_l)$ on average. 
Note that a naive approach to sample from the desired coupled resampling scheme in place of 
Algorithm \ref{alg:maximal-maximal} would involve a deterministic but prohibitive cost of $\mathcal{O}(N^2)$. 
In Step 4, we denote the respective PMF of 
$\bar{\mathcal{R}}(w^{l-1,1:N},w^{l,1:N})$ and $\bar{\mathcal{R}}(\bar{w}^{l-1,1:N},\bar{w}^{l,1:N})$ as
\begin{align}\label{eqn:PMF_maximal}
	&R^{l-1,l}(A,B)=\mathbb{I}_{\mathcal{D}}(A,B)o^{A} + 
		\frac{(w^{l-1,A}-o^A)(w^{l,B}-o^B)}{1-\mu}, \\
	&\bar{R}^{l-1,l}(A,B)=\mathbb{I}_{\mathcal{D}}(A,B)\bar{o}^{A} + 
		\frac{(\bar{w}^{l-1,A}-\bar{o}^A)(\bar{w}^{l,B}-\bar{o}^B)}{1-\bar{\mu}}, \notag
\end{align}
for $(A,B)\in\{1,\ldots,N\}^2$, where $\mathbb{I}_{\mathcal{D}}(A,B)$ is the indicator function on the diagonal set 
$\mathcal{D}=\{(A,B)\in\{1,\ldots,N\}^2 : A=B\}$, 
$o^n=\min\{w^{l-1,n},w^{l,n}\}$ and $\bar{o}^n=\min\{\bar{w}^{l-1,n},\bar{w}^{l,n}\}$ for $n\in\{1,\ldots,N\}$ 
are the overlapping measures, and 
$\mu=\sum_{n=1}^No^n$ and $\bar{\mu}=\sum_{n=1}^N\bar{o}^n$ 
are their corresponding mass. 
From the expressions in \eqref{eqn:PMF_maximal}, one can check that the three cases considered in Algorithm 
\ref{alg:maximal-maximal} are necessary to ensure faithfulness of the pair of chains on each discretization level. 
More precisely, if the input trajectories satisfy $X_{0:T}^{l-1,\star}=\bar{X}_{0:T}^{l-1,\star}$ 
and/or $X_{0:T}^{l,\star}=\bar{X}_{0:T}^{l,\star}$, then under the 4-CCPF \eqref{eqn:4CCPF}, 
the output trajectories satisfy 
$X_{0:T}^{l-1,\circ}=\bar{X}_{0:T}^{l-1,\circ}$ and/or $X_{0:T}^{l,\circ}=\bar{X}_{0:T}^{l,\circ}$ 
almost surely. 

We note that the two-marginal couplings induced by the 4-CCPF kernel $\bar{M}_{\theta}^{l-1,l}$ 
on each discretization level are not the same as the 2-CCPF kernels $\bar{M}_{\theta}^{l-1}$ and $\bar{M}_{\theta}^l$. 
Although it is not a requirement of our methodology, 
this property would hold if we consider a modification of Algorithm \ref{alg:maximal-maximal} that samples from 
the maximal coupling of the maximal couplings 
$\bar{\mathcal{R}}(w^{l-1,1:N},\bar{w}^{l-1,1:N})$ and $\bar{\mathcal{R}}(w^{l,1:N},\bar{w}^{l,1:N})$. 
However, as with simple coupling strategies based on common random variables, 
such a coupled resampling scheme does not induce adequate dependencies between the CPF chains across discretization levels. 
This will be illustrated experimentally in Section~\ref{sec:ornstein_uhlenbeck}. 
To understand the rationale behind Algorithm \ref{alg:maximal-maximal}, we observe that the 
two-marginal couplings induced by the 4-CCPF kernel across discretization levels 
are given by the multilevel CPF (ML-CPF) described in Algorithm \ref{alg:ML-CPF}. 
That is, writing $M_{\theta}^{l-1,l}$ as the ML-CPF kernel,
which is a coupling of the CPF kernels $M_{\theta}^{l-1}$ and $M_{\theta}^{l}$, we have 
$(X_{0:T}^{l-1,\circ},X_{0:T}^{l,\circ})\sim M_{\theta}^{l-1,l}(\cdot|X_{0:T}^{l-1,\star},X_{0:T}^{l,\star})$ and 
$(\bar{X}_{0:T}^{l-1,\circ},\bar{X}_{0:T}^{l,\circ})\sim M_{\theta}^{l-1,l}(\cdot|\bar{X}_{0:T}^{l-1,\star},\bar{X}_{0:T}^{l,\star})$ 
under the 4-CCPF kernel in \eqref{eqn:4CCPF}. 
The ML-CPF is similar to the multilevel particle filter of \citet{jasra2017multilevel},  
who proposed multilevel estimators of filtering expectations that are non-asymptotically biased 
but consistent in the limit of our computational budget. 
Even though our objective is markedly different, as we seek non-asymptotically unbiased and 
(almost surely) finite cost estimators of the score function which is a smoothing expectation, 
the connection to multilevel estimation alludes to better convergence rates than Monte Carlo 
approaches based on the finest discretization level. 

We now describe simulation of the two pairs of CPF chains 
$(X_{0:T}^{l-1}(i),\bar{X}_{0:T}^{l-1}(i))_{i=0}^{\infty}$ and 
$(X_{0:T}^{l}(i),\bar{X}_{0:T}^{l}(i))_{i=0}^{\infty}$ using ML-CPF and 4-CCPF. 
We initialize $(X_{0:T}^{l-1}(0),\bar{X}_{0:T}^{l-1}(0),X_{0:T}^{l}(0),\bar{X}_{0:T}^l(0))$ from 
a four-marginal coupling $\bar{\nu}_{\theta}^{l-1,l}$ that satisfies 
$X_{0:T}^{l-1}(0),\bar{X}_{0:T}^{l-1}(0)\sim\nu_{\theta}^{l-1}$ and $X_{0:T}^{l}(0),\bar{X}_{0:T}^l(0)\sim\nu_{\theta}^{l}$. 
For simplicity, we assume $\bar{\nu}_{\theta}^{l-1,l}$ is such that 
each of the pairs across levels $(X_{0:T}^{l-1}(0),X_{0:T}^{l}(0))$ and $(\bar{X}_{0:T}^{l-1}(0),\bar{X}_{0:T}^{l}(0))$ 
independently follow a coupling of $\nu_{\theta}^{l-1}$ and $\nu_{\theta}^{l}$, denoted as $\nu_{\theta}^{l-1,l}$. 
The choice of $\nu_{\theta}^{l-1,l}$ will be taken as the joint law of the time-discretized dynamics 
under common Brownian increments in our analysis. 
We then sample $(X_{0:T}^{l-1}(1),X_{0:T}^l(1))\sim M_{\theta}^{l-1,l}(\cdot | X_{0:T}^{l-1}(0),X_{0:T}^l(0))$ with ML-CPF, 
and subsequently for $i\geq 1$, iteratively sample 
\begin{align}\label{eqn:coupled4_chains}
	(X_{0:T}^{l-1}(i+1),\bar{X}_{0:T}^{l-1}(i),X_{0:T}^{l}(i+1),\bar{X}_{0:T}^{l}(i))\sim 
	\bar{M}_{\theta}^{l-1,l}(\cdot|X_{0:T}^{l-1}(i),\bar{X}_{0:T}^{l-1}(i-1),X_{0:T}^{l}(i),\bar{X}_{0:T}^{l}(i-1))
\end{align}
from 4-CCPF. Marginally, the single CPF chains have the same law as a Markov chain generated by \eqref{eqn:cpf_chain} at discretization level $l-1$ or $l$. 
Since each application of 4-CCPF allows the pair of chains on each level to meet with some positive probability (see Lemma \ref{lem:diag_prob_kernel} in  Appendix~\ref{app:CCPF}), 
and by construction remain faithful thereafter, we define 
the meeting time at level $l$ as $\tau_{\theta}^l=\inf\{i\geq 1: X_{0:T}^l(i)=\bar{X}_{0:T}^l(i-1)\}$  
and the stopping time at level $l$ as $\bar{\tau}_{\theta}^l=\max\{\tau_{\theta}^{l-1},\tau_{\theta}^l\}$. 
Note that the 4-CCPF collapses to the ML-CPF after the stopping time, i.e.\ 
for $i>\bar{\tau}_{\theta}^l$, the transition in \eqref{eqn:coupled4_chains} is equivalent to sampling 
$(X_{0:T}^{l-1}(i+1),X_{0:T}^{l}(i+1))\sim M_{\theta}^{l-1,l}(\cdot|X_{0:T}^{l-1}(i),X_{0:T}^{l}(i))$ and 
setting $\bar{X}_{0:T}^{l-1}(i)=X_{0:T}^{l-1}(i+1), \bar{X}_{0:T}^{l}(i)=X_{0:T}^{l}(i+1)$. 
For any choice of burn-in $b\in\mathbb{N}_0$ and number of iterations $I\geq b$, we can compute unbiased estimators 
$\widehat{S}_{l-1}(\theta)$ and $\widehat{S}_l(\theta)$
of the scores $S_{l-1}(\theta)$ and $S_l(\theta)$ 
using the time-averaged estimator in \eqref{eqn:unbiased_discretized_score} based on the pair of CPF chains 
on level $l-1$ and $l$, respectively.
We can then obtain an unbiased estimator of the score increment $I_l(\theta)$ using the difference 
$\widehat{I}_l(\theta)=\widehat{S}_l(\theta) - \widehat{S}_{l-1}(\theta)$, 
which has finite variance and finite expected cost under the assumptions in Section~\ref{sec:theory}. 
The cost of computing $\widehat{I}_l(\theta)$ is\footnote{This assumes that 
$\mathrm{Cost}(M_{\theta}^{l-1,l})=\mathrm{Cost}(M_{\theta}^{l-1}) + \mathrm{Cost}(M_{\theta}^{l})$, 
$\mathrm{Cost}(\bar{M}_{\theta}^{l-1,l})=2\mathrm{Cost}(M_{\theta}^{l-1}) + 2\mathrm{Cost}(M_{\theta}^{l})$ 
and $\mathrm{Cost}(M_{\theta}^{l})=2\mathrm{Cost}(M_{\theta}^{l-1})$.} 
$\max\{2\tau_{\theta}^{l-1}-1,I+\tau_{\theta}^{l-1}-1\}/2 + \max\{2\tau_{\theta}^l-1,I+\tau_{\theta}^l-1\}$ applications of the CPF kernel $M_{\theta}^l$. 


\begin{algorithm}
\protect\caption{Four coupled CPF (4-CCPF) at parameter $\theta\in\Theta$ and discretization levels $l-1$ and $l\in\mathbb{N}$.\label{alg:4-CCPF}}

\textbf{Input}: a pair of trajectories $(X_{0:T}^{l-1,\star},\bar{X}_{0:T}^{l-1,\star})=
(X_{s_{2k}}^{l-1,\star},\bar{X}_{s_{2k}}^{l-1,\star})_{k=0}^{K_{l-1}}\in\mathsf{Z}^{l-1}$ 
and a pair of trajectories $(X_{0:T}^{l,\star},\bar{X}_{0:T}^{l,\star})=(X_{s_k}^{l,\star},\bar{X}_{s_k}^{l,\star})_{k=0}^{K_{l}}\in\mathsf{Z}^{l}$.

For time step $k=0$
\begin{description}[itemsep=0pt,parsep=0pt,topsep=0pt,labelindent=0.5cm]
	\item (1a) Set $X_{0}^{l-1,n}=x_{\star},\bar{X}_{0}^{l-1,n}=x_{\star}$ and 
	$X_{0}^{l,n}=x_{\star},\bar{X}_{0}^{l,n}=x_{\star}$ for $n\in\{1,\ldots,N\}$.

	\item (1b) Set $w_0^{l-1,n}=N^{-1},\bar{w}_0^{l-1,n}=N^{-1}$ and 
	$w_0^{l,n}=N^{-1},\bar{w}_0^{l,n}=N^{-1}$ for $n\in\{1,\ldots,N\}$.
	\item (1c) Set $A_{0}^{l-1,n}=n, \bar{A}_{0}^{l-1,n}=n$ and 
	$A_{0}^{l,n}=n,\bar{A}_{0}^{l,n}=n$ for $n\in\{1,\ldots,N\}$.
\end{description}
\textcompwordmark{}

For time step $k\in\{1,\ldots,K_l\}$ 
\begin{description}[itemsep=0pt,parsep=0pt,topsep=0pt,labelindent=0.5cm]
	\item (2a) Sample Brownian increment $V_{s_k}^{l,n} \sim \mathcal{N}_d(0_d,\Delta_{l}I_d)$ at level $l$ independently for $n\in\{1,\ldots,N-1\}$.

	\item (2b) Set $X_{s_k}^{l,n} = F_{\theta}^l(X_{s_{k-1}}^{l,A_{s_{k-1}}^{l,n}},V_{s_k}^{l,n})$ and 
	$\bar{X}_{s_k}^{l,n} = F_{\theta}^l(\bar{X}_{s_{k-1}}^{l,\bar{A}_{s_{k-1}}^{l,n}}, V_{s_k}^{l,n})$ 
	at level $l$ for $n\in\{1,\ldots,N-1\}$.

	\item (2c) Set $X_{s_k}^{l,N} = X_{s_k}^{l,\star}$ and $\bar{X}_{s_k}^{l,N} = \bar{X}_{s_k}^{l,\star}$.
	
	\item If $k\in\{2, 4, \ldots, K_l\}$
	\begin{description}[itemsep=0pt,parsep=0pt,topsep=0pt,labelindent=0.5cm]
		\item (2d) Set Brownian increment $V_{s_k}^{l-1,n} =  V_{s_{k-1}}^{l,n} + V_{s_k}^{l,n}$ at level $l-1$ for $n\in\{1,\ldots,N-1\}$.
		
		\item (2e) Set $X_{s_k}^{l-1,n} = F_{\theta}^{l-1}(X_{s_{k-1}}^{l-1,A_{s_{k-1}}^{l-1,n}}, V_{s_k}^{l-1,n})$ and 
		$\bar{X}_{s_k}^{l-1,n} = F_{\theta}^{l-1}(\bar{X}_{s_{k-1}}^{l-1,\bar{A}_{s_{k-1}}^{l-1,n}}, V_{s_k}^{l-1,n})$ 
		at level $l-1$ for $n\in\{1,\ldots,N-1\}$. 

		\item (2f) Set $X_{s_k}^{l-1,N} = X_{s_k}^{l-1,\star}$ and $\bar{X}_{s_k}^{l-1,N} = \bar{X}_{s_k}^{l-1,\star}$.
	\end{description}
	
	\item If there is an observation at time $t=s_k\in\{1,\ldots,T\}$ 
	\begin{description}[itemsep=0pt,parsep=0pt,topsep=0pt,labelindent=0.5cm]
		\item (2g) Compute normalized weights 
		$w_t^{l-1,n} \propto g_{\theta}(y_{t}|X_{t}^{l-1,n}), 
		\bar{w}_t^{l-1,n} \propto g_{\theta}(y_{t}|\bar{X}_{t}^{l-1,n})$ and 
		$w_t^{l,n} \propto g_{\theta}(y_{t}|X_{t}^{l,n}), 
		\bar{w}_t^{l,n} \propto g_{\theta}(y_{t}|\bar{X}_{t}^{l,n})$ for $n\in\{1,\ldots,N\}$.

		\item (2h) If $t < T$, sample ancestors $(A_{t}^{l-1,n},\bar{A}_{t}^{l-1,n},A_{t}^{l,n},\bar{A}_{t}^{l,n})
		\sim\bar{\mathcal{R}}(w_{t}^{l-1,1:N},\bar{w}_{t}^{l-1,1:N},w_{t}^{l,1:N},\bar{w}_{t}^{l,1:N})$ independently 
		for $n\in\{1,\ldots,N-1\}$ and set $A_{t}^{l-1,N}=N,\bar{A}_{t}^{l-1,N}=N,A_{t}^{l,N}=N,\bar{A}_{t}^{l,N}=N$.
	\end{description}

	\item Else 
	\begin{description}[itemsep=0pt,parsep=0pt,topsep=0pt,labelindent=0.5cm]
		\item (2i) If $k\in\{2, 4, \ldots, K_l\}$, set $A_{s_k}^{l-1,n}=n$ and $\bar{A}_{s_k}^{l-1,n}=n$ at level $l-1$ for $n\in\{1,\ldots,N\}$.
		
		\item (2j) Set $A_{s_k}^{l,n}=n$ and $\bar{A}_{s_k}^{l,n}=n$ at level $l$ for $n\in\{1,\ldots,N\}$.
	\end{description}
\end{description}

\textcompwordmark{}

After the terminal step
\begin{description}[itemsep=0pt,parsep=0pt,topsep=0pt,labelindent=0.5cm]
	\item (3a) Sample particle indexes $(B_{T}^{l-1},\bar{B}_{T}^{l-1},B_{T}^{l},\bar{B}_{T}^{l})\sim 
	\bar{\mathcal{R}}(w_{T}^{l-1,1:N},\bar{w}_{T}^{l-1,1:N}, 
	w_{T}^{l,1:N},\bar{w}_{T}^{l,1:N})$. 
	
	\item (3b) Set particle indexes $B_{s_{2k}}^{l-1} = A_{s_{2k}}^{l-1,B_{s_{2(k+1)}}^{l-1}}$ and 
	$\bar{B}_{s_{2k}}^{l-1} = \bar{A}_{s_{2k}}^{l-1,\bar{B}_{s_{2(k+1)}}^{l-1}}$ 
	at level $l-1$ for $k\in\{0,1,\ldots,K_{l-1}-1\}$. 

	\item (3c) Set particle indexes $B_{s_k}^l = A_{s_k}^{l,B_{s_{k+1}}^l}$ and $\bar{B}_{s_k}^l = \bar{A}_{s_k}^{l,\bar{B}_{s_{k+1}}^l}$ at level $l$ for $k\in\{0,1,\ldots,K_l-1\}$.

\end{description}

\textbf{Output}: a pair of trajectories $(X_{0:T}^{l-1,\circ},\bar{X}_{0:T}^{l-1,\circ})=
(X_{s_{2k}}^{l-1,B_{s_{2k}}^{l-1}},\bar{X}_{s_{2k}}^{l-1,\bar{B}_{s_{2k}}^{l-1}})_{k=0}^{K_{l-1}}\in\mathsf{Z}^{l-1}$  
and a pair of trajectories $(X_{0:T}^{l,\circ},\bar{X}_{0:T}^{l,\circ})=(X_{s_k}^{l,B_{s_k}^l},\bar{X}_{s_k}^{l,\bar{B}_{s_k}^l})_{k=0}^{K_l}\in\mathsf{Z}^{l}$.
\end{algorithm}

\begin{algorithm}
\protect\caption{Maximal coupling of the maximal couplings $\bar{\mathcal{R}}(w^{l-1,1:N},w^{l,1:N})$ 
and $\bar{\mathcal{R}}(\bar{w}^{l-1,1:N},\bar{w}^{l,1:N})$\label{alg:maximal-maximal}}

\textbf{Input}: normalized weights $w^{l-1,1:N}=(w^{l-1,n})_{n=1}^N$, $\bar{w}^{l-1,1:N}=(\bar{w}^{l-1,n})_{n=1}^N$ at level $l-1$ and  
$w^{l,1:N}=(w^{l,n})_{n=1}^N$, $\bar{w}^{l,1:N}=(\bar{w}^{l,n})_{n=1}^N$ at level $l$.

(1) Sample $(A^{l-1},A^l)\sim \bar{\mathcal{R}}(w^{l-1,1:N},w^{l,1:N})$ using Algorithm \ref{alg:2maximal}.
 
If normalized weights at level $l-1$ are identical and normalized weights at level $l$ are non-identical
\begin{description}[itemsep=0pt,parsep=0pt,topsep=0pt,labelindent=0.5cm]
	\item (2a) Set $\bar{A}^{l-1}=A^{l-1}$.
	
	\item (2b) With probability $\bar{w}^{l,\bar{A}^{l-1}}/\bar{w}^{l-1,\bar{A}^{l-1}}$, set $\bar{A}^l=\bar{A}^{l-1}$; 
	otherwise sample $A\sim\mathcal{R}(\bar{w}^{l,1:N})$ and $U\sim\mathcal{U}_{[0,1]}$ 
	until $U>\bar{w}^{l-1,A}/\bar{w}^{l,A}$, and set $\bar{A}^l=A$. 
\end{description}

If normalized weights at level $l-1$ are non-identical and normalized weights at level $l$ are identical
\begin{description}[itemsep=0pt,parsep=0pt,topsep=0pt,labelindent=0.5cm]
	\item (3a) Set $\bar{A}^{l}=A^{l}$.
	
	\item (3b) With probability $\bar{w}^{l-1,\bar{A}^{l}}/\bar{w}^{l,\bar{A}^{l}}$, set $\bar{A}^{l-1}=\bar{A}^{l}$; 
	otherwise sample $A\sim\mathcal{R}(\bar{w}^{l-1,1:N})$ and $U\sim\mathcal{U}_{[0,1]}$ 
	until $U>\bar{w}^{l,A}/\bar{w}^{l-1,A}$, and set $\bar{A}^{l-1}=A$. 
\end{description}

Otherwise
\begin{description}[itemsep=0pt,parsep=0pt,topsep=0pt,labelindent=0.5cm]
	\item (4) With probability $\bar{R}^{l-1,l}(A^{l-1},A^l)/R^{l-1,l}(A^{l-1},A^l)$, set $(\bar{A}^{l-1},\bar{A}^l)=(A^{l-1},A^l)$; 
	otherwise sample $(A,B)\sim \bar{\mathcal{R}}(\bar{w}^{l-1,1:N},\bar{w}^{l,1:N})$ and $U\sim\mathcal{U}_{[0,1]}$ 
	until $U>R^{l-1,l}(A,B)/\bar{R}^{l-1,l}(A,B)$, and set $(\bar{A}^{l-1},\bar{A}^l)=(A,B)$.
	
\end{description}

\textbf{Output}: indexes $(A^{l-1},\bar{A}^{l-1},A^{l},\bar{A}^{l})$.
\end{algorithm}

\begin{algorithm}
\protect\caption{Multilevel CPF (ML-CPF) at parameter $\theta\in\Theta$ and discretization levels $l-1$ and $l\in\mathbb{N}$.\label{alg:ML-CPF}}

\textbf{Input}: a trajectory $X_{0:T}^{l-1,\star}=
(X_{s_{2k}}^{l-1,\star})_{k=0}^{K_{l-1}}\in\mathsf{X}^{l-1}$ 
and a trajectory $X_{0:T}^{l,\star}=(X_{s_k}^{l,\star})_{k=0}^{K_{l}}\in\mathsf{X}^{l}$.

For time step $k=0$
\begin{description}[itemsep=0pt,parsep=0pt,topsep=0pt,labelindent=0.5cm]
	\item (1a) Set $X_{0}^{l-1,n}=x_{\star}$ and 
	$X_{0}^{l,n}=x_{\star}$ for $n\in\{1,\ldots,N\}$.

	\item (1b) Set $w_0^{l-1,n}=N^{-1}, w_0^{l,n}=N^{-1}$ and 
	$A_{0}^{l-1,n}=n, A_{0}^{l,n}=n$ for $n\in\{1,\ldots,N\}$. 
\end{description}
\textcompwordmark{}

For time step $k\in\{1,\ldots,K_l\}$ 
\begin{description}[itemsep=0pt,parsep=0pt,topsep=0pt,labelindent=0.5cm]
	\item (2a) Sample Brownian increment $V_{s_k}^{l,n} \sim \mathcal{N}_d(0_d,\Delta_{l}I_d)$ at level $l$ independently for $n\in\{1,\ldots,N-1\}$.

	\item (2b) Set $X_{s_k}^{l,n} = F_{\theta}^l(X_{s_{k-1}}^{l,A_{s_{k-1}}^{l,n}},V_{s_k}^{l,n})$ 
	at level $l$ for $n\in\{1,\ldots,N-1\}$, and $X_{s_k}^{l,N} = X_{s_k}^{l,\star}$.  
	
	\item If $k\in\{2, 4, \ldots, K_l\}$
	\begin{description}[itemsep=0pt,parsep=0pt,topsep=0pt,labelindent=0.5cm]
		\item (2c) Set Brownian increment $V_{s_k}^{l-1,n} =  V_{s_{k-1}}^{l,n} + V_{s_k}^{l,n}$ at level $l-1$ for $n\in\{1,\ldots,N-1\}$.
		
		\item (2d) Set $X_{s_k}^{l-1,n} = F_{\theta}^{l-1}(X_{s_{k-1}}^{l-1,A_{s_{k-1}}^{l-1,n}}, V_{s_k}^{l-1,n})$ 
		at level $l-1$ for $n\in\{1,\ldots,N-1\}$, and $X_{s_k}^{l-1,N} = X_{s_k}^{l-1,\star}$. 

	\end{description}

	\item If there is an observation at time $t=s_k\in\{1,\ldots,T\}$ 
	\begin{description}[itemsep=0pt,parsep=0pt,topsep=0pt,labelindent=0.5cm]
		\item (2e) Compute normalized weights 
		$w_t^{l-1,n} \propto g_{\theta}(y_{t}|X_{t}^{l-1,n})$ and 
		$w_t^{l,n} \propto g_{\theta}(y_{t}|X_{t}^{l,n})$ for $n\in\{1,\ldots,N\}$.

		\item (2f) If $t < T$, sample ancestors $(A_{t}^{l-1,n},A_{t}^{l,n})
		\sim\bar{\mathcal{R}}(w_{t}^{l-1,1:N},w_{t}^{l,1:N})$ independently 
		for $n\in\{1,\ldots,N-1\}$ and set $A_{t}^{l-1,N}=N,A_{t}^{l,N}=N$.
	\end{description}

	\item Else 
	\begin{description}[itemsep=0pt,parsep=0pt,topsep=0pt,labelindent=0.5cm]
		\item (2i) If $k\in\{2, 4, \ldots, K_l\}$, set $A_{s_k}^{l-1,n}=n$ at level $l-1$ for $n\in\{1,\ldots,N\}$.
		
		\item (2j) Set $A_{s_k}^{l,n}=n$ at level $l$ for $n\in\{1,\ldots,N\}$.
	\end{description}
\end{description}

\textcompwordmark{}

After the terminal step
\begin{description}[itemsep=0pt,parsep=0pt,topsep=0pt,labelindent=0.5cm]
	\item (3a) Sample particle indexes $(B_{T}^{l-1},B_{T}^{l})\sim \bar{\mathcal{R}}(w_{T}^{l-1,1:N}, w_{T}^{l,1:N})$. 
	
	\item (3b) Set particle indexes $B_{s_{2k}}^{l-1} = A_{s_{2k}}^{l-1,B_{s_{2(k+1)}}^{l-1}}$ 
	at level $l-1$ for $k\in\{0,1,\ldots,K_{l-1}-1\}$. 

	\item (3c) Set particle indexes $B_{s_k}^l = A_{s_k}^{l,B_{s_{k+1}}^l}$ at level $l$ for $k\in\{0,1,\ldots,K_l-1\}$.

\end{description}

\textbf{Output}: a trajectory $X_{0:T}^{l-1,\circ}=(X_{s_{2k}}^{l-1,B_{s_{2k}}^{l-1}})_{k=0}^{K_{l-1}}\in\mathsf{X}^{l-1}$  
and a trajectory $X_{0:T}^{l,\circ}=(X_{s_k}^{l,B_{s_k}^l})_{k=0}^{K_l}\in\mathsf{X}^{l}$.
\end{algorithm}

\subsection{Summary of proposed methodology and choice of tuning parameters}\label{sec:summary_proposed_methodology}

We consolidate the algorithms presented in this section by 
summarizing our proposed methodology to unbiasedly estimate the score function $S(\theta)$ below. 
\textcompwordmark{}
\begin{flushleft}
\textbf{Input}: PMF $(P_l)_{l=0}^{\infty}$, number of particles $N$, burn-in $b$ and number of iterations $I$. 

\begin{enumerate}[(1),itemsep=0pt,parsep=0pt,topsep=0pt]

	\item Sample highest discretization level $L$ from $(P_l)_{l=0}^{\infty}$. 

	\item Simulate coupled CPF chains $(X_{0:T}(i),\bar{X}_{0:T}(i))$ using CPF (Algorithm \ref{alg:CPF}) and 
	2-CCPF (Algorithm \ref{alg:2-CCPF}) with $N$ particles at discretization level $0$ as described in \eqref{eqn:coupled2_chains} 
	for iteration $i=0,1,\ldots,\max\{I, \tau_{\theta}^0\}$, and 
	compute unbiased estimator $\widehat{I}_0(\theta)$ of ${I}_0(\theta)$ using time-averaged estimator 
	\eqref{eqn:unbiased_discretized_score} with burn-in $b$ and $I$ iterations.
	
	\item Independently for $l = 1,\ldots,L$, simulate two pairs of coupled CPF chains $(X_{0:T}^{l-1}(i),\bar{X}_{0:T}^{l-1}(i))$ 
	and $(X_{0:T}^{l}(i),\bar{X}_{0:T}^{l}(i))$ using ML-CPF (Algorithm \ref{alg:ML-CPF}) and 
	4-CCPF (Algorithm \ref{alg:4-CCPF}) with $N$ particles at discretization levels $l-1$ and $l$ as described in \eqref{eqn:coupled4_chains} 
	for iteration $i=0,1,\ldots,\max\{I, \bar{\tau}_{\theta}^l\}$, and 
	compute unbiased estimators $\widehat{S}_{l-1}(\theta)$ and $\widehat{S}_l(\theta)$ of the scores 
	$S_{l-1}(\theta)$ and $S_l(\theta)$ using time-averaged estimator 
	\eqref{eqn:unbiased_discretized_score} with burn-in $b$ and $I$ iterations. 
	Compute unbiased estimator of the score increment $I_l(\theta)$ with 
	$\widehat{I}_l(\theta)=\widehat{S}_l(\theta) - \widehat{S}_{l-1}(\theta)$. 
	
	\item Compute unbiased estimator of the score function $S(\theta)$ using 
	independent sum estimator $\widehat{S}(\theta) = \sum_{l=0}^L \widehat{I}_l(\theta)/\mathcal{P}_l$, 
	where $\mathcal{P}_l = \sum_{k=l}^{\infty} P_k$.
	
\end{enumerate}

\textbf{Output}: unbiased estimator $\widehat{S}(\theta)$ of the score function $S(\theta)$. 
\end{flushleft}

We will establish unbiasedness and finite variance properties of $\widehat{S}(\theta)$ in Section~\ref{sec:theory}. 
The cost of the above procedure is $c(\theta)=\sum_{l=0}^Lc_l(\theta)$, where 
$c_l(\theta)$ is the cost of computing $\widehat{I}_l(\theta)$. 
From Sections~\ref{sec:unbiased_discretized_score} and \ref{sec:unbiased_discretized_increment}, we have
$c_l(\theta)=a_\theta^l\times\mathrm{Cost}(M_{\theta}^l)$, with 
$a_\theta^l=\max(2\tau_{\theta}^0-1,I+\tau_{\theta}^0-1)$ for level $l=0$, and 
$a_\theta^l=\max(2\tau_{\theta}^{l-1}-1,I+\tau_{\theta}^{l-1}-1)/2 + \max(2\tau_{\theta}^l-1,I+\tau_{\theta}^l-1)$ 
for level $l\in\mathbb{N}$. 
We take the view here that the cost per application of the CPF kernel $M_{\theta}^l$ is 
$\mathrm{Cost}(M_{\theta}^l)=N K_l = N 2^l T$. Hence the expected cost of computing $\widehat{S}(\theta)$ is 
\begin{align}\label{eqn:expected_cost}
	\mathbb{E}\left[c(\theta)\right] = \sum_{l=0}^{\infty}\mathbb{E}\left[c_l(\theta)\right] \mathcal{P}_l 
	= NT\sum_{l=0}^{\infty} \mathbb{E}\left[a_{\theta}^l\right] 2^l \mathcal{P}_l. 
\end{align}
We will see that one cannot find a PMF $(P_l)_{l=0}^{\infty}$ so that the variance and expected cost of $\widehat{S}(\theta)$ 
are both finite. 
We defer further discussions and the selection of the distribution of $L$ to Section~\ref{sec:theory}, and  
assume for now that we have a given PMF $(P_l)_{l=0}^{\infty}$ that ensures finite variance 
but infinite expected cost. In this regime, we will refrain from discussions of asymptotic efficiency 
in the sense of \citet{glynn1992asymptotic}. 

We now discuss the choice of tuning parameters and various algorithmic considerations. 
In the above description, the choice of $(N, b, I)$ could be level-dependent but optimizing 
these tuning parameters is outside the scope of this work. 
Following the discussion in \citet[Theorem 1]{andrieu2010particle} and the empirical findings in \citet{jacob2020smoothing},  
we will scale the number of particles $N$ linearly with the number of observations $T$. 
Although the variance of $\widehat{I}_l(\theta)$ decreases as we increase the burn-in 
parameter $b\in\mathbb{N}_0$, setting $b$ too large would be inefficient. 
\citet{jacob2020smoothing,jacob2020unbiased} proposed choosing $b$ according to the distribution of the meeting time. 
In our context, as the stopping time $\bar{\tau}_{\theta}^l$ typically decreases as the level $l$ increases, 
a conservative strategy is to select $b$ based on the stopping time of a low discretization level, 
which can be simulated by running ML-CPF and 4-CCPF as in Step 3. 
We will illustrate this numerically in Section~\ref{sec:examples} and experiment with 
various choices of $b$. 
After selecting $b$, one can choose the number of iterations $I\geq b$ to further reduce the variance of 
$\widehat{I}_l(\theta)$, and hence that of $\widehat{S}(\theta)$, at a cost \eqref{eqn:expected_cost} 
that grows linearly with $I$. 
On the other hand, when employing score estimators within a stochastic gradient method, 
taking large values of $I$ to obtain low variance gradient estimators would be inefficient. 
Choosing the tuning parameters $(N, b, I)$ to maximize the efficiency of the resulting 
stochastic gradient method is a highly non-trivial problem, and could be the topic of future work. 

Given a choice of $(N, b, I)$, one can produce $R\in\mathbb{N}$ independent replicates  
$\widehat{S}(\theta)_r = \sum_{l=0}^{L_r} \widehat{I}_l(\theta)_r/\mathcal{P}_l, r\in\{1,\ldots,R\}$,
of $\widehat{S}(\theta)$ in parallel, 
and compute the average $\bar{S}(\theta)=R^{-1}\sum_{r=1}^R\widehat{S}(\theta)_r$ to approximate $S(\theta)$. 
To see the connection to MLMC, we follow \citet[Example 3]{vihola2018unbiased} by noting that $\bar{S}(\theta)$ 
has the same distribution as the random variable
\begin{align}\label{eqn:MLMC_connection}
	\sum_{l=0}^{\infty}\frac{1}{\mathbb{E}\left[R_l\right]}\sum_{r=1}^{R_l}\widehat{I}_l(\theta)_r,
\end{align}
where $R_l=\sum_{r=1}^R\mathbb{I}(L_r\geq l)$ has expectation $\mathbb{E}[R_l] = R \mathcal{P}_l$. 
\citet{vihola2018unbiased} proposed new unbiased estimators with lower variance than $\bar{S}(\theta)$ 
by sampling the random variables $(L_r)_{r=1}^R$ that define $(R_l)_{l=0}^{\infty}$ in \eqref{eqn:MLMC_connection} using stratification. 
As the number of replicates $R\rightarrow\infty$, these improved estimators also attain the same limiting variance 
as the idealized MLMC estimator 
\begin{align}
	\widetilde{S}(\theta) = \sum_{l=0}^{\infty}\frac{1}{\tilde{R}_l}\sum_{r=1}^{\tilde{R}_l}\widehat{I}_l(\theta)_r,
\end{align}
where $\tilde{R}_l = \lfloor R\mathcal{P}_l\rfloor$ is allocated using the PMF $(P_l)_{l=0}^{\infty}$.
From \citet[Thoerem 5]{vihola2018unbiased}, this asymptotic variance is given by 
$\lim_{R\rightarrow\infty}R\mathrm{Var}[\widetilde{S}(\theta)^j]=
\sum_{l=0}^{\infty} \mathrm{Var}[\widehat{I}_l(\theta)^j] /\mathcal{P}_l
<\mathrm{Var}[\widehat{S}(\theta)]$, 
for all $j\in\{1,\ldots,d_{\theta}\}$.



\section{Analysis}\label{sec:theory}
This section is concerned with the theoretical validity of our approach. We first introduce some notation 
needed to state the assumptions which we will rely on. 
Let $(\mathsf{E},\mathcal{E})$ be an arbitrary measurable space. 
We write $\mathcal{B}_b(\mathsf{E})$ as the collection of real-valued, bounded and measurable functions on $\mathsf{E}$.
For real-valued $\varphi:\mathsf{E}\rightarrow\mathbb{R}$ (and vector-valued $\varphi:\mathsf{E}\rightarrow\mathbb{R}^d$),
let $\mathcal{C}^j(\mathsf{E})$ (and $\mathcal{C}_d^j(\mathsf{E})$) denote the collection of $j\in\mathbb{N}$ times continuously differentiable functions,  
and $\mathcal{C}(\mathsf{E})$ (and $\mathcal{C}_d(\mathsf{E})$) for the collection of continuous functions. 
We write $\varphi\in\textrm{Lip}_{\|\cdot\|_2}(\mathbb{R}^{d})$ if the real-valued function 
$\varphi:\mathbb{R}^d\rightarrow\mathbb{R}$ is Lipschitz w.r.t.\ the $\mathbb{L}_2$-norm $\|\cdot\|_2$, i.e.\ 
if there exists a constant $C<\infty$ such that $|\varphi(x)-\varphi(y)| \leq C\|x-y\|_2$ for all $x,y\in\mathbb{R}^{d}$, 
and $\|\varphi\|_{\textrm{Lip}}$ as the Lipschitz constant. 

We recall the definitions of $\Sigma(x)=\sigma(x)\sigma(x)^*$ and 
$b_\theta(x) = \Sigma(x)^{-1}\sigma(x)^*a_{\theta}(x)$ for $x\in\mathbb{R}^d$ 
as they appear in the following.

\begin{assumption}\label{ass:D1}
The drift function $a:\Theta\times\mathbb{R}^d \rightarrow\mathbb{R}^d$ and diffusion coefficient 
$\sigma:\mathbb{R}^d \rightarrow\mathbb{R}^{d\times d}$ satisfy:
\begin{enumerate}[(i)] 

	\item (Smoothness) For any $\theta\in\Theta$, $a_{\theta}^j\in \mathcal{C}^2(\mathbb{R}^d)$ 
	for all $j\in\{1,\ldots,d\}$ components of $a_{\theta}$, and $\sigma^{j,k} \in \mathcal{C}^2(\mathbb{R}^d)$ 
	for all $(j,k)\in\{1,\ldots, d\}$ components of $\sigma$. 
	Also, for any $x\in\mathbb{R}^d$, $\theta\mapsto a_{\theta}^j(x)\in\mathcal{C}(\Theta)$ for all $j\in\{1,\dots,d\}$. 
	
	\item (Uniform ellipticity) $\Sigma(x)$ is uniformly positive definite for all $x\in \mathbb{R}^d$.
	
	\item (Globally Lipschitz) For any $\theta\in\Theta$, there exists a constant $C<\infty$ such that 
	$|a_{\theta}^j(x)-a_{\theta}^j(x')|+|\sigma^{j,k}(x)-\sigma^{j,k}(x')| \leq C \|x-x'\|_2$ for all 
	$(x,x') \in \mathbb{R}^d\times\mathbb{R}^d$ and $(j,k)\in\{1,\dots,d\}^2$. 
\end{enumerate}
\end{assumption}

\begin{assumption}\label{ass:D2}
The drift function $a:\Theta\times\mathbb{R}^d \rightarrow\mathbb{R}^d$, diffusion coefficient 
$\sigma:\mathbb{R}^d \rightarrow\mathbb{R}^{d\times d}$ and 
observation density $g:\Theta\times\mathbb{R}^d\times\mathbb{R}^{d_y}\rightarrow\mathbb{R}^+$ satisfy:

\begin{enumerate}[(i)] 
	\item the inverse of $x\mapsto\Sigma(x)$ satisfies
	$[\Sigma^{-1}]^{j,k}\in\mathcal{B}_b(\mathbb{R}^d)\cap\textrm{\emph{Lip}}_{\|\cdot\|_2}(\mathbb{R}^d)$ for all $(j,k)\in\{1,\dots,d\}^2$.
	
	\item For any $\theta\in\Theta$, $a_\theta^j\in\mathcal{B}_b(\mathbb{R}^d)$ for all $j\in\{1,\dots,d\}$, and  
	$\sigma^{j,k}\in\mathcal{B}_b(\mathbb{R}^d)$ for all $(j,k)\in\{1,\dots,d\}^2$. 
	
	\item For any $\theta\in\Theta$, there exists $0<\underline{C}<\overline{C}<\infty$ such that 
	$\underline{C}\leq g_{\theta}(y|x)\leq \overline{C}$ for all $(x,y)\in \mathbb{R}^d\times\mathbb{R}^{d_y}$. 
	In addition, for any $(\theta,y)\in\Theta\times\mathbb{R}^{d_y}$, we have $g_{\theta}(y|\cdot)\in\textrm{\emph{Lip}}_{\|\cdot\|_2}(\mathbb{R}^d)$.
	
	\item For any $(\theta,y)\in\Theta\times\mathbb{R}^{d_y}$, $[\nabla_{\theta}\log g_{\theta}(y|\cdot)]^j
	\in\mathcal{B}_b(\mathbb{R}^d)\cap\textrm{\emph{Lip}}_{\|\cdot\|_2}(\mathbb{R}^d)$ for all $j\in\{1,\dots,d_{\theta}\}$.

	\item For any $\theta\in\Theta$, $[\nabla_{\theta} [b_{\theta}]^j]^k, [\nabla_{\theta} (b_{\theta}^j)^2]^k
	\in\mathcal{B}_b(\mathbb{R}^d)\cap\textrm{\emph{Lip}}_{\|\cdot\|_2}(\mathbb{R}^d)$ for all 
	$(j,k)\in\{1,\dots,d\}\times\{1,\dots,d_{\theta}\}$.	
\end{enumerate}	
\end{assumption}

Assumptions \ref{ass:D1} and \ref{ass:D2} should be understood as sufficient conditions 
to verify the validity of our proposed methodology, and are not necessary for its implementation.  
Although some of these assumptions are strong, they have been adopted to simplify the exposition 
of our analysis, as is common in theoretical works on particle filtering.  
Some assumptions can be relaxed at the expense of more involved and lengthy 
technical arguments. 

We first give an intermediate result on the convergence of the time-discretized score functions $S_l(\theta)$ 
defined in \eqref{eqn:discretized_score}.

\begin{theorem}\label{prop:conv_grad_log_like}
Under Assumptions \ref{ass:D1} and \ref{ass:D2}, for any $(T,\theta)\in\mathbb{N}\times\Theta$, there exists a constant $C<\infty$ such that 
for any $l\in\mathbb{N}_0$, $\|S_l(\theta) - S(\theta) \|_1 \leq C\Delta_l^{1/2}$. 
\end{theorem}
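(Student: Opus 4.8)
The plan is to express both $S(\theta)$ and $S_l(\theta)$ as ratios of expectations against a common driftless reference, and to control the two ratios simultaneously through a synchronous coupling of the continuous and Euler--Maruyama dynamics. Concretely, let $\mathbb{Q}$ be the law of $dX_t = \sigma(X_t)\,dW_t$ as in Section~\ref{sec:score_cts}, and let $\mathbb{Q}^l$ be the analogous driftless Euler--Maruyama law at level $l$, with transitions $\mathcal{N}_d(\cdot\,; x, \Delta_l\Sigma(x))$. Writing $\Gamma_\theta(X) = (d\mathbb{P}_\theta/d\mathbb{Q})(X)\prod_{t=1}^T g_\theta(y_t|X_t)$ and $\Gamma_\theta^l$ for its discrete counterpart (a product of Gaussian density ratios whose logarithm is exactly the discretization of the exponent in the Radon--Nikodym derivative), Fisher's identity gives $S(\theta) = \mathbb{E}_{\mathbb{Q}}[G_\theta(X)\Gamma_\theta(X)]/\mathbb{E}_{\mathbb{Q}}[\Gamma_\theta(X)]$ and likewise $S_l(\theta) = \mathbb{E}_{\mathbb{Q}^l}[G_\theta^l \Gamma_\theta^l]/\mathbb{E}_{\mathbb{Q}^l}[\Gamma_\theta^l]$, with both denominators equal to the respective marginal likelihoods. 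Since $g_\theta \ge \underline{C} > 0$ by Assumption~\ref{ass:D2}(iii), both denominators are bounded below uniformly in $l$, so it suffices to bound the $L_1$ differences of numerators and denominators and assemble them via the identity $N_l/D_l - N/D = (N_l - N)/D_l + N(D - D_l)/(D D_l)$ together with Cauchy--Schwarz.

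First I would realize $\mathbb{Q}$ and $\mathbb{Q}^l$ on a common probability space by driving the continuous diffusion and the Euler scheme with the same Brownian motion $W$, letting $\tilde X^l$ denote the continuously interpolated Euler path, which agrees with the level-$l$ chain at all grid points and in particular at the observation times. Under the global Lipschitz conditions of Assumption~\ref{ass:D1}(iii), the classical strong error estimate for Euler--Maruyama yields $\sup_{0\le t\le T}\mathbb{E}_{\mathbb{Q}}\big[\|X_t - \tilde X_t^l\|_2^{p}\big] \le C_p \Delta_l^{p/2}$ for every $p\ge 1$, and this is the sole source of the $\Delta_l^{1/2}$ rate.

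The main work, and the principal obstacle, is to transfer this path closeness to the functionals $\Gamma_\theta,\Gamma_\theta^l$ and $G_\theta,G_\theta^l$ in $L_2(\mathbb{Q})$. The exponents and the functional $G_\theta$ both contain Lebesgue integrals $\int_0^T h_\theta(X_t)\,dt$ and Itô integrals $\int_0^T h_\theta(X_t)^*\Sigma(X_t)^{-1}\,dX_t$; their discrete analogues replace these by left-point Riemann and Itô sums over the grid. Each difference splits into an error from freezing the integrand at the left grid point and an error from replacing $X$ by $\tilde X^l$. For the stochastic integrals, Itô's isometry (equivalently Burkholder--Davis--Gundy for higher moments) turns the first into $\sum_k \int_{s_{k-1}}^{s_k}\mathbb{E}_{\mathbb{Q}}[\|h_\theta(X_t) - h_\theta(X_{s_{k-1}})\|_2^2]\,dt = O(\Delta_l)$ by Lipschitzness of $h_\theta$ (Assumptions~\ref{ass:D1}--\ref{ass:D2}) and the $O(\Delta_l)$ modulus of continuity of $X$, while the second is controlled by the strong error above; each integral functional is thus within $O(\Delta_l^{1/2})$ in $L_p$. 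For the exponential factors I would use $|\Gamma_\theta(X) - \Gamma_\theta^l(\tilde X^l)| \le (\Gamma_\theta(X) + \Gamma_\theta^l(\tilde X^l))\,|U_\theta(X) - U_\theta^l(\tilde X^l)|$, where $U_\theta,U_\theta^l$ denote the corresponding log-weights (the change-of-measure exponent plus the log-observation densities), and close the bound by Hölder: boundedness of the drift, of $\Sigma^{-1}$ and of $g_\theta$ (Assumption~\ref{ass:D2}) gives uniform-in-$l$ finiteness of all moments of $\Gamma_\theta$ and $\Gamma_\theta^l$ via Gaussian exponential-moment bounds, while the preceding step controls $\|U_\theta(X) - U_\theta^l(\tilde X^l)\|_{L_p}$ at rate $\Delta_l^{1/2}$. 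The same Lipschitz-plus-strong-error argument handles $\|G_\theta(X) - G_\theta^l(\tilde X^l)\|_{L_2}$ and the bounded, Lipschitz likelihood products. Combining these $L_2$ estimates through the ratio identity and the uniform lower bound on the denominators then yields $\|S_l(\theta) - S(\theta)\|_1 \le C\Delta_l^{1/2}$, with $C$ depending on $T$ and $\theta$ but not on $l$. The delicate point throughout is ensuring that the exponential moment bounds for the Radon--Nikodym derivatives hold uniformly in the discretization level, so that the Hölder constants do not degrade as $l\to\infty$.
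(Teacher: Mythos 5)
Your proposal is correct in outline, but it routes the argument through a different reference measure than the paper, and this is a genuine structural difference rather than a cosmetic one. The paper's proof never changes measure to the driftless laws: it applies Fisher's identity under the \emph{drifted} dynamics, writing $S_l(\theta)$ and $S(\theta)$ as ratios whose numerators and denominators are expectations, under $\mathbb{P}_{\theta}$, of the bounded weight $\varphi_{\theta}(x_1,\dots,x_T)=\prod_{t=1}^T g_{\theta}(y_t|x_t)$ (times $G_{\theta}^l$ or $G_{\theta}$ in the numerator), with the continuous path $X^{\star}$ and its Euler discretization $\widetilde{X}^{\star}$ coupled through a common Brownian motion. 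Because Assumption~\ref{ass:D2}(iii) makes $\varphi_{\theta}$ bounded above and below, the ratio algebra (the paper's terms $T_1$ and $T_2$) is immediate, and all of the analytic work is pushed into Lemma~\ref{lem:diff1} and Lemmata~\ref{lem:diff4}--\ref{lem:diff6}, which control $G_{\theta}^l(\widetilde{X}_{0:T})-G_{\theta}(X)$ in $\mathbb{L}_r$ via the same martingale-plus-remainder (BDG, Lipschitz, strong Euler error) estimates you invoke. Your version instead keeps the Girsanov exponential inside the weights $\Gamma_{\theta},\Gamma_{\theta}^l$, so you must additionally prove that these Radon--Nikodym derivatives have moments of all orders \emph{uniformly in} $l$ --- the point you yourself flag as delicate. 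That lemma is in fact true under the stated assumptions: $b_{\theta}$ is bounded (Assumptions~\ref{ass:D2}(i)--(ii)), so the continuous exponent satisfies Novikov-type exponential moment bounds, and for the discrete weight the conditional Gaussian moment generating function of each increment contributes a factor $\exp(C p^2 \Delta_l)$, which telescopes over the $K_l=T/\Delta_l$ steps to $\exp(Cp^2T)$ independently of $l$; also note that your identification of $\log\Gamma_{\theta}^l$ with the left-point discretization of the Girsanov exponent is exactly right, since the ratio of the drifted to driftless Euler transition densities is $\exp\{b_{\theta}^{*}\Sigma^{-1}\sigma^{*}\Delta x - \tfrac{\Delta_l}{2}\|b_{\theta}\|_2^2\}$. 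So your route works, and it has the appeal of mirroring the continuous-time representation of Section~\ref{sec:score_cts} under a single $\theta$-independent reference; what the paper's choice buys is the complete elimination of the exponential-moment step, since under the drifted laws the only unbounded quantity left is $G$ itself, whose polynomial moments are what the coupling estimates already deliver.
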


We note that the upper-bound can be sharpened to $\mathcal{O}(\Delta_l)$. 
The proof which involves studying the time-discretization of diffusions 
can be found in Appendix \ref{app:diff_proc}. 
The following is our main result that establishes basic properties of our score estimator. 

\begin{theorem}\label{theo:ub}
Under Assumptions \ref{ass:D1} and \ref{ass:D2}, for any number of particles $N\geq 2$, 
burn-in $b\in\mathbb{N}_0$ and number of iterations $I\geq b$, 
there exists a choice of PMF $(P_l)_{l=0}^{\infty}$ such that for any $\theta\in\Theta$, 
the score estimator $\widehat{S}(\theta)$ in \eqref{eq:ub3} is unbiased and has finite variance. 
\end{theorem}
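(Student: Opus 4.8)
The plan is to verify the two sufficient conditions for the independent sum estimator laid out in the excerpt, namely the unbiasedness of each increment estimator \eqref{eq:ub1} and the summability condition \eqref{eq:ub2}, and then invoke the result of \citet{rhee2015unbiased} quoted after \eqref{eq:ub3}. The summability condition decomposes into two separately controllable pieces: a \emph{bias} term $\left(S_l(\theta)^j - S(\theta)^j\right)^2$ and a \emph{variance} term $\mathrm{Var}[\widehat{I}_l(\theta)^j]$, and the freedom to choose the PMF $(P_l)_{l=0}^\infty$ is what we exploit to make the weighted sum converge.

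\textbf{Step 1 (Unbiasedness of the increments).} I would first establish \eqref{eq:ub1}. For $l=0$ this is the unbiasedness of the time-averaged estimator $\widehat{S}_0(\theta)$ built from the $2$-CCPF, which follows from the construction of \citet{jacob2020smoothing} once one knows the CPF chain is uniformly ergodic and the meeting time has appropriate tails; the relevant ergodicity is cited after \eqref{eqn:cpf_chain}. For $l\in\mathbb{N}$, I would argue that the $4$-CCPF kernel $\bar{M}_\theta^{l-1,l}$ has the correct four marginals (stated in the text after \eqref{eqn:4CCPF}), so that each of $\widehat{S}_{l-1}(\theta)$ and $\widehat{S}_l(\theta)$ is individually unbiased for $S_{l-1}(\theta)$ and $S_l(\theta)$ by the same Jacob--Lindsten--Robert argument; linearity then gives $\mathbb{E}[\widehat{I}_l(\theta)] = S_l(\theta) - S_{l-1}(\theta) = I_l(\theta)$. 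The key prerequisites here are that the meeting times $\tau_\theta^l$ and the stopping time $\bar{\tau}_\theta^l$ have geometric (or at least sufficiently light) tails, and that $G_\theta^l$ has finite moments under the smoothing distribution; these should follow from Assumptions \ref{ass:D1}--\ref{ass:D2} via the uniform minorization underlying CPF ergodicity together with the boundedness built into Assumption \ref{ass:D2}.

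\textbf{Step 2 (Controlling the two terms in \eqref{eq:ub2}).} The bias term is handled immediately by Theorem \ref{prop:conv_grad_log_like}: since $\|S_l(\theta) - S(\theta)\|_1 \leq C\Delta_l^{1/2}$, each component satisfies $\left(S_l(\theta)^j - S(\theta)^j\right)^2 \leq C^2 \Delta_l = C^2 2^{-l}$, decaying geometrically. The variance term $\mathrm{Var}[\widehat{I}_l(\theta)^j]$ is the crux: I would show it is $\mathcal{O}(\Delta_l^{\beta})$ for some rate $\beta>0$. This is precisely where the novel coupled resampling scheme of Algorithm \ref{alg:maximal-maximal} enters — the whole point of the maximal-coupling-of-maximal-couplings design is that the two pairs of chains at levels $l-1$ and $l$ are driven by common Brownian increments and coupled ancestor indices, so that the \emph{multilevel difference} $G_\theta^l(X_{0:T}^l(i)) - G_\theta^{l-1}(X_{0:T}^{l-1}(i))$ is small in $L_2$. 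Combining a strong-error estimate for the Euler--Maruyama scheme under common driving noise (standard MLMC, \citet{giles2008multilevel}) with the Lipschitz regularity of $G_\theta^l$ afforded by Assumption \ref{ass:D2}, and a bound on the contribution of the bias-correction sum in \eqref{eqn:unbiased_discretized_score} using the meeting-time tails, should yield the decay.

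\textbf{Step 3 (Choosing the PMF and concluding).} Given $\mathrm{Var}[\widehat{I}_l(\theta)^j] + (S_l(\theta)^j - S(\theta)^j)^2 = \mathcal{O}(2^{-\beta l})$ for some $\beta>0$, I would take $(P_l)_{l=0}^\infty$ with polynomially-weighted geometric tails $\mathcal{P}_l \propto 2^{-\rho l}$ for a rate $0 < \rho < \beta$, so that $\sum_l \mathcal{P}_l^{-1} 2^{-\beta l} = \sum_l 2^{(\rho - \beta)l} < \infty$, verifying \eqref{eq:ub2} for every component $j$. An appeal to the cited \citet{rhee2015unbiased} result then gives that $\widehat{S}(\theta)$ is unbiased with finite variance. \textbf{The main obstacle} I anticipate is Step 2: establishing the variance decay of the coupled increment estimator, because it requires simultaneously controlling the strong error of the coupled time-discretizations, the discrepancy introduced by the non-trivial coupled resampling of Algorithm \ref{alg:maximal-maximal} across levels, and the meeting-time/stopping-time tails that govern the bias-correction terms in \eqref{eqn:unbiased_discretized_score} — all uniformly over the MCMC iteration index. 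This is the genuinely new piece and the heart of the paper's methodological contribution; the other two steps are comparatively routine given the cited results and Theorem \ref{prop:conv_grad_log_like}.
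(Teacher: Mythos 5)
Your proposal's skeleton coincides with the paper's proof: unbiasedness \eqref{eq:ub1} is obtained by verifying the three assumptions of \citet[Theorem 3.1]{jacob2020smoothing} (bounded weights from Assumption~\ref{ass:D2}$(iii)$, a uniform lower bound on the meeting probability, and $\mathbb{L}_2$-integrability of $G_{\theta}^l$ together with uniform ergodicity); the bias half of \eqref{eq:ub2} is exactly Theorem~\ref{prop:conv_grad_log_like}; and the PMF is chosen with geometric decay strictly slower than the variance decay (the paper takes $P_l\propto\Delta_l^{2\phi\alpha}$, $\alpha\in(0,1)$). Steps 1 and 3, and the bias part of Step 2, are therefore essentially the paper's argument.

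The genuine gap is the variance bound $\mathrm{Var}[\widehat{I}_l(\theta)^j]=\mathcal{O}(\Delta_l^{\beta})$ in Step 2, and it is not just a matter of deferred detail: the route you sketch would not go through as stated. First, $G_{\theta}^l$ is not a Lipschitz functional of the path under Assumption~\ref{ass:D2}; it contains terms of the form $\sum_k \nabla_{\theta}\rho_{\theta}(X_{s_{k-1}})(X_{s_k}-X_{s_{k-1}})$, and cross-level differences of such terms must be controlled by martingale (Burkholder--Gundy--Davis) plus remainder decompositions as in Lemmata~\ref{lem:diff4}--\ref{lem:diff6}. Second, and more fundamentally, the coupled resampling of Algorithm~\ref{alg:maximal-maximal} makes ancestor indices at levels $l-1$ and $l$ coincide only with high probability, not surely; the paper therefore has to track the set $\mathsf{S}_t^l$ of particles whose \emph{entire ancestral lineages} agree across levels, prove $1-\check{\mathbb{E}}_{\theta}^{l-1,l}[\mathrm{Card}(\mathsf{S}_t^l)/N]\leq C\Delta_l^{\frac{1}{2}\wedge\beta}$ (Lemma~\ref{lem:cccpf3}), and split every expectation over this event and its complement. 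Third, each CPF sweep conditions on frozen reference trajectories that are outputs of the previous sweep, so the cross-level closeness of the references (membership in $\mathsf{B}^l_{\beta,C'}\cap\mathsf{G}^l_{\beta,C'}$) is itself random and must be propagated by induction over the MCMC iteration index, yielding bounds that grow linearly in $i$ (Lemma~\ref{lem:mc_fourth_lem}); only the geometric tail of the stopping time $\bar{\tau}_{\theta}^l$ (Lemma~\ref{lem:coup_prob}) then absorbs this growth to give $\bar{\mathbb{E}}_{\theta}^{l-1,l}[\|\widehat{I}_l(\theta)\|_2^2]\leq C\Delta_l^{2\phi}$ (Lemma~\ref{lem:mc_fifth_lem}). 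These three mechanisms --- lineage-coincidence control, reference-path induction coupled to the stopping-time tails, and martingale decompositions in place of Lipschitz arguments --- constitute essentially the whole of the paper's appendix and are absent from your plan; note also that they deliver only a rate $\phi<1/2$, weaker than the standard MLMC strong-error rate your sketch implicitly presumes, which is precisely why the paper's estimator has finite variance but infinite expected cost.
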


The proof of Theorem \ref{theo:ub} in Appendix \ref{app:CCPF} involves analyzing 
various aspects of the 4-CCPF chains (Appendix \ref{sec:analysis_first} \& \ref{sec:analysis_second}) 
and its initialization (Appendix \ref{sec:analysis_third}), followed by the 
properties of our score estimators that are constructed using these coupled Markov chains (Appendix \ref{sec:analysis_fourth}). 
It follows from the proof that the left-hand side of \eqref{eq:ub2} is upper-bounded by 
\begin{align}\label{eqn:upperbound_criterion}
	C(\theta)\sum_{l=0}^{\infty}\mathcal{P}_l^{-1}\Delta_l^{2\phi}, 
\end{align}
where $C(\theta)<\infty$ is a parameter-dependent constant and 
$\phi\in(0,1/2)$ is a constant determined in our analysis that does not depend on $l$. 
Hence any choice of PMF $(P_l)_{l=0}^{\infty}$ such that the sum \eqref{eqn:upperbound_criterion} 
is finite would be valid; e.g.\ $P_l\propto \Delta_l^{2\phi\alpha}$ for any $\alpha\in(0,1)$. 
Due to the technical complexity of the problem and algorithms under consideration, 
the rate in \eqref{eqn:upperbound_criterion} is not sharp. 
We conjecture that the correct rate corresponds to having $\phi=1/4$ and 
a better rate of $\phi=1/2$ can be obtained in the case of constant diffusion coefficient $\sigma$. 

Using Lemma \ref{lem:coup_prob} in Appendix \ref{app:CCPF}, we can upper-bound the expected cost \eqref{eqn:expected_cost} 
by 
\begin{align}\label{eqn:upperbound_cost}
	C(\theta,T,N,b,I)NT\sum_{l=0}^{\infty} 2^l \mathcal{P}_l,
\end{align}
where $C(\theta,T,N,b,I)<\infty$ is another constant that is independent of $l$. 
As we are in a setting where $\phi\leq 1/2$, there is no choice of PMF $(P_l)_{l=0}^{\infty}$ 
that can keep both \eqref{eqn:upperbound_criterion} and \eqref{eqn:upperbound_cost} finite 
\citep[Section~4]{rhee2015unbiased}. 
This is a consequence of employing the Euler--Maruyama discretization scheme \eqref{eq:disc_state} 
and the choice of coupled resampling scheme in Algorithm \ref{alg:maximal-maximal}; 
future work could consider the antithetic truncated Milstein scheme of \citet{giles2014antithetic} 
and improved coupled resampling schemes such as \citet{ballesio2020wasserstein}. 
Our numerical implementations will follow the approach of \citet{rhee2015unbiased} 
by choosing the PMFs $P_l\propto \Delta_l^{1/2}l(\log_2(1+l))^2$ 
and $P_l\propto \Delta_l l (\log_2(1+l))^2$ in the case of non-constant and constant diffusion coefficients, respectively. 
These choices yield score estimators with finite variance but infinite expected cost, 
which is a sensible compromise when employing these gradients within stochastic gradient methods. 
Moreover, under these choices, the unbiased estimators also achieve 
computational complexities that are similar to MLMC estimators \citep{giles2008multilevel}.

\section{Applications}\label{sec:examples}

\subsection{Ornstein--Uhlenbeck process}\label{sec:ornstein_uhlenbeck}
We consider an Ornstein--Uhlenbeck process $X=(X_t)_{0\leq t\leq T}$ in $\mathbb{R}$, defined by the SDE 
\begin{align}\label{eqn:OU_SDE}
	dX_t = \theta_1(\theta_2-X_t)dt + \sigma dW_t,\quad X_0=0.
\end{align}
The parameter $\theta_{1}> 0$ can be interpreted as the speed of the mean reversion to the 
long-run equilibrium value $\theta_{2}\in\mathbb{R}$. 
This corresponds to \eqref{eq:intro_diff} with initial condition $x_{\star}=0$, linear drift function $a_{\theta}(x)=\theta_1(\theta_2-x)$ 
and constant diffusion coefficient $\sigma(x)=\sigma>0$ for $x\in\mathbb{R}$. 
We assume that the process is observed at unit times with Gaussian measurement errors, 
i.e.\ $Y_t|X\sim g_{\theta}(\cdot|X_t)=\mathcal{N}(X_t,\theta_3)$ independently for $t\in\{1,\ldots,T\}$ and some $\theta_3>0$. 
We will generate observations $y_{1:T}$ by simulating from the model with 
parameter $\theta=(\theta_1,\theta_2,\theta_3)=(2,7,1)$. 
This setting is considered as it is possible to compute the score function \eqref{eqn:score_function} 
exactly using Kalman smoothing; see Appendix \ref{app:OU} for details and 
model-specific expressions to evaluate \eqref{eqn:discrete_test_function}. 

Figure~\ref{fig:ou_stoppingtimes} illustrates how the distribution of the 
stopping time $\bar{\tau}_{\theta}^l$ varies with the discretization level $l$ on a simulated dataset with $T=25$ observations. 
We took $l=3$ as the lowest discretization level as lower levels lead to numerically unstable trajectories. 
As alluded earlier, the coupled resampling scheme proposed in Algorithm \ref{alg:maximal-maximal} (referred to as ``maximal'') 
leads to smaller and more stable stopping times for large enough levels. 
As alternatives, we consider a modification (``other maximal'') that ensures 4-CCPF admits 2-CCPF as marginals on each level 
and a scheme that uses common uniform random variables (``common uniforms''). 
While the schemes based on maximal couplings have comparable stopping times, the approach based on common uniform 
variables gives rise to significantly larger stopping times.
Figure \ref{fig:ou_varincrements} reveals that these two alternative coupled resampling schemes 
do not induce sufficient dependencies between the four CPF chains. 
As the variance of the estimated increment does not decrease 
with the discretization level, this precludes their use within our unbiased score estimation framework. 

To show the impact of the choice of $b$ and $I$, we considered three types of estimators corresponding to 
having $b=0$ and $I=b$ (``naive''); $b=90\%\textrm{-quantile}(\bar{\tau}_{\theta}^l)$ at level $l=3$ and $I=b$ (``simple''); 
and $b=90\%\textrm{-quantile}(\bar{\tau}_{\theta}^l)$ and $I= 10b$ (``time-averaged''), where
$90\%\textrm{-quantile}(\bar{\tau}_{\theta}^l)$ denotes the 90\% sample quantile of the stopping time at level $l$.
The benefits of increasing $b$ and $I$ in terms of variance reduction are consistent with findings in \citet{jacob2020smoothing,jacob2020unbiased}.
Under our proposed coupling, all three choices yield estimators of increments whose variance decrease exponentially with the level, 
which agrees with our theoretical results (see Lemma \ref{lem:mc_fifth_lem} in Appendix \ref{app:CCPF}).
Hence we can employ any of these estimators within the score estimation methodology outlined 
in Section \ref{sec:summary_proposed_methodology}. 
Figure \ref{fig:ou_squarederrors} displays the resulting squared error $\| \widehat{S}(\theta) - S(\theta) \|_2^2$ 
and cost of $100$ independent replicates. This plot suggests having $N=128$ particles is sufficient in the case of $T=25$ observations. 
The choice of $b$ and $I$ also allows a tradeoff between error and cost. 
As we increase the number of observations $T$, Figure \ref{fig:ou_stoppingtimes_nobs} shows it is important 
to scale the number of particles $N$ linearly with $T$ to obtain stable and non-exponentially increasing stopping times. 
Lastly, Figures~\ref{fig:ou_averaging1} and \ref{fig:ou_averaging2} concern the averaging of independent replicates of the score estimator 
$(\widehat{S}(\theta)_r)_{r=1}^R$. Figure~\ref{fig:ou_averaging1} shows that the average $\bar{S}(\theta)=R^{-1}\sum_{r=1}^R\widehat{S}(\theta)_r$ 
satisfies the standard Monte Carlo rate as $R\rightarrow\infty$, which is consistent with its properties in Theorem \ref{theo:ub}, 
at a linear cost in $R$ as illustrated in Figure~\ref{fig:ou_averaging2}.

\begin{figure}
\centering
\begin{subfigure}[t]{0.45\textwidth}
\includegraphics[trim=0pt 32pt 0pt 25pt,clip,width=.95\textwidth]{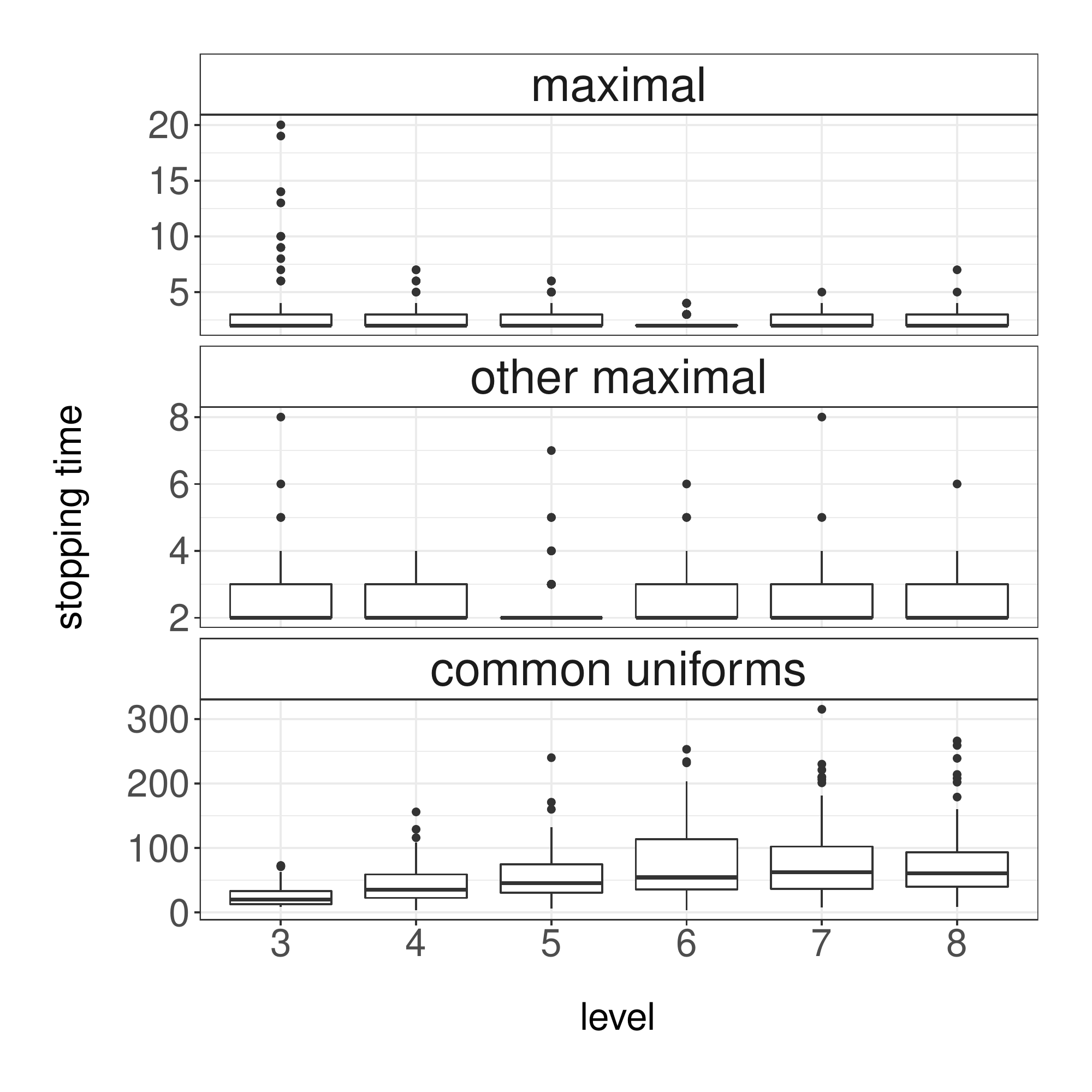}
\caption{Boxplots of stopping time $\bar{\tau}_{\theta}^l$ against discretization level $l$}
\label{fig:ou_stoppingtimes}
\end{subfigure} \quad
\begin{subfigure}[t]{0.45\textwidth}
\includegraphics[trim=0pt 32pt 0pt 25pt,clip,width=.95\textwidth]{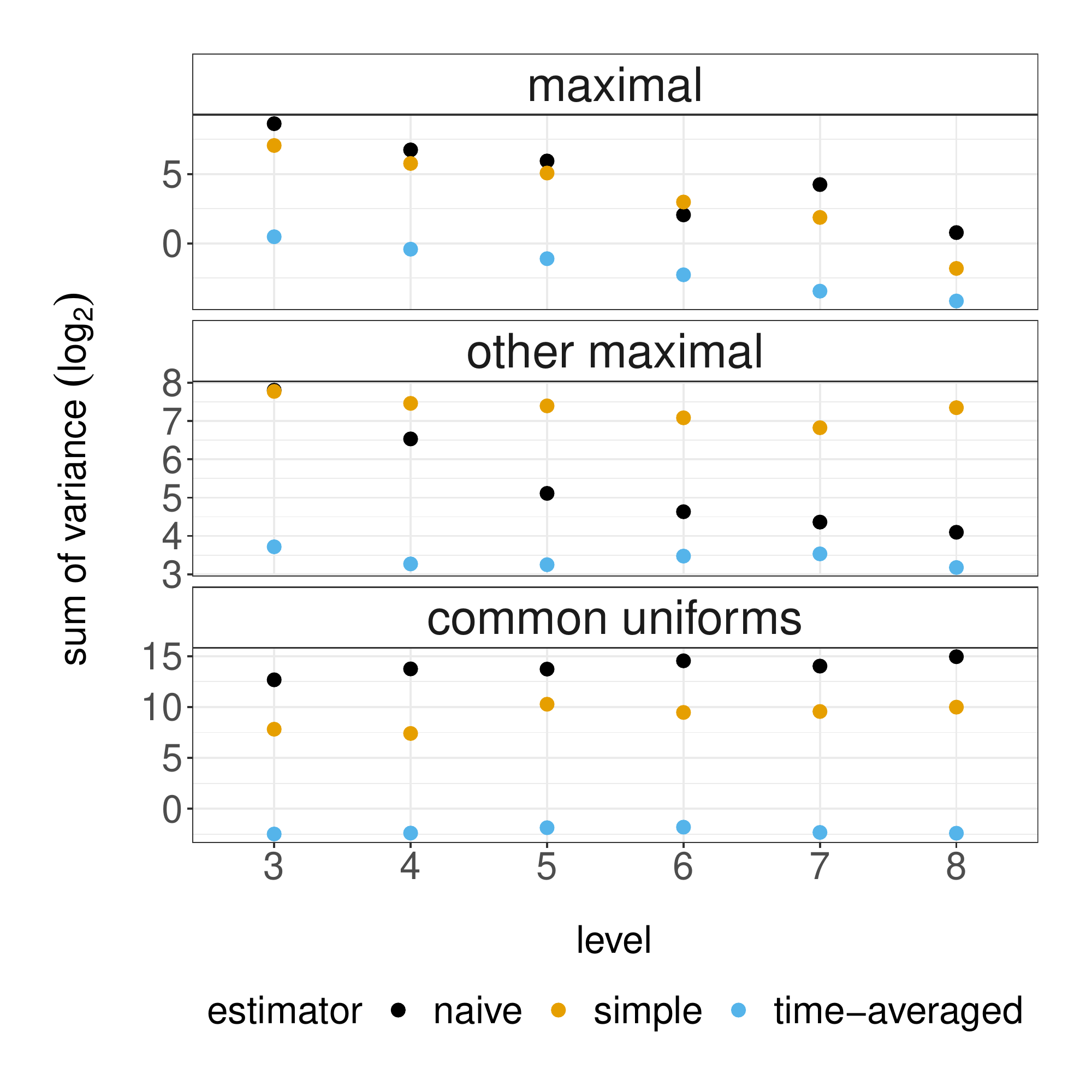}
\caption{Variance of score increment estimator summed over components 
$\sum_{j=1}^{d_{\theta}}\mathrm{Var}[\widehat{I}_l(\theta)^j]$ against discretization level $l$} 
\label{fig:ou_varincrements}
\end{subfigure}
\begin{subfigure}[t]{0.45\textwidth}
\includegraphics[trim=0pt 32pt 0pt 0pt,clip,width=.95\textwidth]{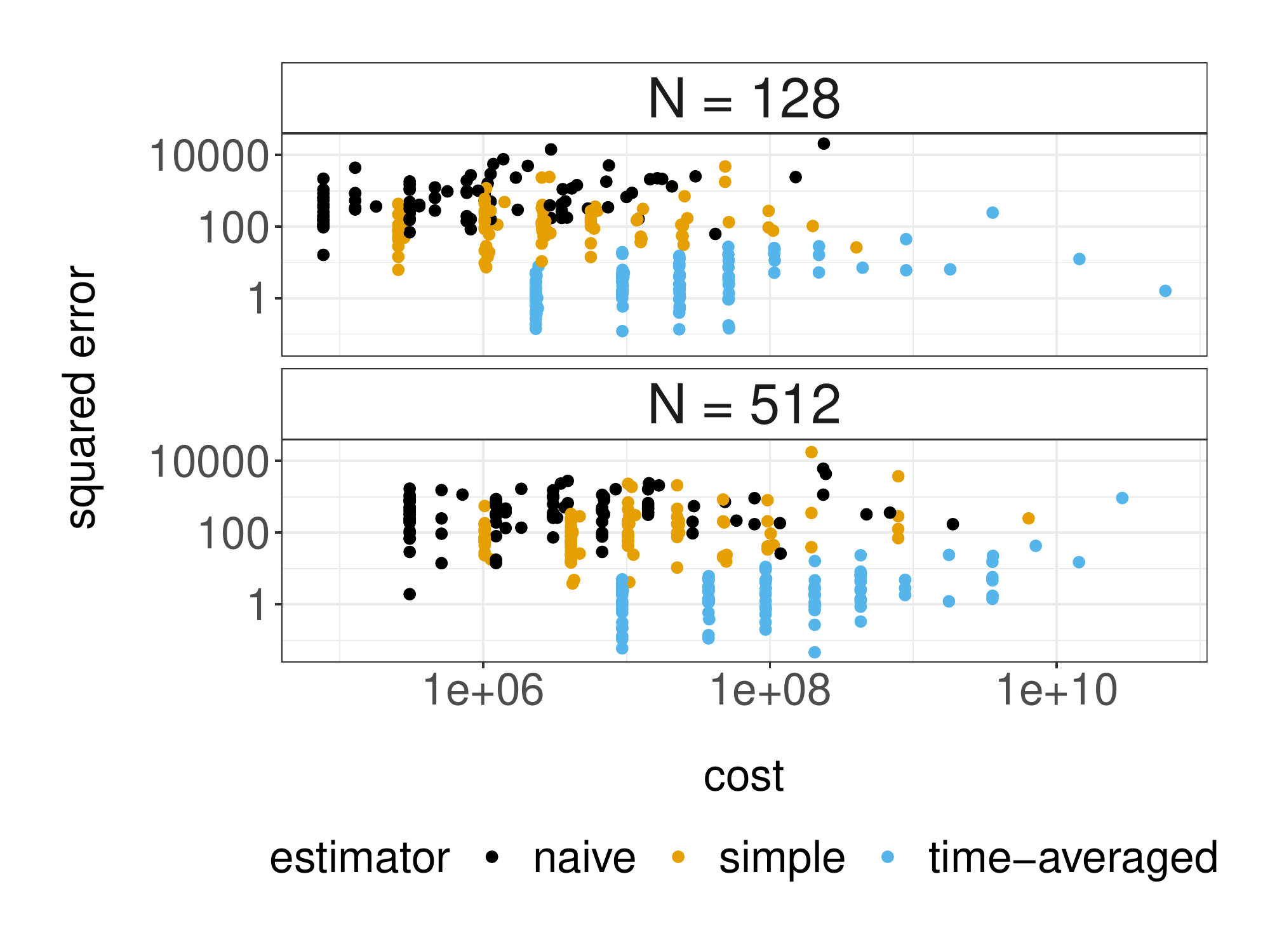}
\caption{Squared error $\| \widehat{S}(\theta) - S(\theta) \|_2^2$ against cost with $N\in\{128,512\}$ particles}
\label{fig:ou_squarederrors}
\end{subfigure} \quad
\begin{subfigure}[t]{0.45\textwidth}
\includegraphics[trim=0pt 32pt 0pt 0pt,clip,width=.95\textwidth]{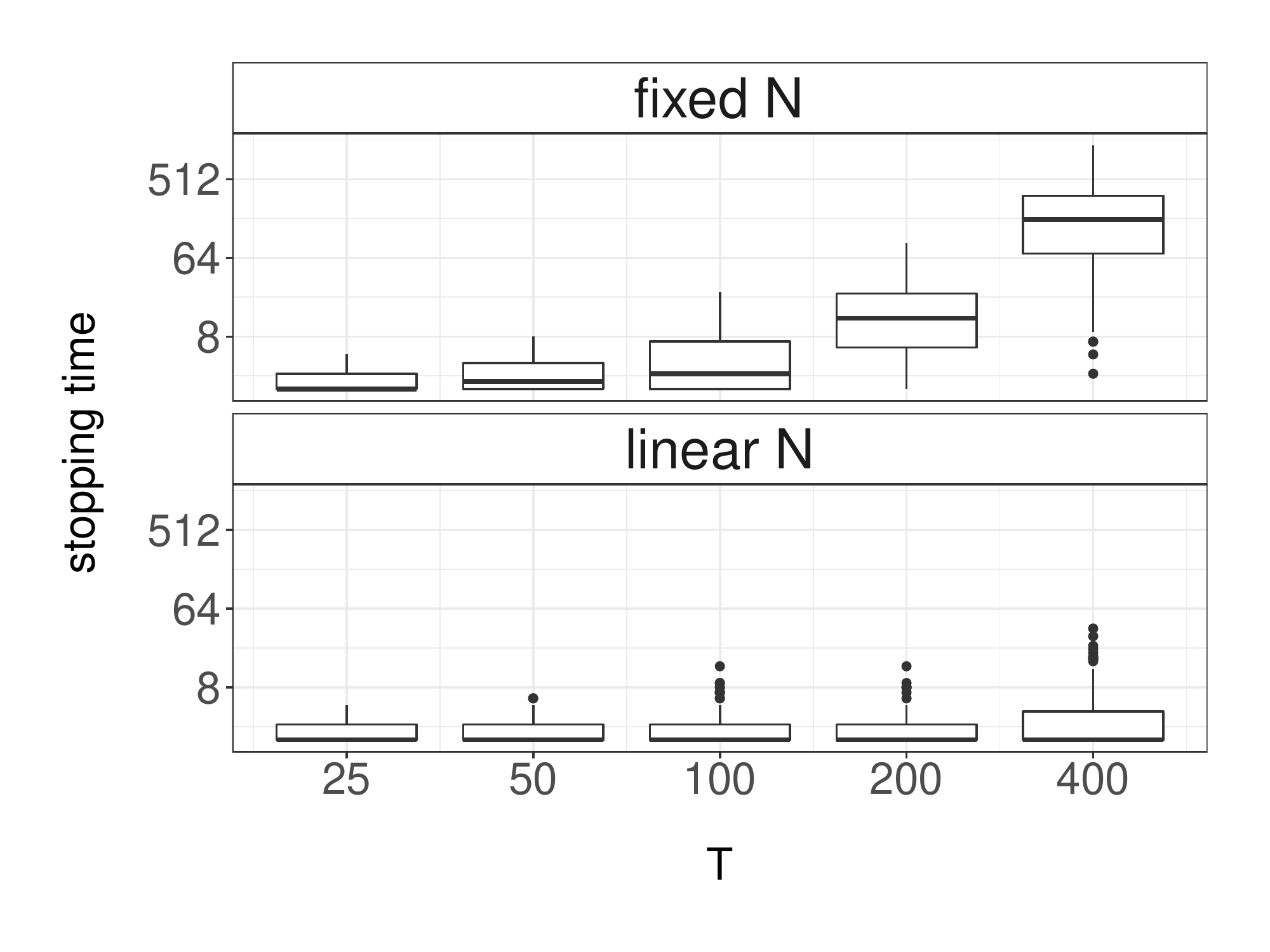}
\caption{Boxplots of stopping time $\bar{\tau}_{\theta}^l$ against number of observations $T$ for discretization level $l=6$ with $N=128$ fixed or linearly increasing number of particles}
\label{fig:ou_stoppingtimes_nobs}
\end{subfigure}
\begin{subfigure}[t]{0.45\textwidth}
\includegraphics[trim=0pt 32pt 0pt 0pt,clip,width=.95\textwidth]{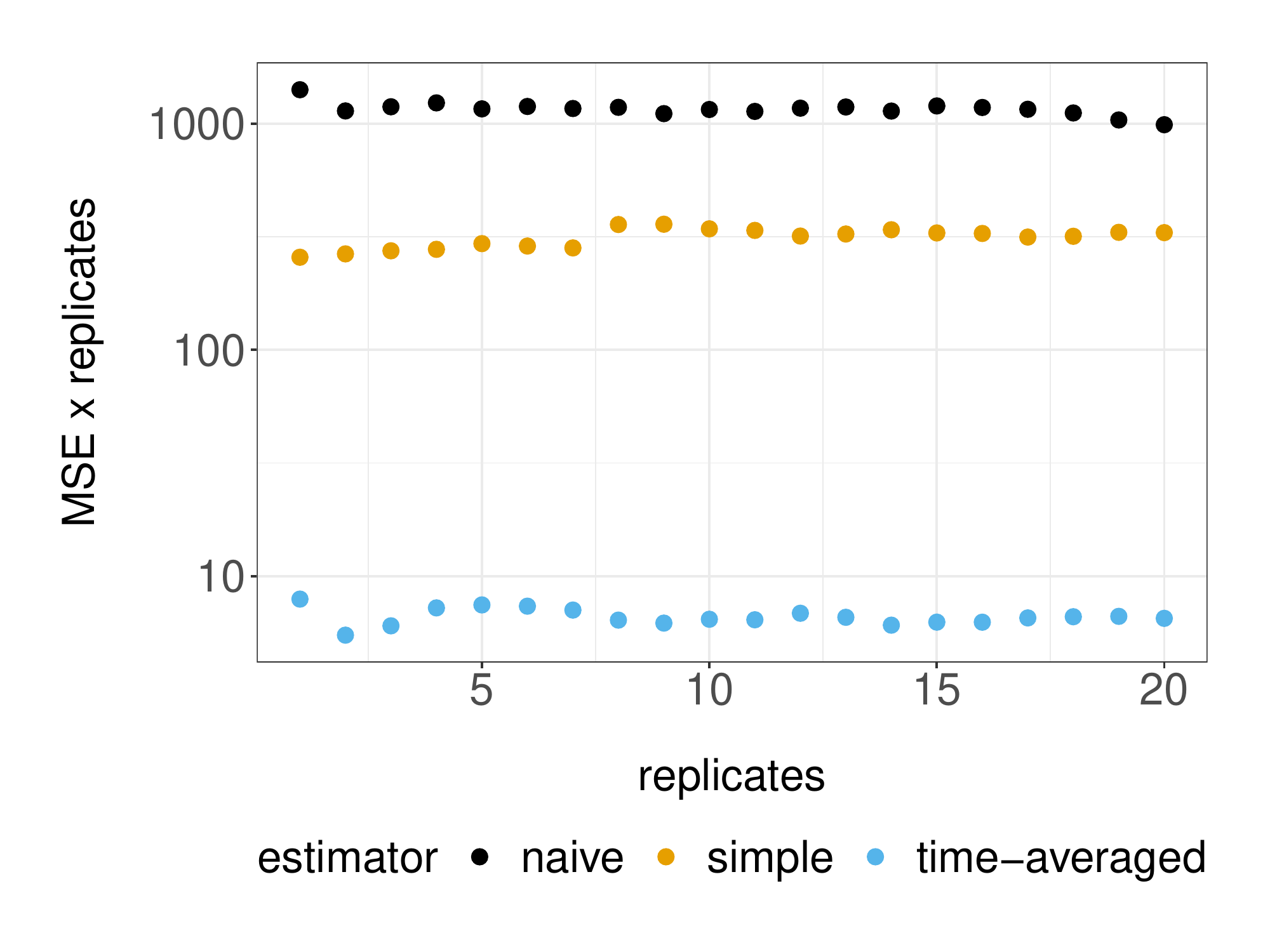}
\caption{Mean squared error $\mathrm{MSE}(\theta) = \mathbb{E}[\| \bar{S}(\theta) - S(\theta) \|_2^2]$ against number of averaged replicates~$R$}
\label{fig:ou_averaging1}
\end{subfigure} \quad
\begin{subfigure}[t]{0.45\textwidth}
\includegraphics[trim=0pt 32pt 0pt 0pt,clip,width=.95\textwidth]{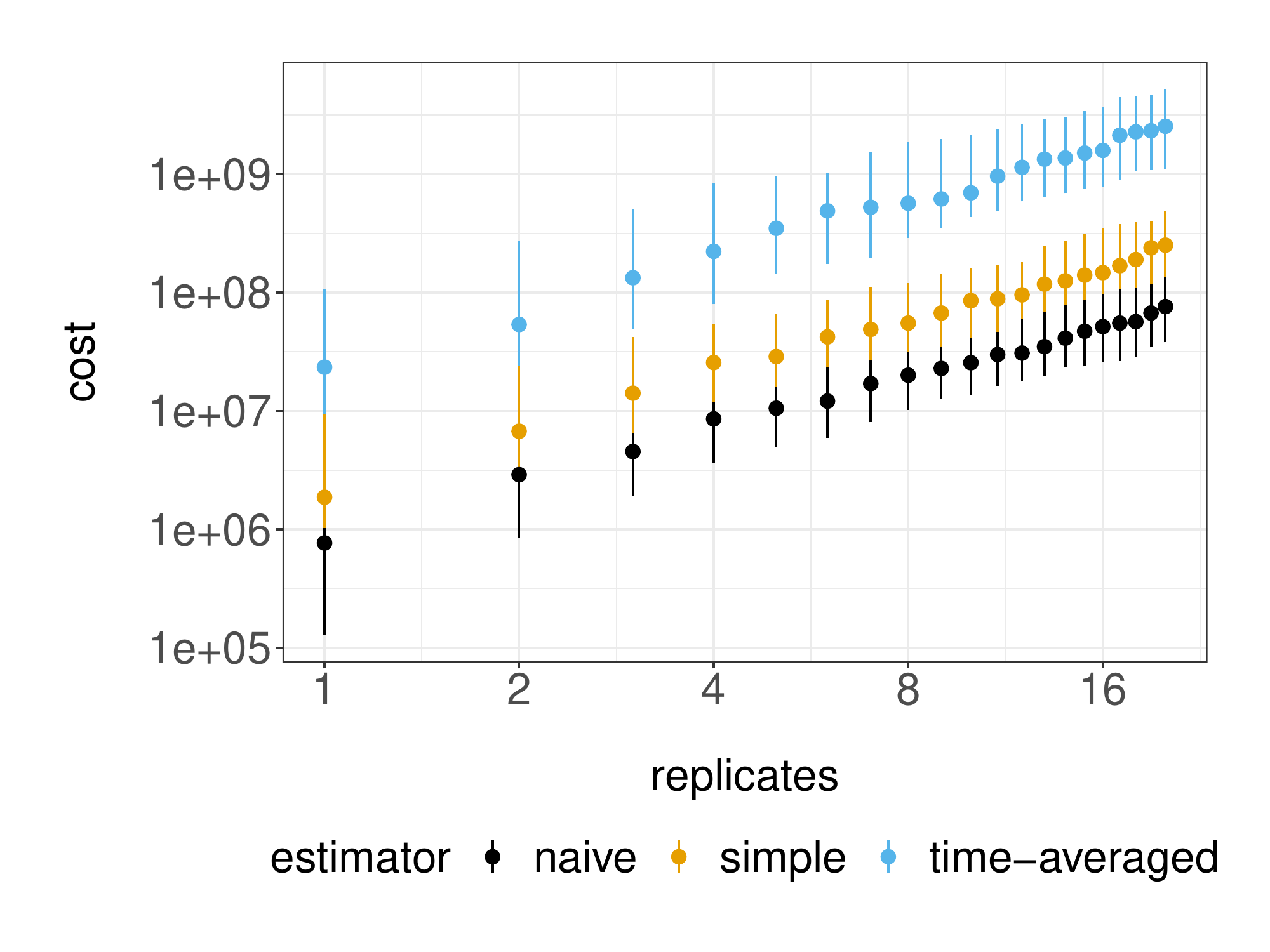}
\caption{Boxplots of cost against number of averaged replicates $R$}
\label{fig:ou_averaging2}
\end{subfigure} 
\caption{Behaviour of different coupling methods and score estimators at the data generating parameter $\theta=(2,7,1)$ of the Ornstein--Uhlenbeck model in Section~\ref{sec:ornstein_uhlenbeck}. $T=25$ observations and $N=128$ particles were employed unless stated otherwise. These plots are based on $100$ independent repetitions.}
\end{figure}

\subsection{Logistic diffusion model for population dynamics of red kangaroos}
\label{sec:logistic_diffusion}



Next we consider an application from population ecology to model the dynamics 
of a population of red kangaroos (Macropus rufus) in New South Wales, Australia. 
Figure \ref{fig:logistic_diffusion_observations} displays data $y_{t_1},\ldots,y_{t_P}\in\mathbb{N}_{0}^{2}$ 
from \citet{caughley1987kangaroos}, which are double transect counts on $P=41$ occasions at 
irregular times $(t_p)_{p=1}^P$ between $1973$ to $1984$. 
The latent population size $Z=(Z_t)_{t_1\leq t\leq t_P}$ is assumed to follow a logistic diffusion process 
with environmental variance \citep{dennis1988analysis,knape2012fitting}
\begin{align}\label{eqn:logistic_diffusion}
	dZ_{t}=(\theta_{3}^{2}/2+\theta_{1}-\theta_{2}Z_{t})Z_{t}dt+\theta_{3}Z_{t}dW_{t},\quad 
	Z_{t_1}\sim\mathcal{LN}(5,10^2),
\end{align}
where $\mathcal{LN}$ denotes the log-Normal distribution. 
The parameters $\theta_{1}\in\mathbb{R}$ and $\theta_{2}>0$ can be seen as coefficients describing how the growth rate depends on the population size.  
As the parameter $\theta_{3}>0$ appears in the diffusion coefficient of \eqref{eqn:logistic_diffusion}, 
we apply the Lamperti transformation $X_t = \Psi(Z_t)=\log(Z_t)/\theta_3$. 
By It\^{o}'s lemma, the transformed process $X=(X_t)_{t_1\leq t \leq t_P}$ satisfies the SDE \eqref{eq:intro_diff} 
with random initialization $X_{t_1}\sim\mu_{\theta}=\mathcal{N}(5/\theta_3,10^2/\theta_3^{2})$, 
drift function $a_{\theta}(x)=\theta_1/\theta_3-(\theta_2/\theta_3)\exp(\theta_3x)$ and 
unit diffusion coefficient $\sigma(x)=1$ for $x\in\mathbb{R}$. 
The observations $(Y_{t_p})_{p=1}^P$ are modelled as conditionally independent given $X$ and 
negative Binomial distributed, i.e.\ 
the observation density at time $t\in\{t_1,\ldots,t_P\}$ is 
$g_{\theta}(y_t|x_t)=\mathcal{NB}(y_t^{1};\theta_{4},\exp(\theta_3x_t))\mathcal{NB}(y_t^{2};\theta_{4},\exp(\theta_3x_t))$, 
where $\theta_{4}>0$. 
We will use a parameterization of the negative Binomial distribution that is common in ecology, 
$\mathcal{NB}(y;r,\mu)=\frac{\Gamma(y+r)}{\Gamma(r)y!}(\frac{r}{r+\mu})^{r}(\frac{\mu}{r+\mu})^{y}$ for $y\in\mathbb{N}_{0}$, where $r>0$ is the dispersion parameter and $\mu>0$ is the mean parameter. The $d_{\theta}=4$ unknown parameters to be inferred are 
$\theta=(\theta_1,\theta_2,\theta_3,\theta_4)\in\Theta=\mathbb{R}\times(0,\infty)^3$.  

Application of our methodology requires some minor modifications. 
As the initial distribution $\mu_{\theta}$ depends on $\theta_{3}$, 
the representation of score functions in \eqref{eqn:score_function} and \eqref{eqn:discretized_score} require  
adding $\nabla_{\theta}\log\mu_{\theta}(X_{t_1})$ to \eqref{eqn:smoothing_functional} and \eqref{eqn:discrete_test_function};  
see Appendix~\ref{app:logistic_diffusion} for model-specific expressions. 
To deal with irregular observation times $(t_p)_{p=1}^P$, 
we set the step-size at discretization level zero as the size of the smallest time interval, i.e.\ $\Delta_{0}=\min_{p=2,\ldots,P}t_{p}-t_{p-1}$.  
Higher levels $l\in\mathbb{N}$ will employ $\Delta_{l}=\Delta_{0}2^{-l}$. 
For level $l\in\mathbb{N}_{0}$, the first time interval $[t_1,t_2]$ is discretized using $\Delta_{l}$ sequentially, 
i.e.\ we set $s_{k}=t_1+k\Delta_{l}$ for $k\in\{0,\ldots,m_{l,1}\}$ with 
$m_{l,1}=\left\lfloor (t_{2}-t_{1})/\Delta_{l}\right\rfloor$, 
and $s_{k}=t_{2}$ for $k=m_{l,1}+1$ if $(t_{2}-t_{1})/\Delta_{l}\notin\mathbb{N}$. 
The subsequent time intervals are then discretized in the same manner.

Figure~\ref{fig:logistic_diffusion_results1} illustrates how the median and the 90\% quantile of the 
stopping time $\bar{\tau}_{\theta}^l$ vary with the discretization level $l$, the impact of 
the number of particles $N$ and the benefits of employing adaptive resampling.  
As before, the coupled resampling scheme in Algorithm \ref{alg:maximal-maximal} 
results in stopping times that are smaller for higher discretization levels, 
with less variability over levels as the number of particles increases. 
Moreover, resampling only when the effective sample size is less than $N/2$ allows us to induce 
more dependencies between the multiple CPF chains at lower discretization levels. 
Using $N=256$ particles and adaptive resampling, Figure~\ref{fig:logistic_diffusion_results2} examines 
the rate at which the variance of the estimated increment decreases with the discretization level. 
Here we consider the ``naive'' and ``simple'' estimators described in Section \ref{sec:ornstein_uhlenbeck}, 
with a burn-in of $b=90\%\textrm{-quantile}(\bar{\tau}_{\theta}^l)$ at level $l=3$, 
and omit the more costly ``time-averaged'' estimator. 
From the plot, both type of estimators have similar rate of decay and are valid choices 
in our score estimation methodology.  
Using the ``simple'' estimator, Figure~\ref{fig:logistic_diffusion_results3} verifies that 
the average of $R$ independent replicates of the resulting score estimator $\bar{S}(\theta)$ 
satisfies the standard Monte Carlo rate as $R\rightarrow\infty$. 

Lastly, we perform Bayesian parameter inference by employing our score estimators 
within the SGLD framework \citep{welling2011bayesian}. We rely on logarithmic transformations 
to deal with positivity parameter constraints, and specify the prior distribution  
$(\theta_1, \log \theta_2, \log \theta_3, \log \theta_4)\sim \mathcal{N}_{d_{\theta}}(\mu_0,\Sigma_0)$, 
where $\mu_0=(0, -1, -1, -1)$ and $\Sigma_0=\textrm{diag}(5^2, 2^2, 2^2, 2^2)$. 
As $\log \theta_3$ has a significantly different scale compared to the other parameters, 
we let the learning rate in \eqref{eqn:SGLD} be component-dependent 
by taking $\varepsilon_m=\textrm{diag}((100+m)^{-0.6}(10^{-2}, 10^{-2}, 10^{-4}, 10^{-2}))$ at iteration $m\geq 1$. 
The algorithmic settings used to produce score estimators are the same as in Figure~\ref{fig:logistic_diffusion_results3} 
with $R=1$ realization. 
Figure~\ref{fig:logistic_diffusion_results4} shows the trace plot over $7500$ iterations of the resulting SGLD algorithm 
for each parameter.

\begin{figure}
\centering
\begin{subfigure}[t]{0.45\textwidth}
\includegraphics[trim=0pt 32pt 0pt 25pt,clip,width=.95\textwidth]{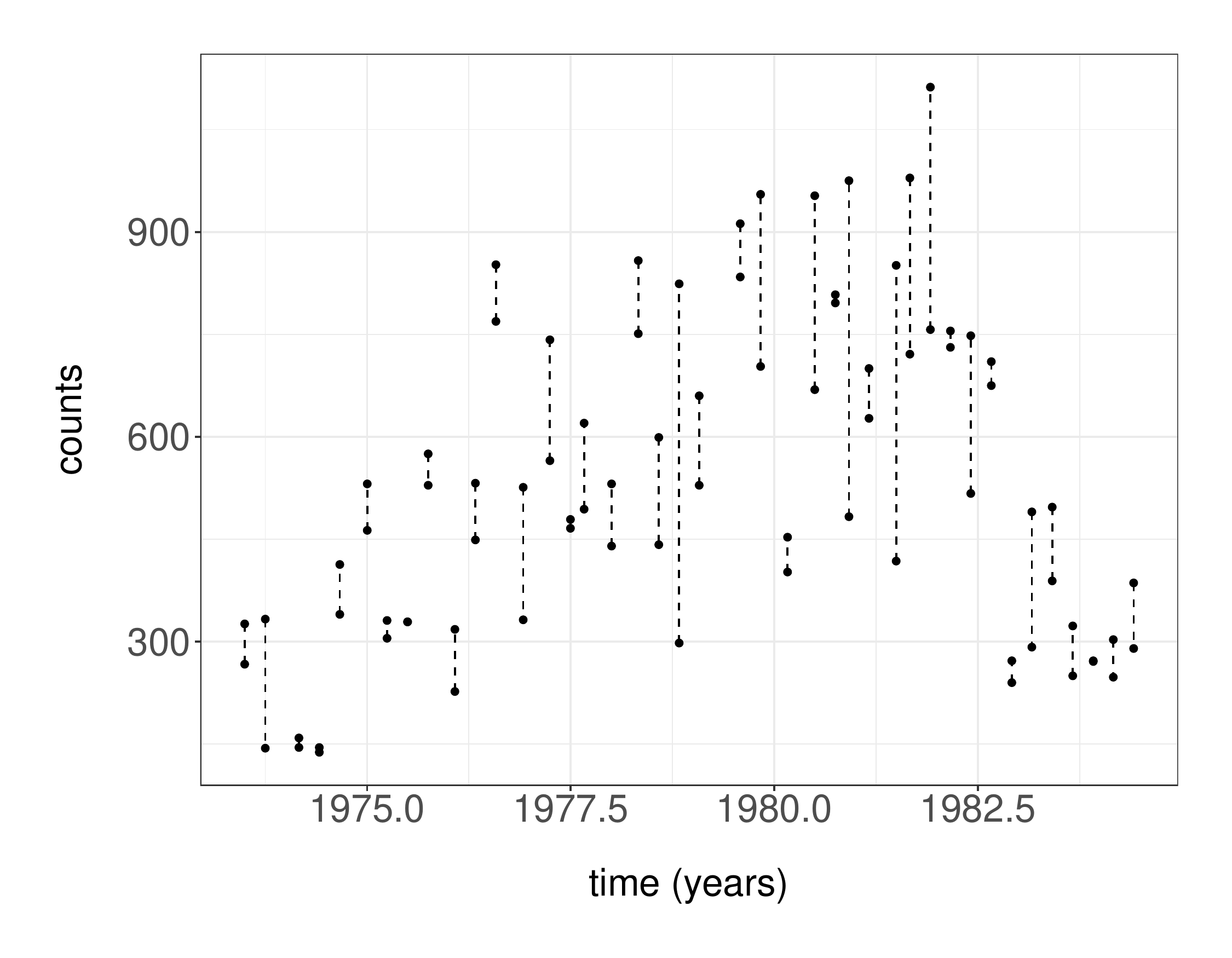}
\caption{Double transect counts}
\label{fig:logistic_diffusion_observations}
\end{subfigure} \quad
\begin{subfigure}[t]{0.45\textwidth}
\includegraphics[trim=0pt 32pt 0pt 25pt,clip,width=.95\textwidth]{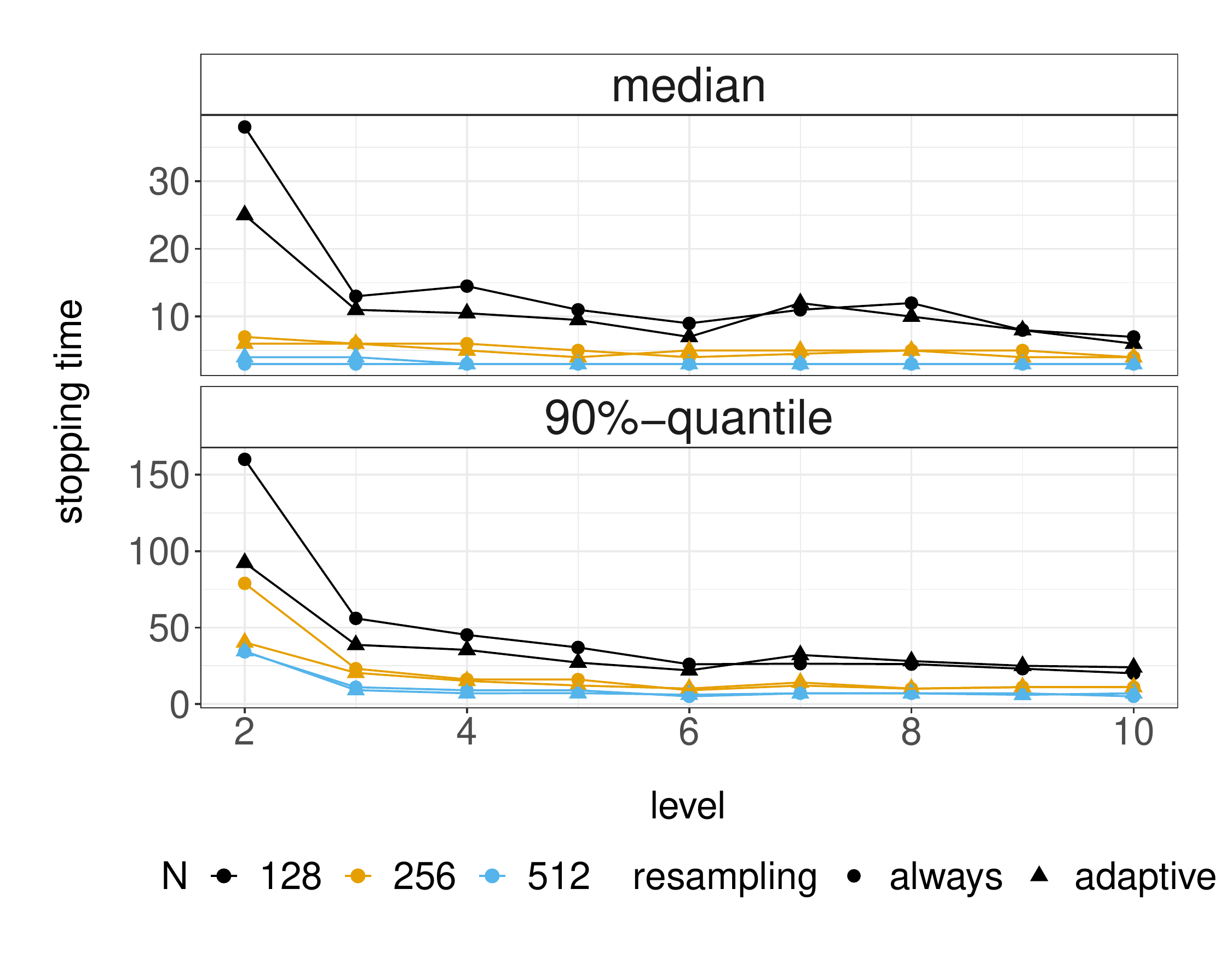}
\caption{Sample median and 90\% quantile of stopping time $\bar{\tau}_{\theta}^l$ against discretization level $l$}
\label{fig:logistic_diffusion_results1}
\end{subfigure} \quad
\begin{subfigure}[t]{0.45\textwidth}
\includegraphics[trim=0pt 32pt 0pt 25pt,clip,width=.95\textwidth]{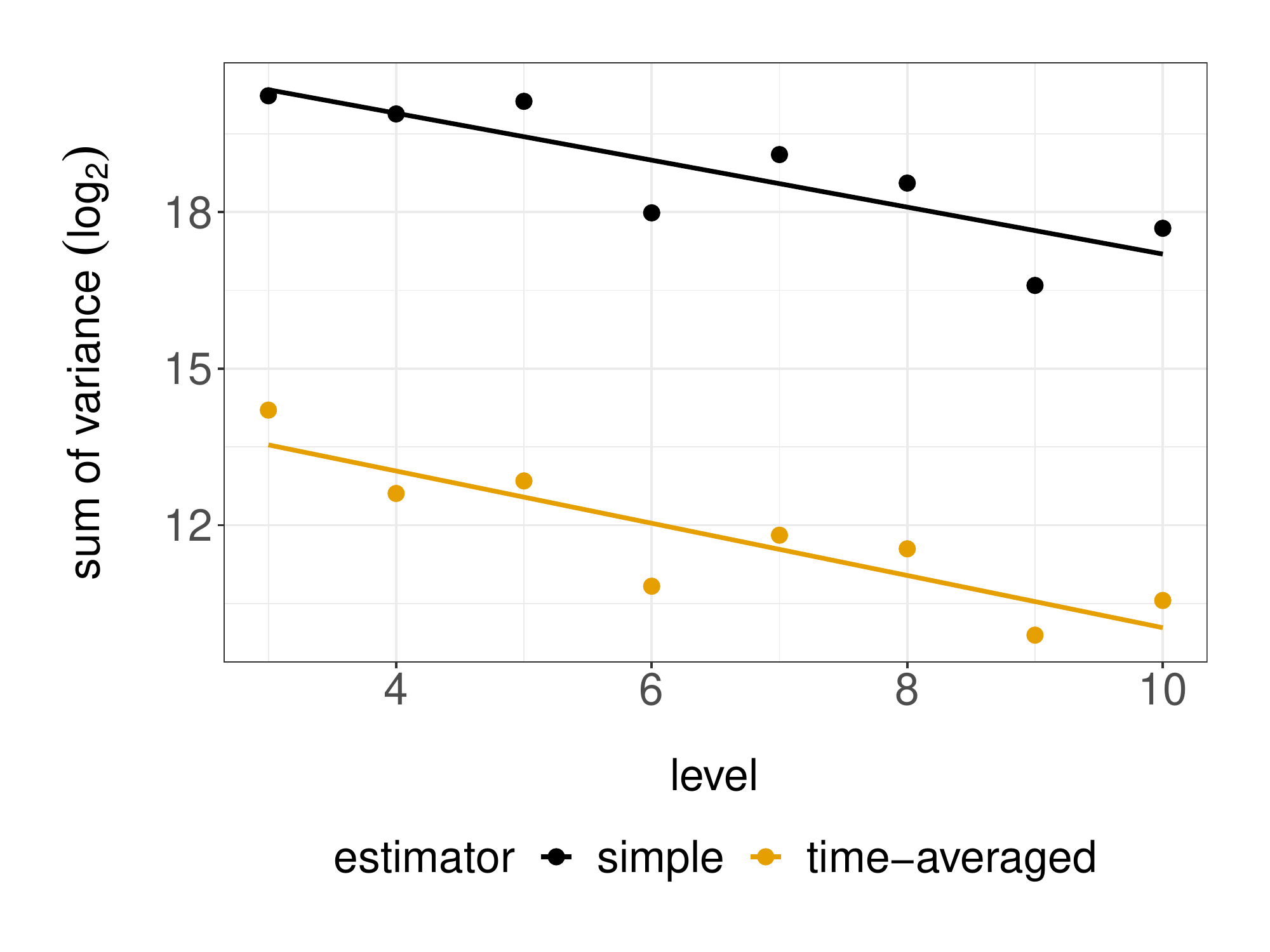}
\caption{Variance of score increment estimator summed over components 
$\sum_{j=1}^{d_{\theta}}\mathrm{Var}[\widehat{I}_l(\theta)^j]$ against discretization level $l$}
\label{fig:logistic_diffusion_results2}
\end{subfigure} \quad
\begin{subfigure}[t]{0.45\textwidth}
\includegraphics[trim=0pt 32pt 0pt 25pt,clip,width=.95\textwidth]{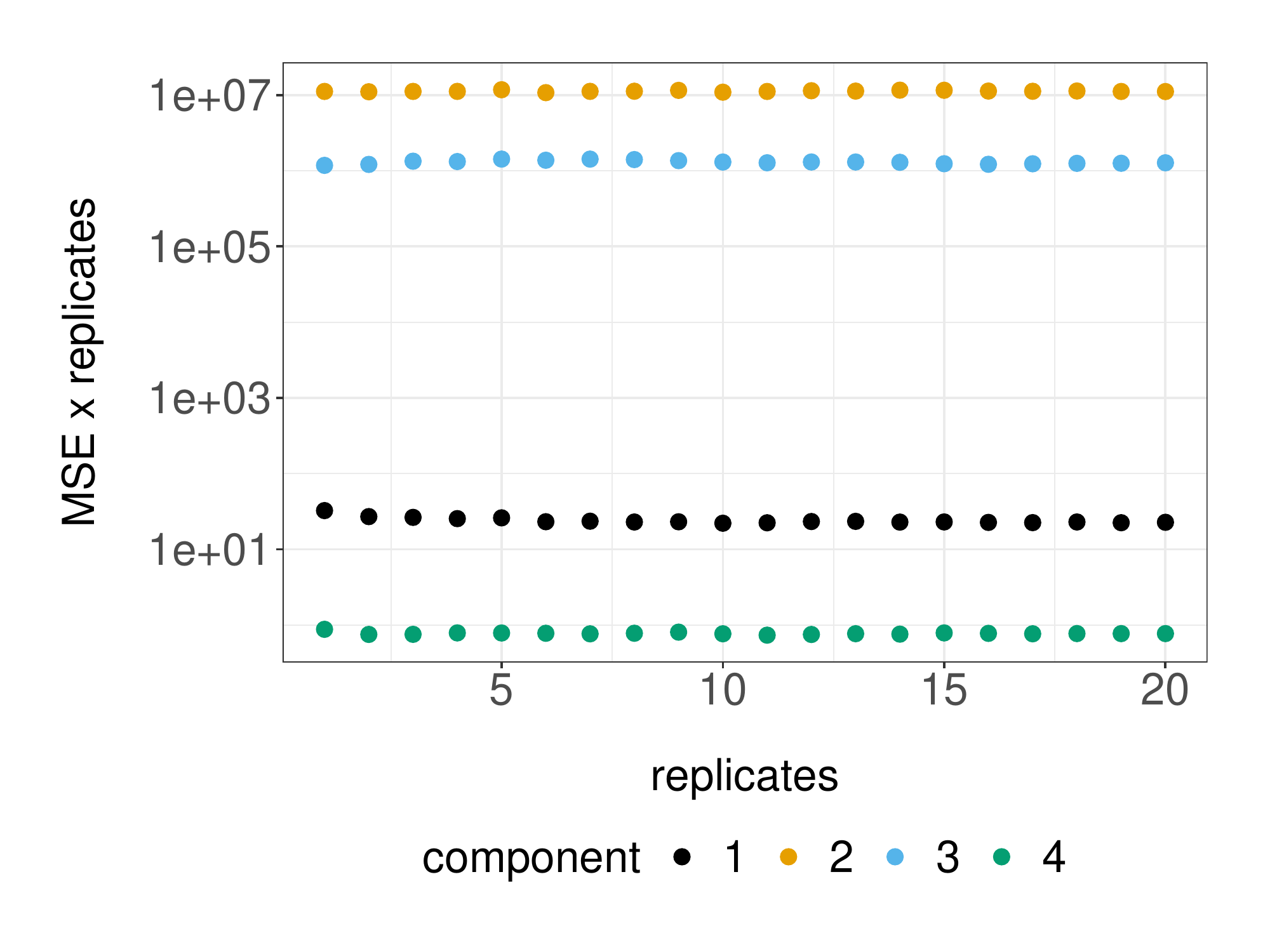}
\caption{Mean squared error $\mathrm{MSE}(\theta)^j = \mathbb{E}[|\bar{S}(\theta)^j - S(\theta)^j |^2]$ against number of averaged replicates~$R$ 
for component $j\in\{1,\ldots,d_{\theta}\}$}
\label{fig:logistic_diffusion_results3}
\vspace{1em}
\end{subfigure}
\begin{subfigure}[t]{\textwidth}
\centering
\includegraphics[trim=5pt 5pt 5pt 0pt,clip,width=.85\textwidth]{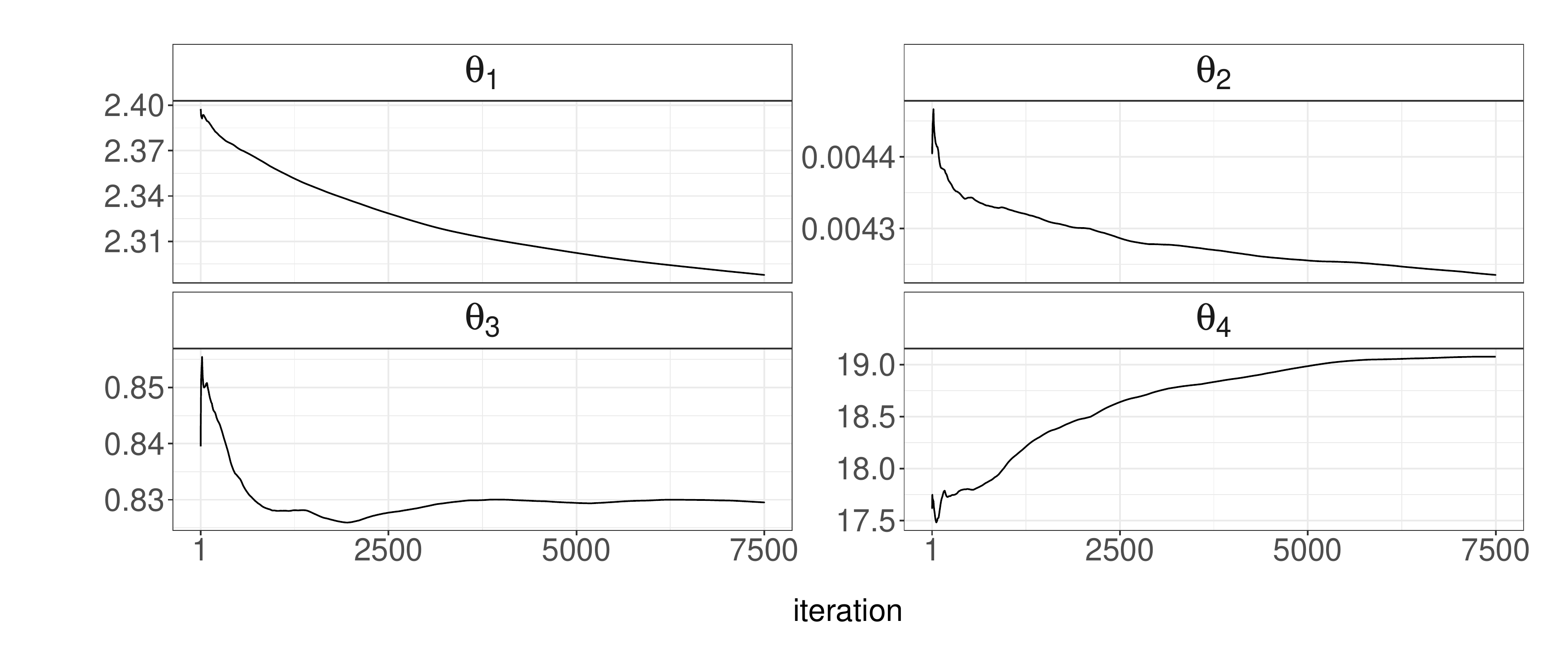}
\caption{SGLD for each component.}
\label{fig:logistic_diffusion_results4}
\end{subfigure}
\caption{Behaviour at parameter $\theta=(2.397, 4.429 \times 10^{-3}, 0.840, 17.631)$ of the logistic diffusion model in Section~\ref{sec:logistic_diffusion}. $N=256$ particles were employed unless stated otherwise. These plots are based on $100$ independent repetitions.}
\end{figure}

\subsection{Neural network model for grid cells in the medial entorhinal cortex}\label{sec:neural_network}

As our final application, we consider a neural network model for single neurons to analyze grid
cells spike data\footnote{\url{https://www.ntnu.edu/kavli/research/grid-cell-data}}
recorded in the medial entorhinal cortex of rats that were running
on a linear track \citep{hafting2008hippocampus}. The neural states $Z_{t}=(Z_{t}^{1},Z_{t}^{2})$
of two grid cells that were simultaneously recorded is assumed to follow 
\begin{align}\label{eqn:nn_model}
dZ_{t}^{1} & =\left(\alpha_{1}\tanh(\beta_{1}Z_{t}^{2}+\gamma_{1})-\delta_{1}Z_{t}^{1}\right)dt+\sigma_{1}dW_{t}^{1},\\
dZ_{t}^{2} & =\left(\alpha_{2}\tanh(\beta_{2}Z_{t}^{1}+\gamma_{2})-\delta_{2}Z_{t}^{2}\right)dt+\sigma_{2}dW_{t}^{2},\notag
\end{align}
for $t\in[0,T]$, where $(\alpha_{1},\alpha_{2})\in\mathbb{R}^{2}$
controls the amplitude, $(\beta_{1},\beta_{2})\in\mathbb{R}^{2}$
describes the connectivity between the cells, $(\gamma_{1},\gamma_{2})\in\mathbb{R}^{2}$
are baseline levels, $(\delta_{1},\delta_{2})\in(0,\infty)^{2}$
determines the strength of the mean reversion towards the origin.
We assume $Z_{0}=(0,0)$ at the beginning of the experiment. 
This diffusion is motivated
by an example in \cite{kappen2016adaptive}, and modified for our purposes.
To infer the unknown diffusivity parameters $(\sigma_{1},\sigma_{2})\in(0,\infty)^2$,
we consider the transformation 
$X_{t}=(X_{t}^{1},X_{t}^{2})=\Psi(Z_t)=(Z_{t}^{1}/\sigma_{1},Z_{t}^{2}/\sigma_{2})$, 
which rescales each component of the diffusion. By It\^{o}'s formula,
the transformed process $X=(X_t)_{0\leq t\leq T}$ satisfies the diffusion model (\ref{eq:intro_diff})
with initialization $x_{\star}=(0,0)$, drift function 
\begin{align}
a_{\theta}(x)=\left(\begin{array}{c}
a_{\theta}^{1}(x)\\
a_{\theta}^{2}(x)
\end{array}\right)=\left(\begin{array}{c}
\alpha_{1}\tanh(\beta_{1}\sigma_{2}x^{2}+\gamma_{1})/\sigma_{1}-\delta_{1}x^{1}\\
\alpha_{2}\tanh(\beta_{2}\sigma_{1}x^{1}+\gamma_{2})/\sigma_{2}-\delta_{2}x^{2}
\end{array}\right),
\end{align}
and diffusion coefficient $\sigma(x)=I_{2}$ for $x=(x^1,x^2)\in\mathbb{R}^2$. 

The experimental data over a duration of $T=20$ seconds contains time stamps 
in $[0,T]$ when a spike at one of the two cells is recorded using
tetrodes. Following \cite{brown2005theory}, we adopt an inhomogenous Poisson
point process to model these times. 
Let $t_p = p T 2^{-6}$ for $p\in\{0,1,\ldots,P\}$ denote a dyadic uniform discretization of $[0,T]$ 
into $P=2^6$ time intervals. 
Given the latent process $X=(X_t)_{0\leq t\leq T}$, the number of spikes $Y_{t_{p}}^{i}$ occurring 
in each time interval $[t_{p-1},t_{p}]$ at cell $i=1,2$ 
is assumed to be conditionally independent of the other time intervals and 
the activity in the other cell, and follow a Poisson distribution
with rate $\int_{t_{p-1}}^{t_{p}}\lambda_{i}(X_{t}^{i})dt$. The intensity
function for grid cell $i=1,2$ is modelled as $\lambda_{i}(X_{t}^{i})=\exp(\kappa_{i}+X_{t}^{i})$,
where $\kappa_{i}\in\mathbb{R}$ represents a baseline level. 
The observed counts $y_{t_p}=(y_{t_{p}}^{1},y_{t_{p}}^{2})$ for interval $p\in\{1,\ldots,P\}$, 
computed from the experimental data, are displayed in Figure \ref{fig:neural_network_observations}. 
The conditional likelihood of the observation model is 
$p_{\theta}(y_{t_{1}},\ldots,y_{t_{P}}|X) = \prod_{p=1}^{P}g_{\theta}(y_{t_p}| (X_t)_{t_{p-1}\leq t\leq t_{p}})$ 
with the intractable observation density 
\begin{align}
        g_{\theta}(y_{t_p}| (X_t)_{t_{p-1}\leq t\leq t_{p}}) = 
        \prod_{i=1}^2\mathcal{P}oi\left(y_{t_p}^i;\int_{t_{p-1}}^{t_p} \lambda_i(X^i_t)dt\right),        
\end{align}
where $\mathcal{P}oi(y;\lambda)=\lambda^y\exp(-\lambda)/y!$ for $y\in\mathbb{N}_0$ 
denotes the PMF of a Poisson distribution with rate $\lambda>0$. 
To approximate the conditional likelihood, at level $l \geq 6$, we discretize the time interval $[0,T]$ 
in a similar manner using $s_k=k \Delta_l$ for $k\in\{0,1,\ldots,K_l\}$, where $\Delta_l=T2^{-l}$ 
is the step-size and $K_l=2^l$ is the number of time steps. 
Under the time-discretized process $X_{0:T}=(X_{s_k})_{k=0}^{K_l}$, 
the resulting approximation of the conditional likelihood is 
$p^l_{\theta}(y_{t_{1}},\ldots,y_{t_{P}}| X_{0:T}) = \prod_{p=1}^P g^l_{\theta} (y_{t_p} | (X_t)_{t_{p-1}\leq t\leq t_{p}})$ 
with the corresponding observation density 
\begin{align}\label{eqn:NN_obs_density}
	g^l_{\theta} (y_{t_p} | (X_t)_{t_{p-1}\leq t\leq t_{p}}) = 
	\prod_{i=1}^2\mathcal{P}oi\left(y_{t_p}^i; \Delta_l\sum_{t: t_{p-1}\leq t\leq t_p}\lambda_i(X^i_t)\right).        
\end{align}
By using these level-dependent observation densities \eqref{eqn:NN_obs_density} in Section \ref{sec:discretized_score}, 
our proposed methodology can then be applied. 
There are $d_{\theta}=12$ parameters $\theta=(\theta_1,\theta_2)$ to be inferred, where 
$\theta_i = (\alpha_{i},\beta_{i},\gamma_{i},\delta_{i},\sigma_{i},\kappa_{i})$ denote the parameters 
associated to cell $i=1,2$.
We refer the reader to Appendix~\ref{app:neural_network} for model-specific expressions 
to evaluate \eqref{eqn:discrete_test_function}. 

We consider an extension of the proposed method based on the conditional ancestor sampling particle filter 
(CASPF) \citep{lindsten2014particle} as the basic algorithmic building block. 
As CASPF has better mixing properties than CPF, we observe smaller stopping times in Figure~\ref{fig:neural_network_results1} 
for lower discretization levels. 
Figure~\ref{fig:neural_network_results2} verifies the validity of using both MCMC algorithms and 
``naive'' and ``simple'' estimators (as described in Section \ref{sec:ornstein_uhlenbeck}) within our methodology. 
For ``simple'' estimators, the burn-in was taken as $b=90\%\textrm{-quantile}(\bar{\tau}_{\theta}^l)$ at level $l=11$. 
The rate at which the variance of the estimated increment decays with the discretization level 
is similar across algorithms and estimators. 

Lastly, we combine our score estimators and the SGA scheme in \eqref{eqn:SGA} to perform 
maximum likelihood estimation. 
The score estimation relies on the CASPF algorithm and the ``simple'' estimator with a burn-in of $b=100$. 
Positivity parameter constraints are dealt with using logarithmic transformations and 
a constant learning rate of $\varepsilon_m = 10^{-3}$ is employed. 
Figure~\ref{fig:neural_network_results3} illustrates how the distribution of the Polyak--Ruppert average 
evolves over the iterations, estimated using independent runs of SGA. 
We note that only $87$ out of $100$ runs were considered, as there were 
$13$ instances of the variance of the score estimator driving the SGA algorithm to 
regions of the parameter space where the stopping times are prohibitively large, 
causing the SGA to stall. 
This pathological behaviour is due to poor mixing properties of the 
underlying CASPF algorithm at very unlikely regions of the parameter space. 
To improve the MCMC algorithm, one has to simulate particle dynamics 
that incorporate information from the observation sequence $y_{t_{1}},\ldots,y_{t_{P}}$, 
which could be investigated in future work. 
The parameter estimates of $\beta_1$ and $\beta_2$ support the use of a joint model 
\eqref{eqn:nn_model} for both grid cells, and indicates that these cells are positively dependent. 

\begin{figure}
\centering
\begin{subfigure}[t]{0.45\textwidth}
\includegraphics[trim=5pt 5pt 5pt 5pt,clip,width=.95\textwidth]{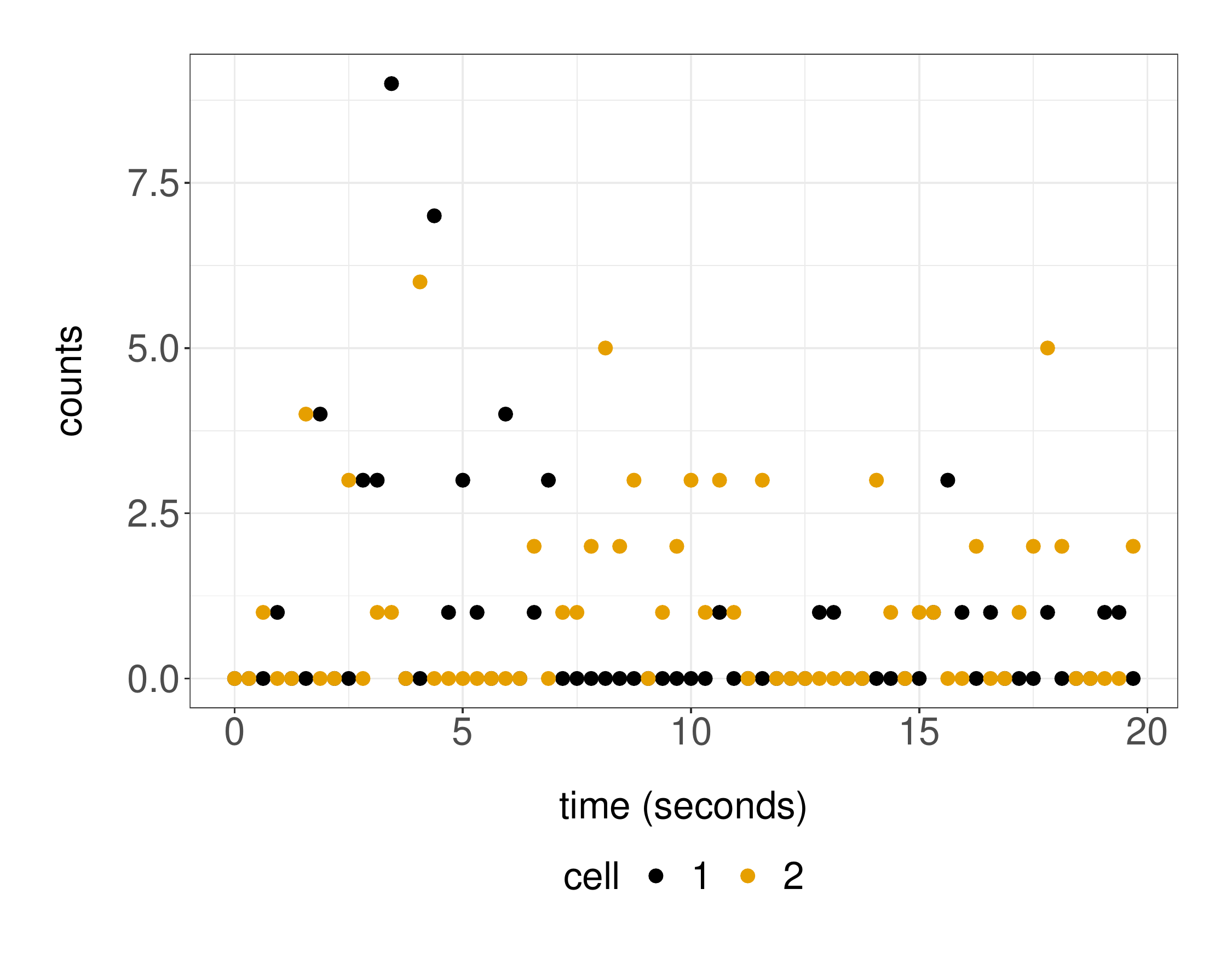}
\caption{Counts $y_{t_p}=(y_{t_{p}}^{1},y_{t_{p}}^{2})$ on time intervals of duration $T2^{-6}=0.3125$ second}
\label{fig:neural_network_observations}
\end{subfigure} \quad
\begin{subfigure}[t]{0.45\textwidth}
\includegraphics[trim=5pt 5pt 5pt 5pt,clip,width=.95\textwidth]{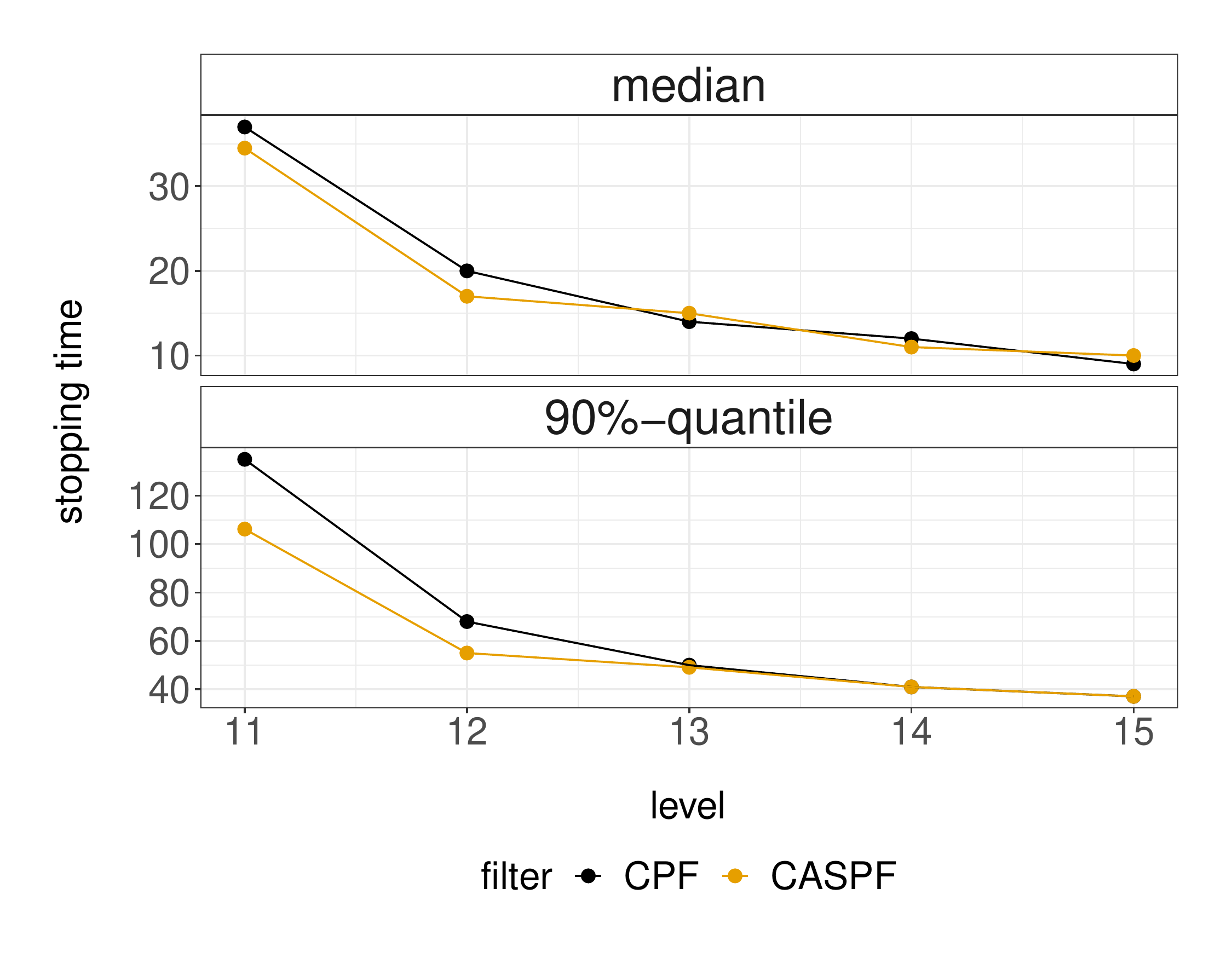}
\caption{Sample median and 90\% quantile of stopping time $\bar{\tau}_{\theta}^l$ against discretization level $l$}
\label{fig:neural_network_results1}
\end{subfigure} 
\begin{subfigure}[t]{0.45\textwidth}
\vspace{1em}
\includegraphics[trim=5pt 5pt 5pt 0pt,clip,width=.95\textwidth]{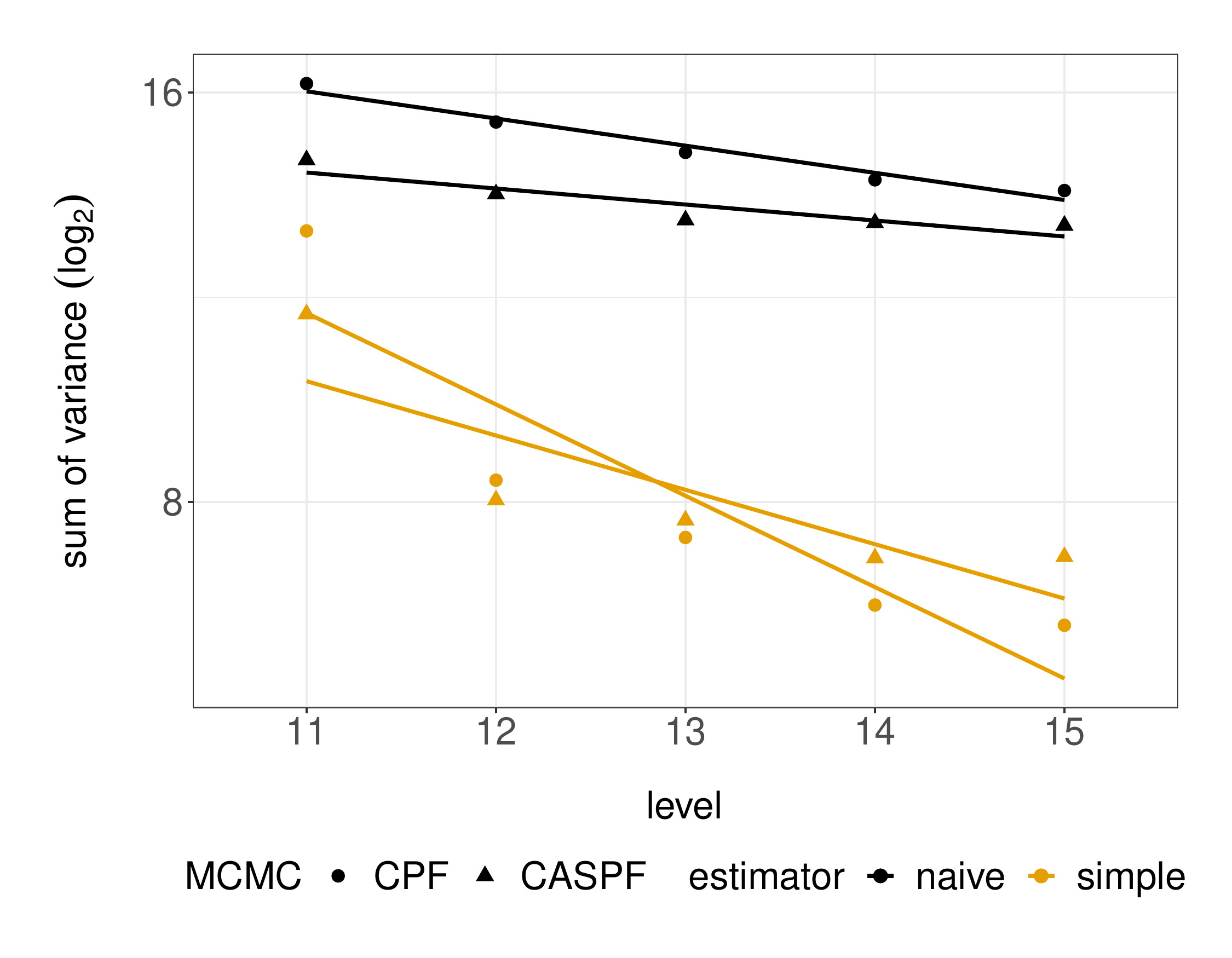}
\caption{Variance of score increment estimator summed over components 
$\sum_{j=1}^{d_{\theta}}\mathrm{Var}[\widehat{I}_l(\theta)^j]$ against discretization level $l$}
\label{fig:neural_network_results2}
\end{subfigure} \quad
\begin{subfigure}[t]{0.45\textwidth}
\vspace{1em}
\includegraphics[trim=5pt 5pt 5pt 0pt,clip,width=.95\textwidth]{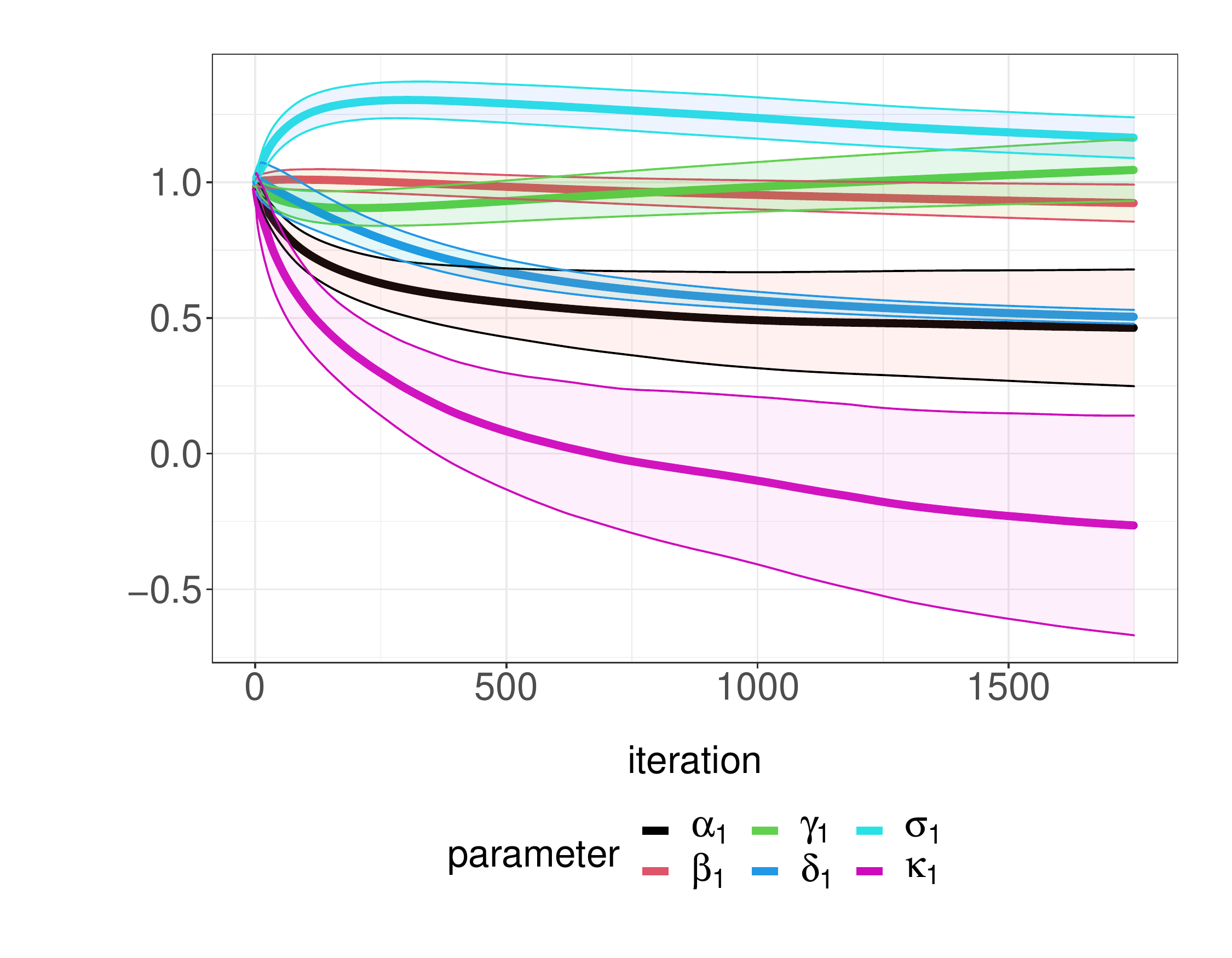}
\caption{Mean ($\pm$ one standard deviation) of Polyak--Ruppert average for 
parameter estimates of $\theta_1$ over $87$ runs of SGA}
\label{fig:neural_network_results3}
\end{subfigure}
\caption{Behaviour at parameter $\theta=(1,\dots,1)$ of the neural network model in Section~\ref{sec:neural_network}. 
The algorithmic settings involve $N=256$ particles and adaptive resampling. 
These plots are based on $1000$ independent repetitions unless stated otherwise.}
\end{figure}

\section{Parameter dependence in diffusion coefficient}\label{sec:parameter_diffusion}

We end this article by considering how one can extend the ideas in this article to accommodate 
the case where the diffusion coefficient also depends on the parameter $\theta \in \Theta$. 
In this case, we have the SDE
\begin{align}
dX_t = a_{\theta}(X_t)dt + \sigma_{\theta}(X_t)dW_t,\quad X_0=x_{\star}\in\mathbb{R}^d,
\end{align}
where $\sigma:\Theta\times\mathbb{R}^d\rightarrow\mathbb{R}^{d\times d}$ is assumed to be such that 
$\theta\mapsto\sigma_{\theta}$ is invertible and satisfies the conditions in Assumption~\ref{ass:D1} uniformly in $\theta\in\Theta$. 
Moreover, we shall suppose that all the conditions in \citet{schauer2017guided} hold. For $k\in\mathbb{N}_0$ and $t\in[k,k+1]$, 
consider the diffusion bridge
\begin{align}\label{eqn:diffusion_bridge}
dX_t = a_{\theta}^{\circ}(X_t)dt + \sigma_{\theta}(X_t)dW_t,\quad X_k=x_{k},~X_{k+1}=x_{k+1},
\end{align}
where the drift function $a_{\theta}^{\circ}$ is described in \citet{schauer2017guided}. 
Given a Brownian path $\bm{W}_k=(W_t)_{k\leq t\leq k+1}$, we denote the path-wise solution of the diffusion bridge 
as $F_{\theta,k}(\bm{W}_k,x_k,x_{k+1})$. 
Furthermore, for $G:\Theta\times\mathbb{R}^d\rightarrow\mathbb{R}$ given in \citet[Equation 2.3]{schauer2017guided} and a process 
$\bm{Z}_k=(Z_t)_{k\leq t \leq k+1}$, we define the functional
\begin{align}
H_{\theta,k}(\bm{Z}_k) = \int_{k}^{k+1} G_{\theta}(Z_t)dt.
\end{align}
Using the change of measure in \citet{schauer2017guided} along with the approach in \citet{beskos2021score} 
and \citet{yonekura2020online}, one can write the marginal likelihood of observations $y_{1:T}=(y_{t})_{t=1}^T$ as
\begin{align}\label{eqn:general_marginal_likelihood}
p_{\theta}(y_{1:T}) = \widetilde{\mathbb{E}}_{\theta}\left[\prod_{t=1}^T g_{\theta}(y_t|X_t)
\exp\Big\{\sum_{t=0}^{T-1} H_{\theta,t}(F_{\theta,t}(\bm{W}_t,X_t,X_{t+1}))\Big\}\right].
\end{align}
In the above, $\widetilde{\mathbb{E}}_{\theta}$ denotes expectation w.r.t. the probability measure $\widetilde{\mathbb{P}}_{\theta}$ 
defined as
\begin{align}
\widetilde{\mathbb{P}}_{\theta}(d(x_1,\dots,x_T,\bm{W}_0,\dots,\bm{W}_{T-1})) 
= \prod_{t=1}^T \bigg\{\widetilde{p}_{\theta}(x_{t-1},x_t)dx_t\bigg\}\widetilde{\mathbb{W}}(d(\bm{W}_0,\dots,\bm{W}_{T-1})),
\end{align}
where $\widetilde{p}_{\theta}(x_{t-1},x_t)$ is the transition density of an auxiliary process on a unit time interval 
as constructed in \cite{schauer2017guided} that is known and can be sampled, and 
$\widetilde{\mathbb{W}}(d(\bm{W}_0,\dots,\bm{W}_{T-1})) = \bigotimes_{k=0}^{T-1} \mathbb{W}(d\bm{W}_{k})$
is given by the Wiener measure $\mathbb{W}$. 

Under regularity conditions, one can differentiate \eqref{eqn:general_marginal_likelihood} and represent the score function as
\begin{align}
S(\theta) = 
\breve{\mathbb{E}}_{\theta}\left[
\sum_{t=1}^T \nabla_{\theta}\log g_{\theta}(y_t|X_t) + 
\sum_{t=1}^T \nabla_{\theta}\log\widetilde{p}_{\theta}(X_{t-1},X_t) +
\sum_{t=1}^{T-1} \nabla_{\theta} H_{\theta,t}(F_{\theta,t}(\mathbf{W}_t,X_t,X_{t+1}))
\right],
\end{align}
where $\breve{\mathbb{E}}_{\theta}$ denotes expectation w.r.t.\ the probability measure
\begin{align}
&\breve{\mathbb{P}}_{\theta}(d(x_1,\dots,x_T,\bm{W}_0,\dots,\bm{W}_{T-1})) \notag\\
=~&p_{\theta}(y_{1:T})^{-1}\prod_{t=1}^T g_{\theta}(y_t|x_t)
\exp\Big\{
\sum_{t=0}^{T-1} H_{\theta,t}(F_{\theta,t}(\mathbf{W}_t,x_t,x_{t+1}))\Big\}
\widetilde{\mathbb{P}}_{\theta}(d(x_1,\dots,x_T,\bm{W}_0,\dots,\bm{W}_{T-1})). 
\end{align}
Practical implementation will require a discretized approximation of $\nabla_{\theta} H_{\theta,p}(F_{\theta,p}(\cdot))$, 
which involves a gradient w.r.t.\ $\theta$ of a path-wise solution of the diffusion bridge \eqref{eqn:diffusion_bridge}. 
Although Euler-type approximations can be obtained, the resulting bias in the sense of Theorem \ref{prop:conv_grad_log_like} 
is significantly more complicated to analyze and is thus left as future work.  
We stress that only small modifications to our proposed methodology is necessary 
to obtain unbiased estimators of the score function in this case.
A similar approach is considered in \cite{beskos2021score} for a class of continuous-time models. 
Alternatively, one could also consider using Malliavan techniques (see e.g.\ \citet{fournie1999applications}), 
instead of the ideas described here.


\subsubsection*{Acknowledgements}
Ajay Jasra was supported by KAUST baseline funding. 
Jeremy Heng was funded by CY Initiative of Excellence (grant ``Investissements
d'Avenir'' ANR-16-IDEX-0008).

\bibliographystyle{abbrvnat}
\bibliography{ref}

\appendix

\section{Theoretical analysis}
\subsection{Introduction and preliminaries}
Section~\ref{app:diff_proc} provides some results on time-discretization of diffusions, which are needed for the proofs associated to 
Theorem~\ref{prop:conv_grad_log_like} as well as the 4-CCPF (Algorithm~\ref{alg:4-CCPF}). 
Our main technical arguments associated to Theorem~\ref{theo:ub} are given in Section~\ref{app:CCPF}, 
followed by several remarks about the proofs and discussions of alternative strategies. 
This section of the appendix is intended to be read in the order in which it is presented.  
Some familiarity with the approach in \cite{jasra2017multilevel} is also useful.

Note that our results concerning $\mathbb{L}_r$-norms are stated for $r\in[1,\infty)$; and can be extended to the case $r\in(0,1)$ by H\"older's inequality. We will use this fact without further elaboration. Throughout our arguments, $C$ will represent a finite constant whose value may change from line to line, but does not depend upon the discretization level. Any other dependencies in the various parameters considered will be made explicit in the statement of our results.

\subsection{Results on time-discretized diffusion processes\label{app:diff_proc}}

In this section, we consider two diffusion process $X = (X_t)_{t\geq 0}$ and $X^{\star}=(X_t^{\star})_{t\geq 0}$ on the filtered probability space $(\Omega,\mathscr{F},\{\mathscr{F}_t\}_{t\geq 0},\mathbb{P}_{\theta})$ following \eqref{eq:intro_diff}, with the respective initial conditions $X_0=x\in\mathbb{R}^d$ and $X_0^{\star}=x_{\star}\in\mathbb{R}^d$, and driven by the same Brownian motion.
We will consider Euler discretizations \eqref{eq:disc_state} of $(X_t)_{t\geq 0}$ and $(X_t^{\star})_{t\geq 0}$ at some given level $l$, denoted as $\widetilde{X}_{0:T}$ and $\widetilde{X}_{0:T}^{\star}$, driven by the same Brownian motion and with the initial conditions $\widetilde{X}_0=x$ and $\widetilde{X}_0^{\star}=x_{\star}$. 
The expectation operator for the described processes is written as $\mathbb{E}_{\theta}$. 
Although many proof strategies in this section follow \citet{crisan2011discretizing}, we note that many of these arguments will be used in Section~\ref{app:CCPF} to study coupled CPFs.

In addition to the previously defined terms $b_\theta(x) = \Sigma(x)^{-1}\sigma(x)^*a_{\theta}(x)$ and $\Sigma(x)=\sigma(x)\sigma(x)^*$, 
we introduce the function $\rho_{\theta}(x) = b_{\theta}(x)^*\Sigma(x)^{-1}\sigma(x)^*$ which allows us 
to rewrite \eqref{eqn:smoothing_functional} and \eqref{eqn:discrete_test_function} as 
\begin{align}
	G_{\theta}(X) &= -\frac{1}{2}\int_{0}^{T}\nabla_{\theta} \| b_{\theta}(X_t) \|_2^2 dt 
	+ \int_{0}^{T}\nabla_{\theta}\rho_{\theta}(X_{t})dX_{t} + 
	\sum_{t=1}^T\nabla_{\theta} \log g_{\theta}(y_{t}|X_{t}),\\
	G_{\theta}^l(X_{0:T}) 
	= &-\frac{1}{2}\sum_{k=1}^{K_l} \nabla_{\theta} \| b_{\theta}(X_{s_{k-1}}) \|_2^2 \Delta_l 
	+ \sum_{k=1}^{K_l} \nabla_{\theta}\rho_{\theta}(X_{s_{k-1}})(X_{s_{k}} - X_{s_{k-1}}) + 
	\sum_{t=1}^T\nabla_{\theta}\log g_{\theta}(y_{t}|X_{t})\label{eqn:rewrite_discrete_test_function}.
\end{align}
For notational convenience, we define the $d\times 1$ vector of derivatives of $\rho_{\theta}$ as
$$
\kappa_{\theta,i}(x)^* = \Big(\frac{\partial}{\partial \theta_i}[\rho_{\theta}(x)]^1,\dots,
\frac{\partial}{\partial \theta_i}[\rho_{\theta}(x)]^d\Big),
$$
for any $(i,x)\in\{1,\dots,d_{\theta}\}\times\mathbb{R}^d$, 
and the conditional likelihood given states $x_1,\dots,x_T \in \mathbb{R}^d$ as $\varphi_{\theta}(x_1,\dots,x_T) = \prod_{t = 1}^T g_{\theta}(y_t | x_t)$. 
We now give the proof of Theorem \ref{prop:conv_grad_log_like} followed by several technical lemmata 
that are required to establish the theorem.

\begin{proof}[Proof of Theorem \ref{prop:conv_grad_log_like}]
We consider the proof for any given component $i\in\{1,\dots,d_{\theta}\}$ and decompose 
the error of the score function \eqref{eqn:discretized_score} at level $l \in \mathbb{N}_0$ as
\begin{equation}\label{eq:theo_main_eq}
[S_l(\theta) - S(\theta)]^i = T_1 + T_2
\end{equation}
where
\begin{align*}
T_1 & = \frac{\mathbb{E}_{\theta}[\varphi_{\theta}(\widetilde{X}_1^{\star},\dots,\widetilde{X}_T^{\star})[G_{\theta}^l(\widetilde{X}_{0:T}^{\star})]^i]}
{\mathbb{E}_{\theta}[\varphi_{\theta}(\widetilde{X}_1^{\star},\dots,\widetilde{X}_T^{\star})]
\mathbb{E}_{\theta}[\varphi_{\theta}(X_1^{\star},\dots,X_T^{\star})]
}\Big(
\mathbb{E}_{\theta}[\varphi_{\theta}(X_1^{\star},\dots,X_T^{\star})] - 
\mathbb{E}_{\theta}[\varphi_{\theta}(\widetilde{X}_1^{\star},\dots,\widetilde{X}_T^{\star})] 
 \Big), \\
T_2 & = \frac{1}{\mathbb{E}_{\theta}[\varphi_{\theta}(X_1^{\star},\dots,X_T^{\star})]}
\Big(
\mathbb{E}_{\theta}[\varphi_{\theta}(\widetilde{X}_1^{\star},\dots,\widetilde{X}_T^{\star})[G_{\theta}^l(\widetilde{X}_{0:T}^{\star})]^i] - 
\mathbb{E}_{\theta}[\varphi_{\theta}(X_1^{\star},\dots,X_T^{\star})[G_{\theta}(X^{\star})]^i]
\Big).
\end{align*}
Thus our objective is to provide bounds on the quantities $T_1$ and $T_2$ to conclude the proof.

For $T_1$, using Assumption~\ref{ass:D2}, one has the upper-bound
$$
T_1 \leq C \sum_{t=1}^T \mathbb{E}_{\theta}[\|\widetilde{X}_t^{\star}-X_t^{\star}\|_2],
$$
then by using results on the convergence of Euler approximations (e.g.\ \cite{kloeden2013numerical}), for $r>0$
\begin{equation}\label{eq:eul_conv_strong}
\mathbb{E}_{\theta}\big[\|\widetilde{X}^{\star}_t-X^{\star}_t\|_2^r\big]^{1/r} \leq C\Delta_l^{1/2}
\end{equation}
one has
\begin{equation}\label{eq:t1_bias_bd}
T_1 \leq C \Delta_l^{1/2}.
\end{equation}
Note that using standard results on weak errors for diffusions one can improve this upper-bound to $T_1 \leq C \Delta_l$.

For $T_2$, using Assumption~\ref{ass:D2}, we have $T_2\leq C(T_3+T_4)$ where
\begin{align*}
T_3 & = \mathbb{E}_{\theta}\big[\{\varphi_{\theta}(\widetilde{X}_1^{\star},\dots,\widetilde{X}_T^{\star})-\varphi_{\theta}(X_1^{\star},\dots,X_T^{\star})\}
[G_{\theta}^l(\widetilde{X}_{0:T}^{\star})]^i\big], \\
T_4 & = \mathbb{E}_{\theta}\big[\varphi_{\theta}(X_1^{\star},\dots,X_T^{\star})\{
[G_{\theta}^l(\widetilde{X}_{0:T}^{\star})]^i
-[G_{\theta}(X^{\star})]^i\}\big].
\end{align*}
For $T_3$, using Cauchy-Schwarz, we have the upper-bound
$$
T_3 \leq \mathbb{E}_{\theta}\big[\{\varphi_{\theta}(\widetilde{X}_1^{\star},\dots,\widetilde{X}_T^{\star})-\varphi_{\theta}(X_1^{\star},\dots,X_T^{\star})\}^2\big]^{1/2}
\mathbb{E}_{\theta}\big[\{[G_{\theta}^l(\widetilde{X}_{0:T}^{\star})]^i\}^2\big]^{1/2}.
$$
As the second term is bounded by $C$, we consider only the first. We have the upper-bound
\begin{equation}\label{eq:t3_bias_bd}
T_3 \leq C\sum_{t=1}^T \mathbb{E}_{\theta}\big[\|\widetilde{X}_t^{\star}-X_t^{\star}\|_2^2\big]^{1/2} \leq C\Delta_l^{1/2}.
\end{equation}
For $T_4$, noting that $\varphi_{\theta}$ is a bounded function under Assumption~\ref{ass:D2}, 
applying Lemma \ref{lem:diff1} allows one to conclude that $T_4 \leq C\Delta_l^{1/2}$. 
Therefore, using $T_2\leq C(T_3+T_4)$ along with \eqref{eq:t3_bias_bd}, we have
\begin{equation}\label{eq:t2_bias_bd}
T_2 \leq C \Delta_l^{1/2}.
\end{equation}
Combining \eqref{eq:t1_bias_bd} and \eqref{eq:t2_bias_bd} with \eqref{eq:theo_main_eq} allows one to conclude the proof.
\end{proof}

\begin{lemma}\label{lem:diff1}
Under Assumptions~\ref{ass:D1} and \ref{ass:D2}, for any $(T,r,\theta,i)\in \mathbb{N}\times[1,\infty)\times\Theta\times\{1,\dots,d_{\theta}\}$, there exists a constant $C<\infty$ such that for any $(l,x)\in\mathbb{N}_0\times\mathbb{R}^d$
$$
\mathbb{E}_{\theta}\Big[\big|[G_{\theta}^l(\widetilde{X}_{0:T})]^i-[G_{\theta}(X)]^i\big|^r\Big]^{1/r} \leq C\Delta_l^{1/2},
$$
with $\widetilde{X}_0 = X_0 = x$.
\end{lemma}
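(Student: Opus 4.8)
The plan is to work with the rewritten expressions in \eqref{eqn:rewrite_discrete_test_function} and decompose the $i$-th component of the error into three groups that are then bounded separately in $\mathbb{L}_r$ at rate $\Delta_l^{1/2}$. Writing $\lfloor t\rfloor_l = s_{k-1}$ for $t\in[s_{k-1},s_k)$ for the left endpoint of the mesh interval containing $t$, and recalling that $\kappa_{\theta,i}(x)^{*}=(\partial_{\theta_i}[\rho_{\theta}(x)]^1,\dots,\partial_{\theta_i}[\rho_{\theta}(x)]^d)$, the continuous functional reads
$$
[G_{\theta}(X)]^i = -\tfrac12\int_0^T \partial_{\theta_i}\|b_{\theta}(X_t)\|_2^2\,dt + \int_0^T \kappa_{\theta,i}(X_t)^{*}\,dX_t + \sum_{t=1}^T \partial_{\theta_i}\log g_{\theta}(y_t|X_t),
$$
while $[G_{\theta}^l(\widetilde X_{0:T})]^i$ is the same with integrals replaced by left-point sums over the mesh; splitting $dX_t=a_{\theta}(X_t)\,dt+\sigma(X_t)\,dW_t$ turns the middle term into a Lebesgue part and an It\^o part. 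Two ingredients underpin every bound. First, under Assumptions~\ref{ass:D1} and \ref{ass:D2} each of the maps $x\mapsto\partial_{\theta_i}\|b_{\theta}(x)\|_2^2$, $x\mapsto\kappa_{\theta,i}(x)^{*}a_{\theta}(x)$, $x\mapsto\kappa_{\theta,i}(x)^{*}\sigma(x)$ and $x\mapsto\partial_{\theta_i}\log g_{\theta}(y|x)$ is bounded and Lipschitz w.r.t.\ $\|\cdot\|_2$, since each is a product of bounded Lipschitz factors (via D2(i)--(v) and D1(iii)). Second, because the coefficients are bounded, the $\mathbb{L}_r$-modulus of continuity $\mathbb{E}_{\theta}[\|X_t-X_{\lfloor t\rfloor_l}\|_2^r]^{1/r}\le C\Delta_l^{1/2}$ holds uniformly in $x$ (bounded drift contributes $O(\Delta_l)$, the martingale part $O(\Delta_l^{1/2})$ by BDG over an interval of length $\le\Delta_l$); combined with the strong Euler error \eqref{eq:eul_conv_strong} and the triangle inequality, this yields $\mathbb{E}_{\theta}[\|X_t-\widetilde X_{\lfloor t\rfloor_l}\|_2^r]^{1/r}\le C\Delta_l^{1/2}$.

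I would first dispatch the two Lebesgue-type groups and the observation group. Writing each discrete sum as the integral of a piecewise-constant integrand, the time-integral difference and the drift part of the stochastic-integral difference both have the form $\int_0^T\big(h_{\theta}(X_t)-h_{\theta}(\widetilde X_{\lfloor t\rfloor_l})\big)\,dt$ with $h_{\theta}\in\{\partial_{\theta_i}\|b_{\theta}\|_2^2,\ \kappa_{\theta,i}^{*}a_{\theta}\}$; by Minkowski's integral inequality, Lipschitzness of $h_{\theta}$ and the uniform moment bound, each is at most $C\int_0^T\Delta_l^{1/2}\,dt=CT\Delta_l^{1/2}$. The observation group is a finite sum over $t\in\{1,\dots,T\}$ and is controlled by $C\sum_{t=1}^T\mathbb{E}_{\theta}[\|\widetilde X_t-X_t\|_2^r]^{1/r}\le C\Delta_l^{1/2}$ using D2(iv) and \eqref{eq:eul_conv_strong}.

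The main obstacle is the only genuinely stochastic estimate, namely the martingale part of the stochastic-integral difference,
$$
\int_0^T\kappa_{\theta,i}(X_t)^{*}\sigma(X_t)\,dW_t - \sum_{k=1}^{K_l}\kappa_{\theta,i}(\widetilde X_{s_{k-1}})^{*}\sigma(\widetilde X_{s_{k-1}})(W_{s_k}-W_{s_{k-1}}).
$$
The key step is to recognise the second sum as the It\^o integral $\int_0^T\kappa_{\theta,i}(\widetilde X_{\lfloor t\rfloor_l})^{*}\sigma(\widetilde X_{\lfloor t\rfloor_l})\,dW_t$ of a piecewise-constant adapted integrand, so the difference is a single martingale $\int_0^T\big(\Phi(X_t)-\Phi(\widetilde X_{\lfloor t\rfloor_l})\big)\,dW_t$ with $\Phi=\kappa_{\theta,i}^{*}\sigma$. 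Applying the Burkholder--Davis--Gundy inequality and then Minkowski's inequality for the $\mathbb{L}_{r/2}$-norm of the quadratic variation $\int_0^T\|\Phi(X_t)-\Phi(\widetilde X_{\lfloor t\rfloor_l})\|_2^2\,dt$ reduces the bound to $C\big(\int_0^T\mathbb{E}_{\theta}[\|\Phi(X_t)-\Phi(\widetilde X_{\lfloor t\rfloor_l})\|_2^r]^{2/r}\,dt\big)^{1/2}$, and Lipschitzness of $\Phi$ together with the uniform moment bound closes it at $C\Delta_l^{1/2}$ (the restriction $r\ge2$ needed for Minkowski is removed afterwards by monotonicity of $\mathbb{L}_r$-norms, as noted in the preliminaries). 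Summing the four contributions by the triangle inequality in $\mathbb{L}_r$ yields the claim with $C$ independent of $l$ and $x$.
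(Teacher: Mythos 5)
Your proof is correct and reaches the stated rate, but it is organized differently from the paper's argument, in a way worth spelling out. The paper first inserts the intermediate quantity $G_{\theta}^l(X_{0:T})$ — the discrete functional evaluated at the \emph{true} diffusion sampled on the level-$l$ mesh — so that the error splits into an ``Euler path versus true skeleton'' part and a ``Riemann sum versus integral'' part; each part is then decomposed again and dispatched by separate martingale-plus-remainder lemmata (Lemmata~\ref{lem:diff4}, \ref{lem:diff5} and \ref{lem:diff6}). You skip the intermediate skeleton entirely: using the Euler recursion \eqref{eq:disc_state} to write the discrete stochastic sum as an It\^o integral of a piecewise-constant adapted integrand, both functionals become Lebesgue/It\^o integrals, and the whole error collapses into integrand differences of the form $h_{\theta}(X_t)-h_{\theta}(\widetilde{X}_{\lfloor t\rfloor_l})$ plus the observation terms, each controlled by the single combined bound $\mathbb{E}_{\theta}[\|X_t-\widetilde{X}_{\lfloor t\rfloor_l}\|_2^r]^{1/r}\leq C\Delta_l^{1/2}$ obtained from the strong Euler error \eqref{eq:eul_conv_strong} together with the modulus of continuity \eqref{eq:diff_cont_l_r}. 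The ingredients are the same as the paper's — boundedness and Lipschitz continuity of $\partial_{\theta_i}\|b_{\theta}\|_2^2$, $\kappa_{\theta,i}^*a_{\theta}$, $\kappa_{\theta,i}^*\sigma$ and $\partial_{\theta_i}\log g_{\theta}(y|\cdot)$ (all legitimately deduced from Assumptions~\ref{ass:D1}--\ref{ass:D2} by the product rule for bounded Lipschitz functions), BDG plus Minkowski for the martingale part, and the restriction $r\geq 2$ removed by monotonicity of $\mathbb{L}_r$-norms — but you apply BDG once where the paper applies it three times, so your route is the shorter proof of this particular lemma. What the paper's heavier organization buys is modularity: its intermediate pieces are recycled downstream (Lemma~\ref{lem:diff2} on dependence in the initial condition, Lemma~\ref{lem:diff3}, Remarks~\ref{rem:euler_grad}--\ref{rem:euler_grad1}, and several coupled-CPF estimates in Appendix~\ref{app:CCPF} explicitly reuse ``the martingale-remainder methods in the proof of Lemma~\ref{lem:diff4}''), whereas your unified estimate would have to be re-split to serve those later purposes.
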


\begin{proof}
We have that
\begin{equation}\label{eq:main_lem_main_eq}
\mathbb{E}_{\theta}\Big[\big|[G_{\theta}^l(\widetilde{X}_{0:T})]^i-[G_{\theta}(X)]^i\big|^r\Big] \leq C(T_1+T_2)
\end{equation}
where
\begin{align*}
T_1 & = \mathbb{E}_{\theta}\Big[\big|[G_{\theta}^l(\widetilde{X}_{0:T})]^i-[G^l_{\theta}(X_{0:T})]^i\big|^r\Big], \\
T_2 & = \mathbb{E}_{\theta}\Big[\big|[G_{\theta}^l(X_{0:T})]^i-[G_{\theta}(X)]^i\big|^r\Big],
\end{align*}
where $X_{0:T} = (X_{s_k})_{k=0}^{K_l}$ are the states of the process $(X_t)_{t\geq 0}$ at the discretization times of the process $\widetilde{X}_{0:T}$.
From \eqref{eqn:rewrite_discrete_test_function}, we have that $T_1 \leq C\sum_{j=3}^6 T_j$, where
\begin{align*}
T_3 & = \mathbb{E}_{\theta}\Big[\Big|\sum_{t=1}^T \Big\{\big[\nabla_{\theta} \log g_{\theta}(y_t|\widetilde{X}_t)\big]^i - 
\big[\nabla_{\theta} \log g_{\theta}(y_t|X_t)\big]^i\Big\}\Big|^r\Big], \\
T_4 & = \Delta_l^r\mathbb{E}_{\theta}\Big[\Big|\sum_{k=1}^{K_l}\Big\{\big[\nabla_{\theta}\|b_{\theta}(\widetilde{X}_{s_{k-1}})\|_2^2\big]^i - \big[\nabla_{\theta}\|b_{\theta}(X_{s_{k-1}})\|_2^2\big]^i\Big\}\Big|^r\Big], \\
T_5 & = \mathbb{E}_{\theta}\Big[\Big| \sum_{k=1}^{K_l} \{\kappa_{\theta,i}(\widetilde{X}_{s_{k-1}})^* - \kappa_{\theta,i}(X_{s_{k-1}})^*\} [\widetilde{X}_{s_k}-\widetilde{X}_{s_{k-1}}]\Big|^r\Big], \\
T_6 & = \mathbb{E}_{\theta}\Big[\Big| \sum_{k=1}^{K_l} \kappa_{\theta,i}(X_{s_{k-1}})^* [(\widetilde{X}_{s_k}-\widetilde{X}_{s_{k-1}}) - (X_{s_k}-X_{s_{k-1}}) ]\Big|^r\Big].
\end{align*}
The term $T_3$ can be treated in almost the same manner as $T_1$ in the proof of Theorem~\ref{prop:conv_grad_log_like}, i.e.\ 
using a similar argument to the proof of the bound on $T_1$ in Theorem~\ref{prop:conv_grad_log_like}, one can deduce that 
\begin{equation}\label{eq:t3_bd_lem-b1}
T_3 \leq  C\Delta_l^{r/2}.
\end{equation}

For $T_4$, using the fact that $\partial/\partial\theta_i [b_{\theta}^2]^j\in \textrm{Lip}_{\|\cdot\|_2}(\mathbb{R}^d)$ for any $(i,j)\in\{1,\dots,d_{\theta}\}\times\{1,\dots,d\}$, we have by first applying Minkowski's inequality
$$
T_4 \leq  C\Delta_l^r\Big(\sum_{k=1}^{K_l}\mathbb{E}_{\theta}\big[\|\widetilde{X}_{s_{k-1}}-X_{s_{k-1}}\|_2^r\big]^{1/r}\Big)^r.
$$
Then using \eqref{eq:eul_conv_strong}, it follows that 
\begin{equation}\label{eq:t4_bd_lem-b1}
T_4 \leq  C\Delta_l^{r/2}.
\end{equation}
The terms $T_5$ and $T_6$ are bounded in Lemmata~\ref{lem:diff4}-\ref{lem:diff5}, so combining \eqref{eq:t3_bd_lem-b1}, \eqref{eq:t4_bd_lem-b1} and the afore-mentioned lemmata with $T_1 \leq C\sum_{j=3}^6 T_j$ yields
$$
T_1 \leq  C\Delta_l^{r/2}.
$$
By Lemma~\ref{lem:diff6}, $T_2 \leq  C\Delta_l^{r/2}$ and thus by \eqref{eq:main_lem_main_eq} the proof is concluded.
\end{proof}

\begin{corollary}\label{cor:diff1}
Under Assumptions~\ref{ass:D1} and \ref{ass:D2}, for any $(T,r,\theta)\in \mathbb{N}\times[1,\infty)\times\Theta$, there exists a constant $C<\infty$ such that for any $(l,x)\in\mathbb{N}_0\times\mathbb{R}^d$
$$
\mathbb{E}_{\theta}\Big[ \big\| G_{\theta}^l(\widetilde{X}_{0:T})-G_{\theta}(X) \big\|_2^r \Big]^{1/r} \leq C\Delta_l^{1/2}.
$$
\end{corollary}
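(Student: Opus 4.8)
The plan is to deduce this vector-valued estimate directly from the componentwise bound already furnished by Lemma~\ref{lem:diff1}, exploiting that the parameter dimension $d_{\theta}$ is finite and fixed. Write $v = G_{\theta}^l(\widetilde{X}_{0:T})-G_{\theta}(X)\in\mathbb{R}^{d_{\theta}}$ for brevity. First I would reduce the $\mathbb{L}_2$-norm on $\mathbb{R}^{d_{\theta}}$ to the $\mathbb{L}_1$-norm by the elementary finite-dimensional inequality $\|v\|_2\leq\|v\|_1=\sum_{i=1}^{d_{\theta}}|v^i|$, which holds since the omitted cross terms $2\sum_{i<j}|v^i||v^j|$ are nonnegative.

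Next I would pass to the $\mathbb{L}_r$-norm on the probability space. As $r\geq 1$, Minkowski's inequality applies and gives
\begin{align*}
\mathbb{E}_{\theta}\big[\|v\|_2^r\big]^{1/r}
\leq \mathbb{E}_{\theta}\Big[\Big(\sum_{i=1}^{d_{\theta}}|v^i|\Big)^r\Big]^{1/r}
\leq \sum_{i=1}^{d_{\theta}}\mathbb{E}_{\theta}\big[|v^i|^r\big]^{1/r}.
\end{align*}
I would then invoke Lemma~\ref{lem:diff1} for each component $i\in\{1,\dots,d_{\theta}\}$, which yields $\mathbb{E}_{\theta}[|v^i|^r]^{1/r}\leq C_i\Delta_l^{1/2}$ with each $C_i<\infty$ independent of $l$ and $x$. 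Summing over the $d_{\theta}$ components and setting $C=\sum_{i=1}^{d_{\theta}}C_i<\infty$ concludes the argument.

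There is no genuine obstacle here: the statement is a direct corollary of Lemma~\ref{lem:diff1}, and the finiteness and $l$-independence of the constant are inherited verbatim, with $d_{\theta}$ entering only as a fixed multiplicative factor absorbed into $C$. The single point requiring minor care is the range $r\in[1,\infty)$, on which Minkowski's inequality is valid; the extension to $r\in(0,1)$ follows from H\"older's inequality exactly as noted in the preliminaries, and I would rely on that remark without further elaboration.
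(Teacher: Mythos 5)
Your proposal is correct and follows essentially the same route as the paper: both reduce the vector statement to the componentwise bound of Lemma~\ref{lem:diff1} by a one-line Minkowski-type argument, with the dimension $d_{\theta}$ absorbed into the constant. The only (immaterial) difference is that you first dominate $\|v\|_2$ by $\|v\|_1$ and then apply the triangle inequality in $\mathbb{L}_r$, whereas the paper applies the mixed-norm form of Minkowski's inequality directly to the $\ell_2$ combination; your variant has the minor virtue of being transparently valid for all $r\in[1,\infty)$.
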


\begin{proof}
By Minkowski's inequality
$$
\mathbb{E}_{\theta}\Big[ \big\| G_{\theta}^l(\widetilde{X}_{0:T})-G_{\theta}(X) \big\|_2^r \Big]^{1/r} \leq
\bigg(\sum_{i=1}^{d_{\theta}} \mathbb{E}_{\theta}\Big[ \big| [G_{\theta}^l(\widetilde{X}_{0:T})]^i - [G_{\theta}(X)]^i \big|^r \Big]^{2/r} \bigg)^{1/2}
$$
so the proof follows by Lemma~\ref{lem:diff1}.
\end{proof}

\begin{lemma}\label{lem:diff4}
Under Assumptions~\ref{ass:D1} and \ref{ass:D2}, for any $(T,r,\theta,i)\in \mathbb{N}\times[1,\infty)\times\Theta\times\{1,\dots,d_{\theta}\}$, there exists a constant $C<\infty$ such that for any $(l,x)\in\mathbb{N}_0\times\mathbb{R}^d$
$$
\mathbb{E}_{\theta}\bigg[\Big|
\sum_{k=1}^{K_l}
\{\kappa_{\theta,i}(\widetilde{X}_{s_{k-1}})^*-
 \kappa_{\theta,i}(X_{s_{k-1}})^*\}
 [\widetilde{X}_{s_k}-\widetilde{X}_{s_{k-1}}]\Big|^r\bigg] \leq C\Delta_l^{r/2},
$$
with $\widetilde{X}_0 = X_0 = x$.
\end{lemma}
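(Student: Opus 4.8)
The plan is to exploit the martingale structure hidden in the Euler increments. First I would write the Euler step as $\widetilde{X}_{s_k}-\widetilde{X}_{s_{k-1}} = a_{\theta}(\widetilde{X}_{s_{k-1}})\Delta_l + \sigma(\widetilde{X}_{s_{k-1}})(W_{s_k}-W_{s_{k-1}})$ and set $D_k = \kappa_{\theta,i}(\widetilde{X}_{s_{k-1}})^* - \kappa_{\theta,i}(X_{s_{k-1}})^*$, so that the sum splits as $\sum_k D_k[\widetilde{X}_{s_k}-\widetilde{X}_{s_{k-1}}] = A + M$ with drift contribution $A = \Delta_l\sum_k D_k a_{\theta}(\widetilde{X}_{s_{k-1}})$ and martingale contribution $M = \sum_k D_k\sigma(\widetilde{X}_{s_{k-1}})(W_{s_k}-W_{s_{k-1}})$. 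Before bounding these I would record that $\partial_{\theta_i}\rho_{\theta} = (\partial_{\theta_i}b_{\theta})^*\Sigma^{-1}\sigma^*$ (since $\Sigma$ and $\sigma$ carry no $\theta$-dependence) is a product of functions that are bounded and Lipschitz under Assumption~\ref{ass:D2}(i),(ii),(v) together with Assumption~\ref{ass:D1}(iii), so $\kappa_{\theta,i}$ is bounded and Lipschitz and hence $\|D_k\|_2 \leq C\|\widetilde{X}_{s_{k-1}}-X_{s_{k-1}}\|_2$. It is enough to treat $r\geq 2$, the range $r\in[1,2)$ following from Jensen's inequality.

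For the drift term I would apply Minkowski's inequality, the boundedness of $a_{\theta}$ from Assumption~\ref{ass:D2}(ii), and the Lipschitz bound on $D_k$, to reach
\[
\mathbb{E}_{\theta}[|A|^r]^{1/r} \leq C\Delta_l\sum_{k=1}^{K_l}\mathbb{E}_{\theta}\big[\|\widetilde{X}_{s_{k-1}}-X_{s_{k-1}}\|_2^r\big]^{1/r};
\]
the strong error bound \eqref{eq:eul_conv_strong} controls each summand by $C\Delta_l^{1/2}$, and since $K_l\Delta_l = T$ this contributes $\mathcal{O}(\Delta_l^{1/2})$.

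For the martingale term, the key observation is that each summand $\xi_k = D_k\sigma(\widetilde{X}_{s_{k-1}})(W_{s_k}-W_{s_{k-1}})$ has an $\mathscr{F}_{s_{k-1}}$-measurable coefficient multiplying the independent mean-zero increment $W_{s_k}-W_{s_{k-1}}$, so $(\xi_k)$ is a martingale difference sequence. I would then invoke the Burkholder--Davis--Gundy inequality and Minkowski's inequality in $\mathbb{L}_{r/2}$ to obtain
\[
\mathbb{E}_{\theta}[|M|^r]^{1/r} \leq C\Big(\sum_{k=1}^{K_l}\mathbb{E}_{\theta}[|\xi_k|^r]^{2/r}\Big)^{1/2}.
\]
Boundedness of $\sigma$, the Lipschitz bound on $D_k$, and independence of the increment from $\mathscr{F}_{s_{k-1}}$ factorize $\mathbb{E}_{\theta}[|\xi_k|^r] \leq C\,\mathbb{E}_{\theta}[\|\widetilde{X}_{s_{k-1}}-X_{s_{k-1}}\|_2^r]\,\mathbb{E}_{\theta}[\|W_{s_k}-W_{s_{k-1}}\|_2^r]$, and combining \eqref{eq:eul_conv_strong} with the Gaussian moment bound $\mathbb{E}_{\theta}[\|W_{s_k}-W_{s_{k-1}}\|_2^r]\leq C\Delta_l^{r/2}$ gives $\mathbb{E}_{\theta}[|\xi_k|^r]\leq C\Delta_l^r$. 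Summing over the $K_l\sim\Delta_l^{-1}$ terms then yields $\mathbb{E}_{\theta}[|M|^r]^{1/r}\leq C\Delta_l^{1/2}$, and adding the two parts concludes the bound.

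The hard part will be the martingale term: a crude termwise estimate that ignores the martingale structure would combine $K_l\sim\Delta_l^{-1}$ increments, each of size $\Delta_l^{1/2}$ from the Brownian increment and $\Delta_l^{1/2}$ from the strong error, to give only an $\mathcal{O}(1)$ bound. It is precisely the square-root cancellation furnished by Burkholder--Davis--Gundy, together with the independence that lets the Gaussian moment and the strong-error moment separate, that upgrades this to the required $\mathcal{O}(\Delta_l^{1/2})$ rate.
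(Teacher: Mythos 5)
Your proposal is correct and follows essentially the same route as the paper's proof: the same drift--martingale decomposition, the Burkholder--Davis--Gundy inequality combined with Minkowski's inequality for the martingale part, and the Lipschitz property of $\kappa_{\theta,i}$ together with the strong Euler error bound \eqref{eq:eul_conv_strong} for both parts. The only cosmetic differences are that you factorize $\mathbb{E}_{\theta}[|\xi_k|^r]$ using independence of the Brownian increment from $\mathscr{F}_{s_{k-1}}$ where the paper uses Cauchy--Schwarz (requiring $2r$-th moments), and you explicitly reduce $r\in[1,2)$ to $r=2$ via Jensen, a point the paper leaves implicit.
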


\begin{proof}
We have the decomposition
$$
\sum_{k=1}^{K_l}
\{\kappa_{\theta,i}(\widetilde{X}_{s_{k-1}})^*-
 \kappa_{\theta,i}(X_{s_{k-1}})^*\}
 [\widetilde{X}_{s_k}-\widetilde{X}_{s_{k-1}}] = M_{K_l} + R_{K_l},
$$
where
\begin{align*}
M_{K_l} & = \sum_{k=1}^{K_l}\{\kappa_{\theta,i}(\widetilde{X}_{s_{k-1}})^*-
 \kappa_{\theta,i}(X_{s_{k-1}})^*\}\sigma(\widetilde{X}_{s_{k-1}})[W_{s_k}-W_{s_{k-1}}],\\
R_{K_l} & = \Delta_l\sum_{k=1}^{K_l}
\{\kappa_{\theta,i}(\widetilde{X}_{s_{k-1}})^*-
 \kappa_{\theta,i}(X_{s_{k-1}})^*\}a_{\theta}(\widetilde{X}_{s_{k-1}}).
\end{align*}
Thus by the $C_r-$inequality, 
 \begin{equation}\label{eq:diff_mart_prf_lem6}
 \mathbb{E}_{\theta}\bigg[\Big|
\sum_{k=1}^{K_l}
\{\kappa_{\theta,i}(\widetilde{X}_{s_{k-1}})^*-
 \kappa_{\theta,i}(X_{s_{k-1}})^*\}
 [\widetilde{X}_{s_k}-\widetilde{X}_{s_{k-1}}] \Big|^r\bigg] \leq C\big( \mathbb{E}_{\theta}[|M_{K_l}|^r]+\mathbb{E}_{\theta}[|R_{K_l}|^r]\big).
 \end{equation}
 We will bound the two terms on the R.H.S.~of \eqref{eq:diff_mart_prf_lem6} individually.
\paragraph{Bound for $\mathbb{E}_{\theta}[|M_{K_l}|^r]$.} If we define $M_{u} =  \sum_{k=0}^{u-1}\{\kappa_{\theta,i}(\widetilde{X}_{s_{k-1}})^*-
 \kappa_{\theta,i}(X_{s_{k-1}})^*\}\sigma(\widetilde{X}_{s_{k-1}})[W_{s_k}-W_{s_{k-1}}]$ for any $u \in \mathbb{N}_0$,
 then $(M_u,\mathscr{F}_{u\Delta_l})_{u\in \mathbb{N}_0}$ is a martingale. It follows from this fact and from an application of the Burkholder-Gundy-Davis (BGD) inequality that
 $$
 \mathbb{E}_{\theta}[|M_{K_l}|^r] \leq C
  \mathbb{E}_{\theta}\bigg[\Big|\sum_{k=1}^{K_l}
  (\{\kappa_{\theta,i}(\widetilde{X}_{s_{k-1}})^*-
 \kappa_{\theta,i}(X_{s_{k-1}})^*\}\sigma(\widetilde{X}_{s_{k-1}})[W_{s_k}-W_{s_{k-1}}])^2
  \Big|^{r/2}\bigg],
 $$
from which Minkowski's inequality yields
$$
 \mathbb{E}_{\theta}[|M_{K_l}|^r] \leq C
\bigg(
\sum_{k=1}^{K_l}  \mathbb{E}_{\theta}\Big[\big|
\{\kappa_{\theta,i}(\widetilde{X}_{s_{k-1}})^*-
 \kappa_{\theta,i}(X_{s_{k-1}})^*\}\sigma(\widetilde{X}_{s_{k-1}})[W_{s_k}-W_{s_{k-1}}]\big|^r\Big]^{2/r}
\bigg)^{r/2}.
$$
Using the $C_r-$inequality $d^2$ times, we obtain the bound
\begin{multline*}
 \mathbb{E}_{\theta}[|M_{K_l}|^r] \leq \\
C\Bigg( \sum_{k=1}^{K_l} \bigg( \sum_{(m,j)\in\{1,\dots,d\}^2}
 \mathbb{E}_{\theta}\Big[\big|\{\kappa_{\theta,i}(\widetilde{X}_{s_{k-1}}) - \kappa_{\theta,i}(X_{s_{k-1}})\}^m
\sigma(\widetilde{X}_{s_{k-1}})^{m,j}[W_{s_k}-W_{s_{k-1}}]^j\big|^r\Big]
\bigg)^{2/r}\Bigg)^{r/2}.
\end{multline*}
Using the fact that $\sigma^{m,j}\in\mathcal{B}_b(\mathbb{R}^d)$ along with the Cauchy-Schwarz inequality yields
\begin{multline}
\label{eq:diff_mart_prf_lem1}
\mathbb{E}_{\theta}[|M_{K_l}|^r] \leq \\
C\Bigg( \sum_{k=1}^{K_l}\bigg(\sum_{(m,j)\in\{1,\dots,d\}^2}
  \mathbb{E}_{\theta}\Big[ \big| \{\kappa_{\theta,i}(\widetilde{X}_{s_{k-1}})-
 \kappa_{\theta,i}(X_{s_{k-1}})\}^m \big|^{2r} \Big]^{1/2}
   \mathbb{E}_{\theta}\Big[ \big|[W_{s_k}-W_{s_{k-1}}]^j\big|^{2r} \Big]^{1/2}
\bigg)^{2/r} \Bigg)^{r/2}.
 \end{multline}
Since it holds that $[\kappa_{\theta,i}]^m\in\textrm{Lip}_{\|\cdot\|}(\mathbb{R}^d)$, it follows from the same type of inequality as \eqref{eq:eul_conv_strong} that
\begin{equation}\label{eq:diff_mart_prf_lem2}
\mathbb{E}_{\theta}\Big[ \big| \{\kappa_{\theta,i}(\widetilde{X}_{s_{k-1}}) - \kappa_{\theta,i}(X_{s_{k-1}})\}^m \big|^{2r} \Big]^{1/2} \leq C\Delta_l^{r/2},
\end{equation}
and by standard properties of Brownian motion, we obtain
\begin{equation}\label{eq:diff_mart_prf_lem3}
\mathbb{E}_{\theta}\Big[ \big| [W_{s_k}-W_{s_{k-1}}
]^j \big|^{2r} \Big]^{1/2}
\leq C\Delta_l^{r/2}.
\end{equation}
 Combining \eqref{eq:diff_mart_prf_lem1} with \eqref{eq:diff_mart_prf_lem2} and \eqref{eq:diff_mart_prf_lem3} yields the upper-bound
  \begin{equation}\label{eq:diff_mart_prf_lem4}
 \mathbb{E}_{\theta}[|M_{K_l}|^r] \leq C\Delta_l^{r/2}.
  \end{equation}
\paragraph{Bound for $\mathbb{E}_{\theta}[|R_{K_l}|^r]$.} We have the upper-bound by Minkowski's inequality
$$ 
\mathbb{E}_{\theta}[|R_{K_l}|^r] \leq \Delta_l^r\bigg(\sum_{k=1}^{K_l}
\mathbb{E}_{\theta}\Big[\big|\{\kappa_{\theta,i}(\widetilde{X}_{s_{k-1}})^*-
 \kappa_{\theta,i}(X_{s_{k-1}})^*\}a_{\theta}(\widetilde{X}_{s_{k-1}}) \big|^r \Big]^{1/r}
\bigg)^r.
$$
Then applying the $C_r-$inequality $d$ times and using the assumption that $a_{\theta}^j\in\mathcal{B}_b(\mathbb{R}^d)$ we have the upper-bound
$$ 
\mathbb{E}_{\theta}[|R_{K_l}|^r] \leq C\Delta_l^r\bigg(\sum_{k=1}^{K_l}
\sum_{j=1}^d 
   \mathbb{E}_{\theta}\Big[\big|\{\kappa_{\theta,i}(\widetilde{X}_{s_{k-1}})-
 \kappa_{\theta,i}(X_{s_{k-1}})\}^j \big|^{r} \Big]^{1/r} \bigg)^r.
$$
Using the same argument to obtain \eqref{eq:diff_mart_prf_lem2}, we have
 \begin{equation}\label{eq:diff_mart_prf_lem5}
 \mathbb{E}_{\theta}[|R_{K_l}|^r] \leq C\Delta_l^r\Delta_l^{-r/2} =C\Delta_l^{r/2}.
  \end{equation}

Combining \eqref{eq:diff_mart_prf_lem4} and \eqref{eq:diff_mart_prf_lem5} with \eqref{eq:diff_mart_prf_lem6} allows us to conclude.
\end{proof}

\begin{lemma}\label{lem:diff5}
Under Assumptions~\ref{ass:D1} and \ref{ass:D2}, for any $(T,r,\theta,i)\in \mathbb{N}\times[1,\infty)\times\Theta\times\{1,\dots,d_{\theta}\}$, there exists a constant $C<\infty$ such that for any $(l,x)\in\mathbb{N}_0\times\mathbb{R}^d$
$$
\mathbb{E}_{\theta}\bigg[\Big| \sum_{k=1}^{K_l} \kappa_{\theta,i}(X_{s_{k-1}})^* \big[(\widetilde{X}_{s_k}-\widetilde{X}_{s_{k-1}})- (X_{s_k}-X_{s_{k-1}}) \big]\Big|^r\bigg] \leq C\Delta_l^{r/2},
$$
with $\widetilde{X}_0 = X_0 = x$.
\end{lemma}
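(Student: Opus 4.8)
The plan is to exploit the martingale structure hidden in the sum after multiplying by the $\mathscr{F}_{s_{k-1}}$-measurable coefficients $\kappa_{\theta,i}(X_{s_{k-1}})^*$. Write the per-step discrepancy as $D_k = (\widetilde{X}_{s_k}-\widetilde{X}_{s_{k-1}}) - (X_{s_k}-X_{s_{k-1}})$ and split it into three pieces using the Euler recursion and It\^o's formula:
\begin{align*}
D_k &= \underbrace{[\sigma(\widetilde{X}_{s_{k-1}})-\sigma(X_{s_{k-1}})](W_{s_k}-W_{s_{k-1}})}_{=:\,d_k^{(1)}}
+ \underbrace{\int_{s_{k-1}}^{s_k}[\sigma(X_{s_{k-1}})-\sigma(X_u)]\,dW_u}_{=:\,d_k^{(2)}}\\
&\quad + \underbrace{\int_{s_{k-1}}^{s_k}[a_{\theta}(\widetilde{X}_{s_{k-1}})-a_{\theta}(X_u)]\,du}_{=:\,d_k^{(3)}},
\end{align*}
so that the sum in question is $S_1+S_2+S_3$ with $S_j=\sum_{k=1}^{K_l}\kappa_{\theta,i}(X_{s_{k-1}})^* d_k^{(j)}$, and by the $C_r$-inequality it suffices to bound each $\mathbb{E}_{\theta}[|S_j|^r]$ by $C\Delta_l^{r/2}$. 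Following the appendix convention, I would prove the bound for $r\geq 2$ and extend it to $r\in[1,2)$ by Jensen's inequality, since $\mathbb{E}_{\theta}[|S_j|^r]^{1/r}\leq\mathbb{E}_{\theta}[|S_j|^2]^{1/2}$ on the probability space.

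The two leading terms $S_1$ and $S_2$ are martingales, which is the crux of getting the right rate. For $S_1$, the summands have $\mathscr{F}_{s_{k-1}}$-measurable coefficients multiplying the zero-mean increments $W_{s_k}-W_{s_{k-1}}$, so the Burkholder--Gundy--Davis inequality gives $\mathbb{E}_{\theta}[|S_1|^r]\leq C\,\mathbb{E}_{\theta}[(\sum_k |\kappa_{\theta,i}(X_{s_{k-1}})^* d_k^{(1)}|^2)^{r/2}]$; applying Minkowski's inequality in $\mathbb{L}_{r/2}$, then Cauchy--Schwarz to separate the increment, and using boundedness of $\kappa_{\theta,i}$ (from Assumption~\ref{ass:D2}), the Lipschitz bound on $\sigma$ combined with the strong error estimate \eqref{eq:eul_conv_strong} for $\mathbb{E}_{\theta}[\|\sigma(\widetilde{X}_{s_{k-1}})-\sigma(X_{s_{k-1}})\|_2^{2r}]^{1/(2r)}\leq C\Delta_l^{1/2}$ together with the Brownian moment bound $\mathbb{E}_{\theta}[\|W_{s_k}-W_{s_{k-1}}\|_2^{2r}]^{1/(2r)}\leq C\Delta_l^{1/2}$, each summand contributes $O(\Delta_l^2)$ to the $\mathbb{L}_{r/2}$-norm; summing over the $K_l=T\Delta_l^{-1}$ steps yields $\mathbb{E}_{\theta}[|S_1|^r]^{2/r}\leq CT\Delta_l$, i.e.\ $C\Delta_l^{r/2}$. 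For $S_2$ I would view it as a single It\^o integral $\int_0^T H_u\,dW_u$ with integrand $H_u=\kappa_{\theta,i}(X_{s_{k-1}})^*[\sigma(X_{s_{k-1}})-\sigma(X_u)]$ on $u\in[s_{k-1},s_k)$; BGD followed by H\"older in time gives $\mathbb{E}_{\theta}[|S_2|^r]\leq CT^{r/2-1}\int_0^T\mathbb{E}_{\theta}[\|H_u\|_2^r]\,du$, and the standard increment bound $\mathbb{E}_{\theta}[\|X_u-X_{s_{k-1}}\|_2^r]^{1/r}\leq C\Delta_l^{1/2}$ for $u-s_{k-1}\leq\Delta_l$ (with $\sigma$ Lipschitz and $\kappa_{\theta,i}$ bounded) yields $\mathbb{E}_{\theta}[|S_2|^r]\leq C\Delta_l^{r/2}$.

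The drift term $S_3$ carries no martingale structure, so I would bound it directly: writing $S_3=\sum_k\kappa_{\theta,i}(X_{s_{k-1}})^*\int_{s_{k-1}}^{s_k}[a_{\theta}(\widetilde{X}_{s_{k-1}})-a_{\theta}(X_u)]\,du$, the triangle and Minkowski inequalities, boundedness of $\kappa_{\theta,i}$, and the Lipschitz bound on $a_{\theta}$ reduce matters to $\mathbb{E}_{\theta}[\|\widetilde{X}_{s_{k-1}}-X_{s_{k-1}}\|_2^r]^{1/r}+\mathbb{E}_{\theta}[\|X_{s_{k-1}}-X_u\|_2^r]^{1/r}\leq C\Delta_l^{1/2}$, so each summand is $O(\Delta_l^{3/2})$ in $\mathbb{L}_r$; summing the $K_l=T\Delta_l^{-1}$ terms gives $\mathbb{E}_{\theta}[|S_3|^r]^{1/r}\leq CT\Delta_l^{1/2}$, hence $\mathbb{E}_{\theta}[|S_3|^r]\leq C\Delta_l^{r/2}$. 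Combining the three bounds via the $C_r$-inequality concludes the proof. The main obstacle I anticipate is the bookkeeping of the powers of $\Delta_l$ against the growing number $K_l=T\Delta_l^{-1}$ of summands: the rate is only correct because the martingale pieces $S_1,S_2$ accumulate variances (of order $\Delta_l$ after squaring) rather than means, which is precisely why isolating and correctly identifying the martingale in $d_k^{(2)}$ --- the Euler-versus-It\^o stochastic integral remainder --- is the delicate step.
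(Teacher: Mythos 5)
Your proof is correct and follows essentially the same route as the paper: the paper decomposes the sum into a martingale term $M_{K_l}$ (with integrand $\sigma(\widetilde{X}_{s_{k-1}})-\sigma(X_s)$) plus a drift remainder $R_{K_l}$, applies BGD and Minkowski to the former and Minkowski/Jensen plus the Lipschitz assumptions to the latter, invoking exactly the strong-error bound \eqref{eq:eul_conv_strong} and path-continuity bound \eqref{eq:diff_cont_l_r} that you use. The only difference is organizational: you split $\sigma(\widetilde{X}_{s_{k-1}})-\sigma(X_u)$ into its two pieces before applying BGD (yielding your $S_1$ and $S_2$), whereas the paper performs this split inside the estimates via the $C_r$-inequality after a second application of BGD to the inner stochastic integral.
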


\begin{proof}
We have the decomposition
$$
\sum_{k=1}^{K_l} \kappa_{\theta,i}(X_{s_{k-1}})^* \big[(\widetilde{X}_{s_k}-\widetilde{X}_{s_{k-1}})- (X_{s_k}-X_{s_{k-1}}) \big] = M_{K_l} + R_{K_l},
$$
where
\begin{align*}
M_{K_l} & = \sum_{k=1}^{K_l}\sum_{(m,j)\in\{1,\dots,d\}^2}\kappa_{\theta,i}(X_{s_{k-1}})^m \int_{s_{k-1}}^{s_k}[\sigma(\widetilde{X}_{s_{k-1}})-\sigma(X_s)]^{m,j} dW_s^j,\\
R_{K_l} & = \sum_{k=1}^{K_l}\sum_{j=1}^d \kappa_{\theta,i}(X_{s_{k-1}})^j \int_{s_{k-1}}^{s_k} \big[a_{\theta}(\widetilde{X}_{s_{k-1}})-a_{\theta}(X_s) \big]^j ds.
\end{align*}
Thus by the $C_r-$inequality, 
\begin{equation}\label{eq:diff_mart_prf_lem7}
\mathbb{E}_{\theta}\bigg[ \Big|
\sum_{k=1}^{K_l} \kappa_{\theta,i}(X_{s_{k-1}})^*
\big[ (\widetilde{X}_{s_k}-\widetilde{X}_{s_{k-1}}) - (X_{s_k}-X_{s_{k-1}}) \big]
\Big|^r \bigg] \leq C\big( \mathbb{E}_{\theta}[|M_{K_l}|^r]+\mathbb{E}_{\theta}[|R_{K_l}|^r] \big).
\end{equation}
We will bound the two terms on the R.H.S.~of \eqref{eq:diff_mart_prf_lem7} individually.

\paragraph{Bound for $\mathbb{E}_{\theta}[|M_{K_l}|^r]$.} By applying the $C_r-$inequality $d^2$ times we have the upper-bound
$$
\mathbb{E}_{\theta}[|M_{K_l}|^r] \leq C \sum_{(m,j)\in\{1,\dots,d\}^2} \mathbb{E}_{\theta}\bigg[\Big|
\sum_{k=1}^{K_l} \kappa_{\theta,i}(X_{s_{k-1}})^m \int_{s_{k-1}}^{s_k}[\sigma(\widetilde{X}_{s_{k-1}})-\sigma(X_s)]^{m,j}dW_s^j \Big|^r\bigg].
$$
For any $(m,j) \in \{1,\dots,d\}^2$, we define $M_u^{m,j} = \sum_{k=0}^{u-1} \kappa_{\theta,i}(X_{s_{k-1}})^m \int_{s_{k-1}}^{s_k}[\sigma(\widetilde{X}_{s_{k-1}})-\sigma(X_s)]^{m,j}dW_s^j$ for $u \in \mathbb{N}_0$. 
As $(M_u^{m,j},\mathscr{F}_{u\Delta_l})_{u\in \mathbb{N}_0}$ is a martingale, applying the BGD inequality yields
$$
\mathbb{E}_{\theta}[|M_{K_l}|^r] \leq C \sum_{(m,j)\in\{1,\dots,d\}^2} \mathbb{E}_{\theta} \bigg[ \Big| \sum_{k=1}^{K_l} \Big(\kappa_{\theta,i}(X_{s_{k-1}})^m \int_{s_{k-1}}^{s_k}[\sigma(\widetilde{X}_{s_{k-1}})-\sigma(X_s)]^{m,j}dW_s^j\Big)^2 \Big|^{r/2}\bigg].
$$
Applying Minkowski's inequality and using the fact that $[\kappa_{\theta,i}]^m\in\mathcal{B}_b(\mathbb{R}^d)$, we obtain
 \begin{equation}\label{eq:diff_mart_prf_lem8}
\mathbb{E}_{\theta}[|M_{K_l}|^r] \leq C \sum_{(m,j)\in\{1,\dots,d\}^2} \Bigg( \sum_{k=1}^{K_l}
\mathbb{E}_{\theta}\bigg[\Big|\int_{s_{k-1}}^{s_k}[\sigma(\widetilde{X}_{s_{k-1}})-\sigma(X_s)]^{m,j}dW_s^j
\Big|^{r}\bigg]^{2/r} \Bigg)^{r/2}.
 \end{equation}
 Now we deal with the expectation on the R.H.S.~of \eqref{eq:diff_mart_prf_lem8}. Using the martingale property of the stochastic integral, it follows from applying the BGD inequality again that
\begin{align*}
\mathbb{E}_{\theta}\bigg[\Big|\int_{s_{k-1}}^{s_k}[\sigma(\widetilde{X}_{s_{k-1}})-\sigma(X_s)]^{m,j}dW_s^j
\Big|^{r}\bigg] & \leq
C\mathbb{E}_{\theta}\bigg[\Big|\int_{s_{k-1}}^{s_k} \big\{ [\sigma(\widetilde{X}_{s_{k-1}})-\sigma(X_s)]^{m,j} \big\}^2 ds
\Big|^{r/2}\bigg] \\
& \leq C\Delta_l^{r/2-1}\mathbb{E}_{\theta}\Big[\int_{s_{k-1}}^{s_k} \big|[\sigma(\widetilde{X}_{s_{k-1}})-\sigma(X_s)]^{m,j} \big|^r ds\Big],
\end{align*}
where we have used Jensen's inequality in the second line. Using the $C_r-$inequality, we have
\begin{multline*}
\mathbb{E}_{\theta}\bigg[\Big|\int_{s_{k-1}}^{s_k}[\sigma(\widetilde{X}_{s_{k-1}})-\sigma(X_s)]^{m,j}dW_s^j
\Big|^{r}\bigg] \leq \\
C\Delta_l^{r/2-1}\int_{s_{k-1}}^{s_k}
 \Big\{ \mathbb{E}_{\theta}\Big[ \big| [\sigma(\widetilde{X}_{s_{k-1}})-\sigma(X_{{s_{k-1}}})]^{m,j} \big|^r \Big]
+ \mathbb{E}_{\theta}\Big[ \big| [\sigma(X_{{s_{k-1}}})-\sigma(X_s)]^{m,j} \big|^r \Big]
\Big\}ds.
\end{multline*}
Using the fact that $\sigma^{m,j}\in\textrm{Lip}_{\|\cdot\|_2}(\mathbb{R}^d)$, we then obtain 
\begin{multline*}
\mathbb{E}_{\theta}\bigg[\Big|\int_{s_{k-1}}^{s_k}[\sigma(\widetilde{X}_{s_{k-1}})-\sigma(X_s)]^{m,j}dW_s^j
\Big|^{r}\bigg] \leq \\
C\Delta_l^{r/2-1}\int_{s_{k-1}}^{s_k} \Big\{\mathbb{E}_{\theta}
\big[ \|
\widetilde{X}_{s_{k-1}}-X_{{s_{k-1}}}\|_2^r \big] + \mathbb{E}_{\theta}\big[ \|X_{{s_{k-1}}}-X_s\|_2^r \big] \Big\}ds.
\end{multline*}
By the property \eqref{eq:eul_conv_strong} for $r>0$ (e.g.\ \cite{ikeda2014stochastic}), it follows that
\begin{equation}\label{eq:diff_cont_l_r}
  \sup_{(t,s)\in[0,T]^2}\mathbb{E}_{\theta} \big[ \|X_t-X_s\|_2^r \big] \leq C|t-s|^{r/2},
\end{equation}
hence we obtain the upper-bound
 \begin{equation}\label{eq:diff_mart_prf_lem9}
  \mathbb{E}_{\theta}\bigg[\Big|\int_{s_{k-1}}^{s_k} \big[ \sigma(\widetilde{X}_{s_{k-1}})-\sigma(X_s) \big]^{m,j} dW_s^j
\Big|^{r}\bigg] \leq C\Delta_l^{r/2-1}\Delta_l^{1+r/2}= C\Delta_l^r.
  \end{equation}
Combining \eqref{eq:diff_mart_prf_lem9} with \eqref{eq:diff_mart_prf_lem8} gives
 \begin{equation}\label{eq:diff_mart_prf_lem10}  
 \mathbb{E}_{\theta}[|M_{K_l}|^r] \leq C\Delta_l^{r/2}.
  \end{equation}

\paragraph{Bound for $\mathbb{E}_{\theta}[|R_{K_l}|^r]$.} Using Minkowski's inequality followed by Jensen's inequality, 
we obtain the following bounds
\begin{align*}
\mathbb{E}_{\theta}[|R_{K_l}|^r] & \leq 
\Bigg( \sum_{k=1}^{K_l} \sum_{j=1}^d 
\mathbb{E}_{\theta} \bigg[ \Big| \kappa_{\theta,i}(X_{s_{k-1}})^j
\int_{s_{k-1}}^{s_k}[a_{\theta}(\widetilde{X}_{s_{k-1}})-a_{\theta}(X_s)]^jds \Big|^r \bigg]^{1/r} \Bigg)^r\\
& \leq 
\Bigg(\sum_{k=1}^{K_l}\sum_{j=1}^d \Delta_l^{1-1/r} \mathbb{E}_{\theta}\Big[
\big| \kappa_{\theta,i}(X_{s_{k-1}})^j \big|^r
\int_{s_{k-1}}^{s_k}\big| [a_{\theta}(\widetilde{X}_{s_{k-1}})-a_{\theta}(X_s)]^j \big|^rds
\Big]^{1/r} \Bigg)^r.
\end{align*}
Using the assumption $[\kappa_{\theta,i}]^j\in\mathcal{B}_b(\mathbb{R}^d)$, the decomposition
$$
[a_{\theta}(\widetilde{X}_{s_{k-1}})-a_{\theta}(X_s)]^j = 
[a_{\theta}(\widetilde{X}_{s_{k-1}})-a_{\theta}(X_{s_{k-1}})]^j + [a_{\theta}(X_{s_{k-1}})-a_{\theta}(X_s)]^j,
$$
and the $C_r-$inequality, we have 
\begin{align*}
& \mathbb{E}_{\theta}[|R_{K_l}|^r] \leq \\
& \Bigg(\sum_{k=1}^{K_l} \sum_{j=1}^d C\Delta_l^{1-1/r} \bigg( \int_{s_{k-1}}^{s_k} \mathbb{E}_{\theta}\Big[ \big| [a_{\theta}(\widetilde{X}_{s_{k-1}})-a_{\theta}(X_{s_{k-1}})]^j \big|^r \Big] ds + \int_{s_{k-1}}^{s_k} \mathbb{E}_{\theta}\Big[ \big| [a_{\theta}(X_{s_{k-1}})-a_{\theta}(X_s)]^j \big|^r \Big] ds \bigg)^{1/r} \Bigg)^r.
\end{align*}
Since we have assumed that $a_{\theta}^j\in\textrm{Lip}_{\|\cdot\|_2}(\mathbb{R}^d)$, one has
\begin{equation*}
\mathbb{E}_{\theta}[|R_{K_l}|^r] \leq \Bigg(\sum_{k=1}^{K_l}\sum_{j=1}^d C\Delta_l^{1-1/r} \bigg(
\int_{s_{k-1}}^{s_k} \mathbb{E}_{\theta} \big[ \|\widetilde{X}_{s_{k-1}}-X_{s_{k-1}}\|_2^r \big] ds + \int_{s_{k-1}}^{s_k} \mathbb{E}_{\theta} \big[ \|X_{s_{k-1}}-X_s\|_2^r \big] ds \bigg)^{1/r} \Bigg)^r.
\end{equation*}
Using \eqref{eq:eul_conv_strong}  and \eqref{eq:diff_cont_l_r}, we therefore obtain
\begin{equation}\label{eq:diff_mart_prf_lem11}  
\mathbb{E}_{\theta}[|R_{K_l}|^r] \leq C\Delta_l^{r/2}.
\end{equation}
Combining \eqref{eq:diff_mart_prf_lem10} and \eqref{eq:diff_mart_prf_lem11} with \eqref{eq:diff_mart_prf_lem7} allows us to conclude.
\end{proof}

\begin{lemma}\label{lem:diff6}
Under Assumptions~\ref{ass:D1} and \ref{ass:D2}, for any $(T,r,\theta,i)\in \mathbb{N}\times[1,\infty)\times\Theta\times\{1,\dots,d_{\theta}\}$, there exists a constant $C<\infty$ such that for any $(l,x)\in\mathbb{N}_0\times\mathbb{R}^d$
$$
\mathbb{E}_{\theta}\Big[\big|[G_{\theta}^l(X_{0:T})]^i - [G_{\theta}(X)]^i\big|^r \Big] \leq C\Delta_l^{r/2},
$$
with $\widetilde{X}_0 = X_0 = x$.
\end{lemma}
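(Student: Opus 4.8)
The plan is to exploit the fact that here both functionals are evaluated on the \emph{same} diffusion path $X$ (sampled on the grid in $G_\theta^l$, used in full in $G_\theta$), so there is no Euler error to control and the only tool needed is the modulus-of-continuity estimate \eqref{eq:diff_cont_l_r} for the true diffusion. First I would observe that the observation contributions $\sum_{t=1}^T[\nabla_\theta\log g_\theta(y_t|X_t)]^i$ are identical in \eqref{eqn:rewrite_discrete_test_function} and in the continuous functional $G_\theta$, hence cancel exactly. Recalling $\kappa_{\theta,i}$, what remains decomposes as $[G_{\theta}^l(X_{0:T})]^i - [G_{\theta}(X)]^i = T_1 + T_2$, where
\begin{align*}
T_1 &= -\tfrac{1}{2}\sum_{k=1}^{K_l}\int_{s_{k-1}}^{s_k}\big\{[\nabla_{\theta}\|b_{\theta}(X_{s_{k-1}})\|_2^2]^i-[\nabla_{\theta}\|b_{\theta}(X_t)\|_2^2]^i\big\}\,dt,\\
T_2 &= \sum_{k=1}^{K_l}\kappa_{\theta,i}(X_{s_{k-1}})^*(X_{s_k}-X_{s_{k-1}}) - \int_0^T \kappa_{\theta,i}(X_t)^*\,dX_t,
\end{align*}
using $\sum_k h(X_{s_{k-1}})\Delta_l - \int_0^T h(X_t)\,dt = \sum_k\int_{s_{k-1}}^{s_k}\{h(X_{s_{k-1}})-h(X_t)\}\,dt$ for the Riemann part. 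I would then bound $T_1$ and $T_2$ separately and combine them via the $C_r$-inequality.

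For the Riemann term $T_1$, I would use that $[\nabla_\theta\|b_\theta\|_2^2]^i\in\mathcal{B}_b(\mathbb{R}^d)\cap\textrm{Lip}_{\|\cdot\|_2}(\mathbb{R}^d)$ under Assumption~\ref{ass:D2}(v), apply Minkowski's inequality across the sum and the inner time integral, and invoke \eqref{eq:diff_cont_l_r} to get $\mathbb{E}_\theta[\|X_{s_{k-1}}-X_t\|_2^r]^{1/r}\le C\Delta_l^{1/2}$ on each subinterval $[s_{k-1},s_k]$. Summing over the $K_l=T/\Delta_l$ blocks yields $\mathbb{E}_\theta[|T_1|^r]^{1/r}\le C K_l\,\Delta_l\,\Delta_l^{1/2}=CT\Delta_l^{1/2}$.

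The martingale term $T_2$ is the main work and mirrors the proof of Lemma~\ref{lem:diff5} almost verbatim. Substituting $dX_t = a_\theta(X_t)\,dt+\sigma(X_t)\,dW_t$ and the corresponding expression for $X_{s_k}-X_{s_{k-1}}$, I would write $T_2=M_{K_l}+R_{K_l}$, where $R_{K_l}$ collects the $dt$ contributions and $M_{K_l}$ the stochastic-integral contributions, both with integrand $[\kappa_{\theta,i}(X_{s_{k-1}})-\kappa_{\theta,i}(X_t)]$. The finite-variation part is handled exactly as in Lemma~\ref{lem:diff5}, using $a_\theta^j\in\mathcal{B}_b(\mathbb{R}^d)$, $[\kappa_{\theta,i}]^j\in\textrm{Lip}_{\|\cdot\|_2}(\mathbb{R}^d)$, Jensen's inequality and \eqref{eq:diff_cont_l_r} to get $\mathbb{E}_\theta[|R_{K_l}|^r]\le C\Delta_l^{r/2}$. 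For $M_{K_l}$ I would apply the $C_r$-inequality over $(m,j)\in\{1,\dots,d\}^2$, the Burkholder--Gundy--Davis inequality to each component martingale, Minkowski's inequality, and then BGD and Jensen once more on each block $\int_{s_{k-1}}^{s_k}[\kappa_{\theta,i}(X_{s_{k-1}})-\kappa_{\theta,i}(X_t)]^m\sigma^{m,j}(X_t)\,dW_t^j$; boundedness of $\sigma^{m,j}$, the Lipschitz property of $[\kappa_{\theta,i}]^m$ and \eqref{eq:diff_cont_l_r} give a per-block bound of order $\Delta_l^r$ as in \eqref{eq:diff_mart_prf_lem9}, and summing the $K_l$ blocks and raising to the power $r/2$ yields $\mathbb{E}_\theta[|M_{K_l}|^r]\le C\Delta_l^{r/2}$.

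The hard part is the double application of BGD in $M_{K_l}$, where one must convert the quadratic-variation integrand $\int_{s_{k-1}}^{s_k}\{\cdot\}^2\,dt$ into an $\mathbb{L}_r$ modulus-of-continuity bound via \eqref{eq:diff_cont_l_r} without losing a power of $\Delta_l$; since the bookkeeping is identical to Lemma~\ref{lem:diff5}, I would cite that argument rather than reproduce it. Combining $\mathbb{E}_\theta[|T_1|^r]\le C\Delta_l^{r/2}$ and $\mathbb{E}_\theta[|T_2|^r]\le C\Delta_l^{r/2}$ through the $C_r$-inequality then concludes the proof.
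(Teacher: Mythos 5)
Your proposal is correct and takes essentially the same route as the paper: the paper likewise splits the error into a Riemann remainder for the $\nabla_{\theta}\|b_{\theta}\|_2^2$ term plus a martingale part and a finite-variation ($ds$) part, and disposes of all three by invoking the BGD/Minkowski/Jensen/Lipschitz arguments and the modulus-of-continuity bound \eqref{eq:diff_cont_l_r} already developed in the proof of Lemma~\ref{lem:diff5}. Your grouping of the stochastic-integral difference as $[\kappa_{\theta,i}(X_{s_{k-1}})-\kappa_{\theta,i}(X_t)]$ against coefficients evaluated at $X_t$ is an exact identity (arguably cleaner than the paper's frozen-product form, which carries a typo involving $\widetilde{X}_{s_{k-1}}$), but the resulting per-block estimates and the final $C\Delta_l^{r/2}$ bound are the same.
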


\begin{proof}
We have the decomposition
$$
\big| [G_{\theta}^l(X_{0:T})]^i - [G_{\theta}(X)]^i \big| = R_{K_l}^{(1)} + M_{K_l} 
+ R_{K_l}^{(2)},
$$
where
\begin{align*}
R_{K_l}^{(1)} & = \sum_{k=1}^{K_l}\int_{s_{k-1}}^{s_k}
\Big([\nabla_{\theta}\|b_{\theta}(X_{s_{k-1}})\|_2^2]^i - [\nabla_{\theta}\|b_{\theta}(X_{s})\|_2^2]^i\Big)
ds, \\
M_{K_l} & = \sum_{k=1}^{K_l}\sum_{(m,j)\in\{1,\dots,d\}^2}
\int_{s_{k-1}}^{s_k}[\kappa_{\theta,i}(X_{s_{k-1}})^m\sigma(\widetilde{X}_{s_{k-1}})-
\kappa_{\theta,i}(X_{s})^m
\sigma(X_s)]^{m,j}dW_s^j, \\
R_{K_l}^{(2)} & = \sum_{k=1}^{K_l}\sum_{j=1}^d\int_{s_{k-1}}^{s_k}
\Big(\kappa_{\theta,i}(X_{s_{k-1}})^ja_{\theta}(X_{s_{k-1}})^j -
\kappa_{\theta,i}(X_{s})^ja_{\theta}(X_{s})^j\Big)ds.
\end{align*}
Thus, by the $C_r-$inequality, we have 
$$
\mathbb{E}_{\theta}\Big[ \big| [G_{\theta}^l(X_{0:T})]^i - [G_{\theta}(X)]^i \big|^r \Big] \leq
C\Big( \mathbb{E}_{\theta}\big[ |M_{K_l}|^r \big]+\mathbb{E}_{\theta}\big[ \big| R_{K_l}^{(1)} \big|^r \big]+\mathbb{E}_{\theta}\big[ \big| R_{K_l}^{(2)}\big|^r \big]
\Big).
$$
In order to prove that $\mathbb{E}_{\theta}[|M_{K_l}|^r]\leq C\Delta_l^{r/2}$, one can rely on very similar arguments to \eqref{eq:diff_mart_prf_lem10} in the proof of Lemma~\ref{lem:diff5}. Likewise, for $m\in\{1,2\}$ one can prove that $\mathbb{E}_{\theta}[|R_{K_l}^{(m)}|^r]\leq C\Delta_l^{r/2}$ using similar arguments to \eqref{eq:diff_mart_prf_lem11} in the proof of Lemma~\ref{lem:diff5}.
\end{proof}

\begin{lemma}\label{lem:diff2}
Under Assumptions~\ref{ass:D1} and \ref{ass:D2}, for any $(T,r,\theta)\in \mathbb{N}\times[1,\infty)\times\Theta$, there exists a constant $C<\infty$ such that for any $(x,x_{\star})\in\mathbb{R}^{d}\times\mathbb{R}^{d}$
$$
\mathbb{E}_{\theta}\big[\| G_{\theta}(X)-G_{\theta}(X^{\star}) \|_2^r \big]^{1/r} \leq C\|x-x_{\star}\|_2,
$$
with $X_0 = x$ and $X_0^{\star} = x_{\star}$.
\end{lemma}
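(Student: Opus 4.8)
The plan is to mirror the arguments used for Lemmata~\ref{lem:diff4}--\ref{lem:diff6}, but to replace the role of the strong discretization error \eqref{eq:eul_conv_strong} by a stability estimate in the initial condition for the two continuous diffusions. Since $X$ and $X^{\star}$ solve \eqref{eq:intro_diff} with the same coefficients, are driven by the same Brownian motion, and differ only through $X_0=x$ and $X_0^{\star}=x_{\star}$, the globally Lipschitz condition in Assumption~\ref{ass:D1}(iii) together with the Burkholder--Gundy--Davis (BGD) and Gronwall inequalities yields, for any $r\geq 1$,
\begin{equation}\label{eq:flow_stability}
\sup_{t\in[0,T]}\mathbb{E}_{\theta}\big[\|X_t-X_t^{\star}\|_2^r\big]^{1/r}\leq C\|x-x_{\star}\|_2,
\end{equation}
where $C<\infty$ may depend on $(T,r,\theta)$ but not on $(x,x_{\star})$; this is the continuous-time analogue of \eqref{eq:eul_conv_strong} and is the only genuinely new ingredient (see e.g.\ \cite{kloeden2013numerical}). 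All subsequent bounds are routine once it is in place.

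Next I would fix a component $i\in\{1,\dots,d_{\theta}\}$ and decompose $[G_{\theta}(X)]^i-[G_{\theta}(X^{\star})]^i$ according to the three summands in the rewritten functional \eqref{eqn:rewrite_discrete_test_function}: a Lebesgue integral involving $\nabla_{\theta}\|b_{\theta}\|_2^2$, the stochastic integral $\int_0^T\kappa_{\theta,i}(\cdot)^*dX$, and the observation sum $\sum_{t=1}^T\nabla_{\theta}\log g_{\theta}(y_t|\cdot)$. The Lebesgue and observation terms are immediate: using that $[\nabla_{\theta}(b_{\theta}^j)^2]^k$ and $[\nabla_{\theta}\log g_{\theta}(y|\cdot)]^j$ are Lipschitz (Assumptions~\ref{ass:D2}(v) and \ref{ass:D2}(iv)), Minkowski's and Jensen's inequalities reduce each to a time-integral, respectively a finite sum over $t$, of $\mathbb{E}_{\theta}[\|X_t-X_t^{\star}\|_2^r]^{1/r}$, which is bounded by $C\|x-x_{\star}\|_2$ through \eqref{eq:flow_stability}.

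The stochastic integral term is where the main work lies. Substituting $dX_t=a_{\theta}(X_t)dt+\sigma(X_t)dW_t$ and the analogous expression for $X^{\star}$, I would split the difference into a drift part $\int_0^T[\kappa_{\theta,i}(X_t)^*a_{\theta}(X_t)-\kappa_{\theta,i}(X_t^{\star})^*a_{\theta}(X_t^{\star})]dt$ and an It\^{o} martingale part $\int_0^T[\kappa_{\theta,i}(X_t)^*\sigma(X_t)-\kappa_{\theta,i}(X_t^{\star})^*\sigma(X_t^{\star})]dW_t$. As $\kappa_{\theta,i}$ is bounded and Lipschitz (Assumption~\ref{ass:D2}(v)) while $a_{\theta}$ and $\sigma$ are bounded (Assumption~\ref{ass:D2}(i),(ii)) and Lipschitz (Assumption~\ref{ass:D1}(iii)), the products $\kappa_{\theta,i}^*a_{\theta}$ and $\kappa_{\theta,i}^*\sigma$ are Lipschitz in the $\mathbb{L}_2$-norm, so the drift part is controlled exactly as the Lebesgue term above. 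For the martingale part I would apply the BGD inequality followed by Minkowski's and the $C_r$-inequality componentwise, precisely as in the bound for $\mathbb{E}_{\theta}[|M_{K_l}|^r]$ within Lemma~\ref{lem:diff5}; the Lipschitz bound on $\kappa_{\theta,i}^*\sigma$ and \eqref{eq:flow_stability} then again yield a $C\|x-x_{\star}\|_2$ bound. Collecting the three componentwise estimates and combining over $i$ via Minkowski's inequality, as in Corollary~\ref{cor:diff1}, completes the argument. The principal obstacle is bookkeeping the martingale term's product-Lipschitz estimates; the stability bound \eqref{eq:flow_stability} itself is standard.
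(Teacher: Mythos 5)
Your proposal is correct and takes essentially the same route as the paper: the paper's own proof simply notes that one repeats the martingale--remainder arguments of Lemmata~\ref{lem:diff5}--\ref{lem:diff6}, replacing the Euler convergence bound \eqref{eq:eul_conv_strong} by the initial-condition stability estimate $\sup_{t\in[0,T]}\mathbb{E}_{\theta}\big[\|X_t-X_t^{\star}\|_2^{2r}\big]^{\frac{1}{2r}} \leq C\|x-x_{\star}\|_2$ (deduced from \cite[Corollary v.11.7]{rogers2000diffusions} and Gr\"onwall's inequality), which is precisely your bound \eqref{eq:flow_stability}. The drift/martingale splitting and product-Lipschitz bookkeeping you spell out for the stochastic-integral term is exactly the repetitive detail the paper chooses to omit.
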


\begin{proof}
This proof follows a similar type of arguments to those considered in the proofs of Lemmata \ref{lem:diff5}-\ref{lem:diff6}. The main difference is that one must use the following result (which can be deduced from \cite[Corollary v.11.7]{rogers2000diffusions} and Gr\"onwall's inequality)
\begin{equation*}
\sup_{t\in[0,T]}\mathbb{E}_{\theta}[\|X_t-X_t^{\star}\|_2^{2r}]^{\frac{1}{2r}} \leq C\|x-x_{\star}\|_2
\end{equation*}
instead of using Euler convergence. Given that the proofs of Lemmata~\ref{lem:diff5}-\ref{lem:diff6} are repetitive, these arguments are omitted.
\end{proof}

\begin{lemma}\label{lem:diff3}
Under Assumptions~\ref{ass:D1} and \ref{ass:D2}, for any $(T,r,\theta)\in \mathbb{N}\times[1,\infty)\times\Theta$, there exists a constant $C<\infty$ such that for any $(l,x,x_{\star})\in\mathbb{N}\times\mathbb{R}^{d}\times\mathbb{R}^{d}$
$$
\mathbb{E}_{\theta}\big[ \|G_{\theta}^l(\widetilde{X}_{0:T})- G_{\theta}^{l-1}(\widehat{X}^{\star}_{0:T}) \|_2^r \big]^{1/r}
\leq C\big(\Delta_l^{1/2} + \|x - x_{\star}\|_2\big),
$$
with $\widetilde{X}_0 = x$ and $\widetilde{X}_0^{\star} = x_{\star}$, 
where $\widehat{X}^{\star}_{0:T} = (\widetilde{X}^{\star}_{s_{2k}})_{k=0}^{K_{l-1}}$ are 
the sequence of states in $\widetilde{X}^{\star}_{0:T}=(\widetilde{X}^{\star}_{s_k})_{k=0}^{K_l}$ at the time steps corresponding to level $l-1$.
\end{lemma}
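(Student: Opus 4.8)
The plan is to bound the quantity by a triangle inequality that separates the two error sources appearing on the right-hand side. Writing $X^{\star}$ for the true diffusion started at $x_{\star}$ and driven by the same Brownian motion as $X$, I would decompose
\begin{align*}
G_{\theta}^l(\widetilde{X}_{0:T}) - G_{\theta}^{l-1}(\widehat{X}_{0:T}^{\star}) = \big[G_{\theta}^l(\widetilde{X}_{0:T}) - G_{\theta}(X)\big] + \big[G_{\theta}(X) - G_{\theta}(X^{\star})\big] + \big[G_{\theta}(X^{\star}) - G_{\theta}^{l-1}(\widehat{X}_{0:T}^{\star})\big],
\end{align*}
and, by Minkowski's inequality, bound the $\mathbb{L}_r$-norm of each bracket separately. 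The first bracket is controlled directly by Corollary~\ref{cor:diff1}, giving $C\Delta_l^{1/2}$, and the second by Lemma~\ref{lem:diff2}, giving $C\|x-x_{\star}\|_2$. All the remaining difficulty is therefore concentrated in the third bracket, for which I must show $\mathbb{E}_{\theta}[\|G_{\theta}(X^{\star}) - G_{\theta}^{l-1}(\widehat{X}_{0:T}^{\star})\|_2^r]^{1/r} \leq C\Delta_l^{1/2}$.

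For this third term I would introduce the genuine level-$(l-1)$ Euler discretization $\widetilde{X}_{0:T}^{\star,l-1}$ of $X^{\star}$ (driven by the same Brownian motion, with each coarse increment equal to the sum of the two corresponding fine increments, and $\widetilde{X}_0^{\star,l-1}=x_{\star}$) and split further as
\begin{align*}
G_{\theta}(X^{\star}) - G_{\theta}^{l-1}(\widehat{X}_{0:T}^{\star}) = \big[G_{\theta}(X^{\star}) - G_{\theta}^{l-1}(\widetilde{X}_{0:T}^{\star,l-1})\big] + \big[G_{\theta}^{l-1}(\widetilde{X}_{0:T}^{\star,l-1}) - G_{\theta}^{l-1}(\widehat{X}_{0:T}^{\star})\big].
\end{align*}
The first bracket is a level-$(l-1)$ instance of Corollary~\ref{cor:diff1}, hence bounded by $C\Delta_{l-1}^{1/2}=C\Delta_l^{1/2}$ since $\Delta_{l-1}=2\Delta_l$. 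The second bracket compares the \emph{same} coarse functional $G_{\theta}^{l-1}$ evaluated on two trajectories sharing the initial condition $x_{\star}$: the true coarse Euler path and the coarse-time subsample $\widehat{X}_{0:T}^{\star}$ of the fine path. This is structurally identical to the term $T_1$ in the proof of Lemma~\ref{lem:diff1}, so I would reuse the martingale/BGD estimates of Lemmata~\ref{lem:diff4}--\ref{lem:diff6}, the only change being that the governing strong error is now $\mathbb{E}_{\theta}[\|\widetilde{X}_{s_{2k}}^{\star,l-1} - \widetilde{X}_{s_{2k}}^{\star}\|_2^r]^{1/r}$, which is $\leq C\Delta_l^{1/2}$ by the triangle inequality through $X^{\star}$ together with the strong Euler rate \eqref{eq:eul_conv_strong} applied at both levels. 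Combining the three pieces then yields the claim.

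The main obstacle I anticipate is the stochastic-integral contribution to this last bracket, namely the difference of sums $\sum_{k}\nabla_{\theta}\rho_{\theta}(\cdot)(X_{s_{2k}}-X_{s_{2(k-1)}})$ along the coarse grid, evaluated at the two trajectories. Each coarse increment spans two fine Euler steps, and the two integrands are Lipschitz functions of Euler paths at different resolutions, so controlling it requires splitting each increment into its diffusion (martingale) and drift parts and invoking the BGD inequality exactly as in Lemma~\ref{lem:diff5}. The bookkeeping is somewhat heavier than in the single-level case because one simultaneously tracks the resolution gap responsible for the $\Delta_l^{1/2}$ rate and the two-step structure of the coarse increments, but no estimate beyond those already established in Lemmata~\ref{lem:diff4}--\ref{lem:diff6} is required, which is why I expect the argument to go through by adaptation rather than by introducing new machinery.
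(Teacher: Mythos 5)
Your proposal is correct and follows essentially the same route as the paper: the paper's proof uses exactly your three-term decomposition, bounding $G_{\theta}^l(\widetilde{X}_{0:T})-G_{\theta}(X)$ and $G_{\theta}(X^{\star})-G_{\theta}^{l-1}(\widehat{X}^{\star}_{0:T})$ via Corollary~\ref{cor:diff1} and $G_{\theta}(X)-G_{\theta}(X^{\star})$ via Lemma~\ref{lem:diff2}. The only difference is that the paper applies Corollary~\ref{cor:diff1} to the third term directly (implicitly accepting that its proof goes through when the coarse functional is evaluated on the subsampled fine Euler path rather than the genuine level-$(l-1)$ Euler path), whereas you make this point explicit with an additional splitting through $\widetilde{X}^{\star,l-1}_{0:T}$ — a slightly more careful rendering of the same argument.
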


\begin{proof}
The expectation in the statement of the lemma is upper-bounded by $\sum_{j=1}^3 T_j$, where
\begin{align*}
T_1 & = \mathbb{E}_{\theta}\big[\|G_{\theta}^l(\widetilde{X}_{0:T})-G_{\theta}(X)\|_2^r\big]^{1/r}, \\
T_2 & = \mathbb{E}_{\theta}\big[\|G_{\theta}(X) - G_{\theta}(X^{\star})\|_2^r\big]^{1/r}, \\
T_3 & = \mathbb{E}_{\theta}\big[\|G_{\theta}^{l-1}
(\widehat{X}^{\star}_{0:T})-G_{\theta}(X^{\star})\|_2^r\big]^{1/r}.
\end{align*}
For $T_1$ and $T_3$, one can apply Corollary \ref{cor:diff1}; for $T_2$, we use Lemma \ref{lem:diff2}. 
This allows us to conclude the result.
\end{proof}

We now introduce two additional functions which will be useful in the following section. 
For $(t,l)\in\{1,\dots,T\}\times\mathbb{N}_0$, we define 
$G_{\cdot,t}^l:\Theta\times(\mathbb{R}^d)^{2^l t+1}\rightarrow\mathbb{R}^{d_{\theta}}$ as 
$$
G_{\theta,t}^l(X_{0:t}) =  -\frac{1}{2}
\sum_{k=1}^{2^lt} \nabla_{\theta}\|b_{\theta}(X_{s_{k-1}})\|_2^2\Delta_l
+ \sum_{k=1}^{2^lt} \nabla_{\theta}\rho_{\theta}(X_{s_{k-1}})
(X_{s_k}-X_{s_{k-1}})+\sum_{p=1}^t \nabla_{\theta} \log g_{\theta}(y_p|X_p),
$$
and $G_{\cdot,t-1:t}^l:\Theta \times \mathbb{R}^d \times (\mathbb{R}^d)^{2^l} \rightarrow \mathbb{R}^{d_{\theta}}$ as 
\begin{align*}
&G_{\theta,t-1:t}^l(X_{t-1},X_{t-1+\Delta_l:t}) \\
&=  -\frac{1}{2}
\sum_{k=0}^{2^l-1} \nabla_{\theta}\|b_{\theta}(X_{t-1+k\Delta_l})\|_2^2\Delta_l
+ \sum_{k=0}^{2^l-1} \nabla_{\theta}\rho_{\theta}(X_{t-1+k\Delta_l})
(X_{t-1+(k+1)\Delta_l}-X_{t-1+k\Delta_l}) + \nabla_{\theta} \log g_{\theta}(y_t|X_t).
\end{align*}
From \eqref{eqn:rewrite_discrete_test_function}, we note that $G_{\theta,T}^l(X_{0:T})=G_{\theta}^l(X_{0:T})$. 
The following remarks will facilitate our proofs in the next section. 

\begin{remark}\label{rem:euler_grad}
One can easily extend Lemma \ref{lem:diff3} as follows. Under Assumptions~\ref{ass:D1} and \ref{ass:D2}, 
for any $(t,r,\theta)\in \{1,\dots,T\}\times[1,\infty)\times\Theta$, there exists a constant $C<\infty$ such that for any 
$(l,x,x_{\star})\in\mathbb{N}\times\mathbb{R}^{d}\times\mathbb{R}^{d}$
$$
\mathbb{E}_{\theta}\Big[ \big\|G_{\theta,t}^l(\widetilde{X}_{0:t})-
G_{\theta,t}^{l-1}(\widehat{X}_{0:t}^{\star})
\big\|_2^r \Big]^{1/r} \leq C\big(\Delta_l^{1/2} + \|x-x_{\star}\|_2\big),
$$
with $\widetilde{X}_0 = x$ and $\widetilde{X}_0^{\star} = x_{\star}$, 
where $\widehat{X}^{\star}_{0:t} = (\widetilde{X}^{\star}_{s_{2k}})_{k=0}^{2^{l-1}t}$.
\end{remark}

\begin{remark}\label{rem:euler_grad1}
One can also deduce the following result. Under Assumptions~\ref{ass:D1} and \ref{ass:D2}, 
for any $(t,r,\theta)\in \{1,\dots,T\}\times[1,\infty)\times\Theta$, there exists a constant $C<\infty$ 
such that for any $(l,x,x_{\star})\in\mathbb{N}\times\mathbb{R}^{d}\times\mathbb{R}^{d}$
$$
\mathbb{E}_{\theta}\Big[ \big\|G_{\theta,t-1:t}^l(x,\widetilde{X}_{t-1+\Delta_l:t})-
G_{\theta,t-1:t}^{l-1}(x_{\star},\widehat{X}_{t-1+\Delta_{l-1}:t}^{\star})
\big\|_2^r \Big| \widetilde{X}_{t-1}=x,\widetilde{X}_{t-1}^{\star}=x_{\star} \Big]^{1/r}
 \leq  C\big(\Delta_l^{1/2} + \|x-x_{\star}\|_2 \big),
$$
where $\widehat{X}_{t-1+\Delta_{l-1}:t}^{\star} = (\widetilde{X}^{\star}_{t-1+\Delta_{l-1}})_{k=1}^{2^{l-1}}$.
\end{remark}


\subsection{Results on coupled conditional particle filters\label{app:CCPF}}

We begin with some definitions associated to Algorithm~\ref{alg:4-CCPF}. 
For any $(i,t,s)\in\{1,\dots,N\}\times\mathbb{N}_0\times\{l-1,l\}$, 
we will write $A_t^{s}(i) = A_t^{s,i}$ and $\bar{A}_t^{s}(i) = \bar{A}_t^{s,i}$. 
Using this notation, for any $(t,l)\in\{0,\dots,T-1\}\times\mathbb{N}$, we define
\begin{align*}
& \mathsf{S}_t^l=\{i\in\{1,\ldots, N\}: A_{t}^l(i)=A_{t}^{l-1}(i),A_{t-1}^l\circ A_{t}^l(i)=A_{t-1}^{l-1}\circ A_{t}^{l-1}(i),\dots,
	A_{0}^l\circ\cdots\circ A_{t}^l(i)=A_{0}^{l-1}\circ\cdots\circ A_{t}^{l-1}(i)\},\\
& \bar{\mathsf{S}}_t^l=\{i\in\{1,\ldots, N\}: \bar{A}_{t}^l(i)=\bar{A}_{t}^{l-1}(i),\bar{A}_{t-1}^l\circ \bar{A}_{t}^l(i)=\bar{A}_{t-1}^{l-1}\circ \bar{A}_{t}^{l-1}(i),\dots,
	\bar{A}_{0}^l\circ\cdots\circ \bar{A}_{t}^l(i)=\bar{A}_{0}^{l-1}\circ\cdots\circ \bar{A}_{t}^{l-1}(i)\}.	
\end{align*}
For $(l,\beta,C)\in\mathbb{N}\times\mathbb{R}^+\times\mathbb{R}^+$, we also introduce the following sets 
\begin{align*}
\mathsf{B}^l_{\beta,C} & = \{(x_{0:T},\bar{x}_{0:T})\in\mathsf{X}^l\times\mathsf{X}^{l-1}:
\|x_t-\bar{x}_t\|_2 \leq C\Delta_l^{\beta}, t\in\{1,\dots,T\}\},\\
\mathsf{G}^l_{\beta,C} & = \{(x_{0:T},\bar{x}_{0:T})\in\mathsf{X}^l\times\mathsf{X}^{l-1}:
\|G_{\theta,t}^l(x_{0:t}) - G_{\theta,t}^{l-1}(\bar{x}_{0:t})\|_2 \leq C\Delta_l^{\beta},t\in\{1,\dots,T\}\}.
\end{align*}

\subsubsection{Results associated to Steps (1) and (2) of Algorithm~\ref{alg:4-CCPF}}
\label{sec:analysis_first}

We consider Steps (1) and (2) of Algorithm~\ref{alg:4-CCPF} where the input pairs of trajectories 
are taken as $(X_{0:T}^{l-1,\star},\bar{X}_{0:T}^{l-1,\star})= (x_{0:T}^{l-1},\bar{x}_{0:T}^{l-1}) \in \mathsf{Z}^{l-1}$ and $(X_{0:T}^{l,\star},\bar{X}_{0:T}^{l,\star}) = (x_{0:T}^{l},\bar{x}_{0:T}^{l}) \in \mathsf{Z}^{l}$. In order to analyze the algorithm, it is useful to define the simulated trajectories recursively at time steps $t \in \{1,\dots,T\}$, for any $i \in \{1,\dots,N\}$, as 
\begin{align*}
(X_{0:t}^{l-1,i},\bar{X}_{0:t}^{l-1,i}) & = \Big((X_{0:t-1}^{l-1,A_{t-1}^{l-1,i}},X_{t-1+\Delta_{l-1}:t}^{l-1,i}),(\bar{X}_{0:t-1}^{l-1,\bar{A}_{t-1}^{l-1,i}},\bar{X}_{t-1+\Delta_{l-1}:t}^{l-1,i}) \Big),\\
(X_{0:t}^{l,i},\bar{X}_{0:t}^{l,i}) & = \Big((X_{0:t-1}^{l,A_{t-1}^{l,i}},X_{t-1+\Delta_l:t}^{l,i}),(\bar{X}_{0:t-1}^{l,\bar{A}_{t-1}^{l,i}},\bar{X}_{t-1+\Delta_l:t}^{l,i})\Big).
\end{align*}
After the completion of Step (2), we consider the output given by the two collections of pairs of trajectories 
$(X_{0:T}^{l-1,i},\bar{X}_{0:T}^{l-1,i})_{i=1}^N$ and $(X_{0:T}^{l,i},\bar{X}_{0:T}^{l,i})_{i=1}^N$. 
Under these conditions, the expectation associated to the law of Steps (1) and (2) of Algorithm~\ref{alg:4-CCPF}
is denoted as $\check{\mathbb{E}}_{\theta}^{l-1,l}$.

\begin{lemma}\label{lem:cccpf1}
Under Assumptions~\ref{ass:D1} and \ref{ass:D2}, for any $(t,r,\theta,C')\in \{1,\dots,T\}\times[1,\infty)\times\Theta\times\mathbb{R}^+$, there exists a constant $C<\infty$ such that for any $(l,\beta,N)\in\mathbb{N}\times\mathbb{R}^+\times\{2,3,\dots\}$ and any $(x_{0:T}^l,x_{0:T}^{l-1}) \in \mathsf{B}^l_{\beta,C'}$, it holds that
$$
\check{\mathbb{E}}_{\theta}^{l-1,l}\bigg[\frac{1}{N}\sum_{i\in \mathsf{S}_{t-1}^l}\|X_{t}^{l,i}-X_{t}^{l-1,i}\|_2^r\bigg]^{1/r}\leq C\Delta_l^{\frac{1}{2}\wedge\beta}.
$$
If $(\bar{x}_{0:T}^l,\bar{x}_{0:T}^{l-1})\in \mathsf{B}^l_{\beta,C'}$ also holds, then 
$$
\check{\mathbb{E}}_{\theta}^{l-1,l}\bigg[\frac{1}{N}\sum_{i\in \bar{\mathsf{S}}_{t-1}^l}\|\bar{X}_{t}^{l,i}-\bar{X}_{t}^{l-1,i}\|_2^r\bigg]^{1/r}\leq C\Delta_l^{\frac{1}{2}\wedge\beta}.
$$
\end{lemma}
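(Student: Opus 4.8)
The plan is to prove the first inequality by induction on $t\in\{1,\dots,T\}$; the second then follows by the identical argument with the conditioning trajectories $(x_{0:T}^{\bullet})$ replaced by $(\bar x_{0:T}^{\bullet})$ and $\mathsf{S}$ by $\bar{\mathsf{S}}$. The structural fact I would exploit is that, for an index $i\in\mathsf{S}_{t-1}^l$, the two particle paths $X_{0:t}^{l,i}$ and $X_{0:t}^{l-1,i}$ are Euler discretizations at levels $l$ and $l-1$ of the \emph{same} diffusion, driven by common Brownian increments (Steps 2a, 2b, 2d, 2e of Algorithm~\ref{alg:4-CCPF}) along a \emph{common} ancestral lineage. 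Indeed, the defining conditions of $\mathsf{S}_{t-1}^l$ force $a_i:=A_{t-1}^l(i)=A_{t-1}^{l-1}(i)$ and, by composing the lineage identities, $a_i\in\mathsf{S}_{t-2}^l$ (note $N\in\mathsf{S}_{t}^l$ for every $t$, since the conditioning trajectory always retains ancestor $N$). For the base case $t=1$ there is no resampling on $[0,1]$, so every non-conditioning particle is propagated freely from the common point $x_{\star}$ and the bound reduces to the Euler strong-error estimate \eqref{eq:eul_conv_strong}, while the single index $i=N$ contributes $\|x_1^l-x_1^{l-1}\|_2^r\le (C'\Delta_l^{\beta})^r$ via the hypothesis $(x_{0:T}^l,x_{0:T}^{l-1})\in\mathsf{B}^l_{\beta,C'}$; both are $O(\Delta_l^{(\frac12\wedge\beta)r})$.

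The analytic ingredient I would isolate first is a one--unit--interval cross--level propagation estimate: if $\widetilde X$ (level $l$) and $\widehat X$ (level $l-1$) are Euler discretizations on $[t-1,t]$ driven by common Brownian increments and started from $x$ and $x'$, then $\mathbb{E}_{\theta}[\|\widetilde X_t-\widehat X_t\|_2^r\mid \widetilde X_{t-1}=x,\widehat X_{t-1}=x']^{1/r}\le C(\Delta_l^{1/2}+\|x-x'\|_2)$. This is obtained exactly as in the decomposition behind Lemma~\ref{lem:diff3}: insert the continuous diffusions issued from $x$ and from $x'$ and driven by the same Brownian motion, bound the two Euler errors by \eqref{eq:eul_conv_strong} (using $\Delta_{l-1}^{1/2}\le\sqrt2\,\Delta_l^{1/2}$), and control the dependence on the initial condition by the Gr\"onwall/stability bound quoted in the proof of Lemma~\ref{lem:diff2}.

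For the inductive step I would condition on the $\sigma$-algebra $\mathscr{F}_{t-1}$ generated by the system up to and including the resampling at time $t-1$, so that the ancestors $a_i$ and the positions at time $t-1$ are fixed and the propagations over $[t-1,t]$ for distinct $i$ use independent, cross-level-common increments. Applying the one-interval estimate gives
\begin{align*}
\check{\mathbb{E}}_{\theta}^{l-1,l}\Big[\tfrac1N\!\!\sum_{i\in\mathsf{S}_{t-1}^l}\!\!\|X_t^{l,i}-X_t^{l-1,i}\|_2^r\,\Big|\,\mathscr{F}_{t-1}\Big]
\le \frac{C}{N}\!\!\sum_{i\in\mathsf{S}_{t-1}^l}\!\!\Big(\Delta_l^{r/2}+\|X_{t-1}^{l,a_i}-X_{t-1}^{l-1,a_i}\|_2^r\Big),
\end{align*}
where the conditioning index $i=N$ is handled directly through $\|x_t^l-x_t^{l-1}\|_2^r\le(C'\Delta_l^{\beta})^r$. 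The first summand yields $C\Delta_l^{r/2}$ since $|\mathsf{S}_{t-1}^l|\le N$. For the second, I would take a further expectation over the resampling given the pre-selection information $\mathscr{H}_{t-1}$ (positions and weights at time $t-1$): under the maximal coupling of Algorithm~\ref{alg:maximal-maximal}, for $i\ne N$ one has $\mathbb{P}(a_i=a\mid\mathscr{H}_{t-1})=\min\{w_{t-1}^{l,a},w_{t-1}^{l-1,a}\}$ on $\{a\in\mathsf{S}_{t-2}^l\}$, and Assumption~\ref{ass:D2}(iii) bounds every normalized weight by $(\overline C/\underline C)/N$. Hence $\sum_a \min\{\cdots\}\,\|X_{t-1}^{l,a}-X_{t-1}^{l-1,a}\|_2^r\le (\overline C/\underline C)\,\tfrac1N\sum_{a\in\mathsf{S}_{t-2}^l}\|X_{t-1}^{l,a}-X_{t-1}^{l-1,a}\|_2^r$, so the inductive hypothesis at time $t-1$ closes the bound at $C\Delta_l^{(\frac12\wedge\beta)r}$; iterating over the fixed horizon $T$ absorbs the accumulated factor $(\overline C/\underline C)^T$ into the constant.

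The hardest part will be precisely this resampling step: because several offspring $i$ may share a single ancestor $a$, a naive count of offspring would introduce an unwanted factor of $N$ and destroy the $N$-uniformity asserted in the lemma. The resolution is the uniform weight bound furnished by the boundedness of $g_{\theta}$ in Assumption~\ref{ass:D2}(iii), which makes each per-ancestor selection probability $O(1/N)$ and thereby converts $\sum_a o^a\|\cdots\|_2^r$ back into the normalized average $\tfrac1N\sum_a\|\cdots\|_2^r$ on which the induction runs, with an $N$-independent constant. Secondary care is needed to check the measurability bookkeeping --- that $\mathsf{S}_{t-2}^l$ and the weights are $\mathscr{H}_{t-1}$-measurable while the selection at time $t-1$ supplies the fresh randomness --- and to confirm that the common-increment construction is genuinely inherited along matched lineages, so that the one-interval estimate of the second paragraph applies verbatim along each $i\in\mathsf{S}_{t-1}^l$.
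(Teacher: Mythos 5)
Your proposal is correct and takes essentially the same approach as the paper: the paper's proof of this lemma simply invokes the inductive argument of \citet[Lemma D.3]{jasra2017multilevel} --- matched lineages propagated with common Brownian increments, the one-interval Euler strong-error/stability estimate, and the maximal-coupling selection probability $o^a=\min\{w^{l-1,a},w^{l,a}\}\leq C/N$ (via Assumption~\ref{ass:D2}(iii)) to close the induction through the resampling step with an $N$-uniform constant --- and notes that the only new case is an ancestor index equal to $N$, which it handles exactly as you do, via the hypothesis $(x_{0:T}^l,x_{0:T}^{l-1})\in\mathsf{B}^l_{\beta,C'}$. Your write-up just makes that cited induction explicit.
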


\begin{proof}
We only consider the first inequality as it is the same proof for the second inequality.
The proof is almost identical to \citet[Lemma D.3.]{jasra2017multilevel}. The only difference is if $A_p^{l,i}=N$ for some $(p,i)\in\{1,\dots,t\}
\times\{1,\dots,N\}$ with $i\in\mathsf{S}_p^l$, but in such a case, one can use that $(x_{0:T}^l,x_{0:T}^{l-1})\in\mathsf{B}^l_{\beta,C'}$.
\end{proof}

\begin{remark}\label{rem:res_cccpf}
Implicit in the proof of Lemma \ref{lem:cccpf1} is the following result. Under Assumptions~\ref{ass:D1} and \ref{ass:D2}, for any $(t,r,\theta,C')\in \{1,\dots,T\}\times[1,\infty)\times\Theta\times\mathbb{R}^+$, there exists a constant $C<\infty$ such that for any $(l,\beta,N)\in\mathbb{N}\times\mathbb{R}^+\times\{2,3,\dots\}$ and any $(x_{0:T}^l,x_{0:T}^{l-1}) \in \mathsf{B}^l_{\beta,C'}$, it holds that
$$
\check{\mathbb{E}}_{\theta}^{l-1,l}\bigg[\frac{1}{N}\sum_{i\in \mathsf{S}_{t-1}^l} \Big\|X_{t-1}^{l,A_{t-1}^{l,i}}-X_{t-1}^{l-1, A_{t-1}^{l-1,i}}\Big\|_2^r\bigg]^{1/r}\leq C\Delta_l^{\frac{1}{2}\wedge\beta}.
$$
If $(\bar{x}_{0:T}^l,\bar{x}_{0:T}^{l-1})\in \mathsf{B}^l_{\beta,C'}$ also holds, then 
$$
\check{\mathbb{E}}_{\theta}^{l-1,l}\bigg[\frac{1}{N}\sum_{i\in \bar{\mathsf{S}}_{t-1}^l} \Big\|\bar{X}_{t-1}^{l,\bar{A}_{t-1}^{l,i}}-\bar{X}_{t-1}^{l-1,\bar{A}_{t-1}^{l-1,i}} \Big\|_2^r\bigg]^{1/r}\leq C\Delta_l^{\frac{1}{2}\wedge\beta}.
$$
\end{remark}

\begin{lemma}\label{lem:cccpf3}
Under Assumptions~\ref{ass:D1} and \ref{ass:D2}, for any $(t,\theta,C')\in \{1,\dots,T\}\times\Theta\times\mathbb{R}^+$, there exists a constant $C<\infty$ such that for any $(l,\beta,N)\in\mathbb{N}\times \mathbb{R}^+\times\{2,3,\dots\}$ and any $(x_{0:T}^l,x_{0:T}^{l-1}) \in \mathsf{B}^l_{\beta,C'}$, it holds that
$$
1-\check{\mathbb{E}}_{\theta}^{l-1,l}\Bigg[\frac{\textrm{\emph{Card}}(\mathsf{S}_{t-1}^l)}{N}\Bigg]\leq C\Delta_l^{\frac{1}{2}\wedge\beta},
$$
where $\emph{Card}(\cdot)$ denotes the cardinality of a set. 
If $(\bar{x}_{0:T}^l,\bar{x}_{0:T}^{l-1})\in \mathsf{B}^l_{\beta,C'}$ also holds, then 
$$
1-\check{\mathbb{E}}_{\theta}^{l-1,l}\Bigg[\frac{\textrm{\emph{Card}}(\bar{\mathsf{S}}_{t-1}^l)}{N}\Bigg]\leq C\Delta_l^{\frac{1}{2}\wedge\beta}.
$$
\end{lemma}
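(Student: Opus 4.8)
The plan is to prove the first inequality by induction over $t\in\{1,\dots,T\}$, the second following identically by symmetry. Throughout I would work conditionally on the information generated by Steps~(1)--(2) of Algorithm~\ref{alg:4-CCPF} up to the resampling at observation time $t-2$, under which the surviving set $\mathsf{S}_{t-2}^l$, the positions $(X_{t-1}^{l,i},X_{t-1}^{l-1,i})_{i=1}^N$ and hence the normalized weights $(w_{t-1}^{l,i},w_{t-1}^{l-1,i})_{i=1}^N$ are all determined. The starting point is the recursion
\[
i\in\mathsf{S}_{t-1}^l \iff A_{t-1}^l(i)=A_{t-1}^{l-1}(i)=:j \ \text{with}\ j\in\mathsf{S}_{t-2}^l,
\]
so that an index survives in $\mathsf{S}_{t-1}^l$ exactly when its ancestor indices agree across levels and that common ancestor already belonged to $\mathsf{S}_{t-2}^l$. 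Since the across-level marginal of the coupled resampling in Algorithm~\ref{alg:maximal-maximal} is the maximal coupling $\bar{\mathcal{R}}(w_{t-1}^{l-1,1:N},w_{t-1}^{l,1:N})$ of Algorithm~\ref{alg:2maximal}, each of the $N-1$ resampled indices draws a common ancestor $j$ with probability at least the overlap $o_{t-1}^j=\min\{w_{t-1}^{l-1,j},w_{t-1}^{l,j}\}$, while the conditioned $N$-th particle always satisfies $A_{t-1}^{l,N}=A_{t-1}^{l-1,N}=N$, so $N\in\mathsf{S}_{t-1}^l$ whenever $N\in\mathsf{S}_{t-2}^l$.

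These observations give, conditionally, that $1-\mathrm{Card}(\mathsf{S}_{t-1}^l)/N$ has conditional expectation at most $1-\sum_{j\in\mathsf{S}_{t-2}^l}o_{t-1}^j$. I would then split this via the total-variation identity $1-\sum_j o_{t-1}^j=\tfrac12\sum_j|w_{t-1}^{l,j}-w_{t-1}^{l-1,j}|$ into a contribution from coupled indices $j\in\mathsf{S}_{t-2}^l$ and one from $j\notin\mathsf{S}_{t-2}^l$. The latter I would control crudely: by Assumption~\ref{ass:D2}(iii) the weights obey $w_{t-1}^{l,j}\le\overline{C}/(\underline{C}N)$, so the mass carried by decoupled indices is at most $C\,\mathrm{Card}((\mathsf{S}_{t-2}^l)^c)/N$. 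For the coupled indices I would convert weight differences into position differences: writing $w_{t-1}^{l,j}=g_{\theta}(y_{t-1}|X_{t-1}^{l,j})/\sum_m g_{\theta}(y_{t-1}|X_{t-1}^{l,m})$ and using the lower bound, upper bound and Lipschitz property of $g_\theta$ from Assumption~\ref{ass:D2}(iii), together with the standard ratio decomposition for the normalizing constants, yields
\[
\sum_{j\in\mathsf{S}_{t-2}^l}|w_{t-1}^{l,j}-w_{t-1}^{l-1,j}|
\le C N^{-1}\sum_{j\in\mathsf{S}_{t-2}^l}\|X_{t-1}^{l,j}-X_{t-1}^{l-1,j}\|_2 + C\,\mathrm{Card}((\mathsf{S}_{t-2}^l)^c)/N.
\]

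Taking full expectations and applying Lemma~\ref{lem:cccpf1} (with $r=1$, its time index set to $t-1$) to the position term and the definition of the surviving set to the residual term, I would arrive at
\[
1-\check{\mathbb{E}}_{\theta}^{l-1,l}\!\left[\frac{\mathrm{Card}(\mathsf{S}_{t-1}^l)}{N}\right]
\leq C\Delta_l^{\frac12\wedge\beta}
+ C\left(1-\check{\mathbb{E}}_{\theta}^{l-1,l}\!\left[\frac{\mathrm{Card}(\mathsf{S}_{t-2}^l)}{N}\right]\right).
\]
The induction then closes: the base case at the first resampling is immediate since all lineages start at $x_{\star}$ and common-Brownian-increment propagation keeps the two systems in $\mathsf{B}^l_{\beta,C'}$, and as $t\le T$ is finite the accumulated constant stays bounded, giving the claimed rate $C\Delta_l^{\frac12\wedge\beta}$. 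The second inequality follows verbatim with $\bar{A},\bar{w},\bar{X},\bar{\mathsf{S}}$ in place of $A,w,X,\mathsf{S}$, now invoking the second halves of Lemma~\ref{lem:cccpf1} under the hypothesis $(\bar{x}_{0:T}^l,\bar{x}_{0:T}^{l-1})\in\mathsf{B}^l_{\beta,C'}$. As in Lemma~\ref{lem:cccpf1}, this parallels \citet[Lemma D.3]{jasra2017multilevel}; I expect the main obstacle to be the bookkeeping specific to the conditional particle filter, namely that the conditioned $N$-th coordinate is not reduced by resampling and must be handled directly through $(x_{0:T}^l,x_{0:T}^{l-1})\in\mathsf{B}^l_{\beta,C'}$, and ensuring that the accumulation of the $\mathrm{Card}((\mathsf{S}^c))/N$ terms across the $t$ resampling steps does not degrade the exponent $\tfrac12\wedge\beta$.
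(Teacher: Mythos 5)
Your proposal is correct and is essentially the paper's own argument: the paper proves this lemma by pointing to \citet[Lemma D.4]{jasra2017multilevel} (maximal-coupling overlap, total-variation/Lipschitz control of the normalized weights, the position bound of Lemma~\ref{lem:cccpf1}, and a recursion over observation times), noting only that ancestors equal to the conditioned index $N$ must be handled via the hypothesis $(x_{0:T}^l,x_{0:T}^{l-1})\in\mathsf{B}^l_{\beta,C'}$ together with Lemma~\ref{lem:cccpf1}. Your reconstruction spells out exactly that argument, including the correct treatment of the frozen $N$-th coordinate, so it matches the paper's proof in substance.
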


\begin{proof}
We only consider the first inequality as it is the same proof for the second inequality.
The proof is almost identical to \citet[Lemma D.4.]{jasra2017multilevel}. The only difference is if $A_p^{l,i}=N$ for some $(p,i)\in\{1,\dots,t\}
\times\{1,\dots,N\}$ with $i\in\mathsf{S}_p^l$, but in such a case, one can use that $(x_{0:T}^l,x_{0:T}^{l-1})\in\mathsf{B}^l_{\beta,C'}$ 
and Lemma \ref{lem:cccpf1}.
\end{proof}

\begin{lemma}\label{lem:cccpf2}
Under Assumptions~\ref{ass:D1} and \ref{ass:D2}, for any $(t,r,\theta,C')\in \{1,\dots,T\}\times[1,\infty)\times\Theta\times\mathbb{R}^+$, there exists a constant $C<\infty$ such that for any $(l,\beta,N)\in\mathbb{N}\times\mathbb{R}^+\times\{2,3,\dots\}$ and any $(x_{0:T}^l,x_{0:T}^{l-1}) \in \mathsf{B}^l_{\beta,C'}\cap \mathsf{G}^l_{\beta,C'}$, it holds that
$$
\check{\mathbb{E}}_{\theta}^{l-1,l}\Big[\frac{1}{N}\sum_{i\in\mathsf{S}_{t-1}^l} \big\|
G_{\theta,t}^l(X_{0:t}^{l,i})
-G_{\theta,t}^{l-1}(X_{0:t}^{l-1,i})\big\|_2^r\Big]^{1/r}\leq C\Delta_l^{\frac{1}{2}\wedge\beta}.
$$
If $(\bar{x}_{0:T}^l,\bar{x}_{0:T}^{l-1})\in \mathsf{B}^l_{\beta,C'}\cap \mathsf{G}^l_{\beta,C'}$ also holds, then 
$$
\check{\mathbb{E}}_{\theta}^{l-1,l}\Big[\frac{1}{N}\sum_{i\in\bar{\mathsf{S}}_{t-1}^l}\big\|
G_{\theta,t}^l(\bar{X}_{0:t}^{l,i})
-G_{\theta,t}^{l-1}(\bar{X}_{0:t}^{l-1,i})\big\|_2^r\Big]^{1/r}\leq C\Delta_l^{\frac{1}{2}\wedge\beta}.
$$
\end{lemma}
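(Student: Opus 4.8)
The plan is to adapt the inductive strategy behind Lemma~\ref{lem:cccpf1} and its source \citet[Lemma D.3]{jasra2017multilevel}, with the functional increment bound of Remark~\ref{rem:euler_grad1} playing the role that the strong Euler estimate \eqref{eq:eul_conv_strong} plays there. I treat only the first inequality, the barred case being identical. The starting point is the additive structure of the discretized functional: directly from the definitions of $G_{\theta,t}^l$ and $G_{\theta,t-1:t}^l$ one has, along any lineage,
$$
G_{\theta,t}^l(X_{0:t}^{l,i}) = G_{\theta,t-1}^l\big(X_{0:t-1}^{l,A_{t-1}^{l,i}}\big) + G_{\theta,t-1:t}^l\big(X_{t-1}^{l,A_{t-1}^{l,i}},X_{t-1+\Delta_l:t}^{l,i}\big),
$$
and similarly at level $l-1$. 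Subtracting the two levels, applying the $C_r$- and Minkowski inequalities, splits the target quantity into a \emph{past} term involving $G_{\theta,t-1}^l-G_{\theta,t-1}^{l-1}$ evaluated on the ancestors, and a \emph{current-interval} term involving $G_{\theta,t-1:t}^l-G_{\theta,t-1:t}^{l-1}$.

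For the past term I would induct on $t$ (the base case $t=1$ having no past term). The key combinatorial input is the nesting property that, for $i\in\mathsf{S}_{t-1}^l$, the common ancestor $j:=A_{t-1}^{l,i}=A_{t-1}^{l-1,i}$ lies in $\mathsf{S}_{t-2}^l$, which is immediate upon unwinding the definition of $\mathsf{S}_{t-1}^l$. Hence, after the offspring-to-ancestor re-indexing and weight-boundedness bookkeeping (Assumption~\ref{ass:D2}(iii) makes the normalized weights comparable to $N^{-1}$), carried out exactly as in \citet{jasra2017multilevel}, the past term is controlled by the inductive hypothesis at time $t-1$. For the current-interval term, I would condition on the information generated up to and including the resampling at time $t-1$; since the level-$l$ and level-$(l-1)$ propagations share common Brownian increments (Steps 2a, 2d of Algorithm~\ref{alg:4-CCPF}), Remark~\ref{rem:euler_grad1} yields, for each surviving $i$ with ancestor $j$, a conditional bound $C(\Delta_l^{1/2}+\|X_{t-1}^{l,j}-X_{t-1}^{l-1,j}\|_2)$. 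Averaging $N^{-1}\sum_{i\in\mathsf{S}_{t-1}^l}$ and invoking Remark~\ref{rem:res_cccpf} (the ancestral-state version of Lemma~\ref{lem:cccpf1}) then collapses this to $C\Delta_l^{1/2\wedge\beta}$.

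The main obstacle, and the reason the hypothesis $\mathsf{G}^l_{\beta,C'}$ (beyond the state-difference set $\mathsf{B}^l_{\beta,C'}$ that sufficed for Lemma~\ref{lem:cccpf1}) is required, is the conditioned reference particle. When $A_p^{l,i}=N$ (or $A_p^{l-1,i}=N$) for some $p\le t$ along a surviving lineage, the corresponding trajectory is the fixed input reference path rather than a forward-simulated one, so Remark~\ref{rem:euler_grad1} does not apply to that contribution. This is precisely the discrepancy flagged as "the only difference" in the proofs of Lemmata~\ref{lem:cccpf1} and \ref{lem:cccpf3}, but here it manifests at the level of the path-additive functional, so it must be absorbed by membership of $(x_{0:T}^l,x_{0:T}^{l-1})$ in $\mathsf{G}^l_{\beta,C'}$, which guarantees $\|G_{\theta,p}^l(x_{0:p}^l)-G_{\theta,p}^{l-1}(x_{0:p}^{l-1})\|_2\le C'\Delta_l^{\beta}$ for the reference trajectory (with $\mathsf{B}^l_{\beta,C'}$ handling the associated state gaps). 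Combining this $\Delta_l^{\beta}$ reference-particle bound with the $\Delta_l^{1/2\wedge\beta}$ bounds for the forward-simulated contributions, and merging the $\Delta_l^{1/2}$ and $\Delta_l^{\beta}$ terms into $\Delta_l^{1/2\wedge\beta}$, closes the induction and concludes the proof.
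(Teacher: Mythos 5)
Your proposal is correct and takes essentially the same approach as the paper's proof: the same induction on $t$, the same split of $G_{\theta,t}^l - G_{\theta,t}^{l-1}$ into a past term (controlled by the re-indexing and weight-boundedness argument of \citet{jasra2017multilevel} together with the induction hypothesis) and a current-interval term (controlled by Remark~\ref{rem:euler_grad1} followed by Remark~\ref{rem:res_cccpf}). Your identification of the conditioned reference particle as the reason the hypothesis $\mathsf{G}^l_{\beta,C'}$ is needed, beyond $\mathsf{B}^l_{\beta,C'}$, matches exactly how the paper absorbs that contribution.
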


\begin{proof}
We only consider the first inequality as it is the same proof for the second inequality.
The proof is by induction on $t$. The initialization holds for $i\in\{1,\dots,N-1\}$ by the result stated in Remark \ref{rem:euler_grad}. The case $i=N$ is trivial as $(x_{0:T}^l,x_{0:T}^{l-1})\in \mathsf{B}^l_{\beta,C'}\cap \mathsf{G}^l_{\beta,C'}$.

We now consider
\begin{equation}\label{eq:cccpft1}
\check{\mathbb{E}}_{\theta}^{l-1,l}\Big[\frac{1}{N}\sum_{i\in\mathsf{S}_{t-1}^l}\big\|
G_{\theta,t}^l(X_{0:t}^{l,i})
-G_{\theta,t}^{l-1}(X_{0:t}^{l-1,i})\big\|_2^r\Big]^{1/r}\leq C(T_1+T_2),
\end{equation}
where
\begin{align*}
T_1 & = \check{\mathbb{E}}_{\theta}^{l-1,l}\Big[\frac{1}{N}\sum_{i\in\mathsf{S}_{t-1}^l}\big\|
G_{\theta,t-1:t}^l(X_{t-1}^{l,i},X_{t-1+\Delta_l:t}^{l,i})
-G_{\theta,t-1:t}^{l-1}(X_{t-1}^{l-1,i},X_{t-1+\Delta_{l-1}:t}^{l-1,i})\big\|_2^r\Big]^{1/r},\\
T_2 & = \check{\mathbb{E}}_{\theta}^{l-1,l}\bigg[\frac{1}{N}\sum_{i\in\mathsf{S}_{t-1}^l}\Big\|
G_{\theta,t-1}^l(X_{0:t-1}^{l,A_{t-1}^{l,i}})
-G_{\theta,t-1}^{l-1}(X_{0:t-1}^{l-1,A_{t-1}^{l-1,i}})\Big\|_2^r\bigg]^{1/r}.
\end{align*}
So we consider bounds on $T_1$ and $T_2$. 

For $T_1$, by applying the result in Remark \ref{rem:euler_grad1}, we have 
$$
T_1 \leq C\bigg(\Delta_l^{1/2} + \check{\mathbb{E}}_{\theta}^{l-1,l}\Big[\frac{1}{N}\sum_{i\in\mathsf{S}_{t-1}^l}\Big\|
X_{t-1}^{l,A_{t-1}^{l,i}} - X_{t-1}^{l-1,A_{t-1}^{l-1,i}}
\Big\|_2^r\Big]^{1/r}\bigg).
$$
Then applying the result in Remark \ref{rem:res_cccpf} gives
\begin{equation}\label{eq:cccpft2}
T_1 \leq C\Delta_l^{\frac{1}{2}\wedge\beta}.
\end{equation}
For $T_2$, one can use the same approach as in the proof of \citet[Lemma D.3.]{jasra2017multilevel} from the bottom of page 3092, along with the induction hypothesis, to obtain
\begin{equation}\label{eq:cccpft3}
T_2 \leq C\Delta_l^{\frac{1}{2}\wedge\beta}.
\end{equation}
Combining \eqref{eq:cccpft1} with \eqref{eq:cccpft2} and \eqref{eq:cccpft3} concludes the proof.
\end{proof}

\begin{lemma}\label{lem:cccpf5}
Under Assumptions~\ref{ass:D1} and \ref{ass:D2}, for any $(t,r,\theta)\in \{1,\dots,T\}\times[1,\infty)\times\Theta$, there exists a constant $C<\infty$ such that for any $(l,s,N,i)\in\mathbb{N}\times\{l-1,1\}\times\{2,3,\dots\}\times\{1,\dots,N\}$ and any $(x_{0:T}^l,x_{0:T}^{l-1}) \in \mathsf{X}^l\times\mathsf{X}^{l-1}$, it holds that
$$
\check{\mathbb{E}}_{\theta}^{l-1,l}\Big[\big\|
G_{\theta,t}^s(X_{0:t}^{s,i})\big\|_2^{r}
\Big] \leq 
C\Big(1+
\sum_{p=1}^{t}\big\|G_{\theta,p-1:p}^{s}(x_{p-1}^{s,N},x_{p-1+\Delta_s:p}^{s,N})\big\|_2^r
\Big).
$$
Also for any $(\bar{x}_{0:T}^l,\bar{x}_{0:T}^{l-1})\in \mathsf{X}^l\times\mathsf{X}^{l-1}$ 
$$
\check{\mathbb{E}}_{\theta}^{l-1,l}\Big[\big\|
G_{\theta,t}^s(\bar{X}_{0:t}^{s,i})\big\|_2^{r}
\Big] \leq 
C\Big(1+
\sum_{p=1}^{t}\big\|G_{\theta,p-1:p}^{s}(\bar{x}_{p-1}^{s,N},\bar{x}_{p-1+\Delta_s:p}^{s,N})\big\|_2^r
\Big).
$$
\end{lemma}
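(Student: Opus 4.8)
The plan is to prove the bound by induction on $t$, exploiting the additive structure of $G_{\theta,t}^s$ across unit time intervals. First I would observe that, since the four-marginal coupled resampling in Algorithm~\ref{alg:maximal-maximal} preserves each individual ancestry marginal, the level-$s$ system evolves marginally under $\check{\mathbb{E}}_{\theta}^{l-1,l}$ exactly as a single CPF at level $s$; hence only the marginal level-$s$ quantities matter and the coupling can be ignored entirely for this estimate. I treat only the first inequality, the second being identical with barred quantities. The base case $t=1$ is immediate: all particles start at $x_{\star}$, so for $i<N$ the segment $X_{\Delta_s:1}^{s,i}$ is freshly Euler-generated while for $i=N$ it equals the deterministic reference, and in both cases the increment functional is controlled by $C(1+\|G_{\theta,0:1}^s(x_{0}^{s,N},x_{\Delta_s:1}^{s,N})\|_2^r)$ as explained below. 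For the inductive step I would use the decomposition
\[
G_{\theta,t}^s(X_{0:t}^{s,i}) = G_{\theta,t-1}^s(X_{0:t-1}^{s,A_{t-1}^{s,i}}) + G_{\theta,t-1:t}^s\big(X_{t-1}^{s,A_{t-1}^{s,i}},X_{t-1+\Delta_s:t}^{s,i}\big),
\]
which follows from the definitions of $G_{\theta,t}^s$, $G_{\theta,t-1:t}^s$ and the recursive trajectory construction of Section~\ref{sec:analysis_first}, and then apply the $C_r$-inequality to reduce the claim to bounding the $r$-th moments of the two terms separately.

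For the increment term I would condition on the $\sigma$-algebra $\mathcal{F}_{t-1}$ generated by everything up to and including the resampling at time $t-1$, so that the starting value $X_{t-1}^{s,A_{t-1}^{s,i}}$ is frozen. As no resampling occurs within $(t-1,t]$, the segment $X_{t-1+\Delta_s:t}^{s,i}$ is produced by the Euler map $F_{\theta}^s$ with fresh increments $\mathcal{N}_d(0_d,\Delta_s I_d)$. Under Assumptions~\ref{ass:D1} and~\ref{ass:D2} the functions $\nabla_{\theta}\|b_{\theta}\|_2^2$, $\kappa_{\theta,i}$, $a_{\theta}$, $\sigma$ and $\nabla_{\theta}\log g_{\theta}$ are bounded, so the drift contributions are $\mathcal{O}(1)$ and the martingale part $\sum_k \kappa_{\theta,i}(\cdot)\sigma(\cdot)(W_{s_{k+1}}-W_{s_k})$ has its $r$-th moment controlled by the Burkholder--Gundy--Davis and Minkowski inequalities exactly as for $M_{K_l}$ in Lemma~\ref{lem:diff4}. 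This gives a bound of $C$ uniform in the frozen starting value whenever $i<N$; for $i=N$ the segment is the deterministic reference and the increment equals $\|G_{\theta,t-1:t}^s(x_{t-1}^{s,N},x_{t-1+\Delta_s:t}^{s,N})\|_2^r$. In both cases the increment term is at most $C(1+\|G_{\theta,t-1:t}^s(x_{t-1}^{s,N},x_{t-1+\Delta_s:t}^{s,N})\|_2^r)$.

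For the ancestor term I would again condition on $\mathcal{F}_{t-1}$ and use that, marginally, $A_{t-1}^{s,i}\sim\mathcal{R}(w_{t-1}^{s,1:N})$ for $i<N$, whence
\[
\check{\mathbb{E}}_{\theta}^{l-1,l}\big[\|G_{\theta,t-1}^s(X_{0:t-1}^{s,A_{t-1}^{s,i}})\|_2^r \mid \mathcal{F}_{t-1}\big] = \sum_{j=1}^N w_{t-1}^{s,j}\,\|G_{\theta,t-1}^s(X_{0:t-1}^{s,j})\|_2^r .
\]
The crucial point is that Assumption~\ref{ass:D2}(iii) forces $w_{t-1}^{s,j}\leq \overline{C}/(N\underline{C})$, so this weighted average is dominated by $(\overline{C}/\underline{C})N^{-1}\sum_{j=1}^N\|G_{\theta,t-1}^s(X_{0:t-1}^{s,j})\|_2^r$; taking expectations and applying the induction hypothesis to each of the $N$ trajectories (the term $j=N$ being the deterministic reference, consistent with the hypothesis) bounds this by $C(1+\sum_{p=1}^{t-1}\|G_{\theta,p-1:p}^s(x_{p-1}^{s,N},x_{p-1+\Delta_s:p}^{s,N})\|_2^r)$, the factor $N$ cancelling. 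Combining with the increment bound closes the induction.

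I expect the main obstacle to be precisely this passage from the weighted resampling average to a uniform per-particle bound: it is what converts the a priori random, weight-dependent selection of ancestors into a clean estimate, and it relies essentially on the two-sided bound on $g_{\theta}$. A secondary point requiring care is the bookkeeping that isolates the conditioned $N$-th particle, ensuring that only its deterministic functional increments $G_{\theta,p-1:p}^s(x_{p-1}^{s,N},\cdot)$ accumulate on the right-hand side while every freshly sampled particle contributes only the constant term; the martingale moment estimate for the stochastic-integral part of the increment, though needed, is routine given its close analogy with Lemmata~\ref{lem:diff4}--\ref{lem:diff6}.
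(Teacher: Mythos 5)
Your proposal is correct and takes essentially the same route as the paper's proof: induction on $t$, splitting $G_{\theta,t}^s$ into the unit-interval increment term (bounded uniformly in the frozen starting value via the martingale--remainder argument of Lemma~\ref{lem:diff4}, with the $i=N$ reference case exhibited in the bound) and the ancestor term (handled by the $C/N$ bound on the marginal resampling probabilities from Assumption~\ref{ass:D2} together with the induction hypothesis, including the $j=N$ case via the additive structure of $G_{\theta,t-1}^s$). Your extra details---conditioning on $\mathcal{F}_{t-1}$, the explicit weighted-average identity, and the upfront observation that the level-$s$ marginal of the 4-CCPF coincides with a single CPF---are simply more explicit versions of what the paper's proof uses implicitly when it "averages over the resampling indexes."
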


\begin{proof}
We only consider the first inequality as it is the same proof for the second inequality. 
We consider a proof by induction. In the case of $t=1$, one can use the boundedness properties of the appropriate terms along with the martingale-remainder methods in the proof of Lemma \ref{lem:diff4} to deduce the given bound, except for the case $i=N$, which is exhibited in the bound.

Now consider the case of $t>1$, we have the upper-bound
\begin{equation}\label{eq:cccpft6}
\check{\mathbb{E}}_{\theta}^{l-1,l}\Big[\big\|
G_{\theta,t}^s(X_{0:t}^{s,i}) \big\|_2^{r} \Big] \leq C(T_1+T_2),
\end{equation}
where
\begin{align*}
T_1 & = \check{\mathbb{E}}_{\theta}^{l-1,l}\Big[\big\|
G_{\theta,t-1:t}^s(X_{t-1}^{s,i},X_{t-1+\Delta_s:t}^{s,i})\big\|_2^r\Big],\\
T_2 & = \check{\mathbb{E}}_{\theta}^{l-1,l}\Big[\big\|
G_{\theta,t-1}^s(X_{0:t-1}^{s,A_{t-1}^{s,i}})\big\|_2^r\Big].
\end{align*}
So we consider bounds on $T_1$ and $T_2$. 

For $T_1$, by the same argument as for the initialization
\begin{equation}\label{eq:cccpft7}
T_1 \leq C\Big(1+\big\|G_{\theta,t-1:t}^{s}(x_{t-1}^{s,N},x_{t-1+\Delta_s:t}^{s,N})\big\|_2^r\Big).
\end{equation}
For $T_2$, we first note that for the resampling probabilities associated to $A_{t-1}^{s,i}$, we can deduce the following upper-bound using Assumption~\ref{ass:D2}
\begin{equation}\label{eq:cccpft5}
\frac{g_{\theta}(y_{t-1}|x_{t-1}^i)}{\sum_{j=1}^N g_{\theta}(y_{t-1}|x_{t-1}^j)}\leq \frac{C}{N}.
\end{equation}
Averaging over the resampling indexes, using \eqref{eq:cccpft5} and the induction hypothesis one has
\begin{equation}\label{eq:cccpft8}
T_2 \leq C\Big(1+
\sum_{p=1}^{t-1}\big\|G_{\theta,p-1:p}^{s}(x_{p-1}^{s,N},x_{p-1+\Delta_s:p}^{s,N})\big\|_2^r\Big).
\end{equation}
Combining \eqref{eq:cccpft6} with \eqref{eq:cccpft7} and \eqref{eq:cccpft8} concludes the proof.
\end{proof}

\begin{corollary}\label{cor:cccpf5}
Under Assumptions~\ref{ass:D1} and \ref{ass:D2}, for any $(t,r,\theta)\in \{1,\dots,T\}\times[1,\infty)\times\Theta$, there exists a constant $C<\infty$ such that for any $(l,s,N,i)\in\mathbb{N}\times\{l-1,1\}\times\{2,3,\dots\}\times\{1,\dots,N\}$ and any $(x_{0:T}^l,x_{0:T}^{l-1}) \in \mathsf{X}^l\times\mathsf{X}^{l-1}$, it holds that
$$
\check{\mathbb{E}}_{\theta}^{l-1,l}\Big[\big\|X_{t}^{s,i}\big\|_2^{r}
\Big] \leq 
C\Big(1+
\sum_{p=1}^{t}\big\|x_p^{s,N}\big\|_2^r
\Big).
$$
Also for any $(\bar{x}_{0:T}^l,\bar{x}_{0:T}^{l-1})\in \mathsf{X}^l\times\mathsf{X}^{l-1}$ 
$$
\check{\mathbb{E}}_{\theta}^{l-1,l}\Big[\big\|
\bar{X}_{t}^{s,i}\big\|_2^{r}
\Big] \leq 
C\Big(1+
\sum_{p=1}^{t}\big\|\bar{x}_p^{s,N}\big\|_2^r
\Big).
$$
\end{corollary}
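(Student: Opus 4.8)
The plan is to prove the bound by induction on $t$, mirroring the structure of the proof of Lemma~\ref{lem:cccpf5} but working directly with the state $X_t^{s,i}$ in place of the functional $G_{\theta,t}^s$. I only treat the first inequality, the second being identical with bars throughout. For the base case $t=1$, the particles $i\in\{1,\dots,N-1\}$ are obtained by propagating the deterministic initial condition $x_{\star}$ through $2^s$ Euler steps of the map $F_{\theta}^s$; since $a_{\theta}$ and $\sigma$ are bounded by Assumption~\ref{ass:D2}(ii), a martingale--remainder argument as in the proof of Lemma~\ref{lem:diff4} (splitting the displacement into its drift part, controlled by $\|a_{\theta}\|\leq C$ over a unit interval, and its stochastic-integral part, controlled by the Burkholder--Gundy--Davis inequality together with $\|\sigma\|\leq C$) yields $\check{\mathbb{E}}_{\theta}^{l-1,l}[\|X_1^{s,i}\|_2^r]\leq C$. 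The case $i=N$ is immediate since $X_1^{s,N}=x_1^{s,N}$, which is already accounted for by the term $\|x_1^{s,N}\|_2^r$ on the right-hand side.

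For the inductive step $t>1$ and $i\neq N$, I would use the decomposition $\|X_t^{s,i}\|_2\leq \|X_t^{s,i}-X_{t-1}^{s,A_{t-1}^{s,i}}\|_2 + \|X_{t-1}^{s,A_{t-1}^{s,i}}\|_2$ together with the $C_r$-inequality to obtain $\check{\mathbb{E}}_{\theta}^{l-1,l}[\|X_t^{s,i}\|_2^r]\leq C(T_1+T_2)$, where $T_1=\check{\mathbb{E}}_{\theta}^{l-1,l}[\|X_t^{s,i}-X_{t-1}^{s,A_{t-1}^{s,i}}\|_2^r]$ and $T_2=\check{\mathbb{E}}_{\theta}^{l-1,l}[\|X_{t-1}^{s,A_{t-1}^{s,i}}\|_2^r]$. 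The term $T_1$ is the $r$-th moment of a single unit-time Euler displacement, which is bounded by a constant \emph{uniformly in the starting point} by the same martingale--remainder argument as in the base case, so $T_1\leq C$. For $T_2$, I would average over the resampling index $A_{t-1}^{s,i}$, whose normalized weights are bounded by $C/N$ thanks to Assumption~\ref{ass:D2}(iii), exactly as in \eqref{eq:cccpft5}; this gives $T_2\leq CN^{-1}\sum_{j=1}^N\check{\mathbb{E}}_{\theta}^{l-1,l}[\|X_{t-1}^{s,j}\|_2^r]$, and applying the induction hypothesis to every $j\in\{1,\dots,N\}$ produces $T_2\leq C(1+\sum_{p=1}^{t-1}\|x_p^{s,N}\|_2^r)$. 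Combining the two bounds and absorbing the missing $p=t$ term into the sum concludes the inductive step; the case $i=N$ is once more trivial since $X_t^{s,N}=x_t^{s,N}$.

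The only point that requires genuine care is the uniformity of the displacement bound in $T_1$ with respect to the possibly large starting state $X_{t-1}^{s,A_{t-1}^{s,i}}$, which is precisely where the boundedness of the drift and diffusion coefficients in Assumption~\ref{ass:D2}(ii) is indispensable: with unbounded coefficients the displacement moment would acquire a multiplicative dependence on the starting state, and the clean additive bound claimed in the corollary would no longer follow from this argument. Everything else is a direct transcription of the inductive machinery already developed for Lemma~\ref{lem:cccpf5}.
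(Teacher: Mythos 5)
Your proof is correct and follows essentially the same route as the paper's: induction on $t$, a martingale--remainder argument (as in Lemma~\ref{lem:diff4}) exploiting the bounded coefficients to control the unit-time Euler displacement uniformly in the starting point, then averaging over the resampling index via the $C/N$ weight bound of \eqref{eq:cccpft5} and invoking the induction hypothesis, with $i=N$ trivial throughout. Your explicit triangle-inequality/$C_r$ decomposition into $T_1+T_2$ and the remark on uniformity of the displacement bound simply spell out what the paper's terser proof leaves implicit.
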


\begin{proof}
We only consider the first inequality as it is the same proof for the second inequality. 
We consider a proof by induction. In the case of $t=1$, one can use the boundedness properties of the appropriate terms along with the martingale-remainder methods in the proof of Lemma \ref{lem:diff4} to deduce the given bound, except for the case $i=N$, which is exhibited in the bound.

For $t>1$, when $i\in\{1,\dots,N-1\}$, repeating the argument of the initialization, one has
$$
\check{\mathbb{E}}_{\theta}^{l-1,l}\Big[\big\|X_{t}^{s,i}\big\|_2^{r}\Big] \leq C\Big(1+
\check{\mathbb{E}}_{\theta}^{l-1,l}\Big[\Big\|X_{t-1}^{s,A_{t-1}^{s,i}}\Big\|_2^{r}\Big]
\Big).
$$
Then one can repeat the argument that leads to \eqref{eq:cccpft8}. The case $i=N$ is trivially true.
\end{proof}

\begin{remark}\label{rem:extend_res}
The results in Lemmata \ref{lem:cccpf2} and \ref{lem:cccpf5} can be extended to the case where one considers
$$
\check{\mathbb{E}}_{\theta}^{l-1,l}\bigg[\frac{1}{N}\sum_{i\in\mathsf{S}_{t-1}^l}\Big\| G_{\theta,t}^l(X_{0:t}^{l,A_t^{l,i}}) -G_{\theta,t}^{l-1}(X_{0:t}^{l-1,A_t^{l-1,i}})\Big\|_2^r\bigg], \quad 
\check{\mathbb{E}}_{\theta}^{l-1,l}\Big[\frac{1}{N}\sum_{i\in\bar{\mathsf{S}}_{t-1}^l}\big\|
G_{\theta,t}^l(\bar{X}_{0:t}^{l,\bar{A}_t^{l,i}})
-G_{\theta,t}^{l-1}(\bar{X}_{0:t}^{l-1,\bar{A}_t^{l-1,i}})\big\|_2^r\Big]^{1/r},
$$
and
$$
\check{\mathbb{E}}_{\theta}^{l-1,l}\bigg[\Big\|
G_{\theta,t}^s(X_{0:t}^{s,A_t^{s,i}})\Big\|_2^{r}
\bigg],\quad 
\check{\mathbb{E}}_{\theta}^{l-1,l}\bigg[\Big\|
G_{\theta,t}^s(\bar{X}_{0:t}^{s,\bar{A}_t^{s,i}})\Big\|_2^{r}
\bigg],
$$
by using very similar arguments to the proof of those lemmata. 
\end{remark}

\begin{lemma}\label{lem:cccpf4}
Under Assumptions~\ref{ass:D1} and \ref{ass:D2}, for any $(t,r,\theta,C')\in \{1,\dots,T\}\times[1,\infty)\times\Theta\times\mathbb{R}^+$, there exists a constant $C<\infty$ such that for any $(l,\beta,N,i,\delta)\in\mathbb{N}\times\mathbb{R}^+\times\{2,3,\dots\}\times\{1,\dots,N\}\times\mathbb{R}^+$ and any $(x_{0:T}^l,x_{0:T}^{l-1}) \in \mathsf{B}^l_{\beta,C'}\cap \mathsf{G}^l_{\beta,C'}$, it holds that
$$
\check{\mathbb{E}}_{\theta}^{l-1,l}\Big[\big\|
G_{\theta,t}^l(X_{0:t}^{l,i})
-G_{\theta,t}^{l-1}(X_{0:t}^{l-1,i})\big\|_2^r\Big]^{1/r}\leq C(\Delta_l^{\frac{1}{2}\wedge\beta})^{\frac{1}{r(1+\delta)}}
\Big(1+\sum_{p=1}^t\sum_{s=l-1}^l \big\|G_{\theta,p-1:p}^s(x_{p-1}^{s,N},x_{p-1+\Delta_s:p}^{s,N})\big\|_2^r\Big).
$$
If $(\bar{x}_{0:T}^l,\bar{x}_{0:T}^{l-1})\in \mathsf{B}^l_{\beta,C'}\cap \mathsf{G}^l_{\beta,C'}$ also holds, then 
$$
\check{\mathbb{E}}_{\theta}^{l-1,l}\Big[\big\|
G_{\theta,t}^l(\bar{X}_{0:t}^{l,i})
-G_{\theta,t}^{l-1}(\bar{X}_{0:t}^{l-1,i})\big\|_2^r\Big]^{1/r}\leq 
C(\Delta_l^{\frac{1}{2}\wedge\beta})^{\frac{1}{r(1+\delta)}}
\Big(1+\sum_{p=1}^t\sum_{s=l-1}^l \big\|G_{\theta,p-1:p}^s(\bar{x}_{p-1}^{s,N},\bar{x}_{p-1+\Delta_s:p}^{s,N})\big\|_2^r\Big).
$$
\end{lemma}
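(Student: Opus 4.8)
The plan is to prove the bound for a single index $i$ by splitting according to whether particle $i$ belongs to the coupled set $\mathsf{S}_{t-1}^l$, and then to interpolate the sharp estimate available on $\mathsf{S}_{t-1}^l$ against a crude moment bound off it. Write $D_i = \|G_{\theta,t}^l(X_{0:t}^{l,i}) - G_{\theta,t}^{l-1}(X_{0:t}^{l-1,i})\|_2$ and $a = \Delta_l^{\frac{1}{2}\wedge\beta}$, noting $a\le 1$. The case $i=N$ is immediate: the $N$-th particle carries the input conditioned trajectories $x^{l,N}=x^{l}$, $x^{l-1,N}=x^{l-1}$ unchanged through every step, so $D_N = \|G_{\theta,t}^l(x_{0:t}^{l})-G_{\theta,t}^{l-1}(x_{0:t}^{l-1})\|_2 \le C'\Delta_l^{\beta}$ directly from $(x_{0:T}^l,x_{0:T}^{l-1})\in\mathsf{G}^l_{\beta,C'}$, which dominates the claimed rate. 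For $i\in\{1,\dots,N-1\}$ I would decompose $\check{\mathbb{E}}_\theta^{l-1,l}[D_i^r] = \check{\mathbb{E}}_\theta^{l-1,l}[\mathbb{I}_{\{i\in\mathsf{S}_{t-1}^l\}}D_i^r] + \check{\mathbb{E}}_\theta^{l-1,l}[\mathbb{I}_{\{i\notin\mathsf{S}_{t-1}^l\}}D_i^r]$ and treat the two terms separately.

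For the coupled term I would exploit the exchangeability of the non-conditioned labels $\{1,\dots,N-1\}$ under $\check{\mathbb{E}}_\theta^{l-1,l}$: since the sets $\mathsf{S}_{t-1}^l$ are defined symmetrically in the labels and only particle $N$ is distinguished, the pair $(\mathbb{I}_{\{i\in\mathsf{S}_{t-1}^l\}},D_i)$ has the same law for every $i\le N-1$. Hence $\check{\mathbb{E}}_\theta^{l-1,l}[\mathbb{I}_{\{i\in\mathsf{S}_{t-1}^l\}}D_i^r] = \frac{1}{N-1}\sum_{j=1}^{N-1}\check{\mathbb{E}}_\theta^{l-1,l}[\mathbb{I}_{\{j\in\mathsf{S}_{t-1}^l\}}D_j^r] \le \frac{N}{N-1}\,\frac{1}{N}\check{\mathbb{E}}_\theta^{l-1,l}\big[\sum_{j\in\mathsf{S}_{t-1}^l}D_j^r\big] \le 2(Ca)^r$ by Lemma~\ref{lem:cccpf2} and $N\ge 2$. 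The same symmetry converts Lemma~\ref{lem:cccpf3} into a per-index tail bound: using that $N\in\mathsf{S}_{t-1}^l$ always and that $\frac{1}{N}\check{\mathbb{E}}_\theta^{l-1,l}[\mathrm{Card}(\mathsf{S}_{t-1}^l)] = \frac{1}{N}\big(1+(N-1)\check{\mathbb{E}}_\theta^{l-1,l}[\mathbb{I}_{\{i\in\mathsf{S}_{t-1}^l\}}]\big)$, Lemma~\ref{lem:cccpf3} gives $\check{\mathbb{E}}_\theta^{l-1,l}[\mathbb{I}_{\{i\notin\mathsf{S}_{t-1}^l\}}]\le \frac{N}{N-1}Ca \le 2Ca$.

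For the uncoupled term I would apply H\"older's inequality with the conjugate exponents $p=1+\delta$ and $q=(1+\delta)/\delta$, yielding $\check{\mathbb{E}}_\theta^{l-1,l}[\mathbb{I}_{\{i\notin\mathsf{S}_{t-1}^l\}}D_i^r] \le \check{\mathbb{E}}_\theta^{l-1,l}[\mathbb{I}_{\{i\notin\mathsf{S}_{t-1}^l\}}]^{1/(1+\delta)}\,\check{\mathbb{E}}_\theta^{l-1,l}[D_i^{r(1+\delta)/\delta}]^{\delta/(1+\delta)}$, whose first factor is $\le (Ca)^{1/(1+\delta)}$ by the tail bound. For the moment factor I would use the $C_r$-inequality $D_i^{r(1+\delta)/\delta}\le C\big(\|G_{\theta,t}^l(X_{0:t}^{l,i})\|_2^{r(1+\delta)/\delta}+\|G_{\theta,t}^{l-1}(X_{0:t}^{l-1,i})\|_2^{r(1+\delta)/\delta}\big)$ and apply Lemma~\ref{lem:cccpf5} at each level $s\in\{l-1,l\}$ with moment order $rq=r(1+\delta)/\delta$; raising to the power $1/q=\delta/(1+\delta)\le 1$ and using subadditivity of $x\mapsto x^{1/q}$ together with $(\|G\|_2^{rq})^{1/q}=\|G\|_2^{r}$ collapses the bound to $C\big(1+\sum_{p=1}^t\sum_{s=l-1}^l\|G_{\theta,p-1:p}^s(x_{p-1}^{s,N},x_{p-1+\Delta_s:p}^{s,N})\|_2^r\big)$. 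Combining the two terms, bounding $a^r\le a^{1/(1+\delta)}$, taking the $1/r$ power, and using $y^{1/r}\le y$ for $y\ge 1$ to restore the first factor to the non-rooted form then gives the stated estimate with rate $(\Delta_l^{\frac{1}{2}\wedge\beta})^{1/(r(1+\delta))}$.

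The main obstacle is this H\"older interpolation: the choice $p=1+\delta$ is forced by the requirement that the tail probability contribute exactly the factor $a^{1/(1+\delta)}$, and it is precisely this that degrades the clean rate $a$ of Lemma~\ref{lem:cccpf2} to $a^{1/(r(1+\delta))}$. The delicate bookkeeping is to verify that the $\delta$-dependent moment order $r(1+\delta)/\delta$ produced by H\"older folds back exactly into the $\|G_{\theta,p-1:p}^s\|_2^r$ factors on the right-hand side through the power-mean inequalities, while carefully tracking the constants (including the Burkholder--Davis--Gundy constants entering Lemma~\ref{lem:cccpf5}) so as not to disturb the dependence structure asserted in the statement.
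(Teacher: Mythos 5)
Your proposal is correct and follows essentially the same route as the paper's proof: the same split on the event $\{i\in\mathsf{S}_{t-1}^l\}$, with Lemma~\ref{lem:cccpf2} on the coupled part, Lemma~\ref{lem:cccpf3} converted into a per-index tail bound, and H\"older's inequality with exponents $(1+\delta,(1+\delta)/\delta)$ combined with Lemma~\ref{lem:cccpf5} on the complement. The extra details you supply — the explicit exchangeability argument for labels $1,\dots,N-1$, the separate treatment of $i=N$ via $\mathsf{G}^l_{\beta,C'}$, and the power bookkeeping that folds $(1+\sum\|G\|_2^r)^{1/r}$ back into the non-rooted form — are exactly the steps the paper leaves implicit, and they are all sound.
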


\begin{proof}
We only consider the first inequality as it is the same proof for the second inequality. We have
\begin{equation}\label{eq:cccpft11}
\check{\mathbb{E}}_{\theta}^{l-1,l}\Big[\big\|
G_{\theta,t}^l(X_{0:t}^{l,i})
-G_{\theta,t}^{l-1}(X_{0:t}^{l-1,i})\big\|_2^r\Big]^{1/r} = T_1 + T_2,
\end{equation}
where
\begin{align*}
T_1 & = \check{\mathbb{E}}_{\theta}^{l-1,l}\Big[\big\|
G_{\theta,t}^l(X_{0:t}^{l,i})
-G_{\theta,t}^{l-1}(X_{0:t}^{l-1,i})\big\|_2^r\mathbb{I}_{\mathsf{S}_{t-1}^l}(i)\Big]^{1/r}, \\
T_2 & = \check{\mathbb{E}}_{\theta}^{l-1,l}\Big[\big\|
G_{\theta,t}^l(X_{0:t}^{l,i})
-G_{\theta,t}^{l-1}(X_{0:t}^{l-1,i})\big\|_2^r\mathbb{I}_{(\mathsf{S}_{t-1}^l)^c}(i)\Big]^{1/r}.
\end{align*}
So we consider bounds on $T_1$ and $T_2$. 

For $T_1$, we have the upper-bound
$$
T_1 \leq C\check{\mathbb{E}}_{\theta}^{l-1,l}\Big[\frac{1}{N}\sum_{i\in \mathsf{S}_{t-1}^l}\big\|
G_{\theta,t}^l(X_{0:t}^{l,i})
-G_{\theta,t}^{l-1}(X_{0:t}^{l-1,i})\big\|_2^r\Big]^{1/r}
$$
then applying Lemma \ref{lem:cccpf2} gives
\begin{equation}\label{eq:cccpft12}
T_1 \leq C\Delta_l^{\frac{1}{2}\wedge\beta}.
\end{equation}

For $T_2$, applying H\"older's inequality gives
\begin{equation}\label{eq:cccpft10}
T_2 \leq \check{\mathbb{E}}_{\theta}^{l-1,l}\Big[\big\|
G_{\theta,t}^l(X_{0:t}^{l,i})
-G_{\theta,t}^{l-1}(X_{0:t}^{l-1,i})\big\|_2^{\frac{r(1+\delta)}{\delta}}\Big]^{\frac{\delta}{(1+\delta)}}
\check{\mathbb{E}}_{\theta}^{l-1,l}\big[\mathbb{I}_{(\mathsf{S}_{t-1}^l)^c}(i)\big]^{\frac{1}{r(1+\delta)}}.
\end{equation}
Note that 
\begin{equation}\label{eq:cccpft9}
\check{\mathbb{E}}_{\theta}^{l-1,l}\big[\mathbb{I}_{(\mathsf{S}_{t-1}^l)^c}(i)\big] = 
1-\check{\mathbb{E}}_{\theta}^{l-1,l}\Bigg[\frac{\textrm{Card}(\mathsf{S}_{t-1}^l)}{N}\Bigg] \leq C\Delta_l^{\frac{1}{2}\wedge\beta},
\end{equation}
where we have used Lemma \ref{lem:cccpf3}, and 
\begin{align*}
&\check{\mathbb{E}}_{\theta}^{l-1,l}\Big[\big\|
G_{\theta,t}^l(X_{0:t}^{l,i})
-G_{\theta,t}^{l-1}(X_{0:t}^{l-1,i})\big\|_2^{\frac{r(1+\delta)}{\delta}}]^{\frac{\delta}{(1+\delta)}}\\
&\leq C\Big(\check{\mathbb{E}}_{\theta}^{l-1,l}\Big[\big\|
G_{\theta,t}^l(X_{0:t}^{l,i})\big\|_2^{\frac{r(1+\delta)}{\delta}}\Big]^{\frac{\delta}{(1+\delta)}} +
\check{\mathbb{E}}_{\theta}^{l-1,l}\Big[\big\|
G_{\theta,t}^{l-1}(X_{0:t}^{l-1,i})\big\|_2^{\frac{r(1+\delta)}{\delta}}\Big]^{\frac{\delta}{(1+\delta)}}\Big).
\end{align*}
Then applying Lemma \ref{lem:cccpf5} and combining with \eqref{eq:cccpft10} and \eqref{eq:cccpft9}, one can deduce that
\begin{equation} \label{eq:cccpft13}
T_2\leq C(\Delta_l^{\frac{1}{2}\wedge\beta})^{\frac{1}{r(1+\delta)}}
\Big(1+\sum_{p=1}^t\sum_{s=l-1}^l \big\|G_{\theta,p-1:p}^s(x_{p-1}^{s,N},x_{p-1+\Delta_s:p}^{s,N})\big\|_2^r\Big).
\end{equation}
Combining \eqref{eq:cccpft11}, \eqref{eq:cccpft12} and \eqref{eq:cccpft13} completes the proof. 
\end{proof}

\begin{corollary}\label{cor:cccpf4}
Under Assumptions~\ref{ass:D1} and \ref{ass:D2}, for any $(t,r,\theta,C')\in \{1,\dots,T\}\times[1,\infty)\times\Theta\times\mathbb{R}^+$, there exists a constant $C<\infty$ such that for any $(l,\beta,N,i,\delta)\in\mathbb{N}\times\mathbb{R}^+\times\{2,3,\dots\}\times\{1,\dots,N\}\times\mathbb{R}^+$ and any $(x_{0:T}^l,x_{0:T}^{l-1}) \in \mathsf{B}^l_{\beta,C'}\cap \mathsf{G}^l_{\beta,C'}$, it holds that
$$
\check{\mathbb{E}}_{\theta}^{l-1,l}\Big[\big\|X_{t}^{l,i}
-X_{t}^{l-1,i}\big\|_2^r\Big]^{1/r}\leq C(\Delta_l^{\frac{1}{2}\wedge\beta})^{\frac{1}{r(1+\delta)}}
\Big(1+\sum_{p=1}^t\sum_{s=l-1}^l \big\|x_p^{s,N}\big\|_2^r\Big).
$$
If $(\bar{x}_{0:T}^l,\bar{x}_{0:T}^{l-1})\in \mathsf{B}^l_{\beta,C'}\cap \mathsf{G}^l_{\beta,C'}$ also holds, then 
$$
\check{\mathbb{E}}_{\theta}^{l-1,l}\Big[\big\|\bar{X}_{t}^{l,i}
-\bar{X}_{t}^{l-1,i}\big\|_2^r\Big]^{1/r}\leq 
C(\Delta_l^{\frac{1}{2}\wedge\beta})^{\frac{1}{r(1+\delta)}}
\Big(1+\sum_{p=1}^t\sum_{s=l-1}^l \big\|\bar{x}_p^{s,N}\big\|_2^r\Big).
$$
\end{corollary}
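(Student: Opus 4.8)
The plan is to reproduce the proof of Lemma~\ref{lem:cccpf4} line by line, replacing each functional-level estimate by its state-level counterpart: Lemma~\ref{lem:cccpf1} takes the place of Lemma~\ref{lem:cccpf2}, and Corollary~\ref{cor:cccpf5} takes the place of Lemma~\ref{lem:cccpf5}. I would first insert the partition of unity $\mathbb{I}_{\mathsf{S}_{t-1}^l}(i)+\mathbb{I}_{(\mathsf{S}_{t-1}^l)^c}(i)=1$ and split, as in \eqref{eq:cccpft11},
\begin{equation*}
\check{\mathbb{E}}_{\theta}^{l-1,l}\Big[\big\|X_{t}^{l,i}-X_{t}^{l-1,i}\big\|_2^r\Big]^{1/r} \leq T_1 + T_2,
\end{equation*}
where $T_1$ and $T_2$ are the corresponding expectations restricted to $\mathsf{S}_{t-1}^l$ and its complement respectively (the triangle inequality for $\|\cdot\|_{\mathbb{L}_r}$ being applicable since $r\geq 1$).

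For $T_1$ I would use the exchangeability of the particle indices $i\in\{1,\dots,N-1\}$ to pass from the single-index expectation to the symmetrized average, obtaining $T_1 \leq C\,\check{\mathbb{E}}_{\theta}^{l-1,l}\big[\tfrac{1}{N}\sum_{i\in\mathsf{S}_{t-1}^l}\|X_t^{l,i}-X_t^{l-1,i}\|_2^r\big]^{1/r}$, with the index $i=N$ treated separately using the hypothesis $(x_{0:T}^l,x_{0:T}^{l-1})\in\mathsf{B}^l_{\beta,C'}$ exactly as in Lemma~\ref{lem:cccpf1}. Lemma~\ref{lem:cccpf1} then delivers $T_1 \leq C\Delta_l^{\frac{1}{2}\wedge\beta}$, the analogue of \eqref{eq:cccpft12}.

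For $T_2$ I would apply H\"older's inequality with the conjugate pair $\big(\tfrac{1+\delta}{\delta},\,1+\delta\big)$ to detach the indicator from a higher moment of the state increment, as in \eqref{eq:cccpft10}. The indicator factor is handled by Lemma~\ref{lem:cccpf3} via $\check{\mathbb{E}}_{\theta}^{l-1,l}[\mathbb{I}_{(\mathsf{S}_{t-1}^l)^c}(i)] = 1 - \check{\mathbb{E}}_{\theta}^{l-1,l}[\mathrm{Card}(\mathsf{S}_{t-1}^l)/N] \leq C\Delta_l^{\frac{1}{2}\wedge\beta}$, which supplies the prefactor $(\Delta_l^{\frac{1}{2}\wedge\beta})^{\frac{1}{r(1+\delta)}}$. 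The remaining $\tfrac{r(1+\delta)}{\delta}$-moment of $\|X_t^{l,i}-X_t^{l-1,i}\|_2$ is split by the triangle inequality and bounded at each level $s\in\{l-1,l\}$ through Corollary~\ref{cor:cccpf5}; subadditivity of $x\mapsto x^{\delta/(1+\delta)}$ then reduces the exponent back to $r$, yielding the term $1+\sum_{p=1}^t\sum_{s=l-1}^l\|x_p^{s,N}\|_2^r$. Combining the bounds on $T_1$ and $T_2$ gives the stated estimate, and the barred inequality follows identically.

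The argument is essentially bookkeeping and I foresee no genuine obstacle. The only points needing attention are the symmetrization underlying the $T_1$ bound together with the separate handling of $i=N$ through the $\mathsf{B}^l_{\beta,C'}$ constraint, and the legitimacy of invoking Corollary~\ref{cor:cccpf5} at the inflated moment order $\tfrac{r(1+\delta)}{\delta}$ --- which is valid because that corollary is established for every finite moment.
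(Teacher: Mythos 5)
Your proposal is correct and is essentially the paper's intended argument: the paper states Corollary~\ref{cor:cccpf4} without a written proof, the understanding being that one repeats the proof of Lemma~\ref{lem:cccpf4} verbatim with the state-level substitutes you identify (Lemma~\ref{lem:cccpf1} in place of Lemma~\ref{lem:cccpf2}, Corollary~\ref{cor:cccpf5} in place of Lemma~\ref{lem:cccpf5}, and Lemma~\ref{lem:cccpf3} for the indicator term). Your two points of care --- exchangeability over $i\in\{1,\dots,N-1\}$ with the frozen index $i=N$ handled directly through the $\mathsf{B}^l_{\beta,C'}$ hypothesis, and invoking Corollary~\ref{cor:cccpf5} at the inflated moment order $r(1+\delta)/\delta$ --- are exactly the right ones and are legitimate.
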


\begin{remark}\label{rem:extend_res1}
Using Remark \ref{rem:extend_res}, one can extend Lemma \ref{lem:cccpf4} using a similar argument to its proof
to obtain the same bound on
$$
\check{\mathbb{E}}_{\theta}^{l-1,l}\bigg[\Big\|
G_{\theta,t}^l(X_{0:t}^{l,A_t^{l,i}})
-G_{\theta,t}^{l-1}(X_{0:t}^{l-1,A_t^{l-1,i}})\Big\|_2^r\bigg]^{1/r}.
$$
A similar statement applies to Corollary \ref{cor:cccpf4}.
\end{remark}

We introduce the following sets, which will be of use later on in our proofs.  For $(t,l)\in\{0,\dots,T-1\}\times\mathbb{N}$, we define
\begin{align}
\label{eq:set_level_coup1}
\check{\mathsf{S}}_t^l = \{i\in\{1,\ldots, N-1\}: ~&A_{t}^l(i)=\bar{A}_{t}^{l}(i)\neq N,A_{t-1}^l\circ A_{t}^l(i)=
\bar{A}_{t-1}^{l}\circ \bar{A}_{t}^{l}(i)\neq N,\dots,\notag\\
&A_{0}^l\circ\cdots\circ A_{t}^l(i)=\bar{A}_{0}^{l}\circ\cdots\circ \bar{A}_{t}^{l}(i)\neq N\}
\end{align}
and
\begin{align}
\label{eq:set_level_coup2}
\check{\mathsf{S}}_t^{l-1} = \{i\in\{1,\ldots, N-1\}: ~&A_{t}^{l-1}(i)=\bar{A}_{t}^{l-1}(i)\neq N,A_{t-1}^{l-1}\circ A_{t}^{l-1}(i)=
\bar{A}_{t-1}^{l}\circ \bar{A}_{t}^{l}(i)\neq N,\dots,\notag\\
&A_{0}^{l-1}\circ\cdots\circ A_{t}^{l-1}(i)=\bar{A}_{0}^{l-1}\circ\cdots\circ \bar{A}_{t}^{l-1}(i)\neq N\}.
\end{align}
\begin{lemma}\label{lem:prob_lower}
Under Assumptions~\ref{ass:D1} and \ref{ass:D2}, for any $(t,\theta,N)\in \{0,\dots,T-1\}\times\Theta\times\{2,3,\dots\}$, there exists a constant $\varepsilon\in(0,1)$ such that for any $(l,i)\in\mathbb{N}\times\{1,\dots,N-1\}$ and any $((x_{0:T}^{l-1},\bar{x}_{0:T}^{l-1}),(x_{0:T}^l,\bar{x}_{0:T}^{l})) \in \mathsf{Z}^{l-1}\times \mathsf{Z}^{l}$, it holds that
$$
\check{\mathbb{E}}_{\theta}^{l-1,l}\big[\mathbb{I}_{\check{\mathsf{S}}_t^l}(i)\big]\wedge
\check{\mathbb{E}}_{\theta}^{l-1,l}\big[\mathbb{I}_{\check{\mathsf{S}}_t^{l-1}}(i)\big]
\geq \varepsilon.
$$
\end{lemma}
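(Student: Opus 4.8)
The plan is to reduce the statement to a uniform, per-resampling-step lower bound on the probability that a single free slot has its four ancestor indices coalesce to a common value different from the conditioned index $N$, and then to chain these bounds over the at most $t$ resampling times. The key simplification is that it suffices to work with the strong sufficient event
\[
\mathcal{E} = \bigcap_{s=1}^{t}\Big\{\text{for every free slot } n\in\{1,\dots,N-1\}: A_s^{l-1,n}=\bar A_s^{l-1,n}=A_s^{l,n}=\bar A_s^{l,n}\in\{1,\dots,N-1\}\Big\}.
\]
Since resampling occurs only at the integer observation times and the maps $A_s^{\cdot,N}=\bar A_s^{\cdot,N}=N$ are fixed by construction, a backward induction along the lineage of any $i\in\{1,\dots,N-1\}$ shows that on $\mathcal{E}$ every index appearing in the definitions of $\check{\mathsf{S}}_t^l$ and $\check{\mathsf{S}}_t^{l-1}$ stays in $\{1,\dots,N-1\}$ and the barred and unbarred (and cross-level) compositions agree. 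Hence $\mathcal{E}\subseteq\{i\in\check{\mathsf{S}}_t^l\}\cap\{i\in\check{\mathsf{S}}_t^{l-1}\}$ for every such $i$, and it is enough to bound $\check{\mathbb{E}}_\theta^{l-1,l}[\mathbb{I}_{\mathcal{E}}]$ from below. Demanding coalescence at \emph{all} free slots, rather than tracking the single future-determined lineage slot of $i$, is what makes the avoid-$N$ conditions automatic and the per-step probabilities uniform in the past.

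First I would establish the single-step bound. Fix an observation time $s$ and a free slot $n$. By Assumption~\ref{ass:D2}(iii) every normalized weight $w_s^{l,n},\bar w_s^{l,n},w_s^{l-1,n},\bar w_s^{l-1,n}$ lies in $[\underline{C}/(N\overline{C}),\overline{C}/(N\underline{C})]$, so the overlaps satisfy $o^a,\bar o^a\ge \underline{C}/(N\overline{C})$ for every $a$. Using \eqref{eqn:PMF_maximal} and that $o^a=\min\{w^{l-1,a},w^{l,a}\}$ forces the residual term to vanish on the diagonal, one has $R^{l-1,l}(a,a)=o^a$ and $\bar R^{l-1,l}(a,a)=\bar o^a$. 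In the ``otherwise'' branch of Algorithm~\ref{alg:maximal-maximal}, which realises the maximal coupling of $R^{l-1,l}$ and $\bar R^{l-1,l}$, drawing the common diagonal value $(a,a)$ forces all four indices to equal $a$, and this occurs with probability $\min\{o^a,\bar o^a\}$; summing over $a\neq N$ gives at least $(N-1)\underline{C}/(N\overline{C})$. A direct verification in the two special branches (Steps~2 and~3, where one level's weights are identical) yields a positive lower bound of the same type. Thus in all cases the per-slot probability that the four sampled indices coincide at a common value in $\{1,\dots,N-1\}$ is at least some constant $\varepsilon_0=\varepsilon_0(\underline{C},\overline{C},N)\in(0,1)$, uniformly in $l$, in the conditioned input trajectories, and in the current particle configuration.

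Next I would chain these bounds. Let $(\mathcal{F}_s)$ be the filtration generated by all Brownian increments and all ancestor draws up to and including the resampling at time $s$; the weights at time $s$ are $\mathcal{F}_{s-1}$-measurable together with the time-$s$ increments, hence known before the corresponding indices are drawn. Because, by Step~2h of Algorithm~\ref{alg:4-CCPF}, the four-tuples are sampled independently across the $N-1$ free slots given the weights, the time-$s$ component $\mathcal{E}_s$ of $\mathcal{E}$ satisfies $\check{\mathbb{E}}_\theta^{l-1,l}[\mathbb{I}_{\mathcal{E}_s}\mid\mathcal{F}_{s-1}]\ge \varepsilon_0^{\,N-1}$; intermediate (non-observation) steps contribute trivially since there the maps are the identity and send free slots to themselves. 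As this conditional bound is uniform in the past, the tower property gives $\check{\mathbb{E}}_\theta^{l-1,l}[\mathbb{I}_{\mathcal{E}}]\ge \varepsilon_0^{(N-1)t}$, and setting $\varepsilon=\varepsilon_0^{(N-1)t}\in(0,1)$ (with the convention $\varepsilon=1$ in the trivial case $t=0$) concludes; the constant depends only on $(t,\theta,N)$ through $\underline{C},\overline{C},N,t$, and on neither $l$ nor $i$ nor the inputs, as required.

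The main obstacle will be the single-step per-slot estimate, and specifically the verification that the four-way coalescence probability is bounded below in each of the three branches of Algorithm~\ref{alg:maximal-maximal}: the ``otherwise'' branch is transparent via the diagonal-overlap computation above, whereas the two branches handling identical weights require separate, though elementary, estimates built from the same weight bounds (for instance, in Step~2 one combines the across-level overlap from Step~1 with the acceptance probability $\min\{1,\bar w^{l,\cdot}/\bar w^{l-1,\cdot}\}\ge \underline{C}/\overline{C}$). The remaining, more conceptual point, already folded into the argument, is the choice of the sufficient event $\mathcal{E}$, which simultaneously delivers the avoid-$N$ requirement and a past-uniform conditional lower bound, thereby sidestepping the forward/backward measurability mismatch inherent in the lineage definition of $\check{\mathsf{S}}_t^l$.
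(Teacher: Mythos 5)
Your proof is correct, but it takes a genuinely different route from the paper's. The paper argues by induction on $t$, one level at a time: writing $\overline{\vartheta}_t^{(l)}(\cdot,\cdot)$ for the within-level marginal of the coupled resampling scheme, it expands $\check{\mathbb{E}}_{\theta}^{l-1,l}\big[\mathbb{I}_{\check{\mathsf{S}}_t^l}(i)\big]$ over the common ancestor value $j$ as $\check{\mathbb{E}}_{\theta}^{l-1,l}\big[\sum_{j=1}^{N-1}\mathbb{I}_{\check{\mathsf{S}}_{t-1}^l}(j)\,\overline{\vartheta}_t^{(l)}(j,j)\big]$, keeps only the $j=1$ term, and combines the uniform one-step estimate $\overline{\vartheta}_t^{(l)}(1,1)\geq C/N$ of \eqref{eq:prob_lower1} with the induction hypothesis; the sufficient event implicit in this argument is that the lineage of $i$ coalesces within the level \emph{and} passes through index $1$ at every step, giving a constant of order $(C/N)^t$. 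You instead construct a single global sufficient event---four-way coalescence at a common index in $\{1,\dots,N-1\}$ at every free slot and every resampling time---verify its inclusion in both $\{i\in\check{\mathsf{S}}_t^l\}$ and $\{i\in\check{\mathsf{S}}_t^{l-1}\}$, and bound its probability through a per-slot, branch-by-branch analysis of Algorithm~\ref{alg:maximal-maximal} (your diagonal computation $R^{l-1,l}(a,a)=o^a$, $\bar{R}^{l-1,l}(a,a)=\bar{o}^a$ from \eqref{eqn:PMF_maximal} is exactly right), chained via conditional independence across slots and the tower property, giving $\varepsilon_0^{(N-1)t}$. What the paper's route buys: a sharper constant (only one slot per step must do something specific), separate treatment per level, and a one-step bound \eqref{eq:prob_lower1} that is reused in the proof of Lemma~\ref{lem:diag_prob_kernel}. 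What yours buys: no induction on the lemma statement, both levels and the avoid-$N$ constraints handled in one stroke, and an explicit verification of the three branches of Algorithm~\ref{alg:maximal-maximal}, which the paper compresses into ``we have used Assumption~\ref{ass:D2}.'' One cosmetic slip: in Step 2 of Algorithm~\ref{alg:maximal-maximal} the acceptance probability is a ratio of \emph{normalized} weights, so the clean lower bound is $\underline{C}^2/\overline{C}^2$ rather than $\underline{C}/\overline{C}$; this affects only the constant. Both proofs ultimately rest on the same core fact, namely that Assumption~\ref{ass:D2}$(iii)$ confines every normalized weight to $[\underline{C}/(N\overline{C}),\overline{C}/(N\underline{C})]$, so each coupled resampling step has coalescence probability bounded below uniformly in $l$, the particle configuration, and the conditioned input trajectories.
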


\begin{proof}
To proceed we first introduce some notation. For $(t,l)\in\{1,\dots,T\}\times\mathbb{N}$, we define
\begin{eqnarray*}
\check{\vartheta}_t^l(i,j) & = &  \Bigg\{\frac{g_{\theta}(y_t|x_t^{l,i})}{\sum_{j_1=1}^N g_{\theta}(y_t|x_t^{l,j_1})}\wedge
\frac{g_{\theta}(y_t|x_t^{l-1,j})}{\sum_{j_1=1}^N g_{\theta}(y_t|x_t^{l-1,j_1})}\Bigg\}\mathbb{I}_{\{i\}}(j) + \\ & &
\Bigg(\frac{g_{\theta}(y_t|x_t^{l,i})}{\sum_{j_1=1}^N g_{\theta}(y_t|x_t^{l,j_1})} - 
\Bigg\{\frac{g_{\theta}(y_t|x_t^{l,i})}{\sum_{j_1=1}^N g_{\theta}(y_t|x_t^{l,j_1})}\wedge
\frac{g_{\theta}(y_t|x_t^{l-1,i})}{\sum_{j_1=1}^N g_{\theta}(y_t|x_t^{l-1,j_1})}\Bigg\}\Bigg)\times\\ & &
\Bigg(\frac{g_{\theta}(y_t|x_t^{l-1,j})}{\sum_{j_1=1}^N g_{\theta}(y_t|x_t^{l-1,j_1})} - 
\Bigg\{\frac{g_{\theta}(y_t|x_t^{l,j})}{\sum_{j_1=1}^N g_{\theta}(y_t|x_t^{l,j_1})}\wedge
\frac{g_{\theta}(y_t|x_t^{l-1,j})}{\sum_{j_1=1}^N g_{\theta}(y_t|x_t^{l-1,j_1})}\Bigg\}\Bigg)\times\\ & &
\Bigg(1 - 
\sum_{j_2=1}^N\Bigg\{\frac{g_{\theta}(y_t|x_t^{l,j_2})}{\sum_{j_1=1}^N g_{\theta}(y_t|x_t^{l,j_1})}\wedge
\frac{g_{\theta}(y_t|x_t^{l-1,j_2})}{\sum_{j_1=1}^N g_{\theta}(y_t|x_t^{l-1,j_1})}\Bigg\}\Bigg)^{-1}
\end{eqnarray*}
which is the maximal coupling of the resampling distributions across levels. 
We write $\check{\bar{\vartheta}}_t^l(i,j)$ when
one replaces $(x_{t}^{l,1:N},x_{t}^{l-1,1:N})$ with $(\bar{x}_{t}^{l,1:N},\bar{x}_{t}^{l-1,1:N})$. We will write the maximal coupling (in the above sense with independent residuals) of $\check{\vartheta}_t^l(j_1,j_2)$ and $\check{\bar{\vartheta}}_t^l(j_3,j_4)$ for $(j_1,\dots,j_4)\in\{1,\dots,N\}^4$, 
as $\overline{\vartheta}_t^l(j_1,\dots,j_4)$. We also define 
$\mathsf{D}^l=\{(x_{0:T},\bar{x}_{0:T})\in\mathsf{Z}^l:x_{0:T}= \bar{x}_{0:T}\}$ and
\begin{equation}\label{eq:res_l_marginal}
\overline{\vartheta}_t^{(l)}(j_1,j_3)  =  \mathbb{I}_{(\mathsf{D}^l)^c}(x_{0:T}^l,\bar{x}_{0:T}^l)\sum_{(j_2,j_4)\in\{1,\dots,N\}^2} \overline{\vartheta}_t^{(l)}(j_1,\dots,j_4) + 
\mathbb{I}_{\mathsf{D}^l}(x_{0:T}^l,\bar{x}_{0:T}^l)
\mathbb{I}_{\{j_1\}}(j_3)\frac{g_{\theta}(y_t|x_t^{l,j_1})}{\sum_{j=1}^N g_{\theta}(y_t|x_t^{l,j})}.
\end{equation}
$\overline{\vartheta}_t^{(l)}(j_1,j_3)$ is the distribution of the resampled indexes within a level under Algorithm~\ref{alg:maximal-maximal}. 
One can make a similar definition for $\overline{\vartheta}_t^{(l-1)}(j_2,j_4)$.

We give the proof in the case of $l$ only as the proof for $l-1$ is similar.
The proof is by induction on $t$ and the initial case $t=0$ is trivial by definition. 
For $t\geq 1$, we have
\begin{align*}
\check{\mathbb{E}}_{\theta}^{l-1,l}\big[\mathbb{I}_{\check{\mathsf{S}}_t^l}(i)\big] & =
\check{\mathbb{E}}_{\theta}^{l-1,l}\Big[\sum_{j=1}^{N-1} \mathbb{I}_{\check{\mathsf{S}}_{t-1}^l}(i) 
\overline{\vartheta}_t^{(l)}(j,j)
\Big] \\
& \geq \check{\mathbb{E}}_{\theta}^{l-1,l}\Big[\mathbb{I}_{\check{\mathsf{S}}_{t-1}^l}(1) 
\overline{\vartheta}_t^{(l)}(1,1)
\Big]\\
& \geq \check{\mathbb{E}}_{\theta}^{l-1,l}\Big[\mathbb{I}_{\check{\mathsf{S}}_{t-1}^l}(1)\Big]\frac{C}{N} \\
& \geq \varepsilon
\end{align*}
where we have used Assumption~\ref{ass:D2} to establish that 
\begin{equation}\label{eq:prob_lower1}
\overline{\vartheta}_t^{(l)}(1,1)\geq \frac{C}{N} 
\end{equation}
on the third line, and the induction hypothesis in the final line. This completes our proof.
\end{proof}

\subsubsection{Results associated to the entirety of Algorithm~\ref{alg:4-CCPF}} 
\label{sec:analysis_second}
We now consider Algorithm~\ref{alg:4-CCPF} in its entirety. 
We will denote expectation and probability w.r.t.\ a single step of the corresponding 4-CCPF kernel $\bar{M}_{\theta}^{l-1,l}$ 
by $\bar{\mathbb{P}}_{\theta}^{l-1,l}$ and $\bar{\mathbb{E}}_{\theta}^{l-1,l}$, respectively. 

\begin{corollary}\label{cor:cccpf_first_res}
Under Assumptions~\ref{ass:D1} and \ref{ass:D2}, for any $(T,r,\theta,C')\in \mathbb{N}\times[1,\infty)\times\Theta\times\mathbb{R}^+$, there exists a constant $C < \infty$ such that for any $(l,\beta,N,\delta)\in\mathbb{N}\times\mathbb{R}^+\times\{2,3,\dots\}\times\mathbb{R}^+$ and any $(x_{0:T}^l,x_{0:T}^{l-1}) \in \mathsf{B}^l_{\beta,C'}\cap \mathsf{G}^l_{\beta,C'}$, it holds that
$$
\bar{\mathbb{E}}_{\theta}^{l-1,1}\Big[\big\|G_{\theta}^l\big(X_{0:T}^{l,B_T^l}\big) - G_{\theta}^{l-1}\big(X_{0:T}^{l-1,B_T^{l-1}}\big)\big\|_2^r\Big]^{1/r}
\leq C(\Delta_l^{\frac{1}{2}\wedge\beta})^{\frac{1}{r(1+\delta)}}
\Big(1+\sum_{p=1}^T\sum_{s=l-1}^l \big\|G_{\theta,p-1:p}^s(x_{p-1}^{s},x_{p-1+\Delta_s:p}^{s})\big\|_2^r\Big).
$$
If $(\bar{x}_{0:T}^l,\bar{x}_{0:T}^{l-1})\in \mathsf{B}^l_{\beta,C'}\cap \mathsf{G}^l_{\beta,C'}$ also holds, then 
$$
\bar{\mathbb{E}}_{\theta}^{l-1,l}\Big[\big\|G_{\theta}^l\big(\bar{X}_{0:T}^{l,\bar{B}_T^l}\big) - G_{\theta}^{l-1}\big(\bar{X}_{0:T}^{l-1,\bar{B}_T^{l-1}}\big)\big\|_2^r\Big]^{1/r}
\leq C(\Delta_l^{\frac{1}{2}\wedge\beta})^{\frac{1}{r(1+\delta)}}
\Big(1+\sum_{p=1}^T\sum_{s=l-1}^l \big\|G_{\theta,p-1:p}^s(\bar{x}_{p-1}^{s},\bar{x}^{s}_{p-1+\Delta_s:p})\big\|_2^r\Big).
$$
\end{corollary}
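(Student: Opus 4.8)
The plan is to reduce Corollary~\ref{cor:cccpf_first_res} to Lemma~\ref{lem:cccpf4} by accounting for the single additional source of randomness present in the full kernel $\bar{M}_{\theta}^{l-1,l}$ but absent from Steps~(1)--(2), namely the terminal index selection in Step~(3a) of Algorithm~\ref{alg:4-CCPF}. Writing $G_{\theta}^l(X_{0:T}^{l,B_T^l}) = G_{\theta,T}^l(X_{0:T}^{l,B_T^l})$ and conditioning on the output of Steps~(1)--(2), the full expectation factorizes as
\begin{equation*}
\bar{\mathbb{E}}_{\theta}^{l-1,l}\big[\|G_{\theta,T}^l(X_{0:T}^{l,B_T^l}) - G_{\theta,T}^{l-1}(X_{0:T}^{l-1,B_T^{l-1}})\|_2^r\big] = \check{\mathbb{E}}_{\theta}^{l-1,l}\Big[\sum_{a,b=1}^N R^{l-1,l}(b,a)\,\big\|G_{\theta,T}^l(X_{0:T}^{l,a}) - G_{\theta,T}^{l-1}(X_{0:T}^{l-1,b})\big\|_2^r\Big],
\end{equation*}
where $R^{l-1,l}$ from \eqref{eqn:PMF_maximal} is the across-level maximal coupling governing the marginal law of $(B_T^{l-1},B_T^l)$, with overlap $o_T^n = \min\{w_T^{l,n},w_T^{l-1,n}\}$, mass $\mu_T=\sum_n o_T^n$, and product residuals. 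I would then split $R^{l-1,l}$ into its diagonal overlap part and its off-diagonal residual part, producing two contributions $D_1$ and $D_2$.

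For the overlap part $D_1$, on the diagonal we have $B_T^l=B_T^{l-1}=i$ with weight $o_T^i \le w_T^{l,i} \le C/N$ by the bound \eqref{eq:cccpft5} implied by Assumption~\ref{ass:D2}. Pulling out this uniform bound and summing gives $D_1 \le \tfrac{C}{N}\sum_{i=1}^N \check{\mathbb{E}}_{\theta}^{l-1,l}[\|G_{\theta,T}^l(X_{0:T}^{l,i}) - G_{\theta,T}^{l-1}(X_{0:T}^{l-1,i})\|_2^r]$, and Lemma~\ref{lem:cccpf4} applied uniformly in $i$ (its right-hand side is index-independent and coincides with the corollary's bound once $x^{s,N}$ is identified with the conditioning trajectory $x^s$) yields $D_1^{1/r} \le C(\Delta_l^{\frac12\wedge\beta})^{\frac{1}{r(1+\delta)}}(1+\sum_{p=1}^T\sum_{s=l-1}^l\|G_{\theta,p-1:p}^s(x_{p-1}^s, x_{p-1+\Delta_s:p}^s)\|_2^r)$.

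For the residual part $D_2$ the two indexes are independent, so the trajectories need not be close. Applying the $C_r$-inequality to split $\|G^l-G^{l-1}\|_2^r$ and summing out the free index, the residual normalization $1-\mu_T$ cancels and $D_2 \le C\check{\mathbb{E}}_{\theta}^{l-1,l}[\sum_a (w_T^{l,a}-o_T^a)\|G_{\theta,T}^l(X_{0:T}^{l,a})\|_2^r] + C\check{\mathbb{E}}_{\theta}^{l-1,l}[\sum_b (w_T^{l-1,b}-o_T^b)\|G_{\theta,T}^{l-1}(X_{0:T}^{l-1,b})\|_2^r]$. For each term I would apply a conditional Jensen step against the sub-probability weights $w_T^{l,a}-o_T^a$ (total mass $1-\mu_T$) to extract a factor $(1-\mu_T)^{1-1/m}$, use $w_T^{l,a}-o_T^a \le C/N$ to reduce the remaining weighted sum to $\tfrac{C}{N}\sum_a\|G_{\theta,T}^l(X_{0:T}^{l,a})\|_2^{rm}$, which is controlled in expectation by Lemma~\ref{lem:cccpf5}, and then an outer H\"older inequality to separate $\check{\mathbb{E}}_{\theta}^{l-1,l}[1-\mu_T]$ from the moment factor.

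The crux, and the step I expect to be the main obstacle, is the bound $\check{\mathbb{E}}_{\theta}^{l-1,l}[1-\mu_T] \le C(\Delta_l^{\frac12\wedge\beta})^{\gamma}$ on the expected residual mass. Since $1-\mu_T = \tfrac12\sum_n|w_T^{l,n}-w_T^{l-1,n}|$ and the weights are ratios of the bounded and Lipschitz observation densities of Assumption~\ref{ass:D2}, I would bound $|w_T^{l,n}-w_T^{l-1,n}|$ by $\tfrac{C}{N}\|X_T^{l,n}-X_T^{l-1,n}\|_2$ after controlling the normalizing constants, then split the sum over $n$ into the good set $\mathsf{S}_{T-1}^l$, handled by Lemma~\ref{lem:cccpf1}, and its complement, handled by the cardinality bound of Lemma~\ref{lem:cccpf3} together with the moment bound of Corollary~\ref{cor:cccpf5} via H\"older's inequality; this mirrors the overlap-mass argument underlying Lemma~\ref{lem:cccpf3}. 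Collecting $D_1$ and $D_2$ and choosing the H\"older exponents $m$ and $\delta$ so that the residual power dominates $\tfrac{1}{r(1+\delta)}$ concludes the proof, the second (barred) inequality following by an identical argument.
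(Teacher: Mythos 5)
Your strategy is sound, and it essentially makes explicit what the paper leaves implicit: the paper's own proof is a one-line pointer to Remark~\ref{rem:extend_res1}, which says that Lemmata~\ref{lem:cccpf2} and \ref{lem:cccpf5} extend to resampled indexes (Remark~\ref{rem:extend_res}) and that the proof of Lemma~\ref{lem:cccpf4} can then be repeated with $A_t^{l,i},A_t^{l-1,i}$ (hence $B_T^l,B_T^{l-1}$) in place of a fixed index. Your route is organized differently: you condition on Steps (1)--(2), identify the law of $(B_T^{l-1},B_T^l)$ explicitly as the maximal-coupling PMF $R^{l-1,l}$ of \eqref{eqn:PMF_maximal} (this identification is correct, since Step 1 of Algorithm~\ref{alg:maximal-maximal} draws the unbarred pair exactly from $\bar{\mathcal{R}}(w^{l-1,1:N},w^{l,1:N})$, and the barred pair has marginal $\bar{R}^{l-1,l}$ by the rejection construction), then treat the overlap part by applying Lemma~\ref{lem:cccpf4} as a black box via $o_T^i\le w_T^{l,i}\le C/N$, and the residual part by a residual-mass-times-moments argument. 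Both routes rest on the same mechanism; yours buys the convenience of reusing Lemma~\ref{lem:cccpf4} verbatim instead of re-running its proof, at the price of having to prove the residual-mass estimate separately. (A cosmetic point: the residual product term of $R^{l-1,l}$ also charges the diagonal, so your split is ``overlap part plus residual part'' rather than ``diagonal plus off-diagonal''; this changes nothing in the bounds.) Your exponent bookkeeping also closes: the overlap part directly yields $(\Delta_l^{\frac12\wedge\beta})^{\frac{1}{1+\delta}}$ times the $r$-th power of the claimed factor, and for the residual part choosing the Jensen exponent $m\ge(1+\delta)/\delta$ turns $\check{\mathbb{E}}_{\theta}^{l-1,l}[1-\mu_T]^{(m-1)/m}$ into the required rate, with the $rm$-th moments from Lemma~\ref{lem:cccpf5} reduced back to power $r$ via $(\sum_p a_p^{m})^{1/m}\le\sum_p a_p$.

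The one step that, as written, does not deliver the stated bound is your treatment of the complement of $\mathsf{S}_{T-1}^l$ inside the residual-mass estimate. Invoking Corollary~\ref{cor:cccpf5} there produces moment factors of the form $1+\sum_{p}\sum_{s}\|x_p^s\|_2^{\gamma}$, i.e.\ norms of the input trajectories; these survive into your bound on the residual contribution, whereas the right-hand side of the corollary contains only the terms $\|G_{\theta,p-1:p}^s(\cdot)\|_2^r$. So the argument as proposed proves an inequality of a different (weaker) shape, not the one claimed. The fix is simple and removes Corollary~\ref{cor:cccpf5} from the argument entirely: by Assumption~\ref{ass:D2}$(iii)$ every normalized weight is bounded by $C/N$ (this is \eqref{eq:cccpft5}), hence $|w_T^{l,n}-w_T^{l-1,n}|\le C/N$ deterministically, and on the bad set you should use this cap rather than the Lipschitz bound. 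This gives
$$
1-\mu_T \;=\; \tfrac12\sum_{n=1}^N\big|w_T^{l,n}-w_T^{l-1,n}\big| \;\le\; \frac{C}{N}\sum_{n\in\mathsf{S}_{T-1}^l}\big\|X_T^{l,n}-X_T^{l-1,n}\big\|_2 \;+\; \frac{C}{N}\,\mathrm{Card}\big((\mathsf{S}_{T-1}^l)^c\big),
$$
so that Lemma~\ref{lem:cccpf1} (with $r=1$) and Lemma~\ref{lem:cccpf3} yield $\check{\mathbb{E}}_{\theta}^{l-1,l}[1-\mu_T]\le C\Delta_l^{\frac12\wedge\beta}$ with no trajectory-norm terms. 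With this modification your proof goes through and gives exactly the stated bound; the barred inequality follows, as you say, by the identical argument applied to $\bar{R}^{l-1,l}$.
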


\begin{proof}
This follows from the discussion in Remark \ref{rem:extend_res1}.
\end{proof}

\begin{remark}\label{rem:extend_cccpf}
By following the discussion in Remark \ref{rem:extend_res1}, 
one can also extend Corollary \ref{cor:cccpf_first_res} to a bound of the type
$$
\bar{\mathbb{E}}_{\theta}^{l-1,l}\Big[\big\|G_{\theta,t}^l\big(X_{0:t}^{l,B_T^l}\big) - G_{\theta,t}^{l-1}\big(X_{0:t}^{l-1,B_T^{l-1}}\big)\big\|_2^r\Big]^{1/r}
\leq C(\Delta_l^{\frac{1}{2}\wedge\beta})^{\frac{1}{r(1+\delta)}}
\Big(1+\sum_{p=1}^t\sum_{s=l-1}^l \big\|G_{\theta,p-1:p}^s(x_{p-1}^{s},x_{p-1+\Delta_s:p}^{s})\big\|_2^r\Big),
$$
for $t\in\{1,\dots,T\}$.
\end{remark}

\begin{remark}\label{rem:g_bound_cccpf}
One also use the discussion of Remark \ref{rem:extend_res1} to extend Lemma \ref{lem:cccpf5} to
$$
\bar{\mathbb{E}}_{\theta}^{l-1,1}\Big[\big\|
G_{\theta,t}^l\big(X_{0:t}^{l,B_T^l}\big)\big\|_2^{r}
\Big] \leq 
C\Big(1+
\sum_{p=1}^{t}\big\|G_{\theta,p-1:p}^{l}(x_{p-1}^{l},x_{p-1+\Delta_{l}:p}^{l})\big\|_2^r
\Big),
$$
and similarly for $\bar{\mathbb{E}}_{\theta}^{l-1,l}\Big[\big\|
G_{\theta,t}^l\big(\bar{X}_{0:t}^{l,\bar{B}_T^l}\big)\big\|_2^{r}
\Big]$.
\end{remark}

\begin{corollary}\label{cor:cccpf_add}
Under Assumptions~\ref{ass:D1} and \ref{ass:D2}, for any $(t,r,\theta,C')\in \{1,\dots,T\}\times[1,\infty)\times\Theta\times\mathbb{R}^+$, there exists a constant $C<\infty$ such that for any $(l,\beta,N,i,\delta)\in\mathbb{N}\times\mathbb{R}^+\times\{2,3,\dots\}\times\{1,\dots,N\}\times\mathbb{R}^+$ and any $(x_{0:T}^l,x_{0:T}^{l-1}) \in \mathsf{B}^l_{\beta,C'}\cap \mathsf{G}^l_{\beta,C'}$, it holds that
$$
\bar{\mathbb{E}}_{\theta}^{l-1,1}\Big[\big\|X_{t}^{l,B_T^l}
-X_{t}^{l-1,B_T^{l-1}}\big\|_2^r\Big]^{1/r}\leq C(\Delta_l^{\frac{1}{2}\wedge\beta})^{\frac{1}{r(1+\delta)}}
\Big(1+\sum_{p=1}^t\sum_{s=l-1}^l \|x_p^{s}\|_2^r\Big).
$$
If $(\bar{x}_{0:T}^l,\bar{x}_{0:T}^{l-1})\in \mathsf{B}^l_{\beta,C'}\cap \mathsf{G}^l_{\beta,C'}$ also holds, then 
$$
\bar{\mathbb{E}}_{\theta}^{l-1,l}\Big[\big\|\bar{X}_{t}^{l,\bar{B}_T^l}
-\bar{X}_{t}^{l-1,\bar{B}_T^{l-1}}\big\|_2^r\Big]^{1/r}\leq 
C(\Delta_l^{\frac{1}{2}\wedge\beta})^{\frac{1}{r(1+\delta)}}
\Big(1+\sum_{p=1}^t\sum_{s=l-1}^l \|\bar{x}_p^{s}\|_2^r\Big).
$$
\end{corollary}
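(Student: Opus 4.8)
The statement is the state-valued counterpart of Corollary \ref{cor:cccpf_first_res}, in which the functional $G_{\theta,t}^l$ is replaced by the identity map on the state at time $t$. The plan is therefore to follow exactly the route that takes Lemma \ref{lem:cccpf4} to Corollary \ref{cor:cccpf_first_res}, but starting from the state-difference bound of Corollary \ref{cor:cccpf4} in place of the functional bound of Lemma \ref{lem:cccpf4}. First I would recall that, by the recursive definition of the trajectories given at the start of Appendix \ref{sec:analysis_first}, the selected output state $X_t^{l,B_T^l}$ is the value at time $t$ along the lineage whose terminal index is $B_T^l$; tracing this lineage backwards through Step (3c) of Algorithm \ref{alg:4-CCPF} expresses $X_t^{l,B_T^l}$ through repeated applications of the ancestral maps $A_{s_k}^{l,\cdot}$, and symmetrically at level $l-1$ via Step (3b).

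Next, I would invoke Remark \ref{rem:extend_res1}, which already asserts that the fixed-index bound of Corollary \ref{cor:cccpf4} persists when the index $i$ is replaced by an ancestral index $A_t^{l,i}$ (respectively $A_t^{l-1,i}$); iterating this observation along the backward lineage shows that the estimate is stable under the composition of ancestral maps defining the selected trajectory up to the terminal time. It then remains only to average over the terminal selection $(B_T^l,B_T^{l-1})$ drawn in Step (3a) from $\bar{\mathcal{R}}(w_T^{l-1,1:N},\bar{w}_T^{l-1,1:N},w_T^{l,1:N},\bar{w}_T^{l,1:N})$. Here I would use that, under Assumption \ref{ass:D2}(iii), the observation density is bounded above and below, so that each marginal selection probability of a given index is at most $C/N$, exactly as in \eqref{eq:cccpft5}. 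Summing the per-index bounds over $\{1,\dots,N\}$ against these weights reproduces, up to a change in the constant $C$, the right-hand side of Corollary \ref{cor:cccpf4}, giving the claimed estimate. The barred inequality is proved identically, replacing $(X,A,B)$ by $(\bar{X},\bar{A},\bar{B})$ throughout and appealing to the second halves of Corollaries \ref{cor:cccpf4} and \ref{cor:cccpf5} and of Remark \ref{rem:extend_res1}.

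The only genuine subtlety, and the step I expect to require the most care, is the cross-level coupling of the terminal indices: when $B_T^l \neq B_T^{l-1}$ the two selected lineages need not lie in the good set $\mathsf{S}_{T-1}^l$, so the state difference must be bounded without recourse to the on-good-set estimate of Lemma \ref{lem:cccpf1}. This is precisely the situation already absorbed into Corollary \ref{cor:cccpf4}, whose argument splits the expectation over $\mathsf{S}_{t-1}^l$ and its complement by H\"older's inequality, controls the complement probability through Lemma \ref{lem:cccpf3}, and dominates the off-good-set contribution using the uniform moment bound of Corollary \ref{cor:cccpf5}. I would therefore check that averaging over the terminal selection does not disturb this split; since the selection weights are uniformly $O(1/N)$, the $\mathbb{I}_{(\mathsf{S}_{T-1}^l)^c}$ contribution and the resulting rate $(\Delta_l^{\frac{1}{2}\wedge\beta})^{\frac{1}{r(1+\delta)}}$ are preserved, which completes the plan.
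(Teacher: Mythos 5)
Your proposal is correct and follows essentially the same route as the paper: the paper's own (one-line) proof derives the result from Corollary~\ref{cor:cccpf4} together with the selected-index extension machinery of Remarks~\ref{rem:extend_res1} and \ref{rem:g_bound_cccpf}, which is exactly what you reconstruct --- the fixed-index state bound of Corollary~\ref{cor:cccpf4} (whose good/bad-set H\"older split via Lemma~\ref{lem:cccpf3} and Corollary~\ref{cor:cccpf5} already absorbs the off-good-set lineages), extended to $B_T^l,B_T^{l-1}$ by averaging over the terminal resampling with the $C/N$ weight bound of \eqref{eq:cccpft5}.
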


\begin{proof}
This follows from Corollary \ref{cor:cccpf4} and the discussion in Remark \ref{rem:g_bound_cccpf}.
\end{proof}

\begin{lemma}\label{lem:cccpf_prob_trans}
Under Assumptions~\ref{ass:D1} and \ref{ass:D2}, for any $(T,\theta,C')\in \mathbb{N}\times\Theta\times\mathbb{R}^+$, there exists a constant $C<\infty$ such that for any $(l,\beta,N,\delta,\gamma)\in\mathbb{N}\times\mathbb{R}^+\times\{2,3,\dots\}\times\mathbb{R}^+\times(0,\frac{\frac{1}{2}\wedge\beta}{\beta(1+\delta)})$ and any $(x_{0:T}^l,x_{0:T}^{l-1}) \in \mathsf{B}^l_{\beta,C'}\cap \mathsf{G}^l_{\beta,C'}$, it holds that
$$
\bar{\mathbb{E}}_{\theta}^{l-1,1}\Big[\mathbb{I}_{(\mathsf{B}^l_{\beta,C'}\cap \mathsf{G}^l_{\beta,C'})^c}\big(X_{0:T}^{l,B_T^l},X_{0:T}^{l-1,B_T^{l-1}}\big)\Big]\leq 
C(\Delta_l)^{\frac{{\frac{1}{2}\wedge\beta}}{(1+\delta)}-\gamma\beta}
\Big(1+\sum_{p=1}^T\sum_{s=l-1}^l \big\{\|x_p^{s}\|_2^{\gamma}+\|G_{\theta,p-1:p}^{s}(x_{p-1}^{s},x_{p-1+\Delta_s:p}^{s})\|_2^{\gamma} \big\}\Big).
$$
If $(\bar{x}_{0:T}^l,\bar{x}_{0:T}^{l-1})\in \mathsf{B}^l_{\beta,C'}\cap \mathsf{G}^l_{\beta,C'}$ also holds, then 
$$
\bar{\mathbb{E}}_{\theta}^{l-1,l}\Big[\mathbb{I}_{(\mathsf{B}^l_{\beta,C'}\cap \mathsf{G}^l_{\beta,C'})^c}\big(\bar{X}_{0:T}^{l,\bar{B}_T^l},\bar{X}_{0:T}^{l-1,\bar{B}_T^{l-1}}\big)\Big]\leq 
C(\Delta_l)^{\frac{{\frac{1}{2}\wedge\beta}}{(1+\delta)}-\gamma\beta}
\Big(1+\sum_{p=1}^T\sum_{s=l-1}^l \big\{ \|\bar{x}_p^{s}\|_2^{\gamma}+\|G_{\theta,p-1:p}^{s}(\bar{x}_{p-1}^{s},\bar{x}_{p-1+\Delta_s:p}^{s})\|_2^{\gamma} \big\}\Big).
$$
\end{lemma}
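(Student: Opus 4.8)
The plan is to control the probability of exiting the good set $\mathsf{B}^l_{\beta,C'}\cap\mathsf{G}^l_{\beta,C'}$ by splitting the complement event according to which defining inequality is violated and at which time, then applying Markov's inequality at the small exponent $\gamma$ and feeding in the moment bounds already established in Corollary~\ref{cor:cccpf_add} (for the $\mathsf{B}$-type deviations) and in the extension of Corollary~\ref{cor:cccpf_first_res} recorded in Remark~\ref{rem:extend_cccpf} (for the $\mathsf{G}$-type deviations). As usual I would treat only the first inequality, the second being identical with barred quantities.

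First I would use a union bound. Since membership of $(X_{0:T}^{l,B_T^l},X_{0:T}^{l-1,B_T^{l-1}})$ in $\mathsf{B}^l_{\beta,C'}\cap\mathsf{G}^l_{\beta,C'}$ fails exactly when $\|X_t^{l,B_T^l}-X_t^{l-1,B_T^{l-1}}\|_2>C'\Delta_l^{\beta}$ or $\|G_{\theta,t}^l(X_{0:t}^{l,B_T^l})-G_{\theta,t}^{l-1}(X_{0:t}^{l-1,B_T^{l-1}})\|_2>C'\Delta_l^{\beta}$ for some $t\in\{1,\dots,T\}$, I obtain
\[
\bar{\mathbb{E}}_{\theta}^{l-1,l}\Big[\mathbb{I}_{(\mathsf{B}^l_{\beta,C'}\cap\mathsf{G}^l_{\beta,C'})^c}\big(X_{0:T}^{l,B_T^l},X_{0:T}^{l-1,B_T^{l-1}}\big)\Big]
\leq \sum_{t=1}^T\big( p_t^{\mathsf{B}} + p_t^{\mathsf{G}}\big),
\]
where $p_t^{\mathsf{B}}=\bar{\mathbb{P}}_{\theta}^{l-1,l}\big(\|X_t^{l,B_T^l}-X_t^{l-1,B_T^{l-1}}\|_2>C'\Delta_l^{\beta}\big)$ and $p_t^{\mathsf{G}}=\bar{\mathbb{P}}_{\theta}^{l-1,l}\big(\|G_{\theta,t}^l(X_{0:t}^{l,B_T^l})-G_{\theta,t}^{l-1}(X_{0:t}^{l-1,B_T^{l-1}})\|_2>C'\Delta_l^{\beta}\big)$. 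Applying Markov's inequality at exponent $\gamma$ gives $p_t^{\mathsf{B}}\leq (C')^{-\gamma}\Delta_l^{-\gamma\beta}\,\bar{\mathbb{E}}_{\theta}^{l-1,l}[\|X_t^{l,B_T^l}-X_t^{l-1,B_T^{l-1}}\|_2^{\gamma}]$, and similarly for $p_t^{\mathsf{G}}$. I then invoke Corollary~\ref{cor:cccpf_add} with $r=\gamma$ (admissible for $\gamma\in(0,1)$ by the H\"older extension of the $\mathbb{L}_r$-estimates noted at the start of the appendix), which after raising the stated $\mathbb{L}_{\gamma}$-bound to the power $\gamma$ yields $\bar{\mathbb{E}}_{\theta}^{l-1,l}[\|\cdot\|_2^{\gamma}]\leq C(\Delta_l^{\frac{1}{2}\wedge\beta})^{\frac{1}{1+\delta}}\big(1+\sum_{p=1}^t\sum_{s=l-1}^l\|x_p^s\|_2^{\gamma}\big)^{\gamma}$; the $\mathsf{G}$-terms are handled identically via Remark~\ref{rem:extend_cccpf}, producing the factor $\big(1+\sum_{p,s}\|G_{\theta,p-1:p}^s(x_{p-1}^s,x_{p-1+\Delta_s:p}^s)\|_2^{\gamma}\big)^{\gamma}$.

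The final step is to clear the outer exponent $\gamma$. Since $\gamma<\frac{\frac{1}{2}\wedge\beta}{\beta(1+\delta)}\leq\frac{1}{1+\delta}<1$ and each polynomial factor is bounded below by $1$, the elementary inequality $(1+a)^{\gamma}\leq 1+a$ for $a\geq 0$ and $\gamma\in(0,1)$ turns $(\cdots)^{\gamma}$ into the linear-in-moments factor appearing in the statement; combining the $\mathsf{B}$- and $\mathsf{G}$-contributions gives the single factor $\big(1+\sum_{p,s}\{\|x_p^s\|_2^{\gamma}+\|G_{\theta,p-1:p}^s(x_{p-1}^s,x_{p-1+\Delta_s:p}^s)\|_2^{\gamma}\}\big)$, and summing over $t$ absorbs $T$ into $C$. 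The only delicate points are arithmetic rather than conceptual: one must verify that the admissible range $\gamma<\frac{\frac{1}{2}\wedge\beta}{\beta(1+\delta)}$ makes the net exponent $\frac{\frac{1}{2}\wedge\beta}{1+\delta}-\gamma\beta$ strictly positive, so that the estimate genuinely decays with the level, and that $r=\gamma<1$ is legitimate in the cited moment bounds. I expect the main (minor) obstacle to be precisely this bookkeeping of exponents, ensuring that the power $\gamma$ produced by raising an $\mathbb{L}_\gamma$-inequality to the $\gamma$ is discarded cleanly rather than propagating an unwanted $\gamma^2$ onto the state norms.
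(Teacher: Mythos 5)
Your proposal is correct and matches the paper's own proof essentially step for step: the paper likewise splits the complement event over $t\in\{1,\dots,T\}$ and over the $\mathsf{B}$-type versus $\mathsf{G}$-type violations, applies Markov's inequality at exponent $\gamma$, and invokes Corollary~\ref{cor:cccpf_add} (for the state differences) and Remark~\ref{rem:extend_cccpf} (for the $G_{\theta,t}$ differences), absorbing the dependence on $T$ into the constant. Your explicit bookkeeping of the exponents, in particular that using the corollary at $r=\gamma$ keeps the $\Delta_l$-exponent at $\frac{\frac{1}{2}\wedge\beta}{1+\delta}$ and that $(1+a)^{\gamma}\leq 1+a$ linearizes the moment factor, is exactly the (unstated) arithmetic behind the paper's two displayed bounds.
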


\begin{proof}
We only consider the first inequality as it is the same proof for the second inequality.
For any $t\in\{1,\dots,T\}$, by Markov's inequality and Corollary \ref{cor:cccpf_add}, we have 
$$
\bar{\mathbb{P}}_{\theta}^{l-1,l}\Big( \big\| X_t^{l,B_T^l} - X_t^{l-1,B_T^{l-1}} \big\|_2 
>C'\Delta_l^{\beta} \Big) \leq 
C(\Delta_l)^{\frac{{\frac{1}{2}\wedge\beta}}{(1+\delta)}-\gamma\beta}
\Big(1+\sum_{p=1}^t\sum_{s=l-1}^l \|x_p^{s}\|_2^{\gamma}\Big).
$$
Similarly, for any $t\in\{1,\dots,T\}$, by Markov's inequality and the results discussed in Remark \ref{rem:extend_cccpf}
\begin{align*}
&\bar{\mathbb{P}}_{\theta}^{l-1,l}\Big( \big\| G_{\theta,t}^l\big(X_{0:t}^{l,B_T^l}\big) - G_{\theta,t}^{l-1}\big(X_{0:t}^{l-1,B_T^{l-1}}\big) \big\|_2
> C'\Delta_l^{\beta}\Big) \\ 
&\leq C(\Delta_l)^{\frac{{\frac{1}{2}\wedge\beta}}{(1+\delta)}-\gamma\beta}
\Big(1+\sum_{p=1}^t\sum_{s=l-1}^l \|G_{\theta,p-1:p}^{s}(x_{p-1}^{s},x_{p-1+\Delta_s:p}^{s})\|_2^{\gamma}\Big).
\end{align*}
Hence there exists a constant $C<\infty$ which depends on $T$ but not $l$ such that the result holds.
\end{proof}

We recall the definition of $\mathsf{D}^l=\{(x_{0:T},\bar{x}_{0:T})\in\mathsf{Z}^l:x_{0:T}= \bar{x}_{0:T}\}$.

\begin{lemma}\label{lem:diag_prob_kernel}
Under Assumptions~\ref{ass:D1} and \ref{ass:D2}, for any $(T,\theta,N)\in \mathbb{N}\times\Theta\times\{2,3,\dots\}$, there exists a constant $\varepsilon\in(0,1)$ such that for any $l \in \mathbb{N}$ and any $((x_{0:T}^l,\bar{x}_{0:T}^{l}),(x_{0:T}^{l-1},\bar{x}_{0:T}^{l-1}))\in \mathsf{Z}^l\times \mathsf{Z}^{l-1}$, it holds that
$$
\bar{\mathbb{E}}_{\theta}^{l-1,l}\big[\mathbb{I}_{\mathsf{D}^l}\big(X_{0:T}^{l,B_T^l},\bar{X}_{0:T}^{l,\bar{B}_T^l}\big)
\big]\wedge \bar{\mathbb{E}}_{\theta}^{l-1,l}\big[\mathbb{I}_{\mathsf{D}^{l-1}}
\big(X_{0:T}^{l-1,B_T^{l-1}},\bar{X}_{0:T}^{l-1,\bar{B}_T^{l-1}}\big)
\big] \geq \varepsilon.
$$
\end{lemma}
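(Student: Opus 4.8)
The plan is to exhibit, for each of the two levels, a single event whose probability is bounded below by a constant $\varepsilon\in(0,1)$ that is uniform in both the level $l$ and the input configuration, and on which the two output trajectories at that level are forced to coincide. For the level-$l$ marginal the event I would use is that a fixed non-reference index, say index $1$, has a fully coincident ancestral lineage that never visits the reference index $N$, and that this index is selected in both systems at the terminal resampling step, that is $\{1\in\check{\mathsf{S}}_{T-1}^l\}\cap\{B_T^l=\bar{B}_T^l=1\}$. The argument for the level-$(l-1)$ marginal is identical after replacing the level-$l$ objects by their level-$(l-1)$ counterparts (using $\check{\mathsf{S}}_{T-1}^{l-1}$ from \eqref{eq:set_level_coup2}), and the final $\varepsilon$ is the smaller of the two resulting constants.

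The first step is a deterministic \emph{faithfulness} observation: on the event $\{1\in\check{\mathsf{S}}_{T-1}^l\}\cap\{B_T^l=\bar{B}_T^l=1\}$ one has $X_{0:T}^{l,B_T^l}=\bar{X}_{0:T}^{l,\bar{B}_T^l}$, hence $\mathbb{I}_{\mathsf{D}^l}(X_{0:T}^{l,B_T^l},\bar{X}_{0:T}^{l,\bar{B}_T^l})=1$. I would prove this by forward induction over the observation times along the selected lineage $i_T=1,i_{T-1}=A_{T-1}^{l,1},\dots,i_0$. At time $0$ every particle equals $x_{\star}$ in both systems by Step (1a). For the inductive step, the definition \eqref{eq:set_level_coup1} of $\check{\mathsf{S}}_{T-1}^l$ guarantees that all ancestor indices along the lineage coincide across the two systems and differ from $N$; consequently each propagation is carried out by the same map $F_{\theta}^l$ driven by the common Brownian increments of Steps (2a)--(2b) from a common parent state, while the reference particle (whose state may differ between the systems) is never used. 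Thus state coincidence propagates from time $0$ up to time $T$.

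The second step is the probabilistic lower bound. Integrating out the terminal selection of Step (3a) and noting that both $\check{\mathsf{S}}_{T-1}^l$ and the level-$l$ resampling distribution $\overline{\vartheta}_T^{(l)}$ are determined by the randomness of Steps (1)--(2), I would write
\begin{align*}
\bar{\mathbb{E}}_{\theta}^{l-1,l}\big[\mathbb{I}_{\mathsf{D}^l}\big(X_{0:T}^{l,B_T^l},\bar{X}_{0:T}^{l,\bar{B}_T^l}\big)\big]
\geq \check{\mathbb{E}}_{\theta}^{l-1,l}\Big[\mathbb{I}_{\check{\mathsf{S}}_{T-1}^l}(1)\,\overline{\vartheta}_T^{(l)}(1,1)\Big],
\end{align*}
where $\overline{\vartheta}_T^{(l)}(1,1)$ is the probability that the terminal coupled resampling selects index $1$ in both systems. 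The bound \eqref{eq:prob_lower1}, which relies only on the uniform two-sided bounds on $g_{\theta}$ from Assumption~\ref{ass:D2}, gives $\overline{\vartheta}_T^{(l)}(1,1)\geq C/N$ for an arbitrary weight configuration; since this holds pointwise it passes through the expectation, yielding
\begin{align*}
\bar{\mathbb{E}}_{\theta}^{l-1,l}\big[\mathbb{I}_{\mathsf{D}^l}\big(X_{0:T}^{l,B_T^l},\bar{X}_{0:T}^{l,\bar{B}_T^l}\big)\big]
\geq \frac{C}{N}\,\check{\mathbb{E}}_{\theta}^{l-1,l}\big[\mathbb{I}_{\check{\mathsf{S}}_{T-1}^l}(1)\big]\geq \frac{C}{N}\,\varepsilon_0,
\end{align*}
where $\varepsilon_0$ is the constant supplied by Lemma~\ref{lem:prob_lower}. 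Running the same argument at level $l-1$ and taking $\varepsilon$ to be the minimum of the two constants $C\varepsilon_0/N$ completes the proof.

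The part I expect to require the most care is the faithfulness observation in the first step: one must verify that the purely combinatorial lineage-coincidence condition encoded by $\check{\mathsf{S}}_{T-1}^l$ genuinely forces path coincidence, which involves tracking the interaction between the common-Brownian-increment construction, the identity ancestry between successive observation times, and the special role of the reference index $N$ across the fine discretization grid. The remaining probabilistic content is a one-step extension of the induction already established in the proof of Lemma~\ref{lem:prob_lower}.
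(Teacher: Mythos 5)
Your proposal is correct and takes essentially the same route as the paper's proof: both lower-bound the diagonal probability by the event that a non-reference index lying in $\check{\mathsf{S}}_{T-1}^l$ is selected in both systems at the terminal coupled resampling, and then conclude via the pointwise bound \eqref{eq:prob_lower1} on $\overline{\vartheta}_T^{(l)}(1,1)$ together with Lemma~\ref{lem:prob_lower}, treating level $l-1$ identically and taking the minimum. The only difference is that you spell out the faithfulness induction (common Brownian increments, coincident non-reference lineages forcing path coincidence), which the paper asserts in a single sentence.
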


\begin{proof}
Recall the definition \eqref{eq:res_l_marginal} of $\overline{\vartheta}_t^{(l)}(j_1,j_3)$ in the proof of Lemma \ref{lem:prob_lower}. This can be extended to time $T$ using the same construction for both level $l$ and $l-1$
and will correspond to the marginal distributions of $(B_T^l,\bar{B}_T^l)$ and $(B_T^{l-1},\bar{B}_T^{l-1})$. We denote these two probability distributions
as $\overline{\vartheta}_t^{(l)}(B_T^l,\bar{B}_T^l)$ and $\overline{\vartheta}_t^{(l-1)}(B_T^{l-1},\bar{B}_T^{l-1})$. 
Also recall the definitions of $\check{\mathsf{S}}_t^l$ and $\check{\mathsf{S}}_t^{l-1}$ in \eqref{eq:set_level_coup1}-\eqref{eq:set_level_coup2}.

We give the proof for level $l$ only as the case of level $l-1$ is almost identical. We have the following inequalities
\begin{align*}
\bar{\mathbb{E}}_{\theta}^{l-1,l}\big[\mathbb{I}_{\mathsf{D}^l}\big(X_{0:T}^{l,B_T^l},\bar{X}_{0:T}^{l,\bar{B}_T^l}\big)\big] & \geq
\check{\mathbb{E}}_{\theta}^{l-1,l}\Big[\sum_{j=1}^{N-1}\mathbb{I}_{\check{\mathsf{S}}_{T-1}}(j)\overline{\vartheta}_T^{(l)}(j,j)\Big]\\
& \geq \check{\mathbb{E}}_{\theta}^{l-1,l}\Big[\mathbb{I}_{\check{\mathsf{S}}_{T-1}}(1)\overline{\vartheta}_T^{(l)}(1,1)\Big]\\
& \geq \varepsilon.
\end{align*}
In the first line, we have noted that for $(x_{0:T}^{l,B_T^l},\bar{x}_{0:T}^{l,\bar{B}_T^l})\in\mathsf{D}^l$ to occur, one must at least pick two equal indexes of pairs of particles at level $l$ which were equal at time step $T-1$ of Algorithm~\ref{alg:4-CCPF}. 
In the final line, we have used \eqref{eq:prob_lower1} and Lemma \ref{lem:prob_lower}. This concludes the proof. 
\end{proof}

\subsubsection{Results associated to the initialization}
\label{sec:analysis_third}
Recall from Section \ref{sec:unbiased_discretized_increment} that the two pairs of CPF chains on $\mathsf{Z}^{l-1}\times\mathsf{Z}^{l}$ 
are initialized by sampling pairs $(X_{0:T}^{l-1,\star},X_{0:T}^{l,\star})$ and $(\bar{X}_{0:T}^{l-1},\bar{X}_{0:T}^{l})$ 
independently from $\nu_{\theta}^{l-1,l}$, and sampling $(X_{0:T}^{l-1},X_{0:T}^l)\sim M_{\theta}^{l-1,l}(\cdot | X_{0:T}^{l-1,\star},X_{0:T}^{l,\star})$ 
using the ML-CPF in Algorithm \ref{alg:ML-CPF}. 
We will denote the law of the tuple $(X_{0:T}^{l-1}, \bar{X}_{0:T}^{l-1}, X_{0:T}^l, \bar{X}_{0:T}^l)$ under 
this initialization by $\check{\nu}_{\theta}^{l-1,l}$. 
Expectations w.r.t.\ $\nu_{\theta}^{l-1,l}$, $\check{\nu}_{\theta}^{l-1,l}$ and the ML-CPF kernel $M_{\theta}^{l-1,l}$ 
will be written as $\mathbb{E}_{\theta,\nu}^{l-1,l}$, $\check{\mathbb{E}}_{\theta,\nu}^{l-1,l}$ and $\mathbb{E}_{\theta}^{l-1,l}$, respectively.

%

\begin{lemma}\label{lem:ini_g_l}
Under Assumptions~\ref{ass:D1} and \ref{ass:D2}, for any $(t,r,\theta)\in \{1,\dots,T\}\times[1,\infty)\times\Theta$, there exists a constant $C<\infty$ such that for any $(l,\beta,N,\delta)\in\mathbb{N}\times(0,\frac{1}{2})\times\{2,3,\dots\}\times\mathbb{R}^+$, we have 
$$
\check{\mathbb{E}}_{\theta,\nu}^{l-1,l} \Big[\big\|G_{\theta,t}^l(X_{0:t}^{l}) - G_{\theta,t}^{l-1}(X_{0:t}^{l-1})\big\|_2^r\Big]^{1/r}
\leq C\Delta_l^{\frac{\beta}{r(1+\delta)}},
$$
and
$$
\check{\mathbb{E}}_{\theta,\nu}^{l-1,l} \Big[\big\|G_{\theta,t}^l(\bar{X}_{0:t}^{l}) - G_{\theta,t}^{l-1}(\bar{X}_{0:t}^{l-1})\big\|_2^r\Big]^{1/r}
\leq C\Delta_l^{\frac{1}{2}}.
$$
\end{lemma}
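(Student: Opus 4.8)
The two inequalities are of different nature. The barred pair $(\bar{X}^{l-1}_{0:T},\bar{X}^l_{0:T})$ is drawn directly from the common-Brownian-increment coupling $\nu^{l-1,l}_{\theta}$, so no resampling intervenes and one expects the full strong-error rate $\Delta_l^{1/2}$; the unbarred pair $(X^{l-1}_{0:T},X^l_{0:T})$ is instead the image of a starting pair $\sim\nu^{l-1,l}_{\theta}$ under one application of the ML-CPF kernel $M^{l-1,l}_{\theta}$, whose resampling step degrades the rate to $\Delta_l^{\beta/(r(1+\delta))}$. The plan is to dispatch the barred inequality by comparing both Euler trajectories to the continuous diffusion, and then to obtain the unbarred inequality by feeding the initialization estimates into the ML-CPF analysis of Section~\ref{sec:analysis_first}.

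For the second inequality, observe that under $\nu^{l-1,l}_{\theta}$ the trajectories $\bar{X}^l_{0:T}$ and $\bar{X}^{l-1}_{0:T}$ are genuine Euler discretizations at levels $l$ and $l-1$, both initialized at $x_{\star}$ and driven by a single Brownian motion, to which we also couple the continuous solution $X$ started at $x_{\star}$. Writing $G_{\theta,t}$ for the time-$t$ restriction of the continuous functional $G_{\theta}$ (integrals and sum truncated at $t$) and using the triangle inequality,
\[
\check{\mathbb{E}}_{\theta,\nu}^{l-1,l}\big[\|G_{\theta,t}^l(\bar{X}_{0:t}^{l}) - G_{\theta,t}^{l-1}(\bar{X}_{0:t}^{l-1})\|_2^r\big]^{1/r} \leq \check{\mathbb{E}}_{\theta,\nu}^{l-1,l}\big[\|G_{\theta,t}^l(\bar{X}_{0:t}^{l}) - G_{\theta,t}(X)\|_2^r\big]^{1/r} + \check{\mathbb{E}}_{\theta,\nu}^{l-1,l}\big[\|G_{\theta,t}(X) - G_{\theta,t}^{l-1}(\bar{X}_{0:t}^{l-1})\|_2^r\big]^{1/r}.
\]
The time-$t$, vector-valued analogue of Corollary~\ref{cor:diff1} (obtained exactly as in Lemma~\ref{lem:diff1} and Remark~\ref{rem:euler_grad}) bounds the first term by $C\Delta_l^{1/2}$ and the second by $C\Delta_{l-1}^{1/2}$; since $\Delta_{l-1}=2\Delta_l$, both are $O(\Delta_l^{1/2})$, giving the claimed rate.

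For the first inequality, I would replicate the dichotomy of Lemma~\ref{lem:cccpf4}. Fix a constant $C'$ and set $E=\{(X_{0:T}^{l,\star},X_{0:T}^{l-1,\star})\in\mathsf{B}^l_{\beta,C'}\cap\mathsf{G}^l_{\beta,C'}\}$, then split $\check{\mathbb{E}}_{\theta,\nu}^{l-1,l}[\|G_{\theta,t}^l(X_{0:t}^{l}) - G_{\theta,t}^{l-1}(X_{0:t}^{l-1})\|_2^r]$ according to $\mathbb{I}_E$ and $\mathbb{I}_{E^c}$. On $E$, conditioning on the starting pair and applying the ML-CPF form of the time-$t$ extension of Corollary~\ref{cor:cccpf_first_res} (valid because, as noted in Section~\ref{sec:unbiased_discretized_increment}, the unbarred marginal of the 4-CCPF kernel is exactly $M_{\theta}^{l-1,l}$, so Lemmata~\ref{lem:cccpf1}--\ref{lem:cccpf4} transfer verbatim), the conditional error is $O((\Delta_l^{\frac{1}{2}\wedge\beta})^{1/(r(1+\delta))})$ times a factor controlled by $1+\sum_{p,s}\|G_{\theta,p-1:p}^{s}(X_{p-1}^{s,\star},X_{p-1+\Delta_s:p}^{s,\star})\|_2^r$; since $\beta\in(0,\tfrac{1}{2})$ one has $\tfrac{1}{2}\wedge\beta=\beta$. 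On $E^c$, Hölder's inequality separates a high-order moment of $\|G_{\theta,t}^l(X^{l}) - G_{\theta,t}^{l-1}(X^{l-1})\|_2$ from $\check{\mathbb{E}}_{\theta,\nu}^{l-1,l}[\mathbb{I}_{E^c}]$; Markov's inequality together with \eqref{eq:eul_conv_strong} and Remark~\ref{rem:euler_grad} gives $\check{\mathbb{E}}_{\theta,\nu}^{l-1,l}[\mathbb{I}_{E^c}]\leq C\Delta_l^{q(\frac{1}{2}-\beta)}$ for arbitrarily large $q$, a strictly higher power of $\Delta_l$ than the rate on $E$. Adding the two contributions yields $\Delta_l^{\beta/(r(1+\delta))}$.

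The principal difficulty lies in controlling, uniformly in $l$, the moments of the starting-trajectory factors $\sum_{p,s}\|G_{\theta,p-1:p}^{s}(X_{p-1}^{s,\star},X_{p-1+\Delta_s:p}^{s,\star})\|_2^r$ (and of $\|X_p^{s,\star}\|_2$) under $\nu^{l-1,l}_{\theta}$, which are needed both to integrate the $E$-estimate against the initialization law and to furnish the high-moment factor in the $E^c$-estimate. Each such factor is a sum over $O(K_l)$ fine Euler increments, and its moments must not grow with $l$; establishing this requires the martingale-plus-remainder decomposition used in the proof of Lemma~\ref{lem:diff4}, now applied to the forward Euler dynamics under $\nu^{l-1,l}_{\theta}$, with the boundedness of $b_{\theta}$, $\rho_{\theta}$, $\Sigma^{-1}$ and their $\theta$-derivatives from Assumption~\ref{ass:D2} ensuring uniformity in the discretization level.
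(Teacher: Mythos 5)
Your proposal is correct and takes essentially the same route as the paper's proof: the barred inequality is precisely Remark \ref{rem:euler_grad} (which you re-derive via the triangle inequality through the continuous diffusion, i.e.\ the content of Lemma \ref{lem:diff3}), and for the unbarred inequality you use the paper's exact structure — conditioning on the initialization, splitting on whether $(X_{0:T}^{l,\star},X_{0:T}^{l-1,\star})\in\mathsf{B}^l_{\beta,C'}\cap\mathsf{G}^l_{\beta,C'}$, applying Remark \ref{rem:extend_cccpf} (with $\tfrac{1}{2}\wedge\beta=\beta$) on the good set, and H\"older plus Markov with high moments of the coupled Euler differences on the bad set, with the initialization factors and the high-moment term controlled by the Lemma \ref{lem:diff4} martingale-remainder argument and Remark \ref{rem:g_bound_cccpf}. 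The only cosmetic deviation is that the paper fixes the Markov exponent at $\alpha=1+\varrho$, yielding a bad-set contribution of order $\Delta_l^{\frac{1}{2}-\beta}$, whereas you let the exponent be arbitrarily large so that the bad-set term is of strictly higher order than the good-set rate.
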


\begin{proof}
The second inequality simply follows from Remark \ref{rem:euler_grad}, so we only consider the first.
We have
$$
\check{\mathbb{E}}_{\theta,\nu}^{l-1,l}\Big[\big\|G_{\theta,t}^l(X_{0:t}^{l}) - G_{\theta,t}^{l-1}(X_{0:t}^{l-1})\big\|_2^r\Big]^{1/r} = 
\mathbb{E}_{\theta,\nu}^{l-1,l}\Big[ \mathbb{E}_{\theta}^{l-1,l} \Big[\big\|G_{\theta,t}^l(X_{0:t}^{l}) - G_{\theta,t}^{l-1}(X_{0:t}^{l-1})\big\|_2^r\Big]
\Big]^{1/r}.
$$
We note that by construction
$$
\mathbb{E}_{\theta}^{l-1,l}\Big[\big\|G_{\theta,t}^l(X_{0:t}^{l}) - G_{\theta,t}^{l-1}(X_{0:t}^{l-1})\big\|_2^r\Big] = 
\bar{\mathbb{E}}_{\theta}^{l-1,l}\Big[\big\|G_{\theta,t}^l(X_{0:t}^{l}) - G_{\theta,t}^{l-1}(X_{0:t}^{l-1})\big\|_2^r\Big].
$$
We consider the decomposition
\begin{equation}\label{eq:cccpft14}
\check{\mathbb{E}}_{\theta,\nu}^{l-1,l}\Big[\big\|G_{\theta,t}^l(X_{0:t}^{l}) - G_{\theta,t}^{l-1}(X_{0:t}^{l-1})\big\|_2^r\Big] 
= T_1 + T_2,
\end{equation}
where
\begin{align*}
T_1 & = \mathbb{E}_{\theta,\nu}^{l-1,l}\Big[
\mathbb{I}_{\mathsf{B}^l_{\beta,C'}\cap \mathsf{G}^l_{\beta,C'}}(X_{0:T}^{l,\star},X_{0:T}^{l-1,\star})
\bar{\mathbb{E}}_{\theta}^{l-1,l}\Big[\big\|G_{\theta,t}^l(X_{0:t}^{l}) - G_{\theta,t}^{l-1}(X_{0:t}^{l-1})\big\|_2^r\Big]\Big],\\
T_2 & = \mathbb{E}_{\theta,\nu}^{l-1,l}\Big[
\mathbb{I}_{(\mathsf{B}^l_{\beta,C'}\cap \mathsf{G}^l_{\beta,C'})^c}(X_{0:T}^{l,\star},X_{0:T}^{l-1,\star})
\bar{\mathbb{E}}_{\theta}^{l-1,l}\Big[\big\|G_{\theta,t}^l(X_{0:t}^{l}) - G_{\theta,t}^{l-1}(X_{0:t}^{l-1})\big\|_2^r\Big]\Big],
\end{align*}
for any $0<C'<\infty$. We will deal with both terms separately.

For $T_1$, applying the result in Remark \ref{rem:extend_cccpf} gives
\begin{equation}\label{eq:cccpft18}
T_1 \leq C\Delta_l^{\frac{\beta}{(1+\delta)}}\Big(1 + 
\mathbb{E}_{\theta,\nu}^{l-1,l}\Big[
\sum_{p=1}^t\sum_{s=l-1}^l \big\|G_{\theta,p-1:p}^s(X_{p-1}^{s,\star},X_{p-1+\Delta_s:p}^{s,\star})\big\|_2^r
\Big]\Big).
\end{equation}
We can use boundedness properties of the appropriate terms along with the martingale-remainder methods in the proof of Lemma \ref{lem:diff4} to deduce that
\begin{equation}\label{eq:cccpft15}
T_1 \leq C\Delta_l^{\frac{\beta}{(1+\delta)}}.
\end{equation}

For $T_2$, applying H\"older's inequality for any $\varrho>0$ gives 
$$
T_2 \leq T_3 T_4,
$$
where
\begin{align*}
T_3 & = \mathbb{E}_{\theta,\nu}^{l-1,l}\Big[
\mathbb{I}_{(\mathsf{B}^l_{\beta,C'}\cap \mathsf{G}^l_{\beta,C'})^c}(X_{0:T}^{l,\star},X_{0:T}^{l-1,\star})\Big]^{\frac{1}{(1+\varrho)}},\\
T_4 & = \mathbb{E}_{\theta,\nu}^{l-1,l}\Big[
\bar{\mathbb{E}}_{\theta}^{l-1,l}\Big[\big\|G_{\theta,t}^l(X_{0:t}^{l}) - G_{\theta,t}^{l-1}(X_{0:t}^{l-1})\big\|_2^r\Big]^{\frac{(1+\varrho)}{\varrho}}\Big]^{\frac{\varrho}{(1+\varrho)}}.
\end{align*}
We now bound $T_3$ and $T_4$.
For any $t\in\{1,\dots,T\}$, using properties of the coupled Euler--Maruyama discretization and Markov's inequality, we have 
$$
\mathbb{P}_{\theta,\nu}^{l-1,l}\Big( \big\| X_t^{l,\star} - X_t^{l-1,\star} \big\|_2
>C'\Delta_l^{\beta}\Big) \leq C\Delta_l^{\alpha(\frac{1}{2}-\beta)}
$$
for any $\alpha>0$, where $\mathbb{P}_{\theta,\nu}^{l-1,l}$ denotes probability under $\nu_{\theta}^{l-1,l}$. 
Similarly, for any $t\in\{1,\dots,T\}$, it follows from Remark \ref{rem:euler_grad} that 
$$
\mathbb{P}_{\theta,\nu}^{l-1,l}\Big( \big\| G_{\theta,t}^l(X_{0:t}^{l,\star}) - G_{\theta,t}^{l-1}(X_{0:t}^{l-1,\star}) \big\|_2
>C'\Delta_l^{\beta}\Big) \leq C\Delta_l^{\alpha(\frac{1}{2}-\beta)}
$$
for any $\alpha>0$. 
Hence there exists a constant $C<\infty$, that depends on $T$ but not $l$, such that if $\alpha=(1+\varrho)$
\begin{equation}\label{eq:cccpft17}
T_3 \leq C\Delta_l^{(\frac{1}{2}-\beta)}.
\end{equation}
For $T_4$, one can use the results discussed in Remark \ref{rem:g_bound_cccpf}, along with the above argument
(below \eqref{eq:cccpft18}) to control terms such as $\mathbb{E}_{\theta,\nu}^{l-1,l}\big[
\sum_{p=1}^t\sum_{s=l-1}^l \big\|G_{\theta,p-1:p}^s(X_{p-1}^{s,\star},\dots,X_p^{s,\star})\big\|_2^r
\big]$ to deduce that $T_4\leq C$. 
Thus we have shown that
\begin{equation}\label{eq:cccpft16}
T_2 \leq C\Delta_l^{(\frac{1}{2}-\beta)}.
\end{equation}
Combining \eqref{eq:cccpft14}-\eqref{eq:cccpft16} completes the proof.
\end{proof}

\begin{lemma}\label{lem:ini_prob}
Under Assumptions~\ref{ass:D1} and \ref{ass:D2}, for any $(T,\theta,C',\delta,\gamma)\in \mathbb{N}\times\Theta\times\mathbb{R}^+\times\mathbb{R}^+\times(0,\frac{1}{(1+\delta)})$, there exists a constant $C<\infty$ such that for any $(l,\beta,N)\in\mathbb{N}\times(0,\frac{1}{2})\times\{2,3,\dots\}$
$$
\check{\mathbb{E}}_{\theta,\nu}^{l-1,l} \Big[
\mathbb{I}_{(\mathsf{B}^l_{\beta,C'}\cap \mathsf{G}^l_{\beta,C'})^c}(X_{0:T}^{l},X_{0:T}^{l-1})\Big] \leq
C(\Delta_l)^{\{\beta(\frac{1}{(1+\delta)}-\gamma)\}\wedge(\frac{1}{2}-\beta)},
$$
and
$$
\check{\mathbb{E}}_{\theta,\nu}^{l-1,l} \Big[
\mathbb{I}_{(\mathsf{B}^l_{\beta,C'}\cap \mathsf{G}^l_{\beta,C'})^c}(\bar{X}_{0:T}^{l},\bar{X}_{0:T}^{l-1})\Big] \leq
C\Delta_l^{\frac{1}{2}-\beta}.
$$
\end{lemma}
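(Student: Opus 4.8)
The plan is to prove the two inequalities separately, exploiting the fact that the barred trajectories $(\bar{X}_{0:T}^l,\bar{X}_{0:T}^{l-1})$ are drawn directly from the coupled Euler law $\nu_{\theta}^{l-1,l}$, whereas $(X_{0:T}^l,X_{0:T}^{l-1})$ additionally pass through one application of the ML-CPF kernel $M_{\theta}^{l-1,l}$. Throughout I write $E=\mathsf{B}^l_{\beta,C'}\cap\mathsf{G}^l_{\beta,C'}$.

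The second inequality I would argue directly. Using a union bound over $t\in\{1,\dots,T\}$, it suffices to control, for each $t$, the probabilities that $\|\bar{X}_t^l-\bar{X}_t^{l-1}\|_2>C'\Delta_l^{\beta}$ and that $\|G_{\theta,t}^l(\bar{X}_{0:t}^l)-G_{\theta,t}^{l-1}(\bar{X}_{0:t}^{l-1})\|_2>C'\Delta_l^{\beta}$. Since both barred chains start at $x_{\star}$ and are driven by common Brownian increments, Markov's inequality combined with the strong Euler estimate \eqref{eq:eul_conv_strong} (with vanishing initial discrepancy) and Remark~\ref{rem:euler_grad} bounds each such probability by a term of order $\Delta_l^{r(1/2-\beta)}$; taking $r=1$ and summing over $t$ gives the claimed $C\Delta_l^{1/2-\beta}$.

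For the first inequality I would condition on the ML-CPF input via the tower property,
\begin{equation*}
\check{\mathbb{E}}_{\theta,\nu}^{l-1,l}\big[\mathbb{I}_{E^c}(X_{0:T}^l,X_{0:T}^{l-1})\big]
= \mathbb{E}_{\theta,\nu}^{l-1,l}\Big[\mathbb{E}_{\theta}^{l-1,l}\big[\mathbb{I}_{E^c}(X_{0:T}^l,X_{0:T}^{l-1})\big]\Big],
\end{equation*}
and split the outer expectation according to whether the starting pair $(X_{0:T}^{l,\star},X_{0:T}^{l-1,\star})$ lies in $E$ or $E^c$, giving $T_1+T_2$. On the event that the input lies in $E$, I would control the inner expectation by Lemma~\ref{lem:cccpf_prob_trans}: since the two-marginal coupling of $M_{\theta}^{l-1}$ and $M_{\theta}^{l}$ induced across levels by the $4$-CCPF is exactly the ML-CPF (as explained in Section~\ref{sec:unbiased_discretized_increment}), the transition bound there transfers verbatim to $\mathbb{E}_{\theta}^{l-1,l}$, yielding — for $\beta<1/2$, where $\tfrac12\wedge\beta=\beta$ — a factor $\Delta_l^{\beta(1/(1+\delta)-\gamma)}$ times $1+\sum_{p,s}\{\|X_p^{s,\star}\|_2^{\gamma}+\|G_{\theta,p-1:p}^s(\cdot)\|_2^{\gamma}\}$. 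Taking the $\nu_{\theta}^{l-1,l}$-expectation, the moment terms are bounded by a finite constant uniformly in $l$ using the boundedness of $a_{\theta},\sigma$ and of the gradient coefficients in Assumption~\ref{ass:D2}, together with the martingale-remainder estimates of Lemma~\ref{lem:diff4}, so $T_1\le C\Delta_l^{\beta(1/(1+\delta)-\gamma)}$. For $T_2$ I would bound the inner expectation trivially by $1$, giving $T_2\le \mathbb{P}_{\theta,\nu}^{l-1,l}((X_{0:T}^{l,\star},X_{0:T}^{l-1,\star})\in E^c)$, which is precisely the quantity already bounded by $C\Delta_l^{1/2-\beta}$ in the proof of the second inequality. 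Adding the two contributions produces the stated exponent $\{\beta(1/(1+\delta)-\gamma)\}\wedge(1/2-\beta)$.

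The main obstacle is the treatment of $T_1$: one must both justify the transfer of Lemma~\ref{lem:cccpf_prob_trans} from the $4$-CCPF kernel $\bar{M}_{\theta}^{l-1,l}$ to $M_{\theta}^{l-1,l}$ through the coincidence of the across-level two-marginal couplings, and verify that the expectations $\mathbb{E}_{\theta,\nu}^{l-1,l}[\|X_p^{s,\star}\|_2^{\gamma}]$ and $\mathbb{E}_{\theta,\nu}^{l-1,l}[\|G_{\theta,p-1:p}^s(\cdot)\|_2^{\gamma}]$ coming from the right-hand side of that lemma are finite and uniformly bounded in $l$; the latter is exactly where the boundedness hypotheses of Assumption~\ref{ass:D2} and the martingale-remainder arguments enter, as in the control of the analogous terms in the proof of Lemma~\ref{lem:ini_g_l}. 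The hypothesis $\gamma\in(0,1/(1+\delta))$ is precisely what keeps the exponent $\beta(1/(1+\delta)-\gamma)$ positive while remaining inside the admissible range of $\gamma$ required to apply Lemma~\ref{lem:cccpf_prob_trans}.
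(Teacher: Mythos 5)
Your proposal is correct and follows essentially the same route as the paper's proof: the identical $T_1+T_2$ split of the tower-property decomposition according to whether the star input lies in $\mathsf{B}^l_{\beta,C'}\cap\mathsf{G}^l_{\beta,C'}$, the same application of Lemma~\ref{lem:cccpf_prob_trans} to the good event (justified, as in the paper, by the ML-CPF being the across-level two-marginal of the 4-CCPF) together with uniform-in-$l$ moment control of the right-hand side, and the same trivial bound plus Markov/strong-Euler estimate for $T_2$ and for the second inequality. The paper's argument is just a condensed version of what you wrote, deferring the barred-pair bound to the calculation at \eqref{eq:cccpft17} in Lemma~\ref{lem:ini_g_l}.
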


\begin{proof}
As the proof of the second inequality is contained within the calculations to obtain \eqref{eq:cccpft17}, we will only consider the first inequality. We have
\begin{equation}\label{eq:cccpft19}
\check{\mathbb{E}}_{\theta,\nu}^{l-1,l} \Big[
\mathbb{I}_{(\mathsf{B}^l_{\beta,C'}\cap \mathsf{G}^l_{\beta,C'})^c}(X_{0:T}^{l},X_{0:T}^{l-1})\Big]
\leq T_1 + T_2,
\end{equation}
where
\begin{align*}
T_1 & = \mathbb{E}_{\theta,\nu}^{l-1,l}\Big[
\bar{\mathbb{E}}_{\theta}^{l-1,l}\Big[\mathbb{I}_{(\mathsf{B}^l_{\beta,C'}\cap \mathsf{G}^l_{\beta,C'})^c}(X_{0:T}^{l},X_{0:T}^{l-1})\Big]
\mathbb{I}_{\mathsf{B}^l_{\beta,C'}\cap \mathsf{G}^l_{\beta,C'}}(X_{0:T}^{l,\star},X_{0:T}^{l-1,\star})
\Big], \\
T_2 & = \mathbb{E}_{\theta,\nu}^{l-1,l}\Big[\mathbb{I}_{(\mathsf{B}^l_{\beta,C'}\cap \mathsf{G}^l_{\beta,C'})^c}(X_{0:T}^{l,\star},X_{0:T}^{l-1,\star})\Big]. 
\end{align*}
By Lemma \ref{lem:cccpf_prob_trans}, we have
$$
T_1 \leq 
C(\Delta_l)^{\beta(\frac{1}{(1+\delta)}-\gamma)}
\mathbb{E}_{\theta,\nu}^{l-1,l}\Big[\Big(1+\sum_{p=1}^t\sum_{s=l-1}^l \{\|X_p^{s,\star}\|_2^{\gamma}+\big\|G_{\theta,p-1:p}^{s}(X_{p-1}^{s,\star},X_{p-1+\Delta_s:p}^{s,\star})\big\|_2^{\gamma}\}\Big)\Big].
$$
The expectation can be controlled using the argument below \eqref{eq:cccpft18}, so we have 
\begin{equation}\label{eq:cccpft20}
T_1 \leq 
C(\Delta_l)^{\beta(\frac{1}{(1+\delta)}-\gamma)}.
\end{equation}
For $T_2$, using the second inequality in the statement of the lemma
\begin{equation}\label{eq:cccpft21}
T_2\leq C\Delta_l^{\frac{1}{2}-\beta}.
\end{equation}
Combining \eqref{eq:cccpft19} with \eqref{eq:cccpft20} and \eqref{eq:cccpft21} concludes the proof.
\end{proof}

\subsubsection{Results associated to score estimation methodology} 
\label{sec:analysis_fourth}
We now study the two pairs of CPF chains $(X_{0:T}^{l-1}(i),\bar{X}_{0:T}^{l-1}(i))_{i=0}^{\infty}$ and 
$(X_{0:T}^{l}(i),\bar{X}_{0:T}^{l}(i))_{i=0}^{\infty}$ that are assumed to be run indefinitely even 
if both pairs of chains have met. We will denote the corresponding expectations by $\bar{\mathbb{E}}_{\theta}^{l-1,l}$.  
For any level $l\in\mathbb{N}_0$ and any probability measure $\pi$ defined on $\mathsf{X}^l$, 
we denote by $\mathbb{L}_2(\pi)$ the set of all measurable functions $\psi:\mathsf{X}^l\rightarrow\mathbb{R}$ such that 
$\pi(\psi^2)=\int_{\mathsf{X}^l}\psi(x)^2\pi(dx)\in(0,\infty)$. 
The following results will involve the smoothing distribution $\pi_{\theta}^l$ defined in \eqref{eqn:smoothing_distribution}. 

\begin{lemma}\label{lem:control_l2_mc}
Under Assumptions~\ref{ass:D1} and \ref{ass:D2}, for any $(T,\theta)\in \mathbb{N}\times\Theta$, there exists $(\varepsilon,C)\in(0,1)\times\mathbb{R}^+$ such that for any
$(l,N,i,\psi)\in\mathbb{N} \times\{2,3,\dots\}\times \mathbb{N}\times\mathbb{L}_2(\pi_{\theta}^l)$
$$
\bar{\mathbb{E}}_{\theta}^{l-1,l}[\psi(X_{0:T}^{l}(i))]\vee 
\bar{\mathbb{E}}_{\theta}^{l-1,l}[\psi(\bar{X}_{0:T}^{l}(i-1))]
\leq C\varepsilon^i\pi_{\theta}^l(\psi^2)^{1/2} + \pi_{\theta}^l(|\psi |).
$$
Also if $\psi\in\mathbb{L}_2(\pi_{\theta}^{l-1})$ then
$$
\bar{\mathbb{E}}_{\theta}^{l-1,l}[\psi(X_{0:T}^{l-1}(i))]\vee 
\bar{\mathbb{E}}_{\theta}^{l-1,l}[\psi(\bar{X}_{0:T}^{l-1}(i-1))]
\leq C\varepsilon^i\pi_{\theta}^{l-1}(\psi^2)^{1/2} + \pi_{\theta}^{l-1}(|\psi |).
$$
\end{lemma}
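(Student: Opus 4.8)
The plan is to reduce this statement, which nominally concerns the joint law of the four coupled chains, to a purely marginal ergodicity estimate for a single CPF chain. First I would note that $\bar{\mathbb{E}}_{\theta}^{l-1,l}[\psi(X_{0:T}^{l}(i))]$ and $\bar{\mathbb{E}}_{\theta}^{l-1,l}[\psi(\bar{X}_{0:T}^{l}(i-1))]$ depend only on the \emph{marginal} laws of $X_{0:T}^{l}(i)$ and $\bar{X}_{0:T}^{l}(i-1)$. As observed below \eqref{eqn:coupled4_chains}, each single CPF chain produced by the 4-CCPF evolves marginally as the Markov chain \eqref{eqn:cpf_chain} at its level; since both chains are initialized from $\nu_{\theta}^l$, the marginal law of $X_{0:T}^{l}(i)$ is $\nu_{\theta}^l(M_{\theta}^l)^i$ and that of $\bar{X}_{0:T}^{l}(i-1)$ is $\nu_{\theta}^l(M_{\theta}^l)^{i-1}$. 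The same reduction applies verbatim at level $l-1$. It therefore suffices to bound $\nu_{\theta}^l(M_{\theta}^l)^i(\psi)$; the $\bar{X}^l$ chain merely carries one fewer power of $M_{\theta}^l$, which is harmless since the lost factor $\varepsilon^{-1}$ is absorbed into $C$, and taking the larger of the two resulting bounds handles the $\vee$.

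The core estimate I would establish is an $\mathbb{L}_2(\pi_{\theta}^l)$ contraction for the centered kernel. Under Assumption~\ref{ass:D2}(iii) the observation weights are bounded ($\underline{C}\le g_{\theta}\le\overline{C}$), so $M_{\theta}^l$ is uniformly ergodic with a one-step Doeblin minorization $M_{\theta}^l(x_{0:T},\cdot)\ge(1-\varepsilon)\pi_{\theta}^l(\cdot)$ (cf.\ \citet{chopin2015particle,lindsten2015uniform,andrieu2018uniform}). Writing $M_{\theta}^l=(1-\varepsilon)\pi_{\theta}^l+\varepsilon R$ with $R$ a Markov kernel, invariance $\pi_{\theta}^l M_{\theta}^l=\pi_{\theta}^l$ forces $\pi_{\theta}^l R=\pi_{\theta}^l$; hence for any $\pi_{\theta}^l$-centered $\psi$ the image $M_{\theta}^l\psi=\varepsilon R\psi$ is again centered and, by Jensen's inequality, $\|M_{\theta}^l\psi\|_{\mathbb{L}_2(\pi_{\theta}^l)}\le\varepsilon\|\psi\|_{\mathbb{L}_2(\pi_{\theta}^l)}$. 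Iterating gives $\|(M_{\theta}^l)^i\bar\psi\|_{\mathbb{L}_2(\pi_{\theta}^l)}\le\varepsilon^i\|\bar\psi\|_{\mathbb{L}_2(\pi_{\theta}^l)}$ for $\bar\psi=\psi-\pi_{\theta}^l(\psi)$.

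I would then combine this contraction with the comparability of the initialization and the target. Since $d\pi_{\theta}^l/d\nu_{\theta}^l=p_{\theta}^l(y_{1:T})^{-1}\prod_{t=1}^T g_{\theta}(y_t|x_t)$ and $\underline{C}^T\le\prod_{t}g_{\theta}(y_t|x_t),\,p_{\theta}^l(y_{1:T})\le\overline{C}^T$, the density $d\nu_{\theta}^l/d\pi_{\theta}^l$ is bounded by $(\overline{C}/\underline{C})^T$ uniformly in $l$, so $\|d\nu_{\theta}^l/d\pi_{\theta}^l\|_{\mathbb{L}_2(\pi_{\theta}^l)}\le(\overline{C}/\underline{C})^{T/2}$. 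Using $(M_{\theta}^l)^i\psi-\pi_{\theta}^l(\psi)=(M_{\theta}^l)^i\bar\psi$ and Cauchy--Schwarz in $\mathbb{L}_2(\pi_{\theta}^l)$,
$$
|\nu_{\theta}^l(M_{\theta}^l)^i(\psi)-\pi_{\theta}^l(\psi)| = \big|\langle (M_{\theta}^l)^i\bar\psi,\, d\nu_{\theta}^l/d\pi_{\theta}^l\rangle_{\mathbb{L}_2(\pi_{\theta}^l)}\big| \le C\varepsilon^i\pi_{\theta}^l(\psi^2)^{1/2},
$$
where I used $\|\bar\psi\|_{\mathbb{L}_2(\pi_{\theta}^l)}\le\pi_{\theta}^l(\psi^2)^{1/2}$. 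Adding $\pi_{\theta}^l(\psi)\le\pi_{\theta}^l(|\psi|)$ gives the claimed bound, and the level $l-1$ case is identical with $\pi_{\theta}^{l-1}$, $\nu_{\theta}^{l-1}$, $M_{\theta}^{l-1}$ in place of their level-$l$ counterparts.

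The main obstacle is the second paragraph: securing the minorization of $M_{\theta}^l$ with a constant $\varepsilon$ \emph{independent of $l$}. This uniformity is the essential point and rests on the fact that the discretization level only refines the mutation dynamics, leaving the number of weighting and resampling operations (exactly $T$, one per observation time) and the weight bounds $\underline{C},\overline{C}$ untouched; the intervening mutation kernels integrate out and do not enter the minorization constant. I would make this dependence explicit when invoking the cited uniform ergodicity results, so that $\varepsilon$ and $C$ may legitimately be taken uniform over $l\in\mathbb{N}$ as the statement requires.
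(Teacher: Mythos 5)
Your proposal is correct and follows essentially the same route as the paper's proof: reduce to the marginal law of each single chain (the CPF chain \eqref{eqn:cpf_chain} started from $\nu_{\theta}^l$, with the $\bar{X}$ chain lagged by one step), apply a uniform-in-$l$ $\mathbb{L}_2$ ergodicity bound for the CPF kernel $M_{\theta}^l$, and control the initialization via the bounded density $d\nu_{\theta}^l/d\pi_{\theta}^l$ coming from Assumption~\ref{ass:D2}(iii). The only difference is presentational: the paper invokes \citet[Theorem 1b]{andrieu2018uniform} as a black box for this $\mathbb{L}_2$ bound, whereas you re-derive it from the Doeblin minorization through the decomposition $M_{\theta}^l=(1-\varepsilon)\pi_{\theta}^l+\varepsilon R$, Jensen's inequality and Cauchy--Schwarz, which is a sound (and self-contained) rendering of the same underlying fact.
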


\begin{proof}
We will prove the result for $\bar{\mathbb{E}}_{\theta}^{l-1,l}[\psi(X_{0:T}^{l}(i))]$ only. The other results can be obtained in a similar way.

Marginally, the sequence $(X_{0:T}^l(i))_{i=1}^{\infty}$ is a Markov chain that has the initial distribution
$$
\check{\nu}_{\theta}^l(dx_{0:T}) = \int_{\mathsf{X}^l} \nu_{\theta}^l(dx_{0:T}^{\star})
M_{\theta}^l(dx_{0:T}|x_{0:T}^{\star})dx_{0:T}^{\star}
$$ 
and Markov transition kernel $M_{\theta}^l$ as described in Algorithm~\ref{alg:CPF}. 
By \citet[Theorem 1b]{andrieu2018uniform}, one has 
$$
\big|\bar{\mathbb{E}}_{\theta}^{l-1,l}[\psi(X_{0:T}^{l}(i))] -\pi_{\theta}^l(\psi) \big| \leq \Big(\int_{\mathsf{X}^l}
\frac{\nu_{\theta}^l(x_{0:T}^{\star})}{\pi_{\theta}^l(x_{0:T}^{\star})}
\nu_{\theta}^l(dx_{0:T}^{\star})\Big)
\varepsilon^{i+1}\pi_{\theta}^l(\psi^2)^{1/2},
$$
where we note that the extra power in $\varepsilon$ follows as $X_{0:T}^l(0)\sim\nu_{\theta}^l$. 
Using Assumptions~\ref{ass:D2}, it follows that
$$
\big|\bar{\mathbb{E}}_{\theta}^{l-1,l}[\psi(X_{0:T}^{l}(i))] - \pi_{\theta}^l(\psi)\big| \leq C\varepsilon^i\pi_{\theta}^l(\psi^2)^{1/2},
$$
and from here the proof is easily completed.
\end{proof}

\begin{lemma}\label{lem:mc_first_lem}
Under Assumptions~\ref{ass:D1} and \ref{ass:D2}, for any $(T,r,\theta,C')\in \mathbb{N}\times[1,\infty)\times\Theta\times\mathbb{R}^+$, there exists a constant $C<\infty$ such that for any $(l,\beta,N,\delta,i)\in\mathbb{N} \times\mathbb{R}^+\times\{2,3,\dots\}\times\mathbb{R}^+\times \mathbb{N}$
$$
\bar{\mathbb{E}}_{\theta}^{l-1,l}\Big[\big\|G_{\theta}^l(X_{0:T}^{l}(i)) - G_{\theta}^{l-1}(X_{0:T}^{l-1}(i))\big\|_2^r
\mathbb{I}_{\mathsf{B}^l_{\beta,C'}\cap \mathsf{G}^l_{\beta,C'}}(X_{0:T}^{l}(i-1),X_{0:T}^{l-1}(i-1))
\Big]^{1/r}
\leq C(\Delta_l^{\frac{1}{2}\wedge\beta})^{\frac{1}{r(1+\delta)}},
$$
and
$$
\bar{\mathbb{E}}_{\theta}^{l-1,l}\Big[\big\|G_{\theta}^l(\bar{X}_{0:T}^{l}(i)) - G_{\theta}^{l-1}(\bar{X}_{0:T}^{l-1}(i))\big\|_2^r
\mathbb{I}_{\mathsf{B}^l_{\beta,C'}\cap \mathsf{G}^l_{\beta,C'}}(\bar{X}_{0:T}^{l}(i-1),\bar{X}_{0:T}^{l-1}(i-1))
\Big]^{1/r}
\leq C(\Delta_l^{\frac{1}{2}\wedge\beta})^{\frac{1}{r(1+\delta)}}.
$$
\end{lemma}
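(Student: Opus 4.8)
The plan is to condition on the state of the two pairs of chains at iteration $i-1$ and to combine the Markov property with Corollary \ref{cor:cccpf_first_res}. I treat only the first inequality, the second being identical by the symmetry of Algorithm \ref{alg:4-CCPF} in the two pairs of chains. Let $\mathscr{G}_{i-1}$ denote the $\sigma$-algebra generated by the chains up to and including iteration $i-1$, and observe that $\mathbb{I}_{\mathsf{B}^l_{\beta,C'}\cap \mathsf{G}^l_{\beta,C'}}(X_{0:T}^{l}(i-1),X_{0:T}^{l-1}(i-1))$ is $\mathscr{G}_{i-1}$-measurable. By \eqref{eqn:coupled4_chains}, conditionally on $\mathscr{G}_{i-1}$ the law of $(X_{0:T}^{l-1}(i),X_{0:T}^{l}(i))$ is that of the ML-CPF kernel $M_{\theta}^{l-1,l}$ started from the conditioning trajectories $(X_{0:T}^{l-1}(i-1),X_{0:T}^{l}(i-1))$, whose relevant marginal is exactly the law under $\bar{\mathbb{E}}_{\theta}^{l-1,l}$ with these trajectories taken as the input $(x_{0:T}^{l-1},x_{0:T}^l)$.

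First I would use the tower property to write the left-hand side as $\bar{\mathbb{E}}_{\theta}^{l-1,l}[\mathbb{I}_{\mathsf{B}^l_{\beta,C'}\cap \mathsf{G}^l_{\beta,C'}}(X_{0:T}^{l}(i-1),X_{0:T}^{l-1}(i-1))\,\Xi_{i-1}]$, where $\Xi_{i-1}=\bar{\mathbb{E}}_{\theta}^{l-1,l}[\|G_{\theta}^l(X_{0:T}^{l}(i)) - G_{\theta}^{l-1}(X_{0:T}^{l-1}(i))\|_2^r\mid\mathscr{G}_{i-1}]$. On the event $\mathsf{B}^l_{\beta,C'}\cap \mathsf{G}^l_{\beta,C'}$ the conditioning trajectories lie in $\mathsf{B}^l_{\beta,C'}\cap \mathsf{G}^l_{\beta,C'}$, so raising Corollary \ref{cor:cccpf_first_res} to the $r$th power bounds $\Xi_{i-1}$ by $C(\Delta_l^{\frac{1}{2}\wedge\beta})^{\frac{1}{1+\delta}}(1+\Phi_{i-1})^r$, where $\Phi_{i-1}=\sum_{p=1}^T\sum_{s=l-1}^l\|G_{\theta,p-1:p}^s(X_{p-1}^{s}(i-1),X_{p-1+\Delta_s:p}^{s}(i-1))\|_2^r$ is $\mathscr{G}_{i-1}$-measurable. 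Bounding the indicator by $1$, this reduces the claim to showing that $\bar{\mathbb{E}}_{\theta}^{l-1,l}[(1+\Phi_{i-1})^r]$ is bounded by a constant independent of $l$ and $i$; the desired estimate then follows upon taking the $1/r$ power.

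To control $\bar{\mathbb{E}}_{\theta}^{l-1,l}[(1+\Phi_{i-1})^r]$, I would first use the power-mean inequality and the finiteness of the number of summands ($2T$) to reduce to bounding $\bar{\mathbb{E}}_{\theta}^{l-1,l}[\|G_{\theta,p-1:p}^s(X_{p-1}^{s}(i-1),X_{p-1+\Delta_s:p}^{s}(i-1))\|_2^{r'}]$ for a fixed finite power $r'$. Marginally, $(X_{0:T}^{s}(i))_{i}$ is the uniformly ergodic CPF Markov chain with invariant law $\pi_{\theta}^s$, so applying Lemma \ref{lem:control_l2_mc} with $\psi=\|G_{\theta,p-1:p}^s(\cdot)\|_2^{r'}$ bounds each such term by $C\varepsilon^{i-1}\pi_{\theta}^s(\psi^2)^{1/2}+\pi_{\theta}^s(|\psi|)$. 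Since the observation density is bounded above and below (Assumption \ref{ass:D2}(iii)), $\pi_{\theta}^s$ has a Radon--Nikodym derivative with respect to $\nu_{\theta}^s$ that is bounded uniformly in $s$, and the required moments of $G_{\theta,p-1:p}^s$ under $\nu_{\theta}^s$ are finite and uniformly bounded in $s$ by the martingale--remainder arguments from the proof of Lemma \ref{lem:diff4} (decomposing $G_{\theta,p-1:p}^s$ into a martingale term handled by the Burkholder--Davis--Gundy inequality and a bounded remainder). As $\varepsilon^{i-1}\leq 1$, this produces a constant uniform in both $i$ and $l$.

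The main obstacle is precisely this last step: taming the random prefactor $1+\Phi_{i-1}$ produced by Corollary \ref{cor:cccpf_first_res} uniformly over all iterations $i$ and all levels $l$. This forces us to combine the uniform ergodicity of the CPF chain (Lemma \ref{lem:control_l2_mc}), which strips the $i$-dependence up to the geometrically decaying factor $\varepsilon^{i-1}$, with uniform-in-$l$ moment bounds on the discretized smoothing functional under $\pi_{\theta}^s$ --- the latter being exactly where the martingale--remainder machinery developed in Section \ref{app:diff_proc} is needed.
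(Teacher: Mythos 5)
Your proof is correct and follows essentially the same route as the paper's: condition on iteration $i-1$, apply Corollary \ref{cor:cccpf_first_res} on the event defined by the indicator, and then bound the resulting moment of the random prefactor uniformly in $i$ and $l$ via Lemma \ref{lem:control_l2_mc}, the bound of $\pi_{\theta}^s$-expectations by $\nu_{\theta}^s$-expectations from Assumption \ref{ass:D2}, and the martingale--remainder moment estimates from the proof of Lemma \ref{lem:diff4}. If anything, your bookkeeping of the conditioning $\sigma$-algebra and of the power on the prefactor $(1+\Phi_{i-1})^r$ is more explicit than the paper's terser presentation of the same chain of estimates.
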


\begin{proof}
We only consider the first inequality as it is the same proof for the second inequality. By Corollary \ref{cor:cccpf_first_res},
we have the upper-bound
\begin{align*}
&\bar{\mathbb{E}}_{\theta}^{l-1,l}\Big[\big\|G_{\theta}^l(X_{0:T}^{l}(i)) - G_{\theta}^{l-1}(X_{0:T}^{l-1}(i))\big\|_2^r
\mathbb{I}_{\mathsf{B}^l_{\beta,C'}\cap \mathsf{G}^l_{\beta,C'}}(X_{0:T}^{l}(i-1),X_{0:T}^{l-1}(i-1))
\Big]^{1/r}\\
&\leq C(\Delta_l^{\frac{1}{2}\wedge\beta})^{\frac{1}{r(1+\delta)}}
\bar{\mathbb{E}}_{\theta}^{l-1,l}
\Big[\Big(1+\sum_{p=1}^T\sum_{s=l-1}^l \|G_{\theta,p-1:p}^s(X_{p-1}^{s}(i-1),X_{p-1+\Delta_s:p}^{s}(i-1))\|_2^r\Big)\Big].
\end{align*}
Note that $\|G_{\theta,p-1:p}^s(X_{p-1}^{s},X_{p-1+\Delta_s:p}^{s})\|_2^r\in\mathbb{L}_2(\pi_{\theta}^s)$ (see the argument below \eqref{eq:cccpft18}). In addition the expectation of the square of this function w.r.t.\ $\pi_{\theta}^s$ is bounded uniformly in $s$ (one can use Assumptions~\ref{ass:D2} to upper-bound expectations w.r.t.\ $\pi_{\theta}^s$ by expectations w.r.t.\ $\nu_{\theta}^s$). 
Hence using Lemma \ref{lem:control_l2_mc}, we obtain
\begin{equation}\label{eq:cccpft22}
\bar{\mathbb{E}}_{\theta}^{l-1,l}
\Big[\Big(1+\sum_{p=1}^T\sum_{s=l-1}^l \big\|G_{\theta,p-1:p}^s(X_{p-1}^{s}(i-1),X_{p-1:\Delta_s:p}^{s}(i-1))\big\|_2^r\Big)\Big] \leq C,
\end{equation}
which allows us to conclude the proof. 
\end{proof}

\begin{lemma}\label{lem:mc_second_lem}
Under Assumptions~\ref{ass:D1} and \ref{ass:D2}, for any $(T,\theta,C',\beta,\delta,\gamma)\in \mathbb{N}\times\Theta\times(\mathbb{R}^+)^3
\times(0,\frac{\frac{1}{2}\wedge\beta}{\beta(1+\delta)})$, there exists a constant $C<\infty$ such that for any $(l,N,i)\in\mathbb{N}\times\{2,3,\dots\}\times\mathbb{N}$
$$
\bar{\mathbb{E}}_{\theta}^{l-1,l}\Big[
\mathbb{I}_{(\mathsf{B}^l_{\beta,C'}\cap \mathsf{G}^l_{\beta,C'})^c}(X_{0:T}^{l}(i),X_{0:T}^{l-1}(i))
\mathbb{I}_{\mathsf{B}^l_{\beta,C'}\cap \mathsf{G}^l_{\beta,C'}}(X_{0:T}^{l}(i-1),X_{0:T}^{l-1}(i-1))
\Big] \leq C(\Delta_l)^{\frac{{\frac{1}{2}\wedge\beta}}{(1+\delta)}-\gamma\beta},
$$
and
$$
\bar{\mathbb{E}}_{\theta}^{l-1,l}\Big[
\mathbb{I}_{(\mathsf{B}^l_{\beta,C'}\cap \mathsf{G}^l_{\beta,C'})^c}(\bar{X}_{0:T}^{l}(i),\bar{X}_{0:T}^{l-1}(i))
\mathbb{I}_{\mathsf{B}^l_{\beta,C'}\cap \mathsf{G}^l_{\beta,C'}}(\bar{X}_{0:T}^{l}(i-1),\bar{X}_{0:T}^{l-1}(i-1))
\Big] \leq C(\Delta_l)^{\frac{{\frac{1}{2}\wedge\beta}}{(1+\delta)}-\gamma\beta}.
$$
\end{lemma}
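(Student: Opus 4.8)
The plan is to reduce this two-iteration statement to a single application of the 4-CCPF kernel by conditioning on the state of the coupled chains at iteration $i-1$ and invoking the Markov property. I would only treat the first inequality, the second being identical with the barred chains. Writing the expectation of the product of indicators as an iterated expectation, one conditions on the two pairs of trajectories up to iteration $i-1$; since the chains evolve by one step of $\bar{M}_{\theta}^{l-1,l}$, the inner conditional expectation is exactly a single-step quantity of the form treated in Section~\ref{sec:analysis_second}. The factor $\mathbb{I}_{\mathsf{B}^l_{\beta,C'}\cap \mathsf{G}^l_{\beta,C'}}(X_{0:T}^{l}(i-1),X_{0:T}^{l-1}(i-1))$ is precisely what is needed to place the \emph{input} trajectories of this single-step kernel in the good set $\mathsf{B}^l_{\beta,C'}\cap \mathsf{G}^l_{\beta,C'}$, so that the hypotheses of Lemma~\ref{lem:cccpf_prob_trans} are satisfied on the relevant event.

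On that event, applying Lemma~\ref{lem:cccpf_prob_trans} (with the random input $(X_{0:T}^{l}(i-1),X_{0:T}^{l-1}(i-1))$ and the admissible range $\gamma\in(0,\frac{\frac{1}{2}\wedge\beta}{\beta(1+\delta)})$) bounds the inner conditional probability of landing outside $\mathsf{B}^l_{\beta,C'}\cap \mathsf{G}^l_{\beta,C'}$ by
$$
C(\Delta_l)^{\frac{{\frac{1}{2}\wedge\beta}}{(1+\delta)}-\gamma\beta}\Big(1+\sum_{p=1}^T\sum_{s=l-1}^l \big\{\|X_p^{s}(i-1)\|_2^{\gamma}+\|G_{\theta,p-1:p}^{s}(X_{p-1}^{s}(i-1),X_{p-1+\Delta_s:p}^{s}(i-1))\|_2^{\gamma} \big\}\Big).
$$
Substituting this back and dropping the (now superfluous) indicator at iteration $i-1$, the problem reduces to showing that the expectation of the bracketed moment factor is bounded by a finite constant uniformly in $(l,i)$; the $\Delta_l$-power then carries through unchanged and yields the claimed rate.

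To control that remaining expectation I would argue exactly as in the passage leading to \eqref{eq:cccpft22} in the proof of Lemma~\ref{lem:mc_first_lem}: each summand $\|\cdot\|_2^{\gamma}$ and $\|G_{\theta,p-1:p}^{s}(\cdot)\|_2^{\gamma}$ lies in $\mathbb{L}_2(\pi_{\theta}^s)$ with second moment bounded uniformly in $s\in\{l-1,l\}$ and in $l$, using Assumption~\ref{ass:D2} to dominate expectations under $\pi_{\theta}^s$ by expectations under $\nu_{\theta}^s$ together with the finiteness of polynomial moments of the Euler--Maruyama dynamics established via the martingale-remainder estimates of Lemma~\ref{lem:diff4}. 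Lemma~\ref{lem:control_l2_mc} then bounds the marginal expectation of each summand at iteration $i-1$ by a constant independent of $(l,i)$, so the full moment factor has bounded expectation and the proof closes. The main obstacle is the bookkeeping in the conditioning step rather than any new estimate: one must confirm that the marginal transition of the unbarred pair $(X_{0:T}^{l}(\cdot),X_{0:T}^{l-1}(\cdot))$ under $\bar{M}_{\theta}^{l-1,l}$ coincides with the ML-CPF transition to which Lemma~\ref{lem:cccpf_prob_trans} refers, and that the $\mathbb{L}_2(\pi_{\theta}^s)$ control is genuinely uniform in $s$; once these are verified, the bound is a direct composition of Lemmata~\ref{lem:cccpf_prob_trans} and \ref{lem:control_l2_mc}.
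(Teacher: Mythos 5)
Your proposal is correct and follows essentially the same route as the paper: the paper's proof of this lemma is stated as being identical to that of Lemma~\ref{lem:mc_first_lem} with Lemma~\ref{lem:cccpf_prob_trans} replacing Corollary~\ref{cor:cccpf_first_res}, which is exactly your conditioning-on-iteration-$(i-1)$ argument, using the good-set indicator to satisfy the hypotheses of the one-step bound and then controlling the residual moment factor uniformly in $(l,i)$ via the $\mathbb{L}_2(\pi_{\theta}^s)$ argument and Lemma~\ref{lem:control_l2_mc}. Your bookkeeping concern is also already resolved in the paper, since Lemma~\ref{lem:cccpf_prob_trans} is stated directly for the 4-CCPF kernel $\bar{M}_{\theta}^{l-1,l}$ rather than for the ML-CPF.
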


\begin{proof}
The proof is essentially identical to that of Lemma \ref{lem:mc_first_lem}, except one must use Lemma \ref{lem:cccpf_prob_trans} instead of Corollary \ref{cor:cccpf_first_res}.
\end{proof}

\begin{lemma}\label{lem:mc_third_lem}
Under Assumptions~\ref{ass:D1} and \ref{ass:D2}, for any $(T,\theta,C',\beta,\delta,\gamma)\in \mathbb{N}\times\Theta\times\mathbb{R}^+\times(0,\frac{1}{2})\times\mathbb{R}^+
\times(0,\frac{1}{(1+\delta)})$, there exists a constant $C<\infty$ such that for any $(l,N,i)\in\mathbb{N}\times\{2,3,\dots\}\times\mathbb{N}$
$$
\bar{\mathbb{E}}_{\theta}^{l-1,l}\Big[
\mathbb{I}_{(\mathsf{B}^l_{\beta,C'}\cap \mathsf{G}^l_{\beta,C'})^c}(X_{0:T}^{l}(i),X_{0:T}^{l-1}(i))
\Big] \leq C(i+1)(\Delta_l)^{\{\beta(\frac{1}{(1+\delta)}-\gamma)\}\wedge(\frac{1}{2}-\beta)},
$$
and
$$
\bar{\mathbb{E}}_{\theta}^{l-1,l}\Big[
\mathbb{I}_{(\mathsf{B}^l_{\beta,C'}\cap \mathsf{G}^l_{\beta,C'})^c}(\bar{X}_{0:T}^{l}(i),\bar{X}_{0:T}^{l-1}(i))
\Big] \leq C(i+1)(\Delta_l)^{\{\beta(\frac{1}{(1+\delta)}-\gamma)\}\wedge(\frac{1}{2}-\beta)}.
$$
\end{lemma}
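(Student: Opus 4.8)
The plan is to prove both inequalities by a telescoping union-bound argument over the iteration index $i$, feeding in Lemma~\ref{lem:mc_second_lem} at each step and Lemma~\ref{lem:ini_prob} for the base case. Write $\mathsf{A}^l=\mathsf{B}^l_{\beta,C'}\cap\mathsf{G}^l_{\beta,C'}$ for the ``good'' set and abbreviate $(X^l(i),X^{l-1}(i))=(X_{0:T}^l(i),X_{0:T}^{l-1}(i))$; for the unbarred pair, set $E_i=\{(X^l(i),X^{l-1}(i))\in(\mathsf{A}^l)^c\}$. Since $\beta\in(0,\tfrac12)$ we have $\tfrac12\wedge\beta=\beta$, so the exponent in Lemma~\ref{lem:mc_second_lem} simplifies to $\frac{\frac12\wedge\beta}{1+\delta}-\gamma\beta=\beta(\frac{1}{1+\delta}-\gamma)$, and its admissible range $\gamma\in(0,\frac{\frac12\wedge\beta}{\beta(1+\delta)})$ coincides with $\gamma\in(0,\frac{1}{1+\delta})$, matching the hypotheses here. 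Write $e_2=\beta(\frac{1}{1+\delta}-\gamma)$ and $e_0=e_2\wedge(\frac12-\beta)$, the latter being the target exponent.

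First I would record the elementary decomposition $\mathbb{I}_{E_i}\leq\mathbb{I}_{E_i\cap E_{i-1}^c}+\mathbb{I}_{E_{i-1}}$ and take $\bar{\mathbb{E}}_\theta^{l-1,l}$ of both sides, giving $\bar{\mathbb{E}}_\theta^{l-1,l}[\mathbb{I}_{E_i}]\leq\bar{\mathbb{E}}_\theta^{l-1,l}[\mathbb{I}_{E_i\cap E_{i-1}^c}]+\bar{\mathbb{E}}_\theta^{l-1,l}[\mathbb{I}_{E_{i-1}}]$. The first term on the right is precisely $\bar{\mathbb{E}}_\theta^{l-1,l}[\mathbb{I}_{(\mathsf{A}^l)^c}(X^l(i),X^{l-1}(i))\,\mathbb{I}_{\mathsf{A}^l}(X^l(i-1),X^{l-1}(i-1))]$, which Lemma~\ref{lem:mc_second_lem} bounds by $C\Delta_l^{e_2}$. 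Iterating this recursion down to the initial index shows $\bar{\mathbb{E}}_\theta^{l-1,l}[\mathbb{I}_{E_i}]\leq i\,C\Delta_l^{e_2}+\bar{\mathbb{E}}_\theta^{l-1,l}[\mathbb{I}_{E_{\mathrm{base}}}]$, since every per-step contribution is at most $C\Delta_l^{e_2}$.

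For the base term I would invoke Lemma~\ref{lem:ini_prob}: recalling from Section~\ref{sec:analysis_third} that the tuple $(X_{0:T}^l,X_{0:T}^{l-1})$ under $\check{\nu}_\theta^{l-1,l}$ is the unbarred pair after its single ML-CPF initialization step, the first inequality of Lemma~\ref{lem:ini_prob} bounds its probability of lying in $(\mathsf{A}^l)^c$ by $C\Delta_l^{e_0}$; the barred pair $(\bar{X}_{0:T}^l,\bar{X}_{0:T}^{l-1})$, drawn directly from $\nu_\theta^{l-1,l}$, is controlled by the second inequality, giving $C\Delta_l^{\frac12-\beta}$. Because $\Delta_l\leq1$ and $e_0\leq e_2$, $e_0\leq\frac12-\beta$, every contribution — the at most $i$ transition terms and the single base term — is bounded by $C\Delta_l^{e_0}$. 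Summing and absorbing constants yields $\bar{\mathbb{E}}_\theta^{l-1,l}[\mathbb{I}_{E_i}]\leq C(i+1)\Delta_l^{e_0}$, as required; the barred-chain inequality follows identically from the barred versions of Lemmata~\ref{lem:mc_second_lem} and~\ref{lem:ini_prob}.

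The argument is essentially deterministic bookkeeping once these two driving estimates are available, so the step I would treat as the main obstacle is purely the reconciliation of the exponents and parameter ranges: one must check that the refined exponent $e_0$ from the ML-CPF-processed unbarred pair, the cleaner $\frac12-\beta$ from the directly sampled barred pair, and the transition exponent $e_2$ all dominate the target $e_0$, and that the admissible ranges for $\gamma$ in Lemmata~\ref{lem:mc_second_lem} and~\ref{lem:ini_prob} are simultaneously satisfied — which is exactly where $\beta<\frac12$ (collapsing $\frac12\wedge\beta$ to $\beta$) is used. The linear-in-$i$ factor is then the automatic consequence of summing a uniform per-step bound over the iterations.
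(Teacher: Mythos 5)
Your proof is correct and takes essentially the same route as the paper: the paper's proof is an induction on $i$ with the initialization handled by Lemma~\ref{lem:ini_prob} and the induction step by Lemma~\ref{lem:mc_second_lem}, and your telescoping decomposition $\mathbb{I}_{E_i}\leq \mathbb{I}_{E_i\cap E_{i-1}^c}+\mathbb{I}_{E_{i-1}}$ is exactly that induction unrolled. The exponent and parameter-range reconciliation you spell out (using $\beta<\tfrac{1}{2}$ so that $\tfrac{1}{2}\wedge\beta=\beta$, hence the range for $\gamma$ in Lemma~\ref{lem:mc_second_lem} coincides with $(0,\tfrac{1}{1+\delta})$ and every per-step bound dominates the target exponent) is precisely what the paper leaves implicit.
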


\begin{proof}
We only consider the first inequality as it is the same proof for the second inequality.
The proof is  by induction on $i$. The initialization follows by Lemma \ref{lem:ini_prob}. For the induction step, one
can easily conclude by using Lemma \ref{lem:mc_second_lem} and the induction hypothesis.
\end{proof}

\begin{lemma}\label{lem:mc_fourth_lem}
Under Assumptions~\ref{ass:D1} and \ref{ass:D2}, for any $(T,r,\theta,\beta,\delta,\gamma)\in \mathbb{N}\times[1,\infty)\times\Theta\times (0,\frac{1}{2})\times\mathbb{R}^+
\times(0,\frac{1}{(1+\delta)})$, there exists a constant $C<\infty$ such that for any $(l,N,i)\in\mathbb{N} \times\{2,3,\dots\}\times \mathbb{N}$
$$
\bar{\mathbb{E}}_{\theta}^{l-1,l}\Big[\big\|G_{\theta}^l(X_{0:T}^{l}(i)) - G_{\theta}^{l-1}(X_{0:T}^{l-1}(i))\big\|_2^r
\Big]^{1/r}
\leq C(i+1)\Delta_l^\phi,
$$
where $\phi=\frac{\beta}{r(1+\delta)}\wedge \frac{1}{(1+\delta)}(\{\beta(\frac{1}{(1+\delta)}-\gamma)\}\wedge(\frac{1}{2}-\beta))$, 
and
$$
\bar{\mathbb{E}}_{\theta}^{l-1,l}\Big[\big\|G_{\theta}^l(\bar{X}_{0:T}^{l}(i)) - G_{\theta}^{l-1}(\bar{X}_{0:T}^{l-1}(i))\big\|_2^r\Big]^{1/r}
\leq C(i+1)\Delta_l^\phi.
$$
\end{lemma}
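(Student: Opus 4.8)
The plan is to prove both inequalities by the same argument, so I focus on the first. Writing $Z_i = \|G_{\theta}^l(X_{0:T}^{l}(i)) - G_{\theta}^{l-1}(X_{0:T}^{l-1}(i))\|_2$, the idea is to condition on whether the \emph{previous} iterate lies in the good set $\mathsf{B}^l_{\beta,C'}\cap \mathsf{G}^l_{\beta,C'}$, since this is precisely the event on which the already-established Lemmata apply. Splitting the indicator $\mathbb{I} = \mathbb{I}_{\mathsf{B}^l_{\beta,C'}\cap \mathsf{G}^l_{\beta,C'}} + \mathbb{I}_{(\mathsf{B}^l_{\beta,C'}\cap \mathsf{G}^l_{\beta,C'})^c}$ evaluated at $(X_{0:T}^{l}(i-1),X_{0:T}^{l-1}(i-1))$ and applying Minkowski's inequality yields
\begin{align*}
\bar{\mathbb{E}}_{\theta}^{l-1,l}[Z_i^r]^{1/r} \leq{} & \bar{\mathbb{E}}_{\theta}^{l-1,l}\big[Z_i^r\,\mathbb{I}_{\mathsf{B}^l_{\beta,C'}\cap \mathsf{G}^l_{\beta,C'}}(X_{0:T}^{l}(i-1),X_{0:T}^{l-1}(i-1))\big]^{1/r} \\
& + \bar{\mathbb{E}}_{\theta}^{l-1,l}\big[Z_i^r\,\mathbb{I}_{(\mathsf{B}^l_{\beta,C'}\cap \mathsf{G}^l_{\beta,C'})^c}(X_{0:T}^{l}(i-1),X_{0:T}^{l-1}(i-1))\big]^{1/r} =: T_1^{1/r} + T_2^{1/r}.
\end{align*}

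The first term is controlled directly by Lemma \ref{lem:mc_first_lem}: as we restrict to $\beta\in(0,\tfrac12)$ we have $\tfrac12\wedge\beta = \beta$, so $T_1^{1/r} \leq C\Delta_l^{\beta/(r(1+\delta))}$, which is the first exponent appearing in $\phi$ and carries no dependence on $i$. For the second term the obstacle is that $Z_i$ is unbounded, so it cannot be dominated by a constant on the bad event; instead I would trade the growth of $Z_i$ against the smallness of the bad-event probability via Hölder's inequality with conjugate exponents $(1+\delta)/\delta$ and $1+\delta$, giving
\begin{align*}
T_2 \leq{} & \bar{\mathbb{E}}_{\theta}^{l-1,l}\big[Z_i^{r(1+\delta)/\delta}\big]^{\delta/(1+\delta)} \\
& \times \bar{\mathbb{E}}_{\theta}^{l-1,l}\big[\mathbb{I}_{(\mathsf{B}^l_{\beta,C'}\cap \mathsf{G}^l_{\beta,C'})^c}(X_{0:T}^{l}(i-1),X_{0:T}^{l-1}(i-1))\big]^{1/(1+\delta)}.
\end{align*}
The moment factor is bounded by a constant uniformly in $i$ and $l$: by the $C_r$-inequality it suffices to control $\bar{\mathbb{E}}_{\theta}^{l-1,l}[\|G_{\theta}^s(X_{0:T}^{s}(i))\|_2^{q}]$ for $s\in\{l-1,l\}$ and $q=r(1+\delta)/\delta$, which follows from Lemma \ref{lem:control_l2_mc} together with the uniform-in-$s$ bound on the $\pi_{\theta}^s$-moments of $\|G_{\theta}^s\|_2$ already exploited in \eqref{eq:cccpft22}. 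The probability factor is then bounded by Lemma \ref{lem:mc_third_lem} applied at iteration $i-1$, contributing the linear factor $(i+1)$ and the exponent $\{\beta(\tfrac{1}{1+\delta}-\gamma)\}\wedge(\tfrac{1}{2}-\beta)$ modulated by the Hölder weight $1/(1+\delta)$ and the $r$-th root. This produces the second exponent entering $\phi$; since $\phi$ is the minimum of the two contributions and $\Delta_l<1$, the two terms combine into the claimed $C(i+1)\Delta_l^{\phi}$.

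The main difficulty is the treatment of $T_2$, and it is genuinely twofold. First, the smallness of the bad-event probability is only available at the previous iterate and is obtained inductively, which is exactly where the linear-in-$i$ factor of Lemma \ref{lem:mc_third_lem} enters and is inherited here. Second, the compensating moment $\bar{\mathbb{E}}_{\theta}^{l-1,l}[Z_i^{r(1+\delta)/\delta}]$ must be bounded \emph{uniformly in the discretization level} $l$ (as well as in $i$); this uniformity is what the uniform ergodicity in Lemma \ref{lem:control_l2_mc} and the level-uniform $\pi_{\theta}^s$-moment bounds supply, and it is indispensable, since a level-dependent constant here would propagate through the randomization and invalidate the finite-variance conclusion of Theorem \ref{theo:ub}.
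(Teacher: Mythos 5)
Your argument reproduces the paper's proof of Lemma \ref{lem:mc_fourth_lem} almost verbatim: the same split according to whether the previous iterate lies in $\mathsf{B}^l_{\beta,C'}\cap\mathsf{G}^l_{\beta,C'}$, Lemma \ref{lem:mc_first_lem} for the good part, and H\"older plus a uniform moment bound (the argument behind \eqref{eq:cccpft22}, via Lemma \ref{lem:control_l2_mc}) plus the bad-set probability for the remainder. Two points, however, do not close as written.

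First, the case $i=1$. Lemma \ref{lem:mc_third_lem} is stated for $i\in\mathbb{N}$ only, so ``Lemma \ref{lem:mc_third_lem} applied at iteration $i-1$'' is unavailable when $i=1$: iteration $0$ is the initialization, whose law is $\nu_{\theta}^{l-1,l}$ rather than an output of the coupled kernels. This is exactly why the paper casts the proof as an induction on $i$: the base case $i=1$ is handled separately by Lemma \ref{lem:ini_g_l}, and the splitting argument is invoked only for $i\geq 2$, where $i-1\geq 1$ lies inside the range of Lemma \ref{lem:mc_third_lem}. Your direct argument needs either this base case or the initialization bound of Lemma \ref{lem:ini_prob}; as written it fails at $i=1$.

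Second, the exponent bookkeeping on the bad event. Carrying out your H\"older step together with the $r$-th root, the probability factor enters with exponent $\tfrac{1}{r(1+\delta)}$, so, writing $e=\{\beta(\tfrac{1}{1+\delta}-\gamma)\}\wedge(\tfrac{1}{2}-\beta)$, your $T_2$ is bounded by $C(i+1)\Delta_l^{e/(r(1+\delta))}$. Since $\gamma>0$ forces $e<\beta$, one has $e/(r(1+\delta))<\tfrac{\beta}{r(1+\delta)}$, and for $r>1$ also $e/(r(1+\delta))<\tfrac{e}{1+\delta}$; hence $e/(r(1+\delta))<\phi$ and $\Delta_l^{e/(r(1+\delta))}>\Delta_l^{\phi}$, so your computation does not ``produce the second exponent entering $\phi$'' --- it establishes the lemma only with $\phi$ replaced by the smaller exponent $e/(r(1+\delta))$. (For comparison, the paper's own display raises the probability factor to the power $\tfrac{1}{1+\delta}$, i.e.\ without the $1/r$, which is what makes the stated $\phi$ emerge but is not a standard application of H\"older when $r>1$. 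In either version the conclusion remains a positive power of $\Delta_l$ with a linear factor in $i$, which is all that Theorem \ref{theo:ub} requires, but you should reconcile the arithmetic rather than assert the stated $\phi$ from steps that yield a weaker rate.)
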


\begin{proof}
We only consider the first inequality as it is the same proof for the second inequality.
The proof is by induction on $i$. The initialization follows from Lemma \ref{lem:ini_g_l}. For the induction step, one has
$$
\bar{\mathbb{E}}_{\theta}^{l-1,l}\Big[\big\|G_{\theta}^l(X_{0:T}^{l}(i)) - G_{\theta}^{l-1}(X_{0:T}^{l-1}(i))\big\|_2^r
\Big]^{1/r} \leq  C(T_1 + T_2),
$$
where 
\begin{align*}
T_1 & = \bar{\mathbb{E}}_{\theta}^{l-1,l}\Big[\big\|G_{\theta}^l(X_{0:T}^{l}(i)) - G_{\theta}^{l-1}(X_{0:T}^{l-1}(i))\big\|_2^r
\mathbb{I}_{\mathsf{B}^l_{\beta,C'}\cap \mathsf{G}^l_{\beta,C'}}(X_{0:T}^{l}(i-1),X_{0:T}^{l-1}(i-1))
\Big]^{1/r}, \\
T_2 & = \bar{\mathbb{E}}_{\theta}^{l-1,l}\Big[\big\|G_{\theta}^l(X_{0:T}^{l}(i)) - G_{\theta}^{l-1}(X_{0:T}^{l-1}(i))\big\|_2^r
\mathbb{I}_{(\mathsf{B}^l_{\beta,C'}\cap \mathsf{G}^l_{\beta,C'})^c}(X_{0:T}^{l}(i-1),X_{0:T}^{l-1}(i-1))
\Big]^{1/r},
\end{align*}
for any $0<C'<\infty$.
For $T_1$, one can apply Lemma \ref{lem:mc_first_lem} to obtain $T_1\leq C\Delta_l^\phi$. 
For $T_2$, one can use H\"older's inequality to get the bound
$$
T_2 \leq
\bar{\mathbb{E}}_{\theta}^{l-1,l}\Big[\big\|G_{\theta}^l(X_{0:T}^{l}(i)) - G_{\theta}^{l-1}(X_{0:T}^{l-1}(i))\big\|_2^{r\frac{1+\delta}{\delta}}\Big]^{\frac{\delta}{r(1+\delta)}}
\bar{\mathbb{E}}_{\theta}^{l-1,l}\Big[
\mathbb{I}_{(\mathsf{B}^l_{\beta,C'}\cap \mathsf{G}^l_{\beta,C'})^c}(X_{0:T}^{l}(i),X_{0:T}^{l-1}(i))
\Big]^{\frac{1}{(1+\delta)}}.
$$
To deal with the leftmost expectation on the R.H.S.~one can rely on the same argument that led to \eqref{eq:cccpft22} and for the other expectation one can use Lemma \ref{lem:mc_third_lem}. This allows us to obtain
$$
T_2\leq C(i+1)\Delta_l^\phi,
$$
and conclude the proof.
\end{proof}

In the following, we will employ the notation $\mathsf{A}_i=\{j\in\mathbb{N}:j> i\}$ for $i\in\mathbb{N}$.

\begin{lemma}\label{lem:coup_prob}
Under Assumptions~\ref{ass:D1} and \ref{ass:D2}, for any $(T,\theta,N)\in \mathbb{N}\times\Theta\times\{2,3,\dots\}$, there exists $(\varepsilon,C)\in(0,1)\times\mathbb{R}^+$ such that for any $(l,i)\in\mathbb{N}^2$
$$
\bar{\mathbb{E}}_{\theta}^{l-1,l} \big[ \mathbb{I}_{\mathsf{A}_i}\big(\bar{\tau}_{\theta}^l \big) \big] \leq C\varepsilon^{i}.
$$
\end{lemma}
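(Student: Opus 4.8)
The plan is to reduce the tail bound on the stopping time $\bar{\tau}_\theta^l = \max\{\tau_\theta^{l-1},\tau_\theta^l\}$ to a geometric tail bound on each of the two meeting times $\tau_\theta^{l-1}$ and $\tau_\theta^l$ separately, and then recombine them by a union bound. Since $\{\bar{\tau}_\theta^l > i\} = \{\tau_\theta^{l-1} > i\}\cup\{\tau_\theta^l > i\}$, we have
\[
\bar{\mathbb{E}}_\theta^{l-1,l}\big[\mathbb{I}_{\mathsf{A}_i}(\bar{\tau}_\theta^l)\big]\leq
\bar{\mathbb{E}}_\theta^{l-1,l}\big[\mathbb{I}_{\mathsf{A}_i}(\tau_\theta^{l-1})\big] +
\bar{\mathbb{E}}_\theta^{l-1,l}\big[\mathbb{I}_{\mathsf{A}_i}(\tau_\theta^l)\big],
\]
so it suffices to produce a single $\varepsilon\in(0,1)$, independent of $l$, and a finite $l$-independent constant so that each summand is bounded by a constant times $\varepsilon^i$.

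The central ingredient is Lemma \ref{lem:diag_prob_kernel}, which supplies a minorization that is uniform both in the level $l$ and in the input trajectories: a single application of the 4-CCPF kernel $\bar{M}_\theta^{l-1,l}$ makes the level-$l$ output pair coincide (event $\mathsf{D}^l$) with probability at least $\varepsilon_\star$, and likewise for the level-$(l-1)$ pair, for some $\varepsilon_\star\in(0,1)$ depending only on $(T,\theta,N)$. I would treat $\tau_\theta^l$ first, the argument for $\tau_\theta^{l-1}$ being identical. Recall from the construction \eqref{eqn:coupled4_chains} that for $i\geq 1$ the pair $(X_{0:T}^l(i+1),\bar{X}_{0:T}^l(i))$ is the level-$l$ output of $\bar{M}_\theta^{l-1,l}$ applied to the current state, and that by faithfulness of the 4-CCPF the two chains remain equal once they have met; hence $\tau_\theta^l$ is an absorption time and $\{\tau_\theta^l > i\}$ is decreasing in $i$.

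Let $\mathcal{F}_i$ denote the natural filtration generated by the coupled chains up to the determination of $(X_{0:T}^l(i),\bar{X}_{0:T}^l(i-1))$. On the event $\{\tau_\theta^l > i\}$ the level-$l$ inputs to the next kernel application do not coincide, and Lemma \ref{lem:diag_prob_kernel} bounds the conditional probability of coincidence at the following step below by $\varepsilon_\star$, uniformly in the state; thus
\[
\bar{\mathbb{P}}_\theta^{l-1,l}\big(\tau_\theta^l > i+1 \mid \mathcal{F}_i\big)\,\mathbb{I}(\tau_\theta^l > i)
\leq (1-\varepsilon_\star)\,\mathbb{I}(\tau_\theta^l > i).
\]
Taking expectations and iterating this inequality gives $\bar{\mathbb{E}}_\theta^{l-1,l}[\mathbb{I}_{\mathsf{A}_i}(\tau_\theta^l)] = \bar{\mathbb{P}}_\theta^{l-1,l}(\tau_\theta^l > i)\leq (1-\varepsilon_\star)^{i-1}$, with the same bound for $\tau_\theta^{l-1}$. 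Setting $\varepsilon = 1-\varepsilon_\star\in(0,1)$ and $C = 2(1-\varepsilon_\star)^{-1}$, both independent of $l$, and substituting into the union bound above concludes the proof.

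The genuinely delicate part has already been discharged in Lemma \ref{lem:diag_prob_kernel}: it is the uniformity of the one-step meeting probability over all non-met states and, crucially, over all discretization levels $l$, which is precisely what permits $\varepsilon$ and $C$ to be chosen independently of $l$. Granted that minorization, the residual work is the standard geometric iteration, whose only subtlety is the bookkeeping of the indices in the definition of $\tau_\theta^l$ together with the use of faithfulness to guarantee that $\{\tau_\theta^l > i\}$ forms a nested family of events on which the minorization may be applied conditionally at each step.
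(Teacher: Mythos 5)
Your proof is correct and takes essentially the same route as the paper's: both reduce the tail of $\bar{\tau}_{\theta}^l$ to the marginal tails of $\tau_{\theta}^{l-1}$ and $\tau_{\theta}^{l}$ (the paper partitions the union $\{\tau_{\theta}^{l-1}>i\}\cup\{\tau_{\theta}^{l}>i\}$ into three disjoint cases, which is equivalent to your two-term union bound) and then invoke the uniform, level-independent one-step minorization of Lemma~\ref{lem:diag_prob_kernel} to obtain geometric decay of each marginal tail. Your explicit conditional iteration using faithfulness is precisely the step the paper compresses into ``the proof is now easily completed.''
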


\begin{proof}
We have
$$
\bar{\mathbb{E}}_{\theta}^{l-1,l}[\mathbb{I}_{\mathsf{A}_i}(\bar{\tau}_{\theta}^l)] \leq
\bar{\mathbb{E}}_{\theta}^{l-1,l}[\mathbb{I}_{\mathsf{A}_i\times \mathsf{A}_i}(\tau_{\theta}^{l-1},\tau_{\theta}^{l})] + 
\bar{\mathbb{E}}_{\theta}^{l-1,l}[\mathbb{I}_{\mathsf{A}_i^c\times \mathsf{A}_i}(\tau_{\theta}^{l-1},\tau_{\theta}^{l})] +
\bar{\mathbb{E}}_{\theta}^{l-1,l}[\mathbb{I}_{\mathsf{A}_i\times \mathsf{A}_i^c}(\tau_{\theta}^{l-1},\tau_{\theta}^{l})].
$$
By Lemma \ref{lem:diag_prob_kernel}, we have $\bar{\mathbb{E}}_{\theta}^{l-1,l}[\mathbb{I}_{\mathsf{A}_i}(\tau_{\theta}^s)]\leq C\varepsilon^{i}$ 
for $s\in\{l-1,l\}$,
and so the proof is now easily completed.
\end{proof}

\begin{remark}
It is worth noting that Lemmata \ref{lem:prob_lower}, \ref{lem:diag_prob_kernel} and \ref{lem:coup_prob} are the only cases where 
our bounds have constants that depend upon $N$.
\end{remark}

\begin{lemma}\label{lem:mc_fifth_lem}
Under Assumptions~\ref{ass:D1} and \ref{ass:D2}, for any $(T,\theta,\beta,\delta,\gamma,N,b)\in \mathbb{N}\times\Theta\times (0,\frac{1}{2})\times\mathbb{R}^+
\times(0,\frac{1}{(1+\delta)})\times\{2,3,\dots,\}\times\mathbb{N}_0$, there exists a constant $C<\infty$ such that for any $l\in\mathbb{N}$
$$
\bar{\mathbb{E}}_{\theta}^{l-1,l}\big[\|\widehat{I}_l(\theta)\|_2^2\big] \leq C\Delta_l^{2\phi},
$$
where $\phi=\frac{\beta}{2(1+\delta)}\wedge \frac{1}{(1+\delta)}(\{\beta(\frac{1}{(1+\delta)}-\gamma)\}\wedge(\frac{1}{2}-\beta))$.
\end{lemma}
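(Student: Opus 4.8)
The plan is to expand $\widehat{I}_l(\theta)=\widehat{S}_l(\theta)-\widehat{S}_{l-1}(\theta)$ into a single sum of level-differences and then bound its $\mathbb{L}_2$-norm term by term. Writing out the time-averaged estimator \eqref{eqn:unbiased_discretized_score} at each level and introducing the shorthand
\begin{align*}
D_l(i) &= G_{\theta}^l(X_{0:T}^{l}(i)) - G_{\theta}^{l-1}(X_{0:T}^{l-1}(i)), \\
\bar{D}_l(i) &= G_{\theta}^l(\bar{X}_{0:T}^{l}(i)) - G_{\theta}^{l-1}(\bar{X}_{0:T}^{l-1}(i)),
\end{align*}
I would first use faithfulness of the chains, i.e.\ $G_{\theta}^s(X_{0:T}^{s}(i))-G_{\theta}^s(\bar{X}_{0:T}^{s}(i-1))=0$ for $i\geq\tau_{\theta}^s$ and $s\in\{l-1,l\}$, to extend both bias-correction sums up to the common upper limit $\bar{\tau}_{\theta}^l-1$ without changing their values. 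Subtracting then yields
\begin{align*}
\widehat{I}_l(\theta) = \frac{1}{I-b+1}\sum_{i=b}^I D_l(i)
+ \sum_{i=b+1}^{\bar{\tau}_{\theta}^l-1}\min\Big(1,\tfrac{i-b}{I-b+1}\Big)\big(D_l(i)-\bar{D}_l(i-1)\big).
\end{align*}

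By Minkowski's inequality it suffices to bound the $\mathbb{L}_2$-norms of the two summands separately. For the first (the MCMC-average term), the range $b\leq i\leq I$ is deterministic and finite, so applying Lemma~\ref{lem:mc_fourth_lem} with $r=2$ gives $\bar{\mathbb{E}}_{\theta}^{l-1,l}[\|D_l(i)\|_2^2]^{1/2}\leq C(i+1)\Delta_l^{\phi}$ with $\phi$ exactly as in the statement; summing the finitely many terms and dividing by $I-b+1$ produces a bound $C\Delta_l^{\phi}$, where $C$ may depend on the fixed tuning parameters $b$ and $I$.

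The delicate part is the bias-correction term, whose upper limit is the random stopping time $\bar{\tau}_{\theta}^l$. Here I would rewrite the truncation as the indicator $\mathbb{I}_{\mathsf{A}_i}(\bar{\tau}_{\theta}^l)=\mathbb{I}\{\bar{\tau}_{\theta}^l>i\}$, bound the weights $\min(1,(i-b)/(I-b+1))\leq1$, apply Minkowski over $i$, and then use Hölder's inequality on each term to decouple the indicator from $Z_i:=D_l(i)-\bar{D}_l(i-1)$:
\begin{align*}
\bar{\mathbb{E}}_{\theta}^{l-1,l}\big[\mathbb{I}_{\mathsf{A}_i}(\bar{\tau}_{\theta}^l)\,\|Z_i\|_2^2\big]
\leq \bar{\mathbb{E}}_{\theta}^{l-1,l}\big[\mathbb{I}_{\mathsf{A}_i}(\bar{\tau}_{\theta}^l)\big]^{1/p}\,
\bar{\mathbb{E}}_{\theta}^{l-1,l}\big[\|Z_i\|_2^{2p'}\big]^{1/p'},
\end{align*}
with conjugate exponents $p,p'$ and $p'>1$. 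Lemma~\ref{lem:coup_prob} supplies the geometric tail $\bar{\mathbb{E}}_{\theta}^{l-1,l}[\mathbb{I}_{\mathsf{A}_i}(\bar{\tau}_{\theta}^l)]\leq C\varepsilon^i$, which guarantees summability in $i$ even against a polynomial factor, while Lemma~\ref{lem:mc_fourth_lem} (together with Minkowski to split $Z_i$) controls $\bar{\mathbb{E}}_{\theta}^{l-1,l}[\|Z_i\|_2^{2p'}]^{1/(2p')}\leq C(i+1)\Delta_l^{\phi_{2p'}}$.

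The main obstacle is that taking $2p'>2$ ostensibly degrades the leading rate from $\frac{\beta}{2(1+\delta)}$ to $\frac{\beta}{2p'(1+\delta)}$. The plan is to absorb this loss by reparametrising: since the statement is quantified over all $\delta\in\mathbb{R}^+$, I would invoke Lemma~\ref{lem:mc_fourth_lem} at an auxiliary level $\delta_0\in(0,\delta)$ and pick the Hölder exponent $p'>1$ with $p'(1+\delta_0)=1+\delta$, so that $\frac{\beta}{2p'(1+\delta_0)}=\frac{\beta}{2(1+\delta)}$; the second term of the minimum defining $\phi_{2p'}$ only improves, being decreasing in $\delta_0$, and the admissibility constraint $\gamma<\tfrac{1}{1+\delta_0}$ still holds because $\gamma<\tfrac{1}{1+\delta}$. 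Hence $\phi_{2p'}\geq\phi$, and each term is bounded by $C(i+1)\rho^{i/2}\Delta_l^{\phi}$ for some $\rho\in(0,1)$; the series over $i$ converges, so the bias-correction term is $O(\Delta_l^{\phi})$. Combining the two summands gives $\bar{\mathbb{E}}_{\theta}^{l-1,l}[\|\widehat{I}_l(\theta)\|_2^2]^{1/2}\leq C\Delta_l^{\phi}$, and squaring yields the claim.
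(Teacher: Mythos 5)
Your proof is correct, and it follows the same overall strategy as the paper's: isolate an MCMC-average term and a bias-correction term truncated at the stopping time, bound the former via Lemma~\ref{lem:mc_fourth_lem} with $r=2$, and bound the latter by decoupling the indicator $\mathbb{I}_{\mathsf{A}_i}(\bar{\tau}_{\theta}^l)$ from the increments, playing the geometric tails of Lemma~\ref{lem:coup_prob} against the polynomially growing moment bounds of Lemma~\ref{lem:mc_fourth_lem}. Two differences are worth recording. First, the decomposition: rather than subtracting the two time-averaged estimators directly as you do, the paper rewrites each one as a uniform average $(I-b+1)^{-1}\sum_{k=b}^{I}\widehat{S}_{s}^{k}(\theta)$ of single-start estimators and bounds each difference $\widehat{S}_{l}^{k}(\theta)-\widehat{S}_{l-1}^{k}(\theta)$ separately; this is equivalent in substance (your use of faithfulness to align the two correction sums at the common limit $\bar{\tau}_{\theta}^{l}-1$ appears there too), so nothing is gained or lost either way. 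Second, and more substantively, your treatment of the exponent loss is more careful than the paper's. The paper decouples with Cauchy--Schwarz, i.e.\ it requires fourth moments of the increments, and then cites Lemma~\ref{lem:mc_fourth_lem}; but at $r=4$ that lemma only yields the leading exponent $\frac{\beta}{4(1+\delta)}$, whereas the claimed $\phi$ has leading term $\frac{\beta}{2(1+\delta)}$, and the paper writes $\Delta_l^{2\phi}$ at this step without comment. With Cauchy--Schwarz fixed, a reparametrization of $\delta$ alone can repair this only when $\delta>1$ (one needs $2(1+\delta_0)\leq 1+\delta$); your device of letting the H\"older exponent $p'$ tend toward $1$ as well, choosing $\delta_0\in(0,\delta)$ with $p'(1+\delta_0)=1+\delta$ and checking that the second branch of the minimum only improves while $\gamma<\frac{1}{1+\delta_0}$ remains admissible, works for every $\delta\in\mathbb{R}^+$ and thus closes a genuine small gap in the published argument. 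The only loose ends, shared with the paper, are boundary ones: when $b=0$ the term $D_l(0)$ (and when $i=1$ the term $\bar{D}_l(0)$) concerns the initialization and should be handled by Lemma~\ref{lem:ini_g_l} rather than Lemma~\ref{lem:mc_fourth_lem}, and the constant necessarily depends on $I$ as well as on $b$.
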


\begin{proof}
We recall that $\widehat{I}_l(\theta)=\widehat{S}_l(\theta) - \widehat{S}_{l-1}(\theta)$, where 
$\widehat{S}_{l-1}(\theta)$ and $\widehat{S}_{l}(\theta)$ are time-averaged estimators of the form in \eqref{eqn:unbiased_discretized_score}. 
For level $s\in\{l-1,l\}$, note that we can rewrite the time-averaged estimator as 
$\widehat{S}_{s}(\theta) = (I-b+1)^{-1}\sum_{k=b}^I\widehat{S}_{s}^k(\theta)$ with 
\begin{align*}
	\widehat{S}_{s}^k(\theta) =  G_{\theta}^{s}(X_{0:T}^s(k)) + \sum_{i=b+1}^{\tau_{\theta}^s-1}\left(G_{\theta}^s(X_{0:T}^s(i)) 
	 - G_{\theta}^s(\bar{X}_{0:T}^s(i-1))\right).
\end{align*}
Hence we can rewrite 
\begin{align}\label{eqn:rewrite_increment}
	\widehat{I}_l(\theta) = \frac{1}{I-b+1} \sum_{k=b}^I \left(\widehat{S}_{l}^k(\theta) - \widehat{S}_{l-1}^k(\theta)\right).  
\end{align}
Since we have 
\begin{align}\label{eqn:rewrite_increment_start}
	\bar{\mathbb{E}}_{\theta}^{l-1,l}\big[ \|\widehat{I}_l(\theta)\|_2^2 \big] \leq C \sum_{k=b}^I 
	\bar{\mathbb{E}}_{\theta}^{l-1,l}\big[ \| \widehat{S}_{l}^k(\theta) - \widehat{S}_{l-1}^k(\theta) \|_2^2 \big] 		
\end{align}
using the representation in \eqref{eqn:rewrite_increment}, it suffices to establish 
$\bar{\mathbb{E}}_{\theta}^{l-1,l}\big[ \| \widehat{S}_{l}^k(\theta) - \widehat{S}_{l-1}^k(\theta) \|_2^2 \big] \leq C\Delta_l^{2\phi}$.		

We consider the decomposition
\begin{equation}\label{eq:main_res1}
	\bar{\mathbb{E}}_{\theta}^{l-1,l}\big[ \| \widehat{S}_{l}^k(\theta) - \widehat{S}_{l-1}^k(\theta) \|_2^2 \big] 		
	\leq C\sum_{j=1}^2 T_j, 
\end{equation}
where
\begin{align*}
T_1 & = \bar{\mathbb{E}}_{\theta}^{l-1,l}\Big[\|G_{\theta}^l(X_{0:T}^l(k))-G_{\theta}^{l-1}(X_{0:T}^{l-1}(k))\|_2^2\Big],\\
T_2 & = \bar{\mathbb{E}}_{\theta}^{l-1,l}\Big[\|\sum_{i=k+1}^{\bar{\tau}_{\theta}^l}\{G_{\theta}^l(X_{0:T}^l(i))-G_{\theta}^{l-1}(X_{0:T}^{l-1}(i))+
G_{\theta}^l(\bar{X}_{0:T}^l(i))-G_{\theta}^{l-1}(\bar{X}_{0:T}^{l-1}(i))\}\|_2^2\Big].
\end{align*}
For $T_1$, one can apply Lemma \ref{lem:mc_fourth_lem} to obtain
\begin{equation}\label{eq:main_res2}
T_1 \leq  C\Delta_l^{2\phi}.
\end{equation}
For $T_2$, we have
$$
T_2  = \bar{\mathbb{E}}_{\theta}^{l-1,l}\Big[\|\sum_{i=k+1}^{\infty}\mathbb{I}_{\mathsf{A}_i}(\bar{\tau}_{\theta}^l)\{G_{\theta}^l(X_{0:T}^l(i))-G_{\theta}^{l-1}(X_{0:T}^{l-1}(i))+
G_{\theta}^l(\bar{X}_{0:T}^l(i))-G_{\theta}^{l-1}(\bar{X}_{0:T}^{l-1}(i))\}\|_2^2\Big].
$$
To shorten the notations, set 
$$
\upsilon_i = G_{\theta}^l(X_{0:T}^l(i))-G_{\theta}^{l-1}(X_{0:T}^{l-1}(i))+
G_{\theta}^l(\bar{X}_{0:T}^l(i))-G_{\theta}^{l-1}(\bar{X}_{0:T}^{l-1}(i))
$$
where we denote the $j^{th}$-component of $\upsilon_i$ as $[\upsilon_i]^j$, $j\in\{1,\dots,d\}$. Then by application of Minkowski's inequality, we have
$$
T_2 \leq \sum_{j=1}^{d_{\theta}}\Bigg(\sum_{i=k+1}^{\infty} \bar{\mathbb{E}}_{\theta}^{l-1,l}[\mathbb{I}_{\mathsf{A}_i}(\bar{\tau}_{\theta}^l)\{[\upsilon_i]^j\}^2]^{1/2}\Bigg)^2.
$$
Then applying the Cauchy-Schwarz inequality
$$
T_2 \leq \sum_{j=1}^{d_{\theta}}\Bigg(\sum_{i=k+1}^{\infty} \bar{\mathbb{E}}_{\theta}^{l-1,l}[\mathbb{I}_{\mathsf{A}_i}(\bar{\tau}_{\theta}^l)]^{1/4}\bar{\mathbb{E}}_{\theta}^{l-1,l}[\{[\upsilon_i]^j\}^4]^{1/4}\Bigg)^2.
$$
Now, using standard properties of the $\mathbb{L}_2$-norm along with Lemma \ref{lem:coup_prob}
$$
T_2 \leq C\Big(\sum_{i=k+1}^{\infty} (\varepsilon^{1/4})^i\bar{\mathbb{E}}_{\theta}^{l-1,l}[\|\upsilon_i\|_2^4]^{1/4}\Big)^2.
$$
It is simple to ascertain that:
$$
\bar{\mathbb{E}}_{\theta}^{l-1,l}[\|\upsilon_i\|_2^4]^{1/4} \leq C\Big(
\bar{\mathbb{E}}_{\theta}^{l-1,l}\Big[\big\|G_{\theta}^l(X_{0:T}^l(i))-G_{\theta}^{l-1}(X_{0:T}^{l-1}(i))\big\|_2^4\Big]+
\bar{\mathbb{E}}_{\theta}^{l-1,l}\Big[\big\|G_{\theta}^l(\bar{X}_{0:T}^l(i))-G_{\theta}^{l-1}(\bar{X}_{0:T}^{l-1}(i))\big\|_2^4\Big]
\Big)^{1/4}.
$$
Therefore, applying Lemma \ref{lem:mc_fourth_lem} gives
\begin{equation}\label{eq:main_res3}
T_2 \leq C\Delta_l^{2\phi}\Big(\sum_{i=k+1}^{\infty} (\varepsilon^{1/4})^i(i+1)\Big)^2 \leq C\Delta_l^{2\phi}.
\end{equation}
Combining \eqref{eqn:rewrite_increment_start}-\eqref{eq:main_res3} concludes the proof.

\end{proof}

\begin{remark}
The strategy in the proof of Lemma \ref{lem:mc_fifth_lem} can be improved by using martingale methods and Wald's equality for Markov chains as considered in \cite{jasra2020unbiased}. 
This strategy was not adopted as it would require more complicated arguments given the technical complexity of the problem and algorithms in this article.
\end{remark}

\begin{remark}\label{rem:rate_rem1}
A better rate of $\phi$ can be obtained in Lemma \ref{lem:mc_fifth_lem} if we consider the case of constant diffusion coefficient $\sigma$.
\end{remark}

\begin{remark}\label{rem:exp_xi_0}\label{rem:zero_case}
One can employ the approaches in Lemmata \ref{lem:control_l2_mc},  \ref{lem:coup_prob} and \ref{lem:mc_fifth_lem} to establish that the expected value of $\widehat{I}_0(\theta)$ is upper-bounded by a finite constant.
\end{remark}

\begin{proof}[Proof of Theorem \ref{theo:ub}]
We have to establish that \eqref{eq:ub1} and \eqref{eq:ub2} hold for some choice of PMF $(P_l)_{l=0}^{\infty}$. 
The unbiasedness property in \eqref{eq:ub1} can be established using the same approach as in \citet[Theorem 3.1]{jacob2020smoothing}. Assumption 1 of \citet{jacob2020smoothing} is implied by Assumption~\ref{ass:D2}$(iii)$; 
Assumption 2 of \citet{jacob2020smoothing} can be verified by inspecting the construction in the proof of Lemma \ref{lem:diag_prob_kernel}; and Assumption 3 of \citet{jacob2020smoothing} follows from the fact that 
$\|G_{\theta}^l\|_2^r\in\mathbb{L}_2(\pi_{\theta}^l)$ for any $(l,r)\in\mathbb{N}_0\times[1,\infty)$ 
and \cite[Theorem 1b]{andrieu2018uniform}. 
For the condition in \eqref{eq:ub2}, we apply Theorem \ref{prop:conv_grad_log_like}, Lemma \ref{lem:mc_fifth_lem} and Remark \ref{rem:exp_xi_0} to obtain 
\begin{align*}
	\sum_{l=0}^\infty \mathcal{P}_l^{-1}\left\lbrace \mathrm{Var}\left[\widehat{I}_l(\theta)^j\right] + 
	\left(S_{l}(\theta)^j-S(\theta)^j\right)^2\right\rbrace < C\sum_{l=0}^\infty\frac{\Delta_l^{2\phi}}{\mathcal{P}_l}
\end{align*}
for all $j\in\{1,\ldots,d_{\theta}\}$, where $\mathrm{Var}$ denotes variance under $\bar{\mathbb{E}}_{\theta}^{l-1,l}$ 
for all $l\in\mathbb{N}$. 
We can conclude the proof by selecting for instance $P_l\propto\Delta_l^{2\phi\alpha}$ for any $\alpha\in(0,1)$.
\end{proof}

\begin{remark}
The approach in Lemma \ref{lem:control_l2_mc} also suggests an alternative method of proof. If one could identify the invariant distribution of the ML-CPF kernel $M_{\theta}^{l-1,l}$ (Algorithm~\ref{alg:ML-CPF}) and establish 
an ergodic theorem as in \citet[Theorem 1b]{andrieu2018uniform}, 
one could then study the expectation of differences of the type $\|G_{\theta}^l-G_{\theta}^{l-1}\|_2^r$ 
under the invariant distribution. Characterizing the invariant distribution could follow the ideas in \cite{jasra2018central}. 
This potentially interesting strategy is left as a topic for future work. 
\end{remark}

\section{Model-specific expressions}

\subsection{Ornstein--Uhlenbeck process}\label{app:OU}
For this example, we have $\Sigma(x)=\sigma^2$, $b_{\theta}(x)=\sigma^{-1}a_{\theta}(x)$ for $x\in\mathbb{R}$ 
and $\theta\in\Theta=(0,\infty)\times\mathbb{R}\times(0,\infty)$. 
To evaluate \eqref{eqn:smoothing_functional} and \eqref{eqn:discrete_test_function}, the gradients required are given by 
\begin{align}
	\nabla_{\theta}a_{\theta}(x) =\left((\theta_{2}-x),\theta_{1},0\right),\quad 
	\nabla_{\theta}\log g_{\theta}(y|x)=\left(0,0,-\frac{1}{2\theta_{3}}+\frac{(y-x)^{2}}{2\theta_{3}^{2}}\right), 
\end{align}
for $x\in\mathbb{R}$ and $\theta=(\theta_1,\theta_2,\theta_3)\in\Theta$. 
In this example, the score function \eqref{eqn:score_function} can be computed using 
\begin{align}
	\nabla_{\theta}\log p_{\theta}(y_{1:T}) = \sum_{t=1}^T\int_{\mathbb{R}^2}
	\left\{ \nabla_{\theta}\log p_{\theta}(dx_{t}|x_{t-1})+\nabla_{\theta}\log g_{\theta}(y_{t}|x_{t})\right\} 
	p_{\theta}(dx_{t-1},dx_{t}|y_{1:T}),
\end{align}
where the transition kernel of the SDE \eqref{eqn:OU_SDE} on a unit interval is 
\begin{align}
	p_{\theta}(dx_{t}|x_{t-1}) = \mathcal{N}\left(x_{t};\theta_{2}+(x_{t-1}-\theta_{2})\exp(-\theta_{1}),
	\frac{\sigma^{2}(1-\exp(-2\theta_{1}))}{2\theta_{1}}\right)dx_{t},
\end{align}
and the marginal of the smoothing distribution $p_{\theta}(dx_{t-1},dx_{t}|y_{1:T})$ is a Gaussian 
distribution whose mean and covariance can be obtained using a Kalman smoother. 

\subsection{Logistic diffusion model for population dynamics of red kangaroos}\label{app:logistic_diffusion}
In this application, we have $\Sigma(x)=1$ and $b_\theta(x) = a_{\theta}(x)$ for $x\in\mathbb{R}$ 
and $\theta\in\Theta=\mathbb{R}\times(0,\infty)^3$. 
Evaluation of \eqref{eqn:smoothing_functional} and \eqref{eqn:discrete_test_function} require the following expressions. 
Firstly, we have 
\begin{align}
	\nabla_{\theta}\log\mu_{\theta}(x) = (0, 0, \partial_{\theta_3}\log\mu_{\theta}(x), 0),\quad 
	\partial_{\theta_3}\log\mu_{\theta}(x) = 
	\frac{1}{\theta_{3}}-\frac{\theta_{3}}{10^{2}}(x-5/\theta_{3})^{2}-\frac{5}{10^{2}}(x-5/\theta_{3}),
\end{align}
and 
\begin{align}
	\nabla_{\theta}a_{\theta}(x)
	=\left(\frac{1}{\theta_{3}},-\frac{1}{\theta_{3}}\exp(\theta_{3}x),-\frac{\theta_{1}}{\theta_{3}^{2}} 
	- \frac{\theta_{2}}{\theta_{3}^{2}}\exp(\theta_{3}x)(\theta_{3}x-1),0\right),	
\end{align}
for $x\in\mathbb{R}$ and $\theta=(\theta_1,\theta_2,\theta_3,\theta_4)\in\Theta$. 
The observation density can be written as 
\begin{align}
	g_{\theta}(y|x)&=\mathcal{NB}(y^{1};\theta_{4},\exp(\theta_3x))\mathcal{NB}(y^{2};\theta_{4},\exp(\theta_3x)) \\
	&=\frac{\Gamma(y^{1}+\theta_{4})\Gamma(y^{2}+\theta_{4})}{\Gamma(\theta_{4})^{2}(y^{1})!(y^{2})!}\left(\frac{\theta_{4}}{\theta_{4}+\exp(\theta_{3}x)}\right)^{2\theta_{4}}\left(\frac{\exp(\theta_{3}x)}{\theta_{4}+\exp(\theta_{3}x)}\right)^{y^{1}+y^{2}}\notag
\end{align}
for $y=(y^1,y^2)$ and $x\in\mathbb{R}$. Hence 
\begin{align}
	\nabla_{\theta}\log g_{\theta}(y|x) = (0,0,\partial_{\theta_3}\log g_{\theta}(y|x), \partial_{\theta_4}\log g_{\theta}(y|x)),
\end{align}
with 
\begin{align}
	\partial_{\theta_{3}}\log g_{\theta}(y|x) = -\frac{2\theta_{4}x\exp(\theta_{3}x)}{\theta_{4}
	+\exp(\theta_{3}x)}+(y^{1}+y^{2})x \left(1 - \frac{\exp(\theta_{3}x)}{\theta_{4}+\exp(\theta_{3}x)}\right) ,
\end{align}
and 
\begin{align}
	\partial_{\theta_{4}}\log g_{\theta}(y|x) &= \psi(y^{1}+\theta_{4})+\psi(y^{2}+\theta_{4})-2\psi(\theta_{4})
	+2\left\{ \log(\theta_{4})-\log(\theta_{4}+\exp(\theta_{3}x))\right\} \\
	&+2\left( 1-\frac{\theta_{4}}{\theta_{4}+\exp(\theta_{3}x)}\right) -\frac{(y^{1}+y^{2})}{(\theta_{4}+\exp(\theta_{3}x))},\notag
\end{align}
where $x\mapsto\psi(x)=(d/dx)\log\Gamma(x)$ denotes the digamma function.

\subsection{Neural network model for grid cells in the medial entorhinal cortex}\label{app:neural_network}
In this application, we have $\Sigma(x)=I_{2}$ and $b_{\theta}(x)=a_{\theta}(x)$ for $x=(x^1,x^2)\in\mathbb{R}^2$ and 
$\theta\in\Theta$. 
The following expressions are needed to evaluate \eqref{eqn:smoothing_functional} and \eqref{eqn:discrete_test_function}. 
The non-zero entries of the Jacobian matrix $\nabla_{\theta}a_{\theta}(x)\in\mathbb{R}^{d\times d_{\theta}}$ are given by

\begin{minipage}{.48\textwidth}
\begin{align*}
\partial_{\alpha_{1}}a_{\theta}^{1}(x) & =\tanh(\beta_{1}\sigma_{2}x^{2}+\gamma_{1})/\sigma_{1},\\
\partial_{\beta_{1}}a_{\theta}^{1}(x) & =\alpha_{1}\sigma_{2}x^{2}\left(1-\tanh^{2}(\beta_{1}\sigma_{2}x^{2}+\gamma_{1})\right)/\sigma_{1},\\
\partial_{\gamma_{1}}a_{\theta}^{1}(x) & =\alpha_{1}\left(1-\tanh^{2}(\beta_{1}\sigma_{2}x^{2}+\gamma_{1})\right)/\sigma_{1},\\
\partial_{\delta_{1}}a_{\theta}^{1}(x) & =-x^{1},\\
\partial_{\sigma_{1}}a_{\theta}^{1}(x) & =-\alpha_{1}\tanh(\beta_{1}\sigma_{2}x^{2}+\gamma_{1})/\sigma_{1}^{2}\\
\partial_{\sigma_{2}}a_{\theta}^{1}(x) & =\alpha_{1}\beta_{1}x^{2}\left(1-\tanh^{2}(\beta_{1}\sigma_{2}x^{2}+\gamma_{1})\right)/\sigma_{1},\\
\end{align*}
\end{minipage} \quad
\begin{minipage}{.48\textwidth}
\begin{align*}
\partial_{\alpha_{2}}a_{\theta}^{2}(x) & =\tanh(\beta_{2}\sigma_{1}x^{1}+\gamma_{2})/\sigma_{2},\\
\partial_{\beta_{2}}a_{\theta}^{2}(x) & =\alpha_{2}\sigma_{1}x^{1}\left(1-\tanh^{2}(\beta_{2}\sigma_{1}x^{1}+\gamma_{2})\right)/\sigma_{2},\\
\partial_{\gamma_{2}}a_{\theta}^{2}(x) & =\alpha_{2}\left(1-\tanh^{2}(\beta_{2}\sigma_{1}x^{1}+\gamma_{2})\right)/\sigma_{2},\\
\partial_{\delta_{2}}a_{\theta}^{2}(x) & =-x^{2},\\
\partial_{\sigma_{1}}a_{\theta}^{2}(x) & =\alpha_{2}\beta_{2}x^{1}\left(1-\tanh^{2}(\beta_{2}\sigma_{1}x^{1}+\gamma_{2})\right)/\sigma_{2},\\
\partial_{\sigma_{2}}a_{\theta}^{2}(x) & =-\alpha_{2}\tanh(\beta_{2}\sigma_{1}x^{1}+\gamma_{2})/\sigma_{2}^{2}.\\
\end{align*}
\end{minipage}
The partial derivatives of the log-observation density are all zero except the ones w.r.t.\ $\kappa_1$ and $\kappa_2$, which can be expressed as 
\begin{align}	
\partial_{\kappa_{i}}\log g^l_{\theta} (y_{t_p} | (X_t)_{t_{p-1}\leq t\leq t_{p}}) = 
y_{t_p}^i - \Delta_l\sum_{t: t_{p-1}\leq t\leq t_p}\lambda_i(X^i_t),
\end{align}
for $i = 1,2$.

\end{document}